\documentclass[aos, preprint]{imsart}

%% Packages
\RequirePackage[authoryear]{natbib}

\usepackage{etoc}
\usepackage{graphicx}
\usepackage{amsmath,amsthm,amsfonts,amssymb}
\usepackage{color}
\usepackage{xcolor}
\usepackage[colorlinks=True]{hyperref}
\usepackage{bm}
\usepackage{bigints}
\usepackage{subcaption}
\usepackage{float}
\usepackage[algo2e,boxed,linesnumbered,vlined]{algorithm2e}
\usepackage{comment}
\newcommand{\e}{\mathbb{E}}
\newcommand{\p}{\mathbb{P}}
\newcommand{\lk}{\left[ }
\newcommand{\rk}{\right] }
\newcommand{\lc}{\left(}
\newcommand{\rc}{\right)}
\newcommand{\R}{\mathbb{R}}

\newcommand{\N}{\mathbb{N}}

\newcommand{\id}{\mathbf{1}}

\newcommand{\sumd}{\sum_{i=1}^d}
\newcommand{\prodd}{\prod_{i=1}^d}

\newcommand{\leqd}{1\leq i\leq d}
\newcommand{\rmd}{\mathrm{d}}

\newcommand{\T}{{\intercal}}
\newcommand{\Oo}{\mathcal{O}}
\newcommand{\Vv}{\mathcal{V}}

\newcommand{\oinf}{[0,\infty)}

\newcommand{\Rdo}{\mathbb{R}^d\setminus \{\bm 0\}}
\newcommand{\bmo}{{\mathbf{o}}}
\newcommand{\bmx}{\mathbf{x}}
\newcommand{\bmy}{\mathbf{y}}
\newcommand{\bmX}{\mathbf{X}}
\newcommand{\bmL}{\mathbf{L}}

\newcommand{\bmz}{\mathbf{z}}

\newcommand{\bma}{\mathbf{a}}
\newcommand{\bmb}{\mathbf{b}}

\newcommand{\Pkonv}{\stackrel{P}{\rightarrow}}

\newcommand{\ic}{i} %\mathsf{i}
\newcommand{\balpha}{{{\bm\alpha}}}
\newcommand{\bmc}{{{\mathbf{c}}}}

\newcommand{\bgamma}{{{\bm\gamma}}}

\newcommand{\bac}{\balpha,\bmc}
\newcommand{\bag}{\balpha,\bmc,\bgamma}

\newcommand{\lv}{\left|}
\newcommand{\rv}{\right|}
\newcommand{\HR}{H\"usler--Reiss}
\newcommand{\BA}{Barabasi--Albert}
\newcommand{\HRL}{H\"usler--Reiss L\'evy}
\newcommand{\g}[1]{\mathbf{#1}}

\newcommand{\eps}{\epsilon}%{\varepsilon}

\newcommand{\normop}[1]{{\left\vert\kern-0.25ex\left\vert\kern-0.25ex\left\vert #1 
		\right\vert\kern-0.25ex\right\vert\kern-0.25ex\right\vert}}

\theoremstyle{plain}
\newtheorem{thm}{Theorem}[section]
\newtheorem*{thm*}{Theorem}
\newtheorem{lem}{Lemma}[section]
\newtheorem*{lem*}{Lemma}
\newtheorem{rem}{Remark}[section]
\newtheorem{ex}{Example}[section]
\newtheorem{defn}{Definition}[section]
\newtheorem{cor}{Corollary}[section]
\newtheorem{prop}{Proposition}[section]
\newtheorem{ass}{Assumption}[section]

\DeclareMathOperator{\sgn}{sgn}
\DeclareMathOperator{\Var}{Var}

\DeclareMathOperator{\Cov}{Cov}

\startlocaldefs

\numberwithin{equation}{section}

\endlocaldefs

\begin{document}

\begin{frontmatter}

\title{Graph structure learning for stable processes }

\runtitle{Graphical models for stable processes}

\begin{aug}
%%%%%%%%%%%%%%%%%%%%%%%%%%%%%%%%%%%%%%%%%%%%%%%
%% Only one address is permitted per author. %%
%% Only division, organization and e-mail is %%
%% included in the address.                  %%
%% Additional information (such as           %%
%% indicating the corresponding author) can  %%
%% be included in the Acknowledgments        %%
%% section if necessary.                     %%
%% ORCID can be inserted by command:         %%
%% \orcid{0000-0000-0000-0000}               %%
%%%%%%%%%%%%%%%%%%%%%%%%%%%%%%%%%%%%%%%%%%%%%%%
\author[A]{\fnms{Florian}~\snm{Brück}\ead[label=e1]{florian.brueck@unige.ch}},
\author[A]{\fnms{Sebastian}~\snm{Engelke}\ead[label=e2]{sebastian.engelke@unige.ch}}
\and
\author[C]{\fnms{Stanislav}~\snm{Volgushev}\ead[label=e3]{stanislav.volgushev@utoronto.ca}}
%%%%%%%%%%%%%%%%%%%%%%%%%%%%%%%%%%%%%%%%%%%%%%
%% Addresses                                %%
%%%%%%%%%%%%%%%%%%%%%%%%%%%%%%%%%%%%%%%%%%%%%%
\address[A]{University of Geneva, Research Institute for Statistics and Information Science, \printead[presep={,\ }]{e1,e2}}

\address[C]{Department of Statistical Sciences, University of Toronto\printead[presep={,\ }]{e3}}
\end{aug}

\begin{abstract}

We introduce Ising--\HR{} processes, a new class of multivariate Lévy processes that allows for
sparse modeling of the {path-wise} conditional independence structure between marginal stable processes with different stability indices. The underlying conditional independence graph is encoded as zeroes in a suitable precision matrix.
An Ising-type parametrization of the weights for each orthant of the L\'evy measure
allows for data-driven modeling of asymmetry of the jumps while retaining an arbitrary sparse graph.
We develop consistent estimators for the graphical structure and asymmetry parameters, relying on a new uniform small-time approximation for Lévy processes. The methodology is illustrated in simulations and a real data application to modeling dependence of stock returns.

\end{abstract}

	\begin{keyword}[class=MSC]
		\kwd[Primary ]{62M20}
		\kwd{62H22}
		\kwd[; secondary ]{62G32}
	\end{keyword}

\begin{keyword}
\kwd{Extreme value theory}
\kwd{Dependence modeling}
\kwd{Graphical models}
\kwd{Lévy process}
\kwd{Sparsity}
\end{keyword}

\end{frontmatter}

\addtocontents{toc}{\protect\setcounter{tocdepth}{-100}}

\section{Introduction}

A L\'evy process $\g L = (\g L(t))_{t\geq 0}$ is an $\mathbb R^d$-valued stochastic process with independent and stationary increments that satisfies $\g L(0) = \g 0$ almost surely. 
Besides being fundamental objects appearing naturally in applied probability theory, their statistical applications range from finance \citep{Tankov2016} over physics \citep{metzler2009} and queuing theory \citep{dkebicki2015} to ruin theory in insurance \citep{asmussen2010}; see \cite{kyprianou2014} for details and more references.
By the famous Lévy--It\^o decomposition, a Lévy process can be written as $\g L = \g W +\g J$, the sum a Brownian motion $\g B$ with drift and a jump process $\g J$. While the Brownian part is very well understood and can be modeled
parametrically through the covariance matrix, the jump part is
characterized by a potentially infinite Borel measure $\Lambda$ on $\mathbb R^d\setminus \{\g 0\}$ called the L\'evy measure.
The latter describes the size distribution of the jumps of the L\'evy process and whether 
these jumps occur simultaneously in the different marginal processes.

The dependence structure of the jumps can be very complex,
especially since the L\'evy measure is allowed to have mass on all
$2^d$ orthants of $\mathbb R^d$. We enumerate the orthants in a binary way where each orthant is associated to one element of $\{-1,1\}^d=:\Oo$, with  the convention that $o_i=1$ means that the $i$th component of vectors in  orthant $\bmo \in \Oo$ is non-negative. Figure \ref{fig_intro} shows the paths and the jumps of a two-dimensional L\'evy process whose Lévy measure has mass on
the four orthants. 
\begin{figure}
    \centering
     \includegraphics[trim = 0 1.2cm 0 1cm, clip=TRUE, width=\linewidth]{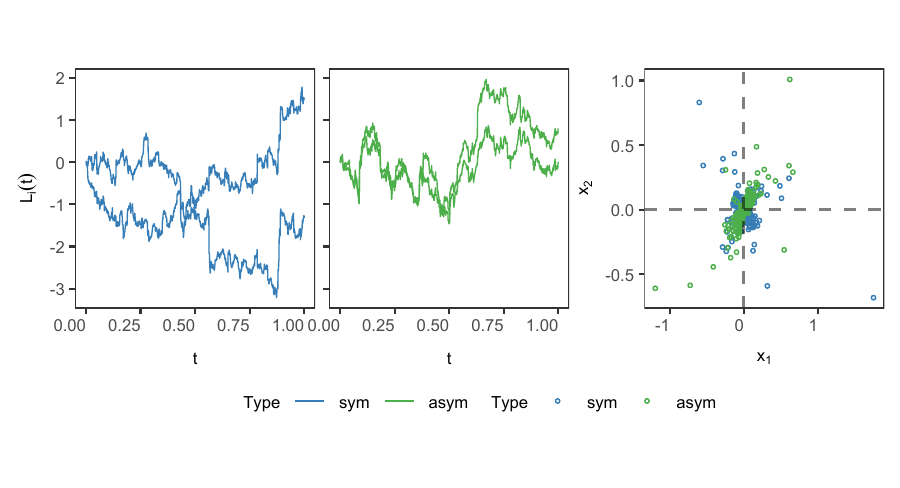}
    \caption{Simulation of a two-dimensional Ising--\HR{} L\'evy process with symmetric (left) and asymmetric (center) orthant weights. The corresponding jumps (right) of the L\'evy measure
    are shown for both the symmetric (blue) and asymmetric (green) process.
    }
     \label{fig_intro}
\end{figure}

Several works address the problem of
characterizing dependence in L\'evy measures, including the 
notion of L\'evy copulas \citep{kallsentankov2006} and Pareto L\'evy measures \citep{ederklueppelberg2012}.
Statistical models for jump processes resulting from these and other approaches in the literature are however either
simplistic with very few parameters, spectrally discrete 
\citep{misrakuruoglu2016}, or they are tractable only in very low dimensions $d$.

In classical statistics, moving to higher dimensions requires the concept of 
sparsity to build parsimonious models \citep{wainwright2008}. Conditional independence and graphical models are the most commonly used tools for this purpose \citep{lauritzen1996book}.
For sparse modeling of data from higher-dimensional stochastic processes, \cite{engelke2024levygraphicalmodels} introduce L\'evy graphical models as a L\'evy process $\g L$ that satisfies the pairwise Markov property on a graph $G =(V,E)$ with nodes $V=\{1,\dots, d\}$ and edge set $E\subseteq V\times V$, i.e.,
\begin{align}\label{pairwise_markov}
    (i,j) \notin E \quad \Rightarrow \quad L_i\perp L_j \mid \g L_{V\setminus \{i,j\}},
\end{align}
where the conditional independence is to be understood on the level
of the sample paths.
They show that
the two components $\g W$ and $\g J$ in the Lévy--It\^o decomposition can be studied separately with regard
to conditional independence properties, and we focus 
on the less-understood jump part. In the sequel, we therefore assume that there is no Brownian component and that $\g L$
is a pure jump process.
In this case, the conditional independence in~\eqref{pairwise_markov} is characterized by the factorization 
of the L\'evy measure density into lower-dimensional terms---a property that is crucial for parsimonious models.

In this work, we introduce a L\'evy process analog of Gaussian graphical models, the most widely used class of probabilistic graphical models \citep{friedmanEtAl2007, maathuis2019handbook}. Our model allows for flexible modeling of both, dependence within each orthant and asymmetry across different orthants.
Importantly, it turns out to be well-adapted to graphical structures 
in the sense of~\eqref{pairwise_markov}.
The model is inspired by the popular \HR{}
distribution in extreme value theory \citep{husler1989}, which can be parameterized by a variogram matrix $\Gamma$ or, equivalently, by its positive semi-definite precision matrix $\Theta$ \citep{hentschelengelkesegers2023}. 
While the classical \HR{} exponent measure $\Lambda_{\text{HR}}$ is a measure on the positive orthant only, we extend it 
to a flexible L\'evy measure on any Borel set $A\subset \mathbb R^d\setminus \{\g 0\}$ by
\begin{align}\label{model_def}
    \Lambda(A) = \sum_{\bmo\in \Oo} \gamma_\bmo  \Lambda_{\text{HR}} \lc \{ \bmx\in(0,\infty)^d : (o_ix_i)_{i\in V} \in A\}\rc, 
\end{align}
which can possibly have mass on all orthants; here $\gamma_\bmo\geq 0$ is the weight for orthant $\bmo$.

In the symmetric case where all weights $\gamma_\bmo$ are equal,
we can introduce sparsity on a graph $G$ in this model class through the \HR{} precision matrix: the conditional independence~\eqref{pairwise_markov} holds if and only if $\Theta_{ij} = 0$ \citep{engelkehitz2020}.  
For more general, asymmetric models, the orthant weights must additionally satisfy certain constraints 
in order to be compatible with the graphical structure~$G$.
We propose a new approach to obtain such weights based on
the Ising model from statistical physics \citep{ising1925}. 
In fact, any parameter vector $\Psi = (\psi_{i,j}: (i,j)\in E)$ with $\psi_{i,j} = \psi_{j,i} \in \R, (i,j) \in E$,
defines a valid set of orthant weights through
\begin{align*}
    \gamma_\bmo(\Psi):=\frac{2}{C(\Psi)} \exp\Bigg\{ \sum_{(i,j)\in E} \psi_{i,j}o_io_j \Bigg\},  %\label{defisingweight0}
\end{align*}
where $C(\Psi)>0$ is a normalization constant. 
The corresponding L\'evy process with L\'evy measure~\eqref{model_def} is then indeed a graphical model on $G$.
This Ising model approach allows to model asymmetries in
the dependence structure of the jumps of the L\'evy process in a
simple yet flexible way. Interestingly, it turns out that for 
a tree, i.e., a connected graph without cycles, this 
model class coincides with all possible orthant weights. For more general graphs, it is a large but strict sub-class of
all weight combinations.
The graphical model approach thus serves two purposes in this  Ising--\HR{} model class. On the one hand, it allows flexible and parsimonious modeling of the jump dependence encoded in $\Lambda_{\text{HR}}$.  On the other hand, it provides a sparse and data-driven way to define valid orthant weights; indeed, note that we only have to specify the $|E|\leq {d(d-1)/2}$ different parameters in $\Psi$ instead of all $2^d$ possible weights. While we apply the Ising--\HR{} model in the context of L\'evy processes, it can directly be used as a model for multivariate Pareto distributions in extreme value theory \citep{roo2006, Kiriliouk02012019}, which allows for dependence between extreme positive and negative values as encountered in financial applications \citep{mikosch2024extreme}.

Statistical inference for our model consists of two steps: estimation of the \HR{} parameters including the graphical 
structure; and estimation of the Ising parameters $\Psi$ on the 
estimated graph.
We first derive an estimator $\widehat \Gamma$ of the \HR{} parameter matrix $\Gamma$ and show its consistency.
A crucial ingredient for this proof, and a key theoretical contribution of our work is a uniform
small-time approximation of the L\'evy measure 
$$ \Big\vert P\{  \g L(t) \in R \} -t\Lambda(R) \Big\vert \leq Ct^2,$$
for small enough $t>0$, a constant $C>0$ and suitable rectangular sets $R\subset \mathbb R^d \setminus \{\g 0\}$.
This approximation holds for general L\'evy processes under certain regularity assumptions, and is therefore of independent interest for studying estimators for jump processes.
It is a strengthening and correction of a result in~\cite{buechervetter2013},
where our assumptions allow for L\'evy processes with mass on all orthants (as opposed to only the positive orthant), more general sets $A$ and arbitrary dimensions $d$.
We show that this approximation holds
for the Ising--\HR{} model with L\'evy measure~\eqref{model_def}.
The consistent estimator $\widehat{\Gamma}$ can directly 
be used with known structure learning methods~\citep{engelkelalancettevolgushev2022} to obtain
 a consistent
estimator $\widehat{G} = (V, \widehat{E})$ of the underlying graph structure.

In the second step, we study the weights $\Psi$ of the Ising 
parameters on the estimated graph.
While we do not have access to observations from the associated Ising model directly, we show that they can be substituted
using a moment matching method to obtain a consistent estimator $\widehat{\Psi}$. Putting both steps together, our methodology enables  learning complex, sparse dependence structures for L\'evy processes.
We illustrate the effectiveness of this approach in simulations and an application to modeling the dependence of stock returns of 16 American companies.

\section{Background}
\label{secbackground}
We first introduce some notation. 
Throughout, we let $V = \{1,\dots, d\}$ denote
the index set of the $d$ marginal processes. For integers $m \in \mathbb{N}$, we use the abbreviated notation $[m] := \{1,\dots,m\}$. Vectors are always denoted in bold letters $\g x$ and random objects are capitalized
as $\g X$. For a subset $I\subset V$ we write for the $I$-th margin of a vector $\bmx_I:=\lc x_I\rc_{i\in I}$, and use the abbreviation $\bmx_{\setminus I} = \bmx_{V \setminus I}$ with similar notation for matrices. For an orthant $\bmo \in \Oo$ as defined in the introduction and a vector $\bmx \in \mathbb R^d$, in a slight abuse of notation, we
write $\bmx \in\bmo$ if $\bmx\in\{\bmx\in\R^d\mid \bmo(\bmx)=\bmo\}$, where the orthant function is 
defined as $\bmo(\bmx)=\lc 1-2\id_{\{x_i< 0\}}\rc_{i\in V}$, i.e., the vector of sign functions of the components.
The norm $\|\bmx \|$ denotes the Euclidean $L_2$-norm if not otherwise stated. We define $B_\infty(\delta) = \{\bmx \in \R^m: \|\bmx\|_\infty < \delta\}$. For a set $A$ of finite cardinality we use $|A|$ to denote the cardinality of $A$. For two real-valued sequences $(a_n)_{n \geq 1}, (b_n)_{n \geq 1}$ we write $a_n \gg b_n$ if $b_n/a_n \to 0$ and $a_n \ll b_n$ if $a_n/b_n \to 0$.

\subsection{Lévy processes}\label{sec:levy_processes}
A $d$-dimensional L\'evy process $\g L:=(\g L(t))_{t\geq 0}$ is a càdlàg stochastic process with values in $\mathbb R^d$ that possesses stationary and independent increments; see \cite{sato} for details.
The famous L\'evy--Khintchine formula states that the characteristic functions of this process is
\begin{align*}
    \e\lk e^{ \ic\bmz^\intercal \g L(t)} \rk=\exp\left\{
    \ic t\bmz^\intercal\bm \tau +t\bmz^\intercal \Sigma \bmz/2 +
    t \int_{\Rdo} e^{\ic\bmz^\intercal \bmx} -1-\ic \bmz^\intercal \bmx \id_{\{\|\bmx\|_2\leq 1\}}  \Lambda(\rmd\bmx)  \right\},
\end{align*}
where $\bm\tau\in\mathbb R^d$, $\Sigma\in\R^{d\times d}$ is a positive semidefinite $d\times d$ matrix, and the L\'evy measure $\Lambda$ on $\Rdo$ possibly explodes at the origin but always satisfies
$$\int_{\Rdo}\min\{1,\Vert \bmx\Vert_2^2 \} \Lambda(\rmd \bmx)<\infty.$$
The L\'evy process is then said to have characteristic triplet $(\bm \tau, \Sigma, \Lambda)$.
For a subset $I\subset V$, the marginal process $\g L_I = \{(\g L(t))_I\}_{t\geq 0}$
is again a L\'evy process with characteristic triplet $(\bm \tau'_I, \Sigma_I, \Lambda_I)$,
where $\bm \tau'_I$ is not simply the sub-vector of $\bm \tau$ but can depend on other quantities. The marginal L\'evy measure $\Lambda_I$ is obtained in the usual way
by integrating out the components in $V\setminus I$ and removing $\{0\}$ from its support. Whenever the Lévy measure has a Lebesgue density $\lambda$, i.e., $\Lambda(\rmd \bmx) = \lambda(\bmx)\rmd \bmx$, we call $\lambda$ a Lévy density. 

The Lévy--It\^o decomposition shows that we can decompose the Lévy process as
\begin{align}\label{levy_ito}
    \g L = \g W +\g J,
\end{align} 
where $\g W = (\g W(t))_{t\geq 0}$ is a Brownian motion with covariance matrix $\Sigma$ and drift $\bm \tau$, and $\g J = (\g J(t))_{t\geq 0}$ is a pure jump Lévy process with characteristic triplet $(0,0,\Lambda)$, which is independent of $\g W$.
This decomposition often allows to analyze properties of $\g L$ separately for $\g J$ and $\g W$. We say that $\g L$ is a non-Gaussian Lévy process whenever $\Sigma=0$, i.e., there is no Brownian part.

The behavior of a non-Gaussian Lévy process $\g L$ is fully encoded in $\Lambda$, which contains information about both the behavior of the marginal processes  $L_i$, $i\in V$, and their dependence. Similarly to a copula approach in multivariate distributions, it is possible to separate the marginal and dependence structures via a so-called Pareto Lévy measure (PLM) \citep{ederklueppelberg2012}; a closely related concept are L\'evy copulas introduced in \cite{kallsentankov2006}.
Formally, a PLM is any Lévy measure $\Lambda^\star$ that satisfies the marginal constraints 
\begin{align}
    \label{marginal_constraint} 
\Lambda^\star(x_i>z)=\Lambda^\star(x_i<-z)=z^{-1}, \quad z>0,~i\in V.
\end{align}
It can be seen as a standardized version of a Lévy measure whose marginal processes correspond to $1$-stable processes. 

Every PLM $\Lambda^\star$ together with a set of univariate tail integrals $(U_{i})_{\leqd}$ uniquely defines a Lévy measure, see \citet[][Theorem 1(ii)]{ederklueppelberg2012}. Conversely, if the closure of the range of each univariate tail integral $U_{i}$ of a Lévy measure $\Lambda$ is $\R$, there is a unique PLM $\Lambda^\star$ that corresponds to $\Lambda$ \cite[Theorem 1(i)]{ederklueppelberg2012}. This is an analog of Sklar's copula theorem in the context of possibly infinite L\'evy measures. In the sequel, we therefore often focus on the normalized measure $\Lambda^\star$ to discuss dependence properties.

For the models in this paper, we exclude compound Poisson processes and assume that each marginal process $L_i$ has infinite activity, i.e., $\Lambda(\{\g x \mid x_i\neq 0\}) = \infty$ for all $i\in V$. This implies that the marginal distribution function $F_{ti}$ of $i$th process is continuous for all times $t>0$ \cite[Theorem 27.4]{sato}.
Consequently, the copula $C_t$ of the random vector $\g L(t)$ is unique.
\citet[][Theorem 5.1]{kallsentankov2006} show that the PLM can be approximated by small-time behaviour of the survival copulas by
\begin{align}\label{copula_approx}
    \Lambda^\star(\bmy: y_i \geq x_i : i\in V) = \lim_{t\to 0} t^{-1} P \{ F_{ti}(L_i(t)) > 1 - t / x_i: i\in  V\}, \quad \bmx \geq 0.
\end{align}
Similar approximation can be obtained for sets in other orthants.
This result enables statistical inference since it connects the dependence properties of the L\'evy measure to observations of the process itself. 

\begin{ex}\label{ex:stable}
     A PLM $\Lambda^{\star}$ is called stable if it satisfies
     \[\Lambda^\star(tA)=t^{-1}\Lambda^{\star}(A), \quad t > 0, A\subseteq \mathbb R^d.\]
     It is a Lévy measure of a multivariate $1$-stable process, i.e., a non-Gaussian Lévy process for which $\g L(t)$ has the same distribution as $ t\g L(1)+\g b(t)$ for some deterministic function $\bmb: (0,\infty) \to \R^d$ and any $t>0$.    
\end{ex}

There are several parametric models for PLMs 
that allow to fit flexible models to data. 
The most relevant example for us is the \HR{}
model, which was introduced in \cite{husler1989}
in the context of extreme value modeling.

\begin{ex}\label{ex:HR}
    The symmetric \HR{} PLM $\Lambda^\star_{(\text{sym})}$
    is defined by a symmetric, strictly
conditionally negative definite $d\times d$-matrix $\Gamma$. The model can alternatively be parameterized by its positive semi-definite precision
matrix $\Theta = (P (-\Gamma/2) P)^+$,
where $P = \text{Id} - \g 1 \g 1^\top /d$ is the projection matrix onto the orthogonal complement of $\g 1$, and $A^+$ denotes the Moore--Penrose pseudoinverse of a matrix $A$ \citep{hentschelengelkesegers2023}; note that the one-vector $\g 1 = (1,\dots, 1)$ is in the kernel of $\Theta$.
The density of the \HR{} Pareto L\'evy measure on the standardized scale is 
\begin{align}
    \lambda_{(\text{sym})}^\star (\bmx)=  c_\Theta \exp\left\{ -\frac{1}{2}\log(|\bmx|)^\T \Theta\log(|\bmx|)+\log(|\bmx|)^\T r_\Theta \right\} \prodd \frac{1}{|x|_i^{1+1/d}} , \quad \g x \in (\mathbb R\setminus\{\g 0\})^d, \label{defhüslerreissexponentmeasuredensity}
\end{align}
        with $r_\Theta=-\Theta \Gamma \id/(2d)$ and a normalizing constant $c_\Theta$ such that $\int_{\{\vert x_i\vert >1\}} \lambda_{(\text{sym})}^\star(\bmx) \rmd\bmx= 2$ for all $i\in V$. The subscript indicates that this is a symmetric L\'evy measure with the same density on all orthants. 
    The corresponding PLM $\Lambda^\star_{(\text{sym})}$ is stable 
    as defined in Example~\ref{ex:stable} and satisfies the marginal constraint~\eqref{marginal_constraint}.    
 
\end{ex}

\subsection{Lévy graphical models}

A graph $G=(V,E)$ is a collection of vertices $V = \{1,\ldots,d\}$ and edges $E\subseteq V\times V$. We work with undirected models in which $(i,j)\in E$ whenever $(j,i)\in E$. Moreover, we say that there is a path between $i$ and $j$ if there is a sequence of edges $(i_k,j_k)$ such that $i_1=i$, $j_n=j$ and $i_k=j_{k-1}$ for all $2\leq k\leq n$. A graph is connected if there is a path between every two nodes. 
A classical probabilistic graphical model for a random vector $\bmX$ links a graph $G$ to conditional independence statements through the pairwise Markov property, i.e., whenever $(i,j)\not\in E$ then $X_i\perp X_j\mid \bmX_{V\setminus \{i,j\}}$ \citep{lauritzen1996book}. 

The traditional notion of conditional independence of random vectors 
can be extended to stochastic processes in several ways \citep{didelez2008,hansen2022,matencasoloferuccimogensensalvikilbertus2025}. 
For a L\'evy process $\g L$ with characteristic triplet $(\bm\tau, \Sigma, \Lambda)$, here we work with a notion on the level of sample-paths and write
for sets $A,B,C \subset V$
\begin{align}
    \g L_A \perp \g L_B \mid \g L_C  \label{defcondind2}
\end{align}
if the (path-valued) random variables $\g L_A$ and $\g L_B$ are  independent given $\g L_C$.
\cite{engelke2024levygraphicalmodels} show that the Lévy--Ito decomposition in~\eqref{levy_ito} naturally extends to this notion since
\begin{align*}
    \g L_A \perp \g L_B \mid \g L_C \quad  \Leftrightarrow \quad \g W_A \perp \g W_B \mid \g W_C   \text{ and }   \g J_A \perp \g J_B \mid \g J_C,
\end{align*}
which allows to separately analyze conditional interdependence statements for the Gaussian part and the jump part. 
While the conditional independence statement for the Gaussian part is well-understood in terms of zeros on the precision matrix $\Sigma^{-1}$ \citep{engelke2024levygraphicalmodels}, the corresponding property of the jump part has only recently been studied.

Since the L\'evy measure $\Lambda$ characterizes the jump part $\g J$,
it must also contain information about the conditional independence
structure. Since this measure is generally not finite and supported on the non-product space $\mathbb R^d \setminus \{\g 0\}$, classical
notions of independence do not apply.  \cite{engelkeivanosvsstrokorb2022} therefore define an abstract notion of conditional independence for the L\'evy measure through classical conditional independence of random vectors 
with distribution induced by $\Lambda$ restricted to sets
with finite mass.
More precisely, for every product set $R=\times_{i=1}^d R_i$ bounded away from the origin $\g 0$ with Borel sets $R_i\subseteq \mathbb R$, associate a random vector $\g Z^R\overset{d}{=} \Lambda(\cdot \cap R)/\Lambda(R)$. 
Conditional independence for the L\'evy measure $\Lambda$ is then defined for any subsets $A,B,C \subset V$ as
\begin{align}
    A\perp_\Lambda B \mid C ~~ :\Leftrightarrow ~~ \g Z^R_A \perp \g Z^R_B \mid \g Z^R_C \text{ for all product sets }R\text{ bounded away from }\bm 0 \text{ s.t.\ } \Lambda(R)>0.  \label{defcondind3}
\end{align}
This conditional independence notion enjoys many natural properties and
is equivalent to density factorizations, whenever densities exist.
Most importantly for our work, it turns out that under mild regularity conditions, this $\Lambda$-conditional independence is equivalent to the corresponding statement
$\g J_A\perp \g J_B\mid \g J_C$ on the sample-path level of the jump process.
This equivalence allows us to rely on simple properties of the Lévy measure $\Lambda$ to study graphical models for Lévy processes.
Importantly, the conditional independence properties~\eqref{defcondind3}
are independent of the marginal processes and only depend on the corresponding
PLM $\Lambda^\star$ \citep[][Prop.~4.4]{engelke2024levygraphicalmodels}.

Based on the conditional independence notion~\eqref{defcondind3} 
for the L\'evy measure, \cite{engelke2024levygraphicalmodels} 
introduce L\'evy graphical models as a tool to model stochastic processes with sparse dependence structures.
For an undirected graph $G = (V,E)$, a L\'evy process $\g L$ with characteristic triplet $(\bm\tau, \Sigma,\Lambda)$ is a L\'evy graphical model if it satisfies the pairwise Markov property on $G$ with respect
respect to the notion~\eqref{defcondind3}, i.e., 
\[(i,j)\notin E \quad \Rightarrow \quad \{i\} \perp_\Lambda \{j\} \mid V\setminus \{i,j\},\]
or equivalently, $L_i\perp L_j \mid \g L_{V\setminus \{i,j\}}$.
\cite{engelke2024levygraphicalmodels} study the special case of tree
graphs $G$, which only have $d-1$ edges and are the sparsest type of connected
graphs. While this allows very efficient inference and structure estimation, trees are often too simplistic for more complex dependence
structures. In the next section, we introduce a sparse L\'evy graphical model for general graphs, which uses the following fundamental property of the \HR{} model from Example~\ref{ex:HR}. 

\begin{ex}\label{HR_CI}
    Let $\g L$ be a \HR{} PLM with precision matrix $\Theta$ whose L\'evy measure $\Lambda^\star_{(\text{sym})}$ is defined through its density in~\eqref{defhüslerreissexponentmeasuredensity}.
    We can then identify conditional independence statements 
    from zero entries in the precision matrix \citep{engelkehitz2020}, i.e., 
    \[\{i\} \perp_{\Lambda^\star_{(\text{sym})}} \{j\} \mid V\setminus \{i,j\} \quad \Leftrightarrow \quad \Theta_{ij} = 0.\] 
\end{ex}
Informally, this holds because factorization of the density $\lambda^\star_{(\text{sym})}$ is equivalent to its factorization into identical factors on each orthant. A formal proof is in the proof of Theorem~\ref{thmglobalmarkovprop}.

%%%%%%%%%%%%%%%%%%%
\section{\HR{} Lévy process and Ising asymmetry}
\label{secHRPRMnadprocess}

\subsection{\HR{} L\'evy processes}

The \HR{} PLM introduced in Example~\ref{ex:HR} with precision matrix $\Theta$ is symmetric around the origin and the jumps of the associated Lévy process inherit a symmetric dependence structure; see the left panel in Figure~\ref{fig_intro}. This property is too restrictive for most applications and we therefore introduce 
an asymmetric version of this model that may have different weights $\gamma_\bmo\in [0,1]$ for each orthant $\bmo\in \mathcal O$. A higher weight implies that the jumps of the Lévy process lie more frequently in this orthant.
This yields an asymmetric \HR{} PLM with density 
\begin{align}
    \lambda^\star(\bmx):= \lambda^\star_{(\text{sym})}(\bmx)\sum_{\bmo\in\mathcal O} \gamma_\bmo \id_{\{\bmx\in \bmo\}},  \quad \bmx \in\mathbb R^d\setminus \{\mathbf 0\},
 \label{defHRparetolevycop}
\end{align}
where $\lambda^\star_{(\text{sym})}$ is as in~\eqref{defhüslerreissexponentmeasuredensity} and the weights must satisfy the constraint 
\begin{align}\label{weight_norm}
    \sum_{\bmo\in\Oo, o_i=a}\gamma_\bmo=1, \quad   \text{for all  } i\in V, a\in\{-1,1\},
\end{align} 
so that the marginal constraint~\eqref{marginal_constraint} of a PLM is satisfied. The symmetric case correspond to the choice $\gamma_\bmo = 2^{-d+1}$ for all $\bmo\in\Oo$.

\begin{defn}
    \label{defnHRprocess}
    Let $\Lambda^\star$ be an asymmetric \HR{} PLM~\eqref{defHRparetolevycop} with parameter matrix $\Gamma$, precision matrix $\Theta$ and asymmetry parameters $\bgamma = (\gamma_\bmo)_{\bmo\in\Oo}$. 
    For parameters $c_i^\pm >0$ and $\alpha_i \in (0,2)$, we set the marginal integrals to 
    \begin{align*}
        U_i(x) = c_i^+ x^{-\alpha_i}\id_{\{x>0\}}-c_i^- |x|^{-\alpha_i}\id_{\{x<0\}} ,\ x\neq 0.
    \end{align*}
    
    This uniquely defines a L\'evy measure $\Lambda = \Lambda^{(\bag,\Theta)}$ with marginal tail integrals $U_i$ and 
    PLM $\Lambda^\star$. The corresponding L\'evy process $\g L$
    with L\'evy triplet $(\bm \tau,0,\Lambda)$ is called the \HR{} L\'evy process with drift $\bm\tau$, dependence parameters $\Theta$ and $\bgamma$ and marginal parameters $\bmc=(c_i^-,c_i^+)_{\leqd}$ and $\balpha = (\alpha_i)_{\leqd}$.
\end{defn}

The next result shows that the class of \HR{} L\'evy processes is closed under scaling, translation and marginalization, showing that its  definition is natural.

\begin{lem}
\label{lemHrprocessclass}
Let $\g L$ denote an \HRL{} process as in Definition~\ref{defnHRprocess} with drift $\bm\tau\in\R^d$, parameter matrix $\Gamma$ and precision $\Theta$, and asymmetry parameters $\bgamma$. Let $\bma,\bmb\in \R^d$ with $\bma \neq \bm 0$ and denote $I:=\{i\in V\mid a_i=0\}$. Then $\big( ( a_iL_i(t)+tb_i)_{i\not\in I}\big)_{t\geq 0}$ is a $(d-\vert  I\vert)$-dimensional \HRL{} process with parameter matrix $\Gamma_{\setminus I, \setminus I}$. 
\end{lem}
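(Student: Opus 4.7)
The plan is to decompose the transformation $\bmL(t) \mapsto (a_i L_i(t) + t b_i)_{i \notin I}$ into three elementary operations and to verify that the class of \HRL{} processes is closed under each of them, while tracking how the parameters $(\bm\tau, \Gamma, \Theta, \bgamma, \balpha, \bmc)$ transform. First, I would peel off the deterministic drift: adding $(tb_i)_{i \notin I}$ only modifies the drift $\bm\tau$ and leaves the L\'evy measure $\Lambda$ untouched, so it suffices to analyze $(a_i L_i(t))_{i \notin I}$. The remaining two operations are componentwise scaling by the nonzero entries of $\bma$ and marginalization to $V \setminus I$.

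For the scaling, the transformed L\'evy measure on the coordinates where $a_i \neq 0$ is the push-forward of $\Lambda$ under the linear bijection $\bmx \mapsto (a_i x_i)$. I would then check three things. (a) The marginal tail integrals $U_i$ transform into $U_i^{(\bma)}(x) = U_i(x / a_i)$ for $a_i > 0$, and into a reflected version for $a_i < 0$; in either case they retain the same power-law form with the same $\alpha_i$ and updated $\tilde c_i^{\pm} = c_i^{\pm} |a_i|^{\alpha_i}$ (with a swap $c_i^+ \leftrightarrow c_i^-$ when $a_i < 0$). (b) The symmetric \HR{} PLM density $\lambda^\star_{(\text{sym})}$ in \eqref{defhüslerreissexponentmeasuredensity} depends on $\bmx$ only through $|\bmx|$ componentwise, hence is invariant under coordinate sign flips, so that $\Gamma$ and $\Theta$ are preserved under $T_\bma$. (c) The asymmetry weights become $\tilde\gamma_\bmo = \gamma_{\bma \star \bmo}$, where $(\bma \star \bmo)_i = \sgn(a_i) o_i$, i.e., a permutation of the orthants induced by the sign pattern of $\bma$; the normalization~\eqref{weight_norm} is inherited from the original weights since it is invariant under this permutation.

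For marginalization to $V \setminus I$, the marginal L\'evy measure of $\Lambda$ is obtained by integrating out the coordinates in $I$ and removing the origin. I would appeal to the known closure of symmetric \HR{} exponent measures under marginalization, which yields a symmetric \HR{} exponent measure with variogram matrix $\Gamma_{\setminus I, \setminus I}$ \citep{engelkehitz2020,hentschelengelkesegers2023}. Combining this with the orthant decomposition~\eqref{model_def} extends the property to the asymmetric PLM, with new orthant weights $\tilde\gamma_{\tilde\bmo} = \sum_{\bmo \in \Oo : \bmo_{\setminus I} = \tilde\bmo} \gamma_\bmo$ obtained by summing over the $I$-coordinates; the constraint~\eqref{weight_norm} for the marginal weights follows by summing the corresponding constraint for the original $\gamma_\bmo$. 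The marginal tail integrals $U_i$ for $i \notin I$ are preserved because marginalization commutes with the marginal tail operation. Combining the three steps identifies $((a_i L_i(t) + t b_i)_{i \notin I})_{t \geq 0}$ as an \HRL{} process of dimension $d - |I|$ with parameter matrix $\Gamma_{\setminus I, \setminus I}$.

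The main obstacle I anticipate is bookkeeping rather than conceptual difficulty: correctly tracking how the marginal constants $\bmc$ absorb the factors $|a_i|^{\alpha_i}$, how the orthant weights are first permuted by the sign pattern of $\bma$ and then summed out over $I$, and verifying along the way that the standardized marginal constraint~\eqref{marginal_constraint} for the PLM and the normalization~\eqref{weight_norm} for the asymmetry weights are preserved. Once this accounting is in place, the three steps glue together to give the stated lemma.
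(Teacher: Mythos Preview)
Your proposal is correct and follows essentially the same decomposition as the paper: peel off the drift, handle coordinatewise scaling (including sign flips), and marginalize, tracking the parameters $(\bgamma,\balpha,\bmc,\Gamma)$ along the way. The only noteworthy difference is in the computational vehicle: the paper carries out the scaling and marginalization steps by manipulating the characteristic function directly (splitting into the cases $\bma\geq 0$ and $\bma\in\{-1,1\}^d$), which automatically produces the extra drift correction $\bm\zeta(\bma)$ coming from the change of truncation in the L\'evy--Khintchine representation, whereas you argue at the level of the L\'evy measure via push-forwards. Both routes are equivalent; yours is arguably more transparent for identifying the transformed $(\tilde\bgamma,\tilde\bmc)$, while the paper's characteristic-function computation makes the drift adjustment explicit without separate bookkeeping.
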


In general, a \HRL{} process $\g L$ is not self-similar in the sense $(\g L(at))_{t\geq 0}\overset{d}{=} (a^{H}\g L(t))_{t\geq 0}$. In fact, the latter only holds
in this model class if $\balpha=(\alpha,\ldots,\alpha)$ for some $\alpha\in (0,2)$.
We can however show that all \HR{} L\'evy processes
satisfy a generalized version of self-similarity, which will be crucial for estimation of their parameters. Here, we present a simplified version of this statement, a more general result can be found in Theorem~\ref{thmselfsimHRprocess} in the Appendix. There, we also provide an explicit expression for $\bmb(\cdot)$ which depends on the parameters $\Theta, \bm \alpha, \bm c$ of the process $\g L$. 

\begin{cor}
\label{cortimescalingHRprocess}
Let $\g L$ denote a \HRL{} process as in Definition~\ref{defnHRprocess} with drift $\bm \tau = \bm 0$. Then, for any $t > 0$
$$  
\g L(t) \overset{d}{=} \lc t^{1/\alpha_i} L_i(1) +b_i(t)  \rc_{ i \in  V},
$$
for some deterministic $\bmb(t)\in\R^d$.
\end{cor}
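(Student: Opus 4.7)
The plan is to compare the characteristic functions of $\g L(t)$ and the proposed right-hand side via the Lévy--Khintchine representation. Writing $D_t:\mathbb R^d \to \mathbb R^d$ for the diagonal linear map $\bmx \mapsto (t^{1/\alpha_i} x_i)_{i \in V}$, the target identity reads $\g L(t) \overset{d}{=} D_t \g L(1) + \bmb(t)$, so it suffices to show that the Lévy--Khintchine exponents of $\g L(t)$ and of $D_t \g L(1)$ differ only by a linear term in $\bmz$; this linear term will then define the drift $\bmb(t)$.

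The central step is the Lévy-measure identity $\Lambda \circ D_t^{-1} = t\Lambda$, where $\Lambda = \Lambda^{(\bag,\Theta)}$. By \citet[Theorem 1(ii)]{ederklueppelberg2012} a Lévy measure is determined by its marginal tail integrals together with its PLM, so it is enough to check that both quantities agree on the two sides. The marginal check is immediate from the stable form of $U_i$, since $U_i(t^{-1/\alpha_i} x) = t U_i(x)$ on each half-line. For the PLM, I would introduce the coordinate-wise standardizing maps $g_i$ for $\Lambda$ and $g_i'$ for $t\Lambda$ (the latter obtained by replacing $c_i^{\pm}$ with $tc_i^{\pm}$), verify the componentwise commutation $g \circ D_t^{-1} = g'$, and then invoke the $1$-stability $\Lambda^\star(sA) = s^{-1}\Lambda^\star(A)$ of the Hüsler--Reiss PLM from Example~\ref{ex:HR} to conclude that both $\Lambda \circ D_t^{-1}$ and $t\Lambda$ standardize to the same $\Lambda^\star$. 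The asymmetric orthant weights $\bgamma$ of \eqref{defHRparetolevycop} pose no obstacle since $D_t$ preserves every orthant of $\Rdo$.

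Once $\Lambda \circ D_t^{-1} = t\Lambda$ is established, the substitution $\bmy = D_t\bmx$ inside the Lévy--Khintchine exponent of $D_t \g L(1)$ yields $t$ times that of $\g L(t)$, except that the truncation $\id_{\{\|\bmx\|\leq 1\}}$ becomes $\id_{\{\|D_t^{-1}\bmy\|\leq 1\}}$. The difference of the two exponents is therefore the linear term $\ic\bmz^\top\bmb(t)$ with
\[
\bmb(t) \;=\; t\int_{\Rdo} \bmy \,\bigl(\id_{\{\|\bmy\|\leq 1\}} - \id_{\{\|D_t^{-1}\bmy\|\leq 1\}}\bigr)\,\Lambda(\rmd\bmy),
\]
which is finite because the integrand is supported on a bounded annulus away from the origin on which $\Lambda$ integrates $\|\bmy\|$. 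Using $\bm\tau = \bm 0$, this gives $\g L(t) \overset{d}{=} D_t\g L(1) + \bmb(t)$, and the more explicit form of $\bmb(t)$ promised in Theorem~\ref{thmselfsimHRprocess} can be read off from the integral above.

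The hard part will be the PLM half of the measure identity: setting up the two standardizing bijections carefully, verifying that they intertwine with $D_t$, and applying $1$-stability of $\Lambda^\star$ at exactly the right place so that the extra factor $t$ emerges. Once that is in place, the change of variables and the truncation bookkeeping are routine.
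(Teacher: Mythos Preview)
Your approach is correct and lands at the same conclusion as the paper, but the route to the key measure identity $\Lambda\circ D_t^{-1}=t\Lambda$ is packaged differently. The paper (Theorem~\ref{thmselfsimHRprocess}) computes the characteristic function of $\g L(at)$ directly: it writes $\Lambda^{(\bag)}$ as the push-forward of $\Lambda^{(\bm 1,\bm 1,\bgamma)}$ under the map $t_{\balpha,\bmc}$ from Proposition~\ref{proplevymeasurehrprocess}, performs the change of variables $\bmx\mapsto a\bmx$ inside the integral, and uses the $-(d{+}1)$-homogeneity of the \HR{} density $g(\cdot;\Theta)$ to absorb the Jacobian factor $a^{-d}$ and produce the extra factor $a$. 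The truncation mismatch then gives exactly the drift $b_i(a)$ in~\eqref{eq:deffunctionb}. You instead isolate the measure identity first, appealing to the Sklar-type decomposition of \citet{ederklueppelberg2012}: match the marginal tail integrals (immediate from $U_i(t^{-1/\alpha_i}x)=tU_i(x)$) and then match the PLMs using the $1$-stability of $\Lambda^\star$ from Example~\ref{ex:HR}. Both arguments rest on the same structural fact---stability of the \HR{} PLM, equivalently homogeneity of $g$---so neither is more general; the paper's density computation is a bit more concrete and yields the explicit $\bmb(t)$ in one pass, while your decomposition is cleaner conceptually and makes transparent why only a drift correction can arise. Two minor points: the sign of your displayed $\bmb(t)$ is flipped relative to the convention $\g L(t)\overset{d}{=}D_t\g L(1)+\bmb(t)$, and when you invoke Eder--Kl\"uppelberg for uniqueness you should note that the range condition on the $U_i$ holds here because $c_i^{\pm}>0$ and $\alpha_i\in(0,2)$.
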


An important implication of this result is that the copula of $\g L(t)$ does not depend on $t$, which we will later exploit in our estimation procedures.

\subsection{Ising asymmetry}

The \HR{} L\'evy process $\g L$ in Definition~\ref{defnHRprocess} is an ideal candidate to use the concepts of sparsity and graphical modeling in the framework of L\'evy processes. Indeed, since conditional independence properties are fully
determined by the PLM regardless of the marginal processes, it follows from Example~\ref{HR_CI} that in the symmetric case
$\gamma_\bmo = 2^{-d+1}$, $\bmo\in\Oo$, the graph
structure of a \HR{} L\'evy process is encoded
in the zero pattern of its precision matrix. More precisely, for an undirected graph $G:=(V,E)$, we have that 
\[ L_i \perp L_j \mid \g L_{V\setminus\{i,j\}}, \quad \Leftrightarrow \quad \Theta_{i,j} = 0.\]
This property does no longer hold for
general, asymmetric \HR{} L\'evy processes.
In this case, a zero in the precision matrix
is no longer sufficient to imply a factorization of the L\'evy density according to the corresponding graph as the following example shows.
\begin{ex}
\label{exdensfac}

{Let $d=3$ and consider a \HR{} PLM as in~\eqref{defHRparetolevycop} with precision matrix $\Theta$;}
assume that the only $0$-entry of $\Theta$ is given by $\Theta_{1,3}=$0. {By Example~\ref{ex:HR}}, a possible graph $T=(V,E)$ for {the symmetric version $\lambda^\star_{(\text{sym})}$ as in~\eqref{defhüslerreissexponentmeasuredensity} of} this model is a tree with edge set $E=\{(1,2),(2,3)\}$. For $m_{1,3}, m_{2,3}\in (0,1)$, define the weights 
 $$\gamma_\bmo:= [ m_{1,3}\id_{\{ o_1 o_3>0\}}+(1-m_{1,3})\id_{\{o_1o_3<0\}}][ m_{2,3}\id_{\{ o_2 o_3>0\}}+(1-m_{2,3})\id_{\{o_2o_3<0\}} ],~ \bmo \in \Oo,$$ 
which can be checked to satisfy condition~\eqref{weight_norm}.
However, it is shown in Appendix~\ref{sec:proofexdensfac} that the corresponding \HR{} L\'evy process in Definition~\ref{defnHRprocess} {with those weights} does not
satisfy the conditional independence $L_1\perp L_3\mid L_2$
for any $m_{1,3}\neq 1/2$.
\end{ex}
The issue in the above example (and more generally) arises from the fact that even though the L\'evy density
$\lambda^\star(\bmx)=\lambda^\star_{(\text{sym})}\bgamma_{\bmo(\bmx)}$ factorizes according to $G$ on every separate orthant \citep{engelkehitz2020}, we cannot extend this factorization to all of $\Rdo$. The reason is that the weights $\gamma_{\bmo(\bmx)}$, seen as a function of $\bmx$, do not factorize according on $G$.
 Thus, all we know in the case $\Theta_{i,j}=0$ is that 
\begin{align*}
    \lambda^\star(\bmx)&=f (\bmx_{V\setminus j} ) g(\bmx_{V\setminus i}) \gamma_{\bmo(\bmx)} ,
\end{align*} 
for some functions $f$ and $g$.
In order to have conditional independence $L_i \perp L_j\mid \g L_{V\setminus \{i,j\}}$ it is necessary
that also $\gamma_{\bmo(\bmx)}$ factorizes according to $G$, i.e., that $\gamma_{\bmo(\bmx)}=k( \bmx_{V\setminus j} ) l(\bmx_{V\setminus i})$ for some functions $k$ and $l$ whenever $\Theta_{i,j}=0$.

The key idea to construct such graphical weight functions is to build a bridge to Rademacher distributions and Ising models.
A multivariate Rademacher vector $\g B = (B_1,\dots, B_d)$ takes values in $\{-1,1\}^d = \Oo$ and the marginals satisfy $P(B_i=1)=P(B_i=-1)=1/2$. There is a one-to-one correspondence
between distributions for $\g B$ and valid orthant weights 
that satisfy~\eqref{weight_norm}, which is given by
\begin{align}\label{ising_equiv}
    \gamma_\bmo =2P\lc\g B=\bmo\rc.
\end{align} 
Moreover, every distribution on $\g B$ that factorizes w.r.t.\ some $\Theta$ results in a $\bgamma$ which is compatible with the graph $G$.

Describing all possible distributions for $\g B$ that are compatible with $\Theta$ is difficult. We therefore restrict our attention to a subclass of distributions that is tractable and yet  flexible enough to model complex dependence between the univariate discrete margins. The Ising model originated in statistical mechanics \citep{ising1925} and is the most popular 
probabilistic model for observations in $\{-1,1\}^d$. 
It is a pairwise interaction model with probability mass function
$$ P(\g B=\bmo):= \frac{1}{C(\Psi)} \exp\lc \sum_{ i \in V} \psi_io_i +\sum_{i,j \in V} \psi_{i,j}o_io_j \rc, \quad o\in\Oo,$$
where $\Psi$ contains all marginal and dependence parameters $\psi_i\in\mathbb R$, $i\in V$, and $\psi_{i,j} \in\R$, $i,j\in V$, respectively. The constant $C(\Psi) >0$ normalizes the mass of the distribution to one. 
In order to obtain a standardized distribution for $\g B$ that satisfies $P(B_i=1)=P(B_i=-1)=1/2, i \in V$ it is well known that we 
need to set $\psi_i = 0$ for all $i\in V$. Furthermore, a direct computation shows that a factorization $P(\g B=\bmo) = p(\bmo_{V\setminus j})q(\bmo_{V\setminus j})$
for suitable functions $p$ and $q$ is equivalent to $\psi_{i,j} = 0$. 

In the sequel, we assume that all marginal parameters satisfy $\psi_i =0$, $i\in V$, and then drop them in $\Psi$. We then always consider the parameters of an Ising model as symmetric matrix $\Psi \in \R^{d\times d}$ with entries $\psi_{i,j}$ that satisfy $\psi_{i,i} = 0$.
According to~\eqref{ising_equiv}, we can thus define a vector of weights $\bgamma(\Psi)$ that is compatible with a given $\Theta$ and the corresponding graph $G = (V,E)$, by setting $\psi_{i,j} = 0$ for all $(i,j)\notin E$ and 
\begin{align}
\gamma_\bmo(\Psi):=\frac{2}{C(\Psi)} \exp\lc \sum_{(i,j)\in E} \psi_{i,j}o_io_j \rc, \quad \psi_{i,j}\in\R, (i,j)\in E \label{defisingweight}
\end{align}

We refer to such weights as \emph{Ising weights} and to the corresponding \HR{} L\'evy process $\g L$ constructed via Ising weights as \emph{Ising--H\"usler--Reiss (IHR) process}. It can be checked that the class of IHR processes is closed under scaling and translation. However, their multivariate marginal distributions are generally only \HR{} L\'evy processes but not IHR processes, since the Ising models are not closed under marginalization.
 
We have the desirable property that conditional independence statements of the \HR{} L\'evy processes are entirely encoded in $\Theta$ and $\Psi$.

\begin{thm}
\label{thmglobalmarkovprop}
For an IHR process $\mathbf L$ with \HR{} precision matrix $\Theta$ and Ising weights $\Psi$ we have 
\[ 
L_i\perp L_j\mid \g L_{V\setminus \{i,j\}} \quad \Leftrightarrow \quad \big(\Theta_{i,j}=0~~~\mbox{and}~~~ \psi_{i,j} = 0 \big).
\]
\end{thm}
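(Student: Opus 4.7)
The plan is to reduce the sample-path conditional independence to a density factorization of the PLM and then separate this factorization into an H--R part and an Ising-weight part. First I would invoke the equivalence between sample-path conditional independence of the jump process and $\Lambda$-conditional independence recalled in Section~\ref{secbackground}, together with Proposition~4.4 of \cite{engelke2024levygraphicalmodels}, which says that $\perp_\Lambda$ depends only on the PLM. Combined with the density characterization of $\perp_\Lambda$ (also stated in Section~\ref{secbackground}) and the form~\eqref{defHRparetolevycop} of the IHR density, this reduces the conclusion $L_i\perp L_j\mid \g L_{V\setminus\{i,j\}}$ to the existence of measurable $f,g\geq 0$ such that
\begin{equation*}
\lambda^\star_{(\text{sym})}(\bmx)\,\gamma_{\bmo(\bmx)}(\Psi) \;=\; f(\bmx_{V\setminus j})\,g(\bmx_{V\setminus i}) \qquad \text{for a.e.\ }\bmx\in\Rdo.
\end{equation*}

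For the direction ``$\Leftarrow$'' I would read the factorization off the explicit formulas. The only term of $\log\lambda^\star_{(\text{sym})}(\bmx)$ in~\eqref{defhüslerreissexponentmeasuredensity} that couples coordinates $i$ and $j$ is the cross contribution $\Theta_{i,j}\log|x_i|\log|x_j|$ coming from the quadratic form $\log(|\bmx|)^\T\Theta\log(|\bmx|)$; the linear correction in $r_\Theta$ and the product $\prod_k |x_k|^{-1-1/d}$ are coordinatewise separable. Hence $\Theta_{i,j}=0$ yields a factorization $\lambda^\star_{(\text{sym})}(\bmx)=h_1(\bmx_{V\setminus j})h_2(\bmx_{V\setminus i})$. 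The same quadratic argument applied to the Ising exponent in~\eqref{defisingweight} gives $\gamma_\bmo(\Psi)=k(\bmo_{V\setminus j})l(\bmo_{V\setminus i})$ when $\psi_{i,j}=0$; since the sign map $\bmx\mapsto\bmo(\bmx)$ acts coordinatewise, this lifts to $\gamma_{\bmo(\bmx)}(\Psi)=\tilde k(\bmx_{V\setminus j})\tilde l(\bmx_{V\setminus i})$, and multiplying the two factorizations closes this direction.

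For ``$\Rightarrow$'' the key step is to decouple the two factors by first restricting $\bmx$ to a single orthant. On $(0,\infty)^d$ the Ising weight $\gamma_{\bmo(\bmx)}(\Psi)\equiv \gamma_{\bm 1}(\Psi)$ is a strictly positive constant, so the assumed factorization of $\lambda^\star$ reduces to a factorization of $\lambda^\star_{(\text{sym})}$ on the positive orthant. Taking logarithms and identifying the coefficient of the monomial $\log x_i\log x_j$ in $\log(\bmx)^\T\Theta\log(\bmx)$ forces $\Theta_{i,j}=0$; this is the formal version of the claim promised after Example~\ref{HR_CI}. Given $\Theta_{i,j}=0$, the ``$\Leftarrow$'' analysis produces a global factorization of $\lambda^\star_{(\text{sym})}$; dividing the assumed factorization of $\lambda^\star$ by this one then yields $\gamma_{\bmo(\bmx)}(\Psi)=\tilde k(\bmx_{V\setminus j})\tilde l(\bmx_{V\setminus i})$. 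Because $\gamma_{\bmo(\bmx)}$ depends on $\bmx$ only through the sign vector, this is equivalent to a factorization of the Ising pmf along the $(V\setminus j\,|\,V\setminus i)$ split, and identifying the coefficient of $o_io_j$ in the exponent of~\eqref{defisingweight} forces $\psi_{i,j}=0$.

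The hardest part will be the clean separation of the H--R factor from the Ising-weight factor in the necessity direction. The orthant-restriction trick above handles it because the Ising weights are piecewise constant in $\bmx$ while the symmetric H--R density is smooth in $|\bmx|$, so a single orthant suffices to pin down $\Theta$, after which the complementary information isolates $\Psi$.
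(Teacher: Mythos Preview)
Your proposal is correct and follows essentially the same route as the paper: reduce the sample-path statement to a factorization of the PLM density, restrict to the positive orthant to kill the Ising weight and force $\Theta_{i,j}=0$, then divide out the symmetric \HR{} factor to obtain a factorization of $\gamma_\bmo(\Psi)$ and conclude $\psi_{i,j}=0$. The paper's proof is only more careful in two places: it invokes Theorem~4.6 of \cite{engelkeivanosvsstrokorb2022} (and verifies its sub-face condition) rather than the informal density-factorization equivalence stated in Section~\ref{secbackground}, and it cites the known \HR{} result from \cite{engelkehitz2020} for $\Theta_{i,j}=0$ instead of the coefficient-identification argument you sketch.
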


\begin{rem}
\label{remtrees}
Suppose that the graph $T=(V,E)$ is 
a tree, i.e., a connected graph without
cycles.
Further let $\g L$ be an IHR process
with \HR{} parameter matrix $\Gamma$,  precision $\Theta$ and Ising weights $\Psi$ satisfying 
the global Markov property on the tree $T$.
As shown in \cite{engelkevolgushev2022}, the tree structure induces a particularly simple form of the 
parameter matrix, namely it is a tree metric
$$ \Gamma_{i,j}=\sum_{(k,l)\in \text{path}(i,j)} \Gamma_{k,l},$$
where $\text{path}(i,j)$ denotes all  edges on the undirected path from $i$ to $j$.

The Ising weight of orthant $\bmo\in \Oo$ in~\eqref{defisingweight} also takes a special form for trees, namely
$$\gamma_\bmo(\Psi)=\prod_{(i,j)\in E}{m}_{i,j}^{\id\{o_io_j>0\}}(1-{m}_{i,j})^{\id\{o_io_j<0\}}, \quad    {m}_{i,j}=\frac{\exp(2\psi_{i,j})}{1+\exp(2\psi_{i,j})}\in(0,1).$$   
Interestingly, this form of weights represents \emph{all}
valid orthant weights in the sense of~\eqref{weight_norm} in the case of tree graphical models; see~\citet[][equation~(28)]{engelke2024levygraphicalmodels}.
Therefore, Ising weights can be seen as a natural generalization of tree orthant weights for Lévy processes on general graphs; note that in the latter case, the Ising weights~\eqref{defisingweight} are just a sub-class of all valid orthant weights.
\end{rem}

\section{Estimation of Ising--Hüsler--Reiss processes}
\label{secestimation}

In this section we develop estimation methods for the parameters of an IHR process, which can naturally be divided into the following three subproblems
\begin{enumerate}
    \item[(i)] estimation of the graph $G$;
    \item[(ii)] estimation of the parameters of $\Theta$ that are compatible with the estimate of $G$;
    \item[(iii)] estimation of Ising parameters $\Psi$ that are compatible with the estimate of $G$.
\end{enumerate}
Steps (i) and (ii) will be treated jointly, whereas we devote a separate subsection to step (iii). Before discussing our suggested estimation methods we need an additional technical result, which is of independent interest and is essential to obtain the consistency of our estimators.

\subsection{A concentration inequality for the small time behavior of Lévy processes}\label{sec:levy_approx}
The basis for proving the consistency our estimation procedures is a fundamental result about the small time behavior of a Lévy process, which is an extension and correction of \citet[][Lemma 4.3]{buechervetter2013}; see Section \ref{app_counterexample} for additional details. Its proof can be found in Appendix~\ref{app_proof_univ_conv_levy_measure} and is one of the main theoretical contributions of this paper. Define 
\begin{align}\label{S_eps}
    S_\epsilon:=\{ c_\epsilon: \R^d\to[0,1]\mid c_\epsilon\text{ smooth and }\id_{\{\bmx \in B_\infty(\epsilon/2)\}} \leq c_{\epsilon}(\bmx)\leq \id_{\{ \bmx\in B_\infty(\epsilon)\}} \},
\end{align}
where we say that a function is smooth if it is infinitely differentiable. We need a regularity condition on the Lévy density  to derive our bound on the small time behavior of the corresponding Lévy process.

\begin{ass}
\label{assmpsmootheneslevydens}
Let $\lambda$ denote the density of the Lévy measure $\Lambda$ of a Lévy process $\g L$. Assume that for every $\epsilon>0$ there exists a $c_\epsilon\in S_\epsilon$ such that $h^{(\epsilon)}:=(1-c_\epsilon) \lambda$ satisfies the following conditions. 
      
When $d=1$ assume that $(1-c_\epsilon) h^{(\epsilon)}$ is bounded and continuously differentiable with bounded derivative. When $d\geq 2$ assume that
    \begin{enumerate}
    \item[(i)] $h^{(\epsilon)}$ has uniformly bounded univariate and bivariate margins $( h^{(\epsilon)}_i)_{\leqd}$ $( h^{(\epsilon)}_{i,j})_{1\leq i\not=j\leq d}$ where $h^{(\epsilon)}_i(x_i) := \int_{\R^{d-1}} h^{(\epsilon)}(\bmx) d\bmx_{\backslash i}$ and $h^{(\epsilon)}_{i,j}(x_i,x_j) := \int_{\R^{d-2}} h^{(\epsilon)}(\bmx) d\bmx_{\backslash \{i,j\}}$.
    \item[(ii)] $h^{(\epsilon)}$ is continuous  and $x_i\mapsto \partial_i h^{(\epsilon)}(\bmx)$ is continuous for all $\leqd$.
    \item[(iii)] For every $a \in \R$ and $j=1,\dots,d$ there exists $\sigma(a,j)>0$ and a function $M_{j,a,\sigma}: \R^{d-1} \to [0,\infty)$  such that $\int_{\R^{d-1}} M_{j,a,\sigma}(\bmx_{\setminus j}) \rmd\bmx_{\setminus j} < \infty$ and for all $1\leq i\not=j\leq d$  $y_i \mapsto \int_{\R^{d-2}} M_{j,a,\sigma}(\bmx_{\setminus j}) \rmd \bmx_{\setminus  i,j}$ is bounded on bounded sets and
    \[
    \sup_{|x_j-a|\leq \sigma} \Big|\partial_j h^{(\epsilon)}(x_1,\ldots,x_d) \Big| \leq M_{j,a,\sigma}(\bmx_{\backslash j}).
    \]
    \item[(iv)] For all $1\leq j\leq d$ $x_j \mapsto \int_{\R^{d-1}} \lv\partial_j h^{(\epsilon)}(\bmx)\rv \rmd\bmx_{\setminus j}$ is uniformly bounded.
    \end{enumerate}
    
\end{ass}

Essentially, we need Assumption~\ref{assmpsmootheneslevydens} to ensure that for all $I\subset  V$ the functions of the type $\R^{\vert I\vert}\ni\bmx_I \mapsto\int_{\bmx_I\geq \bmy_I} \Lambda (\rmd \bmy)$ are twice continuously differentiable with bounded derivatives outside a neighborhood of $\bm 0\in\R^{\vert I\vert }$, which will be an essential ingredient to the proof of Theorem \ref{thmunifconvlevymeasure} below. While Assumption~\ref{assmpsmootheneslevydens} may seem quite technical, it is surprisingly difficult to formulate simple regularity conditions directly in terms of $f$ which ensure differentiability and boundedness of terms of the form $x_i\mapsto U_i(x_i)=\sgn(x_i) \int_{T(x_i)} \int_{\R^{d-1}} \lambda(\bmx)\rmd\bmx_{\setminus i}$ for $x_i\in \R\setminus \{0\}$. When smoothing out the origin via $(1-c_\eps)\lambda$, the resulting regularity conditions are still non-trivial, but can be summarized relatively concisely via Assumption~\ref{assmpsmootheneslevydens}. To further illustrate the issue, Appendix \ref{app_proof_univ_conv_levy_measure} provides an example of a continuous bivariate Lévy density $\lambda$ whose marginal Lévy densities are not continuous on $\R\setminus \{0\}$. 

Lévy densities which have a singularity along the axes, i.e., $\lim_{x_i\to 0}\lambda(\bmx)=\infty$ for $\bmx_{ V\setminus i} \neq \bm 0$, cannot satisfy Assumption~\ref{assmpsmootheneslevydens}, since $x_i\mapsto \partial_i h^{(\epsilon)}(\bmx)$ cannot exist for small enough $\eps$. Thus, if $\lambda$ satisfies Assumption~\ref{assmpsmootheneslevydens}, it can only explode towards the origin. An example of a Lévy density that satisfies Assumption~\ref{assmpsmootheneslevydens} is the Lévy density of a \HRL{} process, as is verified in the following Proposition. The proof of this result relies on subtle analytic bounds which are necessary due to the complicated form of the \HR{} density and its partial derivatives.

\begin{prop}
\label{propunifconvHRprocess}
The \HRL{} process from Definition~\ref{defnHRprocess} satisfies Assumption~\ref{assmpsmootheneslevydens} for every $d\in\N$, $\balpha\in(0,2)^d$, $\bm \tau \in \R^d$ $\bmc\in(0,\infty)^{2d}$, $\bgamma\in [0,\infty)^{2^d}$ and any valid \HR{} precision matrix $\Theta$.   \end{prop}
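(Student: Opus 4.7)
The plan is to first derive a workable formula for the Lévy density $\lambda$ of an \HRL{} process and then verify the four items of Assumption~\ref{assmpsmootheneslevydens} using careful analytic bounds. Applying the Sklar-type representation of Lévy measures from~\cite{ederklueppelberg2012} to the marginal tail integrals $U_i$ of Definition~\ref{defnHRprocess}, the density $\lambda$ on the interior of each orthant $\bmo\in\Oo$ is a componentwise pushforward of the PLM density $\lambda^\star$ in~\eqref{defHRparetolevycop} under the power maps $x\mapsto \sgn(x)|x|^{\alpha_i}/c_i^{\sgn(x)}$. Combined with~\eqref{defhüslerreissexponentmeasuredensity}, this yields a closed-form expression whose dominant factor is $\exp\bigl(-\tfrac{1}{2}\log|\bmx|^{\T}D\Theta D\log|\bmx|\bigr)$ with $D=\mathrm{diag}(\balpha)$, multiplied by polynomial-in-$|x_i|$ factors and the piecewise-constant orthant weight $\gamma_{\bmo(\bmx)}$. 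Because $\Theta$ is positive semidefinite with kernel $\mathrm{span}\{\g 1\}$, the diagonal entries of $D\Theta D$ are strictly positive whenever $d\geq 2$, so this quadratic-in-log form grows like a positive multiple of $\log^2|x_i|$ as $x_i\to 0$ with the other coordinates kept away from the axes. Consequently $\lambda$ and all its partial derivatives decay super-polynomially along each coordinate axis, which absorbs the jump of $\gamma_{\bmo(\bmx)}$ across axes and gives $\lambda\in C^\infty(\R^d\setminus\{\g 0\})$ after extension by $0$.

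With this regularity in hand, I would pick any smooth cutoff $c_\epsilon\in S_\epsilon$, so that $h^{(\epsilon)}=(1-c_\epsilon)\lambda$ vanishes on $B_\infty(\epsilon/2)$ and coincides with $\lambda$ outside $B_\infty(\epsilon)$. Smoothness of $\lambda$ away from the origin then immediately gives condition~(ii), and reduces the $d=1$ case to the trivial check that a tapered stable-type density $\alpha c_i^{\pm}/|x|^{1+\alpha_i}$ is bounded and $C^1$ with bounded derivative outside $B_\infty(\epsilon/2)$. For the boundedness and integrability in~(i) and~(iv), I would exploit the identity $\partial_j\lambda=\lambda\cdot\partial_j\log\lambda$, where $\partial_j\log\lambda$ is a linear combination of terms $(D\Theta D\log|\bmx|)_j/x_j$ and $1/x_j$: outside $B_\infty(\epsilon/2)$ the factor $1/|x_j|$ is bounded in the direction $x_j$ is small and $\partial_j\log\lambda$ grows only logarithmically in $\bmx$, while $\lambda$ itself is super-exponentially small as $|\log|x_i||\to\infty$ in any coordinate. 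Integrating the resulting Gaussian-in-log envelope against $\rmd\bmx_{\setminus j}$ yields finite and uniformly bounded univariate and bivariate margins of $h^{(\epsilon)}$, as well as the uniform-in-$x_j$ bound on $\int|\partial_j h^{(\epsilon)}|\rmd\bmx_{\setminus j}$ required by~(iv).

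The main obstacle is condition~(iii), which demands an \emph{integrable} envelope $M_{j,a,\sigma}(\bmx_{\setminus j})$ dominating $\sup_{|x_j-a|\leq\sigma}|\partial_j h^{(\epsilon)}(\bmx)|$, with bivariate margins bounded on bounded sets. For $|a|<\epsilon/4$ the strip $|x_j-a|\leq\epsilon/4$ sits inside $B_\infty(\epsilon/2)$, so $h^{(\epsilon)}$ vanishes there and the claim is trivial. For $|a|\geq\epsilon/4$ I would take $\sigma=|a|/2$, so that $|x_j|$ stays bounded away from $0$ on the strip and the $1/x_j$ term is uniformly controlled. The genuine technical difficulty is that the off-diagonal entries of $\Theta$ couple $x_j$ to $\bmx_{\setminus j}$ via the cross-terms of the quadratic-log form; to handle this I would complete the square in $\log|\bmx_{\setminus j}|$, which for $x_j$ ranging over a fixed compact strip produces a Gaussian-in-$\log|\bmx_{\setminus j}|$ envelope whose precision is the relevant Schur complement of $D\Theta D$ and is uniformly positive definite by strict positivity of $\Theta$ on $\g 1^\perp$. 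The resulting $M_{j,a,\sigma}$, supplemented by polynomial Jacobian factors and a harmless polynomial in $\log|\bmx_{\setminus j}|$ inherited from $\partial_j\log\lambda$, is integrable by a standard log-Gaussian comparison, and boundedness of its bivariate margins on bounded sets follows from the same representation. Collecting these ingredients for every admissible $d,\balpha,\bm\tau,\bmc,\bgamma$ and $\Theta$ verifies Assumption~\ref{assmpsmootheneslevydens}.
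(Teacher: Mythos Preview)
Your sketch is correct and follows essentially the same route as the paper, which also exploits the log-Gaussian structure of the \HR{} density---specifically via the $(d-1)$-dimensional representation~\eqref{eq:galt} with a reference coordinate $k$ chosen so that $|x_k|\geq\epsilon$, yielding the key pointwise bound $\lambda(\bmx)\lesssim|x_k|^{-c}\prod_{i\neq k}\bigl(|x_i|^{\alpha_i}/|x_k|^{\alpha_k}\wedge |x_k|^{\alpha_k}/|x_i|^{\alpha_i}\bigr)^\xi$ for arbitrary $\xi>0$ (and analogously for $\partial_j\lambda$); this reference-coordinate trick produces a single global majorant $M_j(\bmx_{\setminus j})$ independent of the strip location $a$, slightly cleaner than your strip-based construction. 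One terminological correction: the matrix arising when you complete the square in $\log|\bmx_{\setminus j}|$ with $x_j$ held fixed is the principal submatrix $(D\Theta D)_{\setminus j,\setminus j}$, not the Schur complement, though your positivity argument remains valid for this submatrix since the kernel vector $D^{-1}\mathbf{1}$ of $D\Theta D$ has no zero entry.
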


We next show that that Assumption~\ref{assmpsmootheneslevydens} is sufficient to uniformly control the small time behavior of a multivariate Lévy process on rectangles, i.e., sets of the form $R=\times_{i=1}^m J_m$ where each $J_m\subseteq \R$ is an interval, which then allows us to uniformly control the small time behavior of the \HRL{} process by Proposition \ref{propunifconvHRprocess}.

\begin{thm}
\label{thmunifconvlevymeasure}
    Let $\g L$ denote a $d$-dimensional Lévy process with Lévy measure $\Lambda(\rmd \bmx)=\lambda(\bmx)\rmd\bmx$, drift $\bmb$ and covariance matrix $\Sigma$. Assume that $\lambda$ satisfies Assumption~\ref{assmpsmootheneslevydens}. 
    Then, for every $\delta>0$, there exists a constant $C$ which only depends on $\delta$ but does not depend on $t$ such that for all rectangles $R \subset B_\infty^\complement(\delta)$ and $t>0$ we have
    $$ 
    \Big\vert P\lc  \g L(t) \in R \rc -t\Lambda\lc R \rc \Big\vert \leq Ct^2.
    $$
\end{thm}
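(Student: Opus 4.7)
My plan is to concentrate, at small $t$, all probability on the event that $\bmL$ takes exactly one large jump, and to treat the continuous/small-jump part as a perturbation controlled by Assumption~\ref{assmpsmootheneslevydens}. Given $\delta>0$, I pick $\epsilon\in(0,\delta)$ (to be refined later) and set $h^{(\epsilon)} := (1-c_\epsilon)\lambda$ for some $c_\epsilon \in S_\epsilon$ as supplied by Assumption~\ref{assmpsmootheneslevydens}. Since $R\subset B_\infty^\complement(\delta)$ and $\epsilon<\delta$, $c_\epsilon$ vanishes on $R$, so $\Lambda(R) = \int_R h^{(\epsilon)}(\bmx)\rmd\bmx$. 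I will then use the L\'evy--It\^o decomposition to write $\bmL \overset{d}{=} \bmL^1 + \bmL^2$ with $\bmL^1,\bmL^2$ independent: $\bmL^1$ is the compound Poisson process with finite intensity $h^{(\epsilon)}$, and $\bmL^2$ carries the Brownian part, drift, and residual small-jump part with L\'evy density $c_\epsilon\lambda$, so in particular its jumps are almost surely bounded in $L_\infty$-norm by $\epsilon$.

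Conditioning on the number $N_t\sim\text{Poisson}(\|h^{(\epsilon)}\|_1 t)$ of jumps of $\bmL^1$ in $[0,t]$ gives
\[
P\lc\bmL(t)\in R\rc = \sum_{k\geq 0} P(N_t=k)\, P\lc\bmL^2(t) + \mathbf{S}_k \in R\rc,
\]
where $\mathbf{S}_k$ is the sum of $k$ i.i.d.\ vectors with common density $p_J := h^{(\epsilon)}/\|h^{(\epsilon)}\|_1$, independent of $\bmL^2$. The $k\geq 2$ contribution is bounded by $P(N_t\geq 2)=O(t^2)$ uniformly in $R$. For $k=0$, a Bennett-type concentration inequality for L\'evy processes with jumps bounded by $\epsilon$, applied coordinate-wise, yields $P(\|\bmL^2(t)\|_\infty \geq \delta) = O(t^{c\delta/\epsilon})$ for some absolute $c>0$; choosing $\epsilon$ small enough thus makes this contribution $O(t^2)$ as well.

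The main term is $k=1$. By independence, $P(\bmL^2(t) + \mathbf{J} \in R) = \e[G(\bmL^2(t))]$, where $\mathbf{J}$ has density $p_J$ and $G(\bmy) := \int_R p_J(\mathbf{u} - \bmy)\rmd\mathbf{u}$; crucially, $\|h^{(\epsilon)}\|_1\, G(\bm{0}) = \int_R h^{(\epsilon)}(\mathbf{u})\rmd\mathbf{u} = \Lambda(R)$. A second-order Taylor expansion of $G$ around $\bm 0$ yields
\[
\lv\e[G(\bmL^2(t))] - G(\bm{0})\rv \leq \|\nabla G\|_\infty\, \|\e[\bmL^2(t)]\| + \tfrac{1}{2} \|\nabla^2 G\|_\infty\, \e[\|\bmL^2(t)\|^2].
\]
Since $\bmL^2$ has bounded jumps, $\|\e[\bmL^2(t)]\| = O(t)$ and $\e[\|\bmL^2(t)\|^2] = O(t)$, so the bound becomes $O(t)$ provided the derivative norms of $G$ are uniformly bounded in $R$. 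Multiplying by $P(N_t=1) = \|h^{(\epsilon)}\|_1 t + O(t^2)$ gives the $k=1$ contribution $t\Lambda(R) + O(t^2)$, and summing the three regimes yields the stated bound.

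The hard part is exactly the uniform control of $\|\nabla G\|_\infty$ and $\|\nabla^2 G\|_\infty$ over all rectangles $R\subset B_\infty^\complement(\delta)$, which is what motivates the specific structure of Assumption~\ref{assmpsmootheneslevydens}. I would apply the fundamental theorem of calculus successively to $\int_R \partial_i p_J\, \rmd\mathbf{u}$ in the $u_i$-direction, reducing it to a $(d-1)$-dimensional integral of $p_J$ evaluated at boundary values of $u_i$, uniformly bounded by the univariate-marginal bound in~(i). For mixed derivatives $\int_R \partial_i\partial_j p_J\, \rmd\mathbf{u}$ with $i\neq j$, two applications reduce this to a $(d-2)$-dimensional integral of $p_J$ evaluated at boundary values of $u_i,u_j$, bounded by the bivariate-marginal bound in~(i). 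For $\int_R \partial_i^2 p_J\, \rmd\mathbf{u}$, a single application leaves a $(d-1)$-dimensional integral of $\partial_i p_J$, bounded by~(iv). These bounds transfer directly to the derivatives of $G$ and are precisely what upgrade the Taylor estimate to a genuine $O(t)$ bound uniform in $R$; extending from small $t$ to all $t>0$ is then routine since $\Lambda(R)\leq \Lambda(B_\infty^\complement(\delta))<\infty$.
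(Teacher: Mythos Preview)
Your proposal is correct and structurally identical to the paper's proof: both decompose $\bmL = \bmL^1 + \bmL^2$ with $\bmL^1$ compound Poisson of intensity $(1-c_\epsilon)\lambda$, condition on the number of big jumps, dispose of $k=0$ via a concentration bound for processes with bounded jumps and of $k\geq 2$ via the Poisson tail, and identify the $k=1$ term as $\e[G(\bmL^2(t))]$ for $G(\bmy)=\int_{R-\bmy}p_J$. The paper first reduces to half-infinite rectangles $[\bmy,\bm\infty)$ by inclusion--exclusion, but this is cosmetic.

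The one substantive difference is in how the $k=1$ term is analyzed. The paper applies It\^o's formula to $G(\bmL^2(t))$, takes expectations so the martingale pieces vanish, and bounds the remaining drift/compensator integrals by $t$ times the suprema of $\nabla G$ and $\nabla^2 G$. You instead use a direct second-order Taylor expansion and the moment bounds $\|\e[\bmL^2(t)]\|=O(t)$, $\e[\|\bmL^2(t)\|^2]=O(t)$, which hold because $\bmL^2$ has jumps bounded in $\|\cdot\|_\infty$ by $\epsilon$. Your route is more elementary and avoids stochastic calculus entirely; the It\^o route is perhaps more transparent about where each piece of the triplet $(\bmb,\Sigma,c_\epsilon\lambda)$ contributes, but both hinge on exactly the same regularity: $G$ must be $C^2$ with first and second partial derivatives bounded uniformly over all admissible $R$. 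One small omission in your sketch is that you invoke only parts~(i) and~(iv) of Assumption~\ref{assmpsmootheneslevydens}, which give the uniform bounds; parts~(ii) and~(iii) are what guarantee that the derivatives actually exist and are continuous (so the Taylor remainder is legitimate). The paper packages precisely this passage from density regularity to $C^2$ regularity of $G$ into a separate lemma (Lemma~\ref{lemdiffintegrateddens}), and your argument would need the same.
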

  
Note that Theorem~\ref{thmunifconvlevymeasure} implies the same uniform bound for the small time behavior of all $I$-margins $\g L_I$ by setting $J_i=\R$ for all $i\not\in I$. 

\begin{rem}
It is well known, see e.g.\ \cite[Corollary 8.9]{sato}, that the small time-behavior of a Lévy process $\bm L$ ``reveals'' the Lévy measure $\nu$ via
\[
t^{-1} \e\lk g\big(\g L(t)\big)\rk \overset{t\to 0}{\longrightarrow} \int_{\mathbb{R}^d} g(x)\,\Lambda(\rmd x) 
\]
for every bounded continuous function $g:\R^d\to \R$ which vanishes in a neighborhood of~$\bm 0$. Theorem \ref{thmunifconvlevymeasure} above yields an analog of this result when $g$ is the indicator function of a rectangular set $R$ which is bounded away from $\bm 0$, additionally quantifying the rate of the convergence to be of order $t$ uniformly over all $R\subset B_\infty^\complement(\delta)$.
\end{rem}

\subsection{Estimation of the graph structure}

Throughout this section, we will work with the following observation scheme.
\begin{ass}\label{ass:obsscheme}
Assume that we observe a \HRL{} process $\g L$ with drift $\bm\tau$ and parameters $(\Theta,\bag)$ on $n$ regularly spaced grid points $\Delta, 2\Delta, \dots, n\Delta$ with $\Delta > 0$. Denote the corresponding variogram matrix by $\Gamma$ and assume $\Gamma_{i,j} > 0~ \forall i \neq j$. Let 
\begin{align}\label{levy_samples}
\g D_s := \g L(s\Delta)-\g L( \{s-1\}\Delta), \quad s \in [n].
\end{align} 
\end{ass}
The stationary and independent increments property of Lévy processes implies that $\g D_s$ are i.i.d. across $s \in [n]$. The asymptotic regime that we will study in what follows is $n\to\infty$ for fixed $\Delta > 0$, that is, we observe the process on a growing time grid.

We will first discuss the estimation of the variogram matrix $\Gamma$ based on the sample $\g D_s, s \in [n]$. To this end, we will rely on ideas from extreme value theory. The dependence structure of a random vector $\bmX = (X_j: j\in  V)$ with continuous marginal distribution functions $F_i$ is \emph{multivariate regularly varying with exponent measure $\nu$} if {$\nu$ is a Borel measure on $[0,\infty)^d$ such that} 
\begin{align}\label{MRV}
{\lim_{t\to 0} t^{-1} P\big(  F_i(X_i) > 1 - t/x_i ~\forall i\in  V\big) = \nu(\{\bmy \mid y_i > x_i ~\forall i\in  V\}) }
\end{align}
holds for all {$\bmx \in [0,\infty)^d\setminus\{\bm 0\}$ that are continuity points of $\g x \mapsto \nu((\bmx,\bm\infty))$} \citep[e.g.,][]{segers2020}; note the similarity to the small-time approximation of L\'evy processes in~\eqref{copula_approx}. Our first step is to establish multivariate regular variation of certain random vectors constructed from $\g L(\Delta)$ and provide a quantification of the corresponding convergence rate for certain values of $\g x$. 

Fix $I \subseteq  V$ with $|I| \geq 2$. For every orthant
$\bmo\in\{-1,1\}^{\vert I\vert}$, we define the random vector $\g L_I^\bmo(\Delta)$ with law
$$
\g L_I^\bmo(\Delta) \overset{d}{=} \g L_I(\Delta)\mid \{ \g L_I(\Delta)\in \bmo\};
$$
note that the probability of the conditioning event is positive by Lemma \ref{lem:posorthantprob}. 

\begin{thm}\label{thm:secondorder}
Let Assumption~\ref{ass:obsscheme} hold for some $d \geq 3$. Define the marginal distribution functions of $|\g L^\bmo(\Delta)|$ (with absolute value interpreted component-wise) as $F_i^{+,\bmo}(z)=P\lc |L^\bmo_i|\leq z\rc$. Then, for any $0 < \zeta < 1 \wedge \min_{i\in  V}\alpha_i^{-1}$,
    there exist $C>0$ such that for all $\bmx\in[0,1]^d$
	\begin{multline*} 
		\mathcal{D}_I(\bm x,q) := \Big|  P\lc F_i^{+,\bmo} \lc |L^\bmo_i(\Delta)|\rc >1-qx_i\ \forall\ i\in I\rc
		\\
		- q \Lambda_{\rm{HR}}^{(I)}\big(\{\bmy\in[0,\infty)^{|I|}\mid  \bmy\geq 1/\bmx_I \}\big) \Big|  \leq C q^{1+\zeta}
	\end{multline*} 
	for all $I\subseteq V$ with $\vert I\vert \in\{2,3\}$, where $\Lambda_{\rm{HR}}^{(I)}$
	denotes the exponent measure of the H\"usler--Reiss distribution with variogram matrix $\Gamma_{I,I}:=(\Gamma_{i,j})_{i,j\in I}$. 
\end{thm}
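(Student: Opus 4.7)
The plan is to rewrite the conditional-quantile event as one involving the Lévy measure $\Lambda_I$ of the marginal process $\g L_I$, reduce to a small-time regime through the self-similarity in Corollary~\ref{cortimescalingHRprocess}, and then invoke the uniform small-time control of Theorem~\ref{thmunifconvlevymeasure}. Set $p_\bmo:=P(\g L_I(\Delta)\in\bmo)>0$ (Lemma~\ref{lem:posorthantprob}) and $z_i(q):=(F_i^{+,\bmo})^{-1}(1-qx_i)$; with the rectangle $R_\bmo(\bmz):=\{\bmy\in\R^{|I|}:\bmy\in\bmo,\,|y_i|>z_i\,\forall i\in I\}$, the event inside $\mathcal D_I(\bmx,q)$ becomes $\{\g L_I(\Delta)\in R_\bmo(\bmz(q))\}$, and the defining identity of the quantile reads $P(|L_i(\Delta)|>z_i(q),\g L_I(\Delta)\in\bmo)=p_\bmo q x_i$. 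By Lemma~\ref{lemHrprocessclass}, $\g L_I$ is itself an IHR process, so its Lévy density satisfies Assumption~\ref{assmpsmootheneslevydens} by Proposition~\ref{propunifconvHRprocess} and Theorem~\ref{thmunifconvlevymeasure} applies.

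Since $\Delta$ is fixed, the $O(\Delta^2)$ bound of Theorem~\ref{thmunifconvlevymeasure} is not useful on its own; I sidestep this via time-rescaling. Using $L_i(\Delta)\stackrel{d}{=}(\Delta/t)^{1/\alpha_i}(L_i(t)-b_i(t))+b_i(\Delta)$ jointly in $i\in I$ and setting $t=cq\Delta$ for a constant $c$ large enough that the rescaled thresholds $\widetilde z_i$ on the transformed rectangle $\widetilde R(t,q)$ are of order $c^{1/\alpha_i}x_i^{-1/\alpha_i}=O(1)$ and bounded below by a fixed $\delta_0>0$ uniformly in $\bmx\in[0,1]^d$, the set $\widetilde R(t,q)$ lies in $B_\infty^\complement(\delta_0)$, and Theorem~\ref{thmunifconvlevymeasure} yields $P(\g L_I(\Delta)\in R_\bmo(\bmz(q)))=P(\g L_I(t)\in\widetilde R(t,q))=t\Lambda_I(\widetilde R(t,q))+O(t^2)$ with an absolute constant. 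The HR structure~\eqref{defHRparetolevycop} combined with Definition~\ref{defnHRprocess} gives $\Lambda_I(R_\bmo(\bmz))=\gamma_\bmo^{(I)}\Lambda_{\rm HR}^{(I)}(\{\bmy\in[0,\infty)^{|I|}:y_i>z_i^{\alpha_i}/c_i^{o_i}\,\forall i\in I\})$, where $\gamma_\bmo^{(I)}:=\sum_{\widetilde\bmo\in\Oo:\widetilde\bmo_I=\bmo}\gamma_{\widetilde\bmo}$ and $c_i^{o_i}\in\{c_i^+,c_i^-\}$ is selected by the sign $o_i$. The coordinate-wise rescaling $y_i\mapsto(t/\Delta)^{1/\alpha_i}y_i$ together with the $1$-homogeneity of $\Lambda_{\rm HR}^{(I)}$ produces, modulo drift corrections, the exact identity $t\Lambda_I(\widetilde R(t,q))=\Delta\Lambda_I(R_\bmo(\bmz(q)))$. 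Running the same argument marginally in each coordinate $i$ to invert the quantile identity yields $z_i(q)^{\alpha_i}=\Delta\gamma_\bmo^{(I)}c_i^{o_i}/(p_\bmo q x_i)\cdot(1+O(q^{1\wedge\alpha_i^{-1}}))$; substituting back into $\Lambda_I(R_\bmo(\bmz(q)))$ and invoking $1$-homogeneity once more collapses all multiplicative constants into $P(\g L_I(\Delta)\in R_\bmo(\bmz(q)))/p_\bmo=q\Lambda_{\rm HR}^{(I)}(\{\bmy\geq 1/\bmx_I\})+O(q^{1+\zeta})$.

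The main obstacle, and the origin of the constraint $\zeta<1\wedge\min_i\alpha_i^{-1}$, is the drift bookkeeping during the rescaling. Relative to the leading-order threshold $(t/\Delta)^{1/\alpha_i}z_i(q)=O(1)$, the drift corrections $(t/\Delta)^{1/\alpha_i}|b_i(\Delta)|$ and $|b_i(t)|$ scale as $q^{1/\alpha_i}$ and $q^{1\wedge\alpha_i^{-1}}$ respectively, which dominate the $O(q)$ error from Theorem~\ref{thmunifconvlevymeasure} precisely when $\alpha_i>1$. Propagating this relative error through the exact scaling identity for $\Lambda_I$ and converting it into an additive bound on $\mathcal D_I(\bmx,q)$ via the smoothness of $\bmw\mapsto\Lambda_{\rm HR}^{(I)}(\{\bmy\geq\bmw\})$ away from the axes (ensured by $\Gamma_{i,j}>0$ in Assumption~\ref{ass:obsscheme}) yields the stated bound $Cq^{1+\zeta}$ uniformly in $\bmx\in[0,1]^d$.
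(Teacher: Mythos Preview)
Your strategy coincides with the paper's (carried out in its Lemma~\ref{lem:HRsecorder}): use the self-similarity of Corollary~\ref{cortimescalingHRprocess} to pass from time $\Delta$ to a small time parameter, apply the uniform small-time bound of Theorem~\ref{thmunifconvlevymeasure}, and control the drift via Lemma~\ref{lemtechproofs} to obtain the restriction on $\zeta$. The paper avoids your explicit quantile inversion by rewriting $\{F_i^{+}(X_i)>1-qx_i\}$ directly as $\{H_i(q,L_i(q))<qK_1x_i\}$ with $H_i(q,z)=P(L_i(q)>z,\,L_j(q)>c_j(q)\ \forall j\neq i)$, which sidesteps one delicate point in your sketch: the multiplicative expansion $z_i(q)^{\alpha_i}=(\mathrm{const})/(qx_i)\cdot(1+O(q^{1\wedge\alpha_i^{-1}}))$ is \emph{not} uniform as $x_i\to 0$, because the additive error of order $q^{1+\zeta}$ coming from the orthant drift and from Theorem~\ref{thmunifconvlevymeasure} produces a relative error $O(q^{\zeta}/x_i)$ after division by $qx_i$. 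This is not fatal---parametrizing instead by $u_i:=c_i^{o_i}/z_i(q)^{\alpha_i}$ gives a uniform additive error $u_i=p_\bmo q x_i/(\Delta\gamma_\bmo^{(I)})+O(q^{1+\zeta})$, and then the global Lipschitz property of $\bmx\mapsto\Lambda_{\rm HR}^{(I)}(\{\bmy\geq 1/\bmx\})$ (which the paper proves and uses in place of your ``smoothness away from the axes'') together with $1$-homogeneity pushes the bound through uniformly in $\bmx\in[0,1]^{|I|}$---but as written your inversion step needs this adjustment (or the case analysis $K_1x_i-\eps_i\le 0$ that the paper carries out).
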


The bound above corresponds to a \emph{second order condition} in extreme value theory \citep{DF2006}; this result is key for constructing an estimator for $\Gamma$ and prove its consistency. Further, the result is of independent interest beyond the analysis of extremal variograms, since it allows one to study general M-estimators for parameters of a \HRL{} process that are based on ideas from extreme value theory. Note that conditioning on $\g L_I(\Delta)\in \bmo$ is crucial to obtain the correct exponent measure. The random vector $\g L_I$ itself is not multivariate regularly varying with the correct exponent measure, and taking absolute values does not solve this problem either. 

Next, we utilize ideas from the estimation of 
\HR{} Pareto distributions \citep{engelkelalancettevolgushev2022,hentschelengelkesegers2023} in order to construct estimators for $\Gamma$ and subsequently $\Theta$. Consider a sequence $q_n \to 0$ with $q_n \in (0,1)$ for all $n$. For any fixed $i\neq j,m \in  V$ let $J = (i,j,m)$. For $\ell \in \{i,j,m\}$ and an orthant $\bmo \in \{-1,1\}^{3}$, let $\widehat F_\ell^{\bmo,J}$ denote the empirical distribution function of the sample $\{D_{s,\ell} \mid s \mbox{ such that } \g D_{s,J} \in \bmo\}$, where we explicitly allow for $i=m$ or $j=m$ which results in $D_{s,J}$ having repeated entries. Define $n^{J,\bmo} := |\{s \in [n]: \g D_{s,J} \in \bmo\}|$ and let
\[
\mathcal{S}^{J,\bmo}_m := \Big\{ \log\big[ \tfrac{n^{J,\bmo}+1}{n^{J,\bmo}} - \widehat{F}_i^{\bmo,J}(D_{s,i}) \big] - \log\big[ \tfrac{n^{J,\bmo}+1}{n^{J,\bmo}} - \widehat{F}_j^{\bmo,J}(D_{s,j}) \big] \ \Big\vert\ \widehat{F}_m^{\bmo,J}(D_{s,m}) > 1 - q_n \Big\}
\]
denote the samples of $\log\big[ \tfrac{n^{J,\bmo}+1}{n^{J,\bmo}} - \widehat{F}_i^{\bmo,J}(D_{s,i}) \big] - \log\big[ \tfrac{n^{J,\bmo}+1}{n^{J,\bmo}} - \widehat{F}_j^{\bmo,J}(D_{s,j}) \big]$ for which $\widehat{F}_m^{\bmo,J}(D_{s,m}) > 1 - q_n$. 
An estimator for $\Gamma_{ij}$ is now constructed as follows 
\begin{align}\label{defvarioest_incr}
\widehat{\Gamma}_{ij}^{(m,\bmo)} := \widehat{\mathrm{Var}}(\mathcal{S}^{J,\bmo}_m),
\end{align}   
where and $\widehat{\Var}(\mathcal{S}^{J,\bmo}_m)$ is the empirical variance of the sample $\mathcal{S}^{J,\bmo}_m$ with scaling $(|\mathcal{S}^{J,\bmo}_m|+1)^{-1}$ where $|\mathcal{S}^{J,\bmo}_m|$ denotes the number of observations in the set $\mathcal{S}^{J,\bmo}_m$. This estimator is motivated by the empirical variogram of~\cite{engelkevolgushev2022}. Note, however, that we have a possibly different sample for each orthant, so the matrix $(\widehat{\Gamma}_{ij}^{(m,\bmo)})_{i,j \in  V}$ does \emph{not} correspond to the empirical variogram matrix from~\cite{engelkevolgushev2022}. As mentioned above, conditioning on orthants first is crucial to obtain valid estimators of $\Gamma_{i,j}$, and a naive application of the ideas from~\cite{engelkevolgushev2022} does not lead to consistent estimators.  

Since for any fixed $i,j$ the quantity $\widehat{\Gamma}_{ij}^{(m,\bmo)}$ is an estimator for the same population parameter $\Gamma_{i,j}$, we construct a combined estimator by considering all possible values of $m, \bmo$ and define 
\begin{align}
    \widehat{\Gamma}_{i,j}:=\frac{1}{d} \sum_{m=1}^d \sum_{\bmo\in \{-1,1\}^{3}} \frac{n^{J,\bmo}}{n} \widehat{\Gamma}^{(m,\bmo)}_{i,j} \label{defestimatorvariogrammatrix},
\end{align}
implicitly setting $\widehat{\Gamma}^{(m, \bmo)}_{i,j}=0$ whenever there is no observation available to compute it, or when $i=j$. Note that $\widehat{\Gamma}^{(m,\bmo)}_{i,j}$ depends on $q_n$, and thus so does $\hat{\Gamma}$. Repeating this for all pairs $i,j\in V$ yields and
estimator $\hat{\Gamma}$ of the \HR{} parameter matrix $\Gamma$ of the IHR process $\g L$. Our first fundamental result shows consistency of this estimator
\begin{thm}
\label{thmconsitencygammaestimator}
Let Assumption~\ref{ass:obsscheme} hold and consider the estimator $\widehat \Gamma$ from~\eqref{defestimatorvariogrammatrix}. Assume that $q_n\to 0$ with $1/2\geq q_n \gg n^{\xi-1}$ for some $\xi \in(0,1)$. Then $\widehat{\Gamma}\Pkonv \Gamma$.
\end{thm}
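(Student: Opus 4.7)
The plan is to establish entrywise convergence $\widehat{\Gamma}_{i,j} \Pkonv \Gamma_{i,j}$ for each $i \neq j$ by showing consistency of each orthant-specific estimator $\widehat{\Gamma}^{(m,\bmo)}_{i,j}$ and then combining via the convex-combination structure of $\widehat{\Gamma}_{i,j}$. Fix $i \neq j$, $m \in V$, set $J=(i,j,m)$, and fix an orthant $\bmo \in \{-1,1\}^3$. Under Assumption~\ref{ass:obsscheme}, the increments $\g D_1,\dots,\g D_n$ are i.i.d.\ copies of $\g L(\Delta)$, so the subsample $\{\g D_{s,J}: \g D_{s,J}\in\bmo\}$ has, conditionally on its random cardinality $N^\bmo := n^{J,\bmo}$, the joint law of $N^\bmo$ i.i.d.\ copies of $\g L_J^\bmo(\Delta)$. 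By Lemma~\ref{lem:posorthantprob}, $p_\bmo := P\{\g L_J(\Delta)\in\bmo\}>0$, and the strong law of large numbers yields $N^\bmo/n \to p_\bmo$ almost surely; combined with $q_n \gg n^{\xi-1}$, this gives $N^\bmo q_n \to \infty$ in probability.

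With this reduction in hand, $\widehat{\Gamma}^{(m,\bmo)}_{i,j}$ coincides with the empirical extremal variogram of \cite{engelkevolgushev2022} computed from i.i.d.\ copies of $|\g L_J^\bmo(\Delta)|$ with intermediate quantile level $q_n$ and conditioning coordinate $m$. Theorem~\ref{thm:secondorder} supplies exactly the second-order multivariate regular variation condition required: the vector $|\g L_J^\bmo|$ is approximately multivariate regularly varying with H\"usler--Reiss exponent measure $\Lambda_{\rm HR}^{(J)}$ whose variogram is $\Gamma_{J,J}$, with remainder of order $q^{1+\zeta}$. The remainder controls the bias of the variogram estimator, $N^\bmo q_n \to \infty$ controls the variance, and the standard consistency arguments for the empirical extremal variogram (following \cite{engelkevolgushev2022, engelkelalancettevolgushev2022}) then yield $\widehat{\Gamma}^{(m,\bmo)}_{i,j} \Pkonv \Gamma_{i,j}$.

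To conclude, for each fixed $m$ the weights $n^{J,\bmo}/n$ sum to one over $\bmo\in\{-1,1\}^3$, so $\sum_\bmo (n^{J,\bmo}/n)\widehat{\Gamma}^{(m,\bmo)}_{i,j}$ is a convex combination of consistent estimators of $\Gamma_{i,j}$ (the convention $\widehat{\Gamma}^{(m,\bmo)}_{i,j}=0$ for empty strata is harmless since $p_\bmo > 0$), and Slutsky's theorem yields convergence in probability to $\Gamma_{i,j}$. Averaging over $m \in V$ then gives $\widehat{\Gamma}_{i,j} \Pkonv \Gamma_{i,j}$, and entrywise convergence for all $i,j$ gives the claim. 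The principal obstacle is a careful transfer of the empirical extremal variogram consistency result from the classical i.i.d.\ setting to the conditional subsample of random size $N^\bmo$: one must verify that the effective intermediate sequence $q_n$ relative to $N^\bmo$ satisfies the required growth conditions $N^\bmo q_n \to \infty$ and $q_n \to 0$, and that the second-order rate $q^\zeta$ from Theorem~\ref{thm:secondorder} dominates the bias arising from the empirical quantile transformation under the assumed regime $q_n \gg n^{\xi-1}$. Some additional bookkeeping is also needed for the edge cases $m \in \{i,j\}$, where the conditioning coordinate coincides with one of the coordinates whose variogram is being estimated.
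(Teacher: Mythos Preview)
Your proposal is correct and follows essentially the same approach as the paper: reduce to consistency of each $\widehat{\Gamma}^{(m,\bmo)}_{i,j}$, condition on the orthant membership to obtain an i.i.d.\ subsample of $\g L_J^\bmo(\Delta)$, invoke the second-order result (Theorem~\ref{thm:secondorder}, or equivalently Lemma~\ref{lem:HRsecorder}) to feed into the consistency machinery of \cite{engelkelalancettevolgushev2022}, and then aggregate. The paper makes the random-sample-size transfer explicit by conditioning on the full orthant-indicator vector $\g S = (\bmo(\g D_{I,t}))_{t \in [n]}$ and using the concentration bound from \cite{engelkelalancettevolgushev2022} (their Proposition~S3 and Theorem~1) on the event $\{\g S \in \mathcal{S}_{n,\eps}\}$ whose probability tends to one, which is precisely the ``careful transfer'' you identify as the principal obstacle.
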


Consistent estimation of the variogram $\Gamma$ is the
fundamental building block for a data-driven 
recovery of the graph structure. 
Indeed, the \texttt{EGlearn} algorithm in \citet[][Algorithm 1]{engelkelalancettevolgushev2022} for estimation of extremal graphical models {can be viewed as generic algorithm that takes as input an estimator of a variogram matrix and outputs an estimate of the sparsity pattern of the corresponding \HR{} precision matrix $\Theta$. This directly yields an estimator $\widehat G = (V, \widehat E)$ of the graphical structure of the IHR process by Theorem~\ref{thmglobalmarkovprop}, provided that the Ising parameters $\Psi$ satisfy $\psi_{i,j} = 0$ for all $i,j$ such that $\Theta_{i,j} = 0$.}

Given the estimator of the graph structure, \cite{hentschelengelkesegers2023} construct an estimator $\widehat \Gamma^{\widehat G}$
whose precision matrix $\widehat \Theta^{\widehat G}$
is graph structured according to $\widehat G$.
To this end, they solve the matrix completion problem
\begin{align*}
\begin{alignedat}{2}
    \widehat \Gamma^{\widehat G}_{ij} &= \widehat \Gamma_{ij} &\quad& \forall (i,j)\in \widehat E,\\
    \widehat \Theta^{\widehat G}_{ij} &= 0                   &\quad& \forall (i,j)\notin \widehat E.
\end{alignedat}
\end{align*}
We will use the same approach in our framework of L\'evy graphical models.

{\cite{engelkelalancettevolgushev2022} describe two base learners for the \texttt{EGlearn} algorithm: neighborhood selection \cite[Section 4.3.1]{engelkelalancettevolgushev2022} and graphical lasso \cite[Section 4.3.2]{engelkelalancettevolgushev2022}. In their simulations, neighborhood selection consistently outperforms graphical lasso in terms of graph recovery, and we observe the same in our simulations. For this reason, we only provide theory on consistency of neighborhood selection for graph estimation below and only consider graphical lasso in the simulations.}
To state the next result, we need to introduce some additional notation. For $m \in  V$, let $\Sigma^{(m)} := [\Theta_{\setminus m, \setminus m }]^{-1}$ denote the inverse of the \HR{} precision matrix $\Theta$ from which the $m$-th row and column have been removed. Let $S_{m,\ell} := \{j \in  V, j \neq m,\ell: \Theta_{\ell j} \neq 0 \}$ and define the neighborhood selection incoherence parameter through 
\[
\eta := 1 - \max_{m,\ell \in  V, m \neq \ell} \normop{\Sigma^{(m)}_{ S_{m,\ell}^\complement,S_{m,\ell} } [\Sigma^{(m)}_{S_{m,\ell},S_{m,\ell}}]^{-1} }_\infty
\]
where the complement of $S_{m,\ell}$ is taken in $ V\setminus \{m\}$ and $\normop{A}_\infty := \max_{j \in  V}\sum_{i=1}^d |A_{ij}|$.

\begin{cor}
\label{corsparsistentesttheta}
Let Assumption~\ref{ass:obsscheme} hold and denote by $G$ the graph corresponding to $\Gamma$. Denote by $\widehat G$ the graph estimator from the \emph{EGlearn} algorithm with neighbourhood selection with penalty parameter $\rho_n$ based on the variogram estimator $\widehat{\Gamma}$ from~\eqref{defestimatorvariogrammatrix}. Assume that $\eta > 0$ {and that, for some $1 > \xi >0$ and $ 0 < \zeta < 1 \wedge \min_{i\in  V}$$ \alpha_i^{-1}$  
\[
1 \gg q_n \gg n^{\xi-1},\quad 1 \gg \rho_n \gg q_n^\zeta (\log q_n)^2 + (nq_n)^{-1/2}.
\]
} 
Then we have $(\widehat{G},\widehat{\Theta}^{\widehat{G}}) \Pkonv ( G,\Theta)$. 
\end{cor}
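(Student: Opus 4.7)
The plan is to upgrade Theorem \ref{thmconsitencygammaestimator} from consistency in probability to a uniform rate of convergence of $\widehat\Gamma$ in the entrywise maximum norm, then to invoke the existing graph-recovery guarantees for \texttt{EGlearn} with neighborhood selection from \cite{engelkelalancettevolgushev2022}, and finally to close the argument with a stability/continuity result for the \HR{} matrix-completion step of \cite{hentschelengelkesegers2023}.

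First, I would revisit the proof of Theorem \ref{thmconsitencygammaestimator} and track the rate rather than only the vanishing of the error. For each triple $J=(i,j,m)$ and orthant $\bmo$, the entrywise error decomposes into a bias term coming from the extremal approximation and a stochastic term. By Theorem \ref{thm:secondorder}, the bias at level $q_n$ is of order $q_n^\zeta$; the empirical variance of the log-ratios introduces a $(\log q_n)^2$ factor because $\log(1-\widehat F(\cdot))$ is unbounded near the threshold. The stochastic error is governed by an effective sample size $nq_n$ (the number of observations exceeding the quantile threshold in a given orthant, controlled by Lemma \ref{lem:posorthantprob} and a standard binomial concentration), so that uniformly in $i,j,m,\bmo$,
\begin{equation*}
\|\widehat\Gamma-\Gamma\|_\infty \;=\; O_P\!\Bigl( q_n^{\zeta}(\log q_n)^2 + (nq_n)^{-1/2}\Bigr).
\end{equation*}
Under the assumed rates $q_n \gg n^{\xi-1}$ and $\rho_n \gg q_n^\zeta(\log q_n)^2 + (nq_n)^{-1/2}$, we obtain $\|\widehat\Gamma-\Gamma\|_\infty = o_P(\rho_n)$.

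Next, I would apply the consistency theorem for \texttt{EGlearn} with neighborhood selection. That result treats the algorithm as a generic estimator that takes a variogram estimate whose error is small relative to the penalty and outputs the correct sparsity pattern, provided the precision matrix satisfies the irrepresentability/incoherence condition $\eta>0$ and the penalty $\rho_n$ tends to zero more slowly than the variogram error. Since our $\widehat\Gamma$ satisfies this rate and the incoherence assumption is hypothesized, their theorem gives $P(\widehat G = G) \to 1$. This is the step I expect to be essentially plug-and-play once the rate above is established, modulo checking that their assumptions are phrased in terms of the same entrywise maximum norm.

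Finally, on the event $\{\widehat G = G\}$, the estimator $\widehat\Theta^{\widehat G}$ is the output of the matrix completion operator that, given the true graph, maps a perturbed variogram to the positive semidefinite completion with the correct zero pattern. This map is continuous on a neighborhood of the true $\Gamma$ (see the stability result in \cite{hentschelengelkesegers2023}), so $\|\widehat\Gamma-\Gamma\|_\infty \Pkonv 0$ combined with $P(\widehat G = G)\to 1$ yields $\widehat\Theta^{\widehat G} \Pkonv \Theta$ by a continuous mapping argument conditional on the graph-recovery event. A union bound then gives the joint convergence $(\widehat G, \widehat\Theta^{\widehat G}) \Pkonv (G,\Theta)$. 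The main obstacle is the first step: pinning down the correct polylogarithmic factor in the rate for $\widehat\Gamma$, which requires careful control of the empirical variance of log-transformed ranks and a uniform handling across the $O(d^3\cdot 2^3)$ triples and orthants.
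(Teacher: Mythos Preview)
Your proposal is correct and follows essentially the same route as the paper. The paper's proof likewise extracts the rate $\|\widehat\Gamma-\Gamma\|_\infty=O_P\bigl(q_n^{\zeta}(\log q_n)^2+(nq_n)^{-1/2}\bigr)$ from the concentration bound~\eqref{eq:rategammahat} already established in the proof of Theorem~\ref{thmconsitencygammaestimator} (which in turn cites \cite{engelkelalancettevolgushev2022}, Proposition~S3 and Theorem~1), and then feeds this into the explicit sufficient condition for exact graph recovery from Theorem~2 of \cite{engelkelalancettevolgushev2022}; so the ``main obstacle'' you flag is in fact already handled and need not be rederived. One small point: the paper's proof is essentially silent on the $\widehat\Theta^{\widehat G}\Pkonv\Theta$ part, whereas your continuity argument for the matrix-completion map on the event $\{\widehat G=G\}$ makes this step explicit and is the right way to close the argument.
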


\subsection{Estimation of Ising weights}

Given a \HR{} precision matrix $\Theta$ and the corresponding graph $G$, it remains to estimate compatible Ising weights. We first assume that we know $G$ and later replace it with an estimated graph. Let $\g B$ denote the Rademacher random vector linked to the Ising weights through~\eqref{ising_equiv}.
Recall that for a IHR L\'evy process on graph $G$, the Ising parameters need to satisfy $\psi_{i,j}=0$ if $(i,j)\not \in E$. 
The Ising model is an an exponential family as its likelihood in~\eqref{defisingweight} can be represented as 
$$ 
P(\g B=\bmo)={\gamma_\bmo(\Psi)}/{2}=\exp\{ \langle\Psi, S(\bmo)\rangle  -A(\Psi)\}, 
$$
where $S(\bmo)=\lc o_i o_j\rc_{(i,j)\in E}$ is the sufficient statistic and $A(\Psi)=\log\{C(\Psi)\}$ is a normalizing constant.
Since this only depends on the bivariate interactions $o_io_i$, intuitively, one should be able to estimate $\psi_{i,j}$ via the observations of $(L_i,L_j)$. The maximum likelihood estimator (MLE) of the Ising model with given edge set $E$ is unique when it exists and in the latter case can be found by setting the gradient of its log-likelihood to $0$. The MLE for a collection of i.i.d. samples $\g B^{(s)}, s \in [n]$ exists iff the following system of equations
\begin{align}\label{est_eq}
    \e_{\Psi}\lk B_iB_j\rk - n^{-1} \sum_{s=1}^n B_i^{(s)} B_j^{(s)} = 0 \quad (i,j)\in E ,
\end{align} 
has a solution where $\e_{\Psi}$ denotes expectation with respect to an Ising model with parameter matrix $\Psi$; see \citet[][Section 17.4.1]{hastie2009} for a detailed derivation. Note that $ n^{-1} \sum_{s=1}^n B_i^{(s)} B_j^{(s)} - \e_{\Psi}\lk B_iB_j\rk$ is one half times the partial derivative of the Ising log-likelihood functions based on the observations $\g B^{(s)}, s \in [n]$, evaluated at the parameter $\Psi$ with respect to $\Psi_{i,j}$. 

This estimator only depends on the bivariate marginal distributions and one can show that the log-likelihood of the Ising model is concave in $\Psi$, which also implies that the MLE is unique when it exists. In our setting of IHR processes, we do not have access 
to observations from the Ising model. 
Instead, we note that the right-hand side of~\eqref{est_eq} is a consistent estimator of $\Cov_\Psi(B_i,B_j) = \mathbb E_{\Psi}\lk B_iB_j\rk$ since $\e_\Psi\lk B_i\rk=0$ in our model class.
We derive a different estimator for
the covariance that can be computed from
observations as in~\eqref{levy_samples} of the L\'evy process.
While a bivariate margin of an Ising model is generally 
not an Ising model itself, we can express the covariance
in terms of the L\'evy measure.

\begin{lem}
\label{lemIsingcorrviaPLM}
    Let $\Lambda^\star$ be the PLM of a IHR process with Ising weights $\Psi$. Then
    \begin{align}\label{cov_rep}
        \Cov_\Psi(B_i,B_j) =\frac{{\chi}_{i,j}^{(1,1)}+{\chi}_{i,j}^{(-1,-1)}- {\chi}_{i,j}^{(1,-1)}-{\chi}_{i,j}^{(-1,1)}}{{\chi}_{i,j}},
    \end{align}
where $\chi^{(o_1,o_2)}_{i,j}:=\Lambda^\star \lc   o_1 x_i >1, o_2 x_j >1 \rc$,
which are related to the Lévy correlation $\chi_{i,j}$ in 
\cite{engelke2024levygraphicalmodels} by
\[ \chi_{i,j} :=2^{-1}\sum_{(o_i,o_j)\in \{-1,1\}^2} \chi^{(o_i,o_j)}_{i,j}.\]   
\end{lem}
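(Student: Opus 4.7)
The plan is to unpack both sides of the identity in terms of the bivariate orthant weights $\tilde\gamma_{o_1,o_2} := \sum_{\bmo\in\Oo:\, o_i = o_1,\, o_j = o_2} \gamma_\bmo$ and then use the one-to-one correspondence $\gamma_\bmo = 2 P(\g B = \bmo)$ recorded in~\eqref{ising_equiv}. First, I would compute each $\chi_{i,j}^{(o_1,o_2)}$ starting from the mixture representation in~\eqref{model_def}. For a fixed $\bmo'\in\Oo$, the preimage of $\{o_1 x_i > 1,\, o_2 x_j > 1\}$ under the sign-flip map $\g x \mapsto (o'_k x_k)_{k\in V}$ intersected with $(0,\infty)^d$ is empty unless $o'_i = o_1$ and $o'_j = o_2$; when these signs match, the preimage equals $\{x_i > 1,\, x_j > 1\}$. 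Consequently, only the orthants with $o'_i = o_1$, $o'_j = o_2$ contribute, and each contributes the same bivariate \HR{} quantity $\chi_{\mathrm{HR}}(i,j) := \Lambda_{\mathrm{HR}}(\{x_i > 1,\, x_j > 1\})$, so
\[
\chi_{i,j}^{(o_1,o_2)} \;=\; \tilde\gamma_{o_1,o_2}\,\chi_{\mathrm{HR}}(i,j).
\]

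Next, I would compute $\chi_{i,j} = \tfrac12 \sum_{(o_1,o_2)} \chi_{i,j}^{(o_1,o_2)}$ and observe that the marginal constraint~\eqref{weight_norm} forces $\sum_{(o_1,o_2)} \tilde\gamma_{o_1,o_2} = 2$, so $\chi_{i,j} = \chi_{\mathrm{HR}}(i,j)$. Substituting these two identities into the ratio on the right-hand side of~\eqref{cov_rep} cancels the factor $\chi_{\mathrm{HR}}(i,j)$ and leaves a linear combination of the $\tilde\gamma_{o_1,o_2}$. Rewriting $\tilde\gamma_{o_1,o_2} = 2 P(B_i = o_1, B_j = o_2)$ via~\eqref{ising_equiv} and recalling that $\e_\Psi[B_i] = 0$ yields
\[
\Cov_\Psi(B_i,B_j) \;=\; \e_\Psi[B_i B_j] \;=\; \sum_{(o_1,o_2)\in\{-1,1\}^2} o_1 o_2 \,P(B_i = o_1, B_j = o_2),
\]
which matches (up to the normalizations in the statement) the combination produced by the ratio, completing the proof.

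\paragraph{Where the real work lies.}
The only non-routine step is the factorization $\chi_{i,j}^{(o_1,o_2)} = \tilde\gamma_{o_1,o_2}\,\chi_{\mathrm{HR}}(i,j)$. This is slightly delicate because $\Lambda^\star$ is a possibly infinite measure on $\R^d\setminus\{\bm 0\}$ built as an orthant-indexed mixture of the \HR{} exponent measure on $(0,\infty)^d$, and one has to argue carefully that the set $\{o_1 x_i > 1,\, o_2 x_j > 1\}$ receives contributions only from those orthants whose $(i,j)$-signs coincide with $(o_1,o_2)$, and that each such contribution depends on the bivariate $\{i,j\}$-margin of $\Lambda_{\mathrm{HR}}$ alone (and hence, by Example~\ref{ex:HR}, only on $\Gamma_{i,j}$). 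Once this identity is in hand, everything else is a direct algebraic consequence of the Ising marginal constraint and the Rademacher normalization $\e_\Psi[B_i] = 0$, so no additional structural input (e.g.\ Theorem~\ref{thmglobalmarkovprop} or the small-time approximation) is required.
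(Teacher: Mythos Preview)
Your proposal is correct and follows essentially the same route as the paper's proof. Both arguments hinge on the single substantive observation that, by the orthant--mixture structure of $\lambda^\star$ in~\eqref{defHRparetolevycop} and the sign--symmetry of $\lambda^\star_{(\text{sym})}$, each $\chi_{i,j}^{(o_1,o_2)}$ factors as the bivariate orthant weight $\tilde\gamma_{o_1,o_2}$ times a common \HR{} quantity; the paper phrases this as grouping into $\{o(x_i)=o(x_j)\}$ versus $\{o(x_i)\neq o(x_j)\}$ rather than orthant by orthant, but the content is identical. After cancelling the common factor and invoking $\sum_\bmo \gamma_\bmo = 2$ together with $\gamma_\bmo = 2P(\g B=\bmo)$, both proofs reduce the ratio to $P(B_i=B_j)-P(B_i\neq B_j)=\e_\Psi[B_iB_j]=\Cov_\Psi(B_i,B_j)$. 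Your parenthetical ``up to the normalizations in the statement'' should simply be replaced by tracking the factor of $2$ explicitly; no additional idea is needed.
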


A natural estimator for the four components of the
L\'evy correlation is
\begin{align}
    \widehat{\chi}^{(o_1,o_2)}_{ij} = \frac{1}{k} \sum_{t=1}^{n} \mathbf{1} \left\{ 
        o_1 \widehat{F}_i(\Delta_i) -\id_{\{o_1=1\}}  + \frac{k}{2n}>0,\ 
        o_2 \widehat{F}_j(\Delta_i) -\id_{\{o_2=1\}}  + \frac{k}{2n}>0
        \right\},\label{chi_est}    
\end{align}
where $\widehat F_i$ is the empirical distribution function of the increments  $\Delta_i(s) = L_i(s\Delta) - L_i( \{s-1\}\Delta)$ of the L\'evy process, for any $i\in V$.
\citet[][Theorem 6.4]{engelke2024levygraphicalmodels} show consistency of this estimator for a suitable choice of intermediate sequence $k=k(n)$ such that $k\to \infty$ and $k/n\to 0$, as $n\to\infty$. In order to obtain an estimator $\widehat \Psi{(E)}$ of the Ising weights $\Psi$, we consider the system of equations
\begin{align}\label{est_eq2}
    \e_{\Psi{(E)}}\lk B_iB_j\rk - \widehat{\Cov}_{\Psi{(E)}}(B_i,B_j) = 0,  \quad (i,j)\in E ,
\end{align} 
where $\widehat{\Cov}_{\Psi{(E)}}(B_i,B_j)$ denotes the plug-in estimator of~\eqref{cov_rep} using~\eqref{chi_est} for $(o_1,o_2)\in \{-1,1\}^2$. This system of equations may not always have a solution in finite samples, but we will show in the proof of Theorem~\ref{thmchiestfixedgraph} that it has a solution with probability going to one. We further show that this provides a consistent estimator of the Ising weights, even when the graph is unknown and needs to be inferred from the data. For the purpose of the following theorem, let $\widehat{\Psi}{(\widehat{E})}$ denote the solution of~\eqref{est_eq2} when this solution exists and set $\widehat{\Psi}{(\widehat{E})} \equiv 0$ otherwise.

\begin{thm}
\label{thmchiestfixedgraph}
{Let Assumption~\ref{ass:obsscheme} hold and denote by $G = (V,E)$ the graph corresponding to $\Gamma$. Assume additionally that $\g L$ from  Assumption~\ref{ass:obsscheme} is an IHR process with Ising parameters $\Psi$ and that $\psi_{i,j} = 0$ for all $(i,j) \not\in E$. Denote by $\widehat G = (\widehat E,V)$ the graph that is estimated by the \texttt{EGlearn} algorithm and assume that the conditions of Corollary~\ref{corsparsistentesttheta} hold.} {Consider the estimator $\widehat{\Psi}{(\widehat{E})}$ described before this theorem and assume $k \to \infty, k/n \to 0$.} The Ising weights are consistent, i.e., for all $\bmo\in \Oo$ we have
\[
\hat{\gamma}_\bmo:= \frac{2}{C(\widehat{\Psi}{(\widehat{E})})} \exp\lc \sum_{(i,j)\in\widehat{E}}\widehat{\Psi}_{i,j}{(\widehat{E})} o_io_j \rc \Pkonv \gamma_\bmo(\Psi),  \quad n\to\infty.
\]
\end{thm}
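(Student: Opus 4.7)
The argument reduces the claim to three ingredients: consistency of the covariance inputs, existence of the estimator with high probability, and inversion of the Ising moment map. First, by Corollary~\ref{corsparsistentesttheta} we have $P(\widehat E = E) \to 1$, so it suffices to work on the event $\{\widehat E = E\}$, on which $\widehat\Psi(\widehat E) = \widehat\Psi(E)$ solves~\eqref{est_eq2} on the true edge set. Theorem~6.4 of~\cite{engelke2024levygraphicalmodels}, together with the bandwidth conditions $k \to \infty$ and $k/n \to 0$, yields $\widehat\chi^{(o_1,o_2)}_{ij} \Pkonv \chi^{(o_1,o_2)}_{ij}$ for all $(i,j) \in E$ and $(o_1,o_2) \in \{-1,1\}^2$. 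Since Assumption~\ref{ass:obsscheme} ensures $\chi_{i,j} > 0$, Lemma~\ref{lemIsingcorrviaPLM} and the continuous mapping theorem applied to the representation~\eqref{cov_rep} give
\[
\widehat{\Cov}_{\Psi(E)}(B_i,B_j) \ \Pkonv\ \Cov_\Psi(B_i,B_j) \ =\ \e_\Psi[B_i B_j], \qquad (i,j) \in E.
\]

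The Ising model on edge set $E$ with vanishing singleton parameters is a regular minimal exponential family on natural parameter space $\R^{|E|}$ with sufficient statistic $S(\bmo) = (o_i o_j)_{(i,j) \in E}$. Standard exponential-family theory (see~\cite{wainwright2008} or Section~17.4.1 of~\cite{hastie2009}) shows that the moment map $\mu: \Psi \mapsto \e_\Psi[S(\g B)]$ is a real-analytic diffeomorphism from $\R^{|E|}$ onto the relative interior of the marginal polytope $M_E := \operatorname{conv}\{S(\bmo): \bmo \in \Oo\}$, with strictly positive-definite Jacobian $\Cov_\Psi(S(\g B))$. Because the Ising distribution~\eqref{defisingweight} is strictly positive on $\Oo$, the true vector $\mu(\Psi)$ lies in the relative interior of $M_E$. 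Relative interiors are open in the affine hull, so the convergence above implies that with probability tending to one the vector $(\widehat{\Cov}_{\Psi(E)}(B_i,B_j))_{(i,j) \in E}$ also belongs to this relative interior; strict concavity of the Ising log-likelihood then guarantees a unique solution $\widehat\Psi(E)$ of~\eqref{est_eq2}, and continuity of $\mu^{-1}$ together with the continuous mapping theorem yields $\widehat\Psi(E) \Pkonv \Psi$.

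Finally, the map $\Psi \mapsto \gamma_\bmo(\Psi)$ defined in~\eqref{defisingweight} is real-analytic through the normalizing constant $C(\Psi)$, so a last application of the continuous mapping theorem gives $\widehat\gamma_\bmo \Pkonv \gamma_\bmo(\Psi)$ for every $\bmo \in \Oo$. The hardest step in this chain is the second one: we must guarantee that the estimating equations~\eqref{est_eq2} admit a solution with probability tending to one, which rests on the geometric fact that $\mu(\Psi)$ is an interior point of $M_E$ combined with the exponential-family diffeomorphism of the moment map. A secondary, more mundane difficulty is that we must use the orthant-level quantities $\widehat\chi^{(o_1,o_2)}_{ij}$ (and not the pooled extremal correlations $\widehat\chi_{i,j}$) so that the four-term combination in~\eqref{cov_rep} is preserved, which is precisely the feature that transfers information from the Lévy process to the Ising sub-model.
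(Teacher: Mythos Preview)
Your proposal is correct and follows the same skeleton as the paper: reduce to $\widehat E=E$ via Corollary~\ref{corsparsistentesttheta}, establish consistency of $\widehat\chi_{ij}^{(o_1,o_2)}$, invert the Ising moment map using exponential-family theory, and conclude by continuity of $\Psi\mapsto\gamma_\bmo(\Psi)$. The only notable difference is that the paper re-derives $\widehat\chi_{ij}^{(o_1,o_2)}\Pkonv\chi_{ij}^{(o_1,o_2)}$ in a self-contained way (rewriting the estimator as a tail-correlation statistic, using copula invariance from Corollary~\ref{cortimescalingHRprocess} and a concentration bound from \cite{engelkevolgushev2022}), whereas you cite \citet[Theorem~6.4]{engelke2024levygraphicalmodels} directly as the main text does; and the paper phrases the exponential-family step as ``$\tau(\R^{|E|})$ is open'' rather than via the marginal polytope, but these are equivalent since the family is minimal.
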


{Combined with the previous results, this shows that, under suitable assumptions, all parameters and the graph that govern the dependence structure of an IHR process can be estimated consistently. The practical performance of those estimators is illustrated in simulations and a data application in the following sections.}

\section{Simulation study}
\label{secsimulation}

\subsection{Setup}
We conduct a simulation study to assess the finite sample properties of our estimators of the graph structure $G = (V,E)$ and the Ising weights $\Psi$. 

Across all experiments, the graphs are randomly drawn from the \BA{} model with attachment parameter $a\in\{1,2\}$, where $a=1$ corresponds to a tree and $a=2$ corresponds to a connected sparse graph with $2d-3$ edges. Moreover, for a given graph, the parameters of $\Theta$ on edges of $G$ are drawn uniformly on $[2,5]$. 

For the Ising model, we consider two regimes, the symmetric regime where all $\Psi_{i,j}=0$ and the asymmetric regime where each entry of $\Psi_{i,j}$ is drawn uniformly from $[-0.6,0.2]\cup[0.2,0.6]$ for $(i,j)\in E$. Across all simulations, we set $\balpha=\bm 3/\bm 2$ and $\bmc=\id$. For every sample size $n$ we set  $q_n:=n^{-3/10}$, i.e., we use the $(100 n^{-3/10})$-percent largest observations for the estimation of $\Gamma$. Since the entries of $\Gamma$ are estimated separately, it can happen that our estimate $\hat{\Gamma}$ is not conditionally negative definite,
in which case we project it onto the space of conditionally negative definite matrices.

Each experiment was repeated 50 times. The code to reproduce the simulation study can be found on \hyperlink{https://github.com/florianbrueck/graphicalmodelsforstableprocesses}{https://github.com/florianbrueck/graphicalmodelsforstableprocesses}.

\subsection{Graph recovery}

We compare several graph estimation methods on the simulated
 data described above.
In order to assess graph recovery between true edges $E$ and estimated edges $\hat E$, we use the $F_1$-score defined as
$$ F_1(E,\hat{E}) := \frac{2 \cdot |E\cap\hat{E}|}{2 \cdot |E\cap\hat{E}| + |  E^\complement\cap\hat{E} | +  | E\cap \hat{E}^\complement |}.
$$
As a first baseline method we use the minimum spanning tree based on
the L\'evy correlation $\chi$ from \cite{engelke2024levygraphicalmodels}.
We also use the same method but based on our estimator 
of the \HR{} parameter matrix $\Gamma$. 
For estimation of general graphs we use our new method based on the \texttt{EGLearn} algorithm \citep{engelkelalancettevolgushev2022}, either with neighborhood selection (NS) and graphical lasso (Glasso), 
which has a tuning parameter $\rho \geq 0$ controlling the sparsity of the estimated graphs. We denote the corresponding graph estimates and presision matirx by
$\hat{G}_\rho=(V,\hat{E}_\rho)$ and $\hat{\Theta}_\rho$, respectively.
To select the optimal $\rho$ in a data-driven way, we propose using the pseudo \HR{} log-likelihood \citep{rottger2023total} with 
penalty for model complexity 
$$ 
-2\left\{ \log (\vert\hat{\Theta}_\rho\vert_+)  +\frac{1}{2} \mathrm{tr}(\widehat{\Gamma} \hat{\Theta}_\rho)\right\} +  \text{IC}\{\hat E_\rho\} 
 $$
as an approximation of AIC or BIC, where 
 $$ \text{IC}(\hat E_\rho):=\begin{cases}
    2 |\hat{E}_\rho |,& \text{IC}=\text{AIC}, \\ 
    \log\lc  nq_n \rc |\hat{E}_\rho|,&\text{IC}=\text{BIC}.
 \end{cases} $$
 Maximizing this pseudo likelihood over a range of $\rho$ values yields a data-driven method of hyperparameter selection.  

 Figure \ref{fig:tree_f_1_score} shows boxplots of the $F_1$-score for dimension $d=10,20$ and sample size $n=2000,10000$ in the asymmetric regime for an underlying tree graph.
 We first observe that both minimum spanning tree methods
 outperform the general graph estimation technique. This is not 
 surprising since these methods are tailor-made for tree graphs.
 Interestingly, our tree estimator based on $\Gamma$ outperforms the estimator based on $\chi$; this is in line with the observations made
 in tree estimation in extreme value statistics \citep{engelkevolgushev2022}. A more detailed analysis of the performance of the minimum spanning tree algorithms with underlying weights $\Gamma$ and $\chi$ for smaller sample sizes is provided in Appendix \ref{appaddplots}, illustrating the superiority of method based on $\Gamma$ across all dimensions and sample sizes.
 Regarding our estimator for general graphs, we see that neighborhood selection consistently outperforms graphical lasso across all samples sizes and dimensions, and that the data-driven hyperparameter selection seems to work well, both for AIC and BIC.
 
  \begin{figure}[!htbp]
    \centering
    
        \centering
        \includegraphics[scale=0.65]{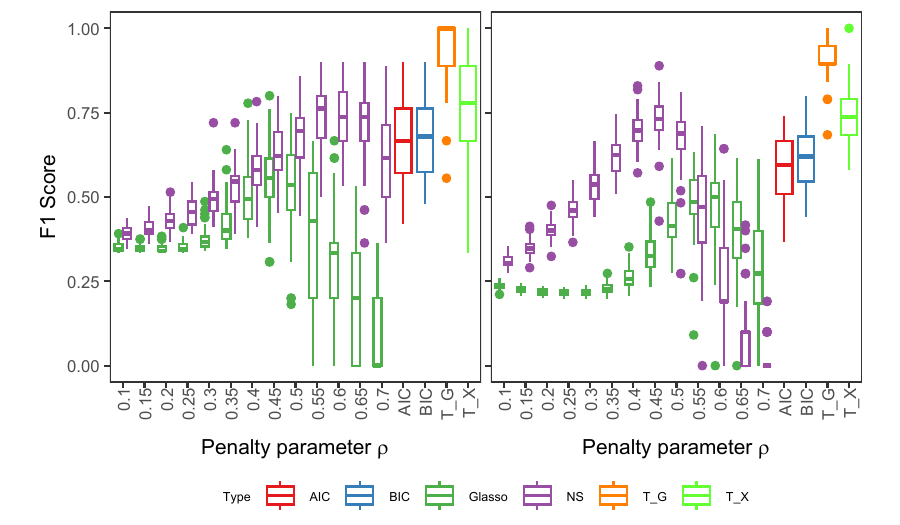}

        \centering
        \includegraphics[scale=0.65]{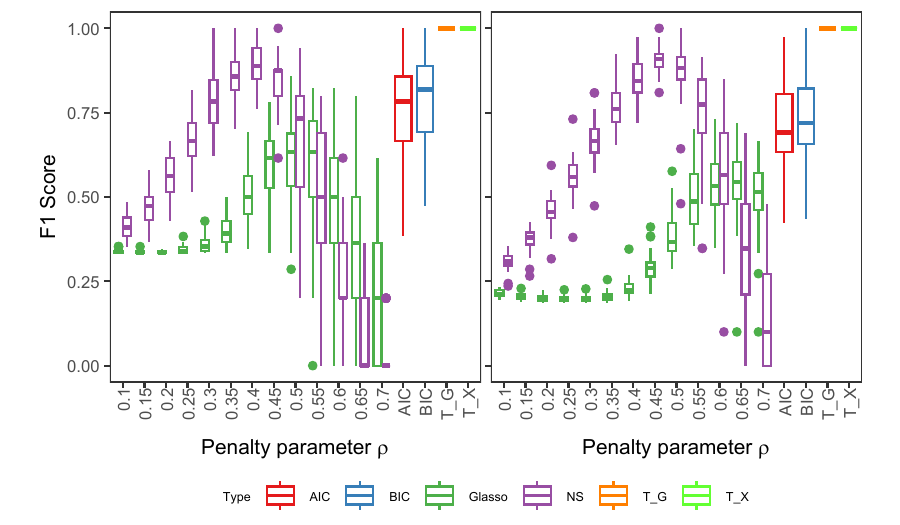}
    \caption{Boxplots of $F_1$-scores for estimation of tree graphs in the asymmetric regime for \texttt{EGLearn} based on neighborhood selection (NS) and graphical lasso (Glasso) for various penalty parameters and models selected by the two information criteria AIC and BIC, and the minimum spanning trees based on $\Gamma$ and $\chi$, for samples sizes $n=2000$ (top) and $n=10000$ (bottom) and dimensions $d=10$ (left) and $d=20$ (right).}

    \label{fig:tree_f_1_score}
\end{figure}

Figure \ref{fig:f_1_score_dim} shows the corresponding results for the same dimensions and sample sizes in the asymmetric regime for general graphs. Since now the underlying graph is no longer a tree,
the minimum spanning tree methods cannot recover well the denser
graph structure even for larger sample sizes.
On the contrary, the $F_1$-scores for the general graph estimation methods consistently improve for increasing sample sizes.
Again, neighborhood selection significantly outperforms the graphical lasso across all dimensions and sample sizes and should thus be considered as the preferred method.

  \begin{figure}[!htbp]
    \centering
    
        \centering
        \includegraphics[scale=0.75]{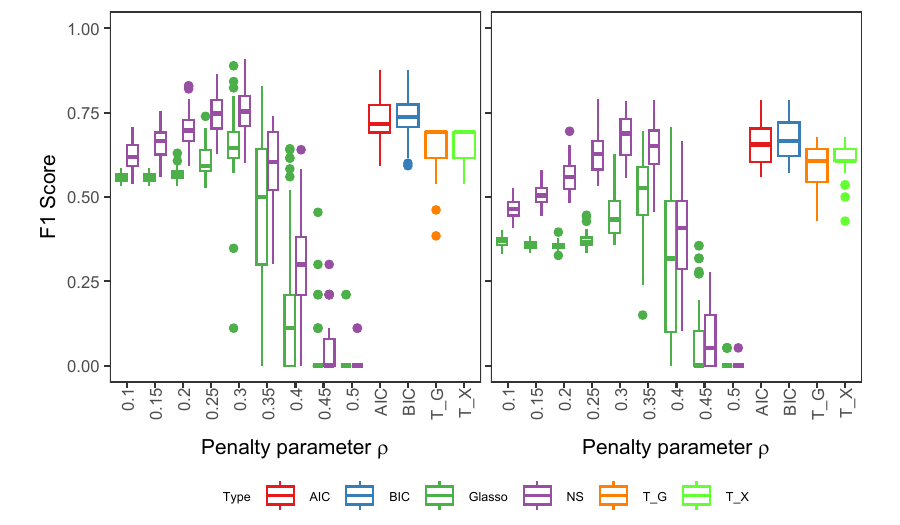}

        \centering
        \includegraphics[scale=0.75]{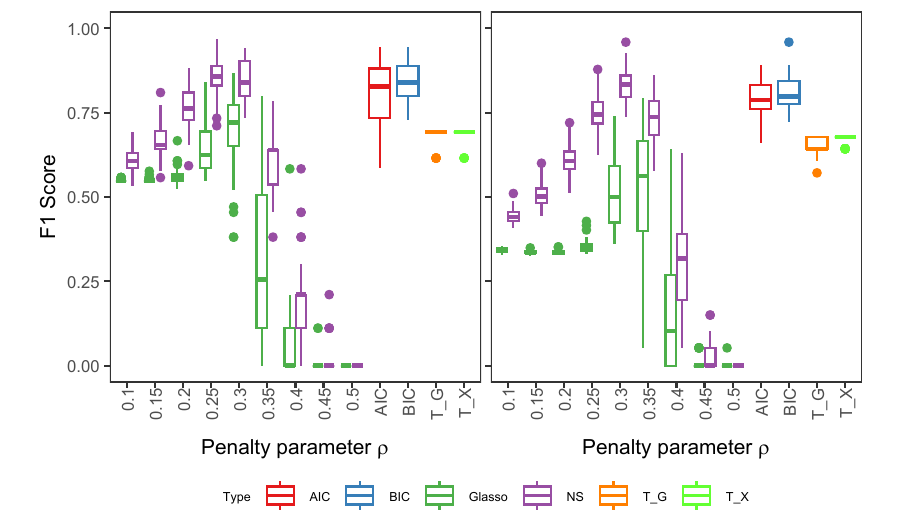}
    \caption{Boxplots of $F_1$-scores for estimation of \BA{} graphs with attachment parameter $a=2$ in the asymmetric regime for \texttt{EGLearn} based on neighborhood selection (NS) and graphical lasso (Glasso) for various penalty parameters and models selected by the two information criteria AIC and BIC, and the minimum spanning trees based on $\Gamma$ and $\chi$, for samples sizes $n=2000$ (top) and $n=10000$ (bottom) and dimensions $d=10$ (left) and $d=20$ (right).}
    \label{fig:f_1_score_dim}

\end{figure}
 
   Similar plots for the above experiments under the symmetric regime can be found in Appendix~\ref{appaddplots}. Interestingly, a general observation across all experiments is that the estimation of the graph works better in the asymmetric regime. A possible reason for this is  that in the asymmetric regime there are some orthants with a large number of observations,  which, as a consequence of our data-driven weighting scheme, yield a good estimate of $\Gamma_{i,j}^{(m,\bmo)}$ with a high weight. In contrast, in the symmetric regime, there are approximately $n/8$ observations available for each estimator $\hat{\Gamma}^{(m,\bmo)}_{i,j}$, which yields a higher bias on average in comparison to the estimates on orthants with many observations in the asymmetric regime.

\subsection{Estimation of Ising weights} \label{sec:est_ising}

In this subsection we focus on the estimation of the Ising parameters $\Psi$, which govern the asymmetry of the IHR model. We only consider the asymmetric regime as the parameters of $\Psi$ on the edges of the graph are non-zero in this case. We further assume that the true underlying graph is known in order to disentangle the performance of our estimation method for $\Psi$ from the graph estimation. For a given dimension $d$, we randomly generate one underlying graph from the \BA{} model with attachment parameter $a=2$ with a fixed corresponding parameter matrix $\Psi$. Across the 50 repetitions we assess the variation of the estimates of $\Psi$ around their true value. 

Since there is no closed form solution to the system of equations~\eqref{est_eq2} {we suggest a gradient ascent based procedure to obtain an approximate solution. To deal with possibly non-existing solutions, we further add a penalty to the original problem. Specifically, the system of equations~\eqref{est_eq2} gives the maximizer of the concave function 
\[
\hat f(\Psi) := { - \log C(\Psi) +} \sum_{(i,j) \in E} \widehat{\Cov}_{\Psi}(B_i,B_j) \psi_{i,j} 
\]
{over all $\Psi$ with $\psi_{i,j} = 0$ when $(i,j) \notin E$} whenever the maximizer exists. {To see this, recall that $C(\Psi)$ denotes the normalizing constant in~\eqref{defisingweight} and that $\partial \log C(\Psi)/\partial\psi_{i,j} = \e_\Psi[B_iB_j]$.} To deal with possible non-existence, we propose to instead maximize
\begin{equation}\label{eq:optpen}
\hat f(\Psi) - v \sum_{(i,j) \in E} |\Psi_{i,j}|
\end{equation}
where $v$ is a penalty parameter that we set to $v=0.05$ throughout all experiments. The corresponding sub-gradient is given by
\[
\widehat{\Cov}_{\Psi}(B_i,B_j) - \e_\Psi[B_iB_j]- v\sgn(\Psi_{i,j}), \quad (i,j) \in E.
\]
Since $\e_{\Psi}\lk B_iB_j \rk$ is not available in closed form, we replace it by an empirical estimate based on $5000$ MCMC samples 
of the Ising model. To perform the optimization of~\eqref{eq:optpen}, we opt here for the ADAM optimizer \citep{kingma2014}, with which we observe fast convergence and reliable estimates within less than $500$ iterations. An exemplary traceplot of our gradient ascent procedure is provided in Figure \ref{fig:traceplot} in the Appendix, which demonstrates quick convergence of the procedure.
}

Figure \ref{fig:boxplot_psi_dim_10_asym} shows boxplots of the estimates of $ \Psi_{i,j}$ for $(i,j)\in E$ together with the corresponding true values
for different dimensions and sample sizes. The variation of the parameter estimates and the bias decreases as the sample sizes increases. 

For lower sample sizes, we observe some outliers of extreme parameter estimates. This may be an artifact of the bias in the estimation of $\chi^{(o_1,o_2)}_{i,j}$. Overall, the estimation of the Ising weights seems to work well. In particular, the sign of the estimate of $\Psi_{i,j}$, and thus the direction of dependence of component $i$ and $j$, is always estimated correctly. Moreover, the ordering of the estimates $\hat{\psi}_{i,j}$ is largely in line with the true ordering of the $\psi_{i,j}$, which ensures that the ordering of the strengths of dependencies is reflected correctly.

 \begin{figure}[!htbp]
    \centering
        \includegraphics[trim = 0 2cm 0 1cm, clip = TRUE, width=\linewidth]{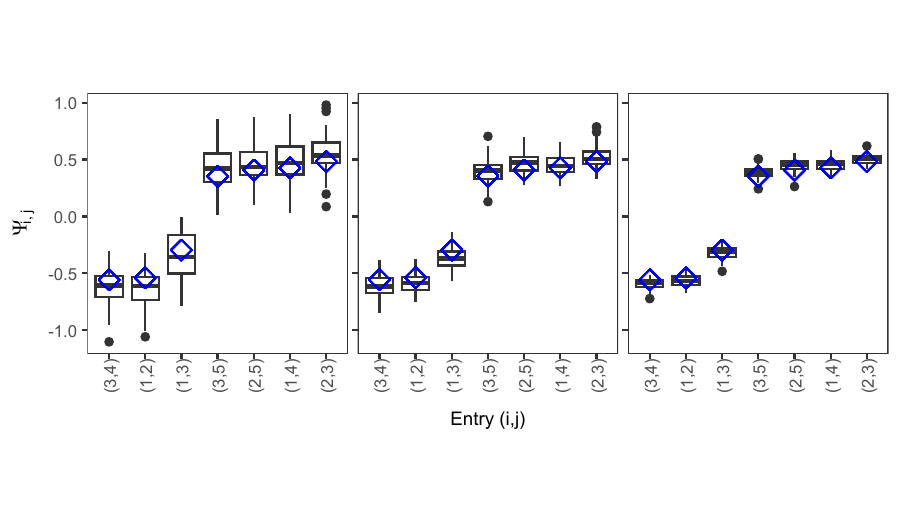}     
        \includegraphics[trim = 0 2cm 0 1cm, clip = TRUE, width=\linewidth]{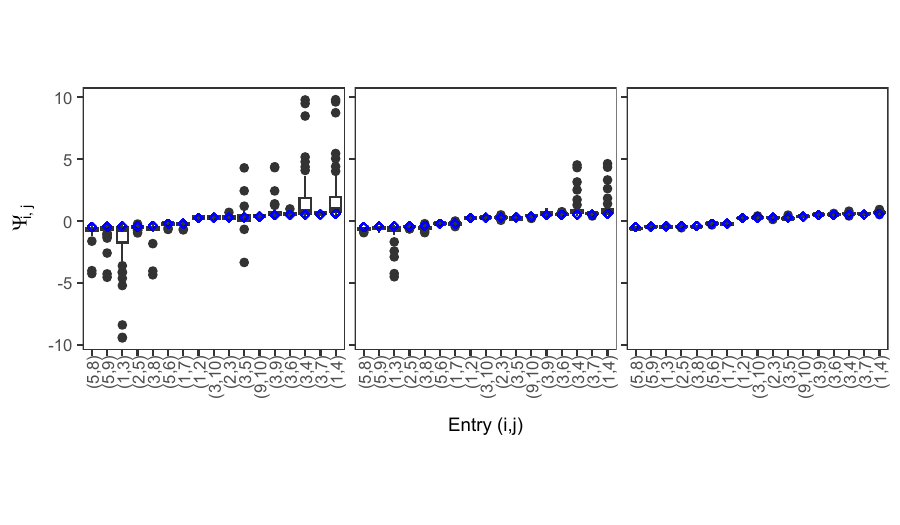}
    \caption{Boxplots of the parameter estimates of $\Psi$ along the edges of the graph $E$ for dimensions $d=5$ (top) and $d=10$ (bottem) and sample sizes $n=500$ (left), $n=2000$ (middle) and $n=10000$ (right). Displayed in ascending order of the true underlying parameter $\psi_{i,j}$ (blue).}
        \label{fig:boxplot_psi_dim_10_asym}
\end{figure}

\section{Application}
We apply our IHR L\'evy process to model $n=1509$ daily log-returns from stocks of $16$ American companies and compare the results to the minimum spanning tree approach with weights $\chi$ in \cite{engelke2024levygraphicalmodels}; for details on the data set we refer to the latter paper. 
We focus on assessing the estimated dependence structure of the data, since our model parameters $G$, $\Theta$ and $\Psi$ encode the dependence structure of the underlying Lévy process independently of the marginal parameters $\bac$.

We rely on the Black--Scholes model for L\'evy processes \citep[e.g.,][Chapter 11]{conttankov} that assumes the stock prices $(\mathbf S(t))_{t\geq 0}$ to follow an exponential of a $d$-dimensional L\'evy process $\mathbf L$ as
\[
\mathbf S(t) = \mathbf S(0)\exp\{\mathbf b t + \mathbf L(t)\}, \quad t \geq 0,
\]
where $\mathbf S(0)$ are the initial stock prices and $\mathbf b \in \mathbb{R}^d$ is a drift. We model $\bmL$ via an IHR L\'evy process with arbitrary underlying graph $G$. This generalizes the model of \cite{engelke2024levygraphicalmodels} whose method requires a tree graphical model.

Figure~\ref{fig:est_graphs} shows the estimated tree $\hat{T}^\chi$ from \cite{engelke2024levygraphicalmodels}, the estimated minimum spanning tree $\hat T^\Gamma$ based on our estimate of $\Gamma$, and our estimated general graph $\hat{G}$ based on AIC; the BIC graph in identical in this case.
Both estimated trees represent well the industry sectors of the assets as separate branches of the graph, however, with slightly different connections between the sectors. Moreover, the connections within a sector vary between the trees; this is probably due to strong dependence between these assets but the tree does not allow for additional edges. The general graph $\hat{G}$ contains additional  connections within and across industry sectors. The majority of a edges are still within the same industry sectors, showing the stronger dependence between stocks in the same industry. Interestingly, the tree $\hat T_\Gamma$ is almost a sub-graph of the general graph $\hat G$ with the exception of the edge between CFX and WFC which connects the Consumer Staples and the Energy sector, even though this is not enforced in any way.

 \begin{figure}[tb]
    \centering
    
    \begin{subfigure}{0.3\linewidth}
        \centering
        \includegraphics[width=\linewidth]{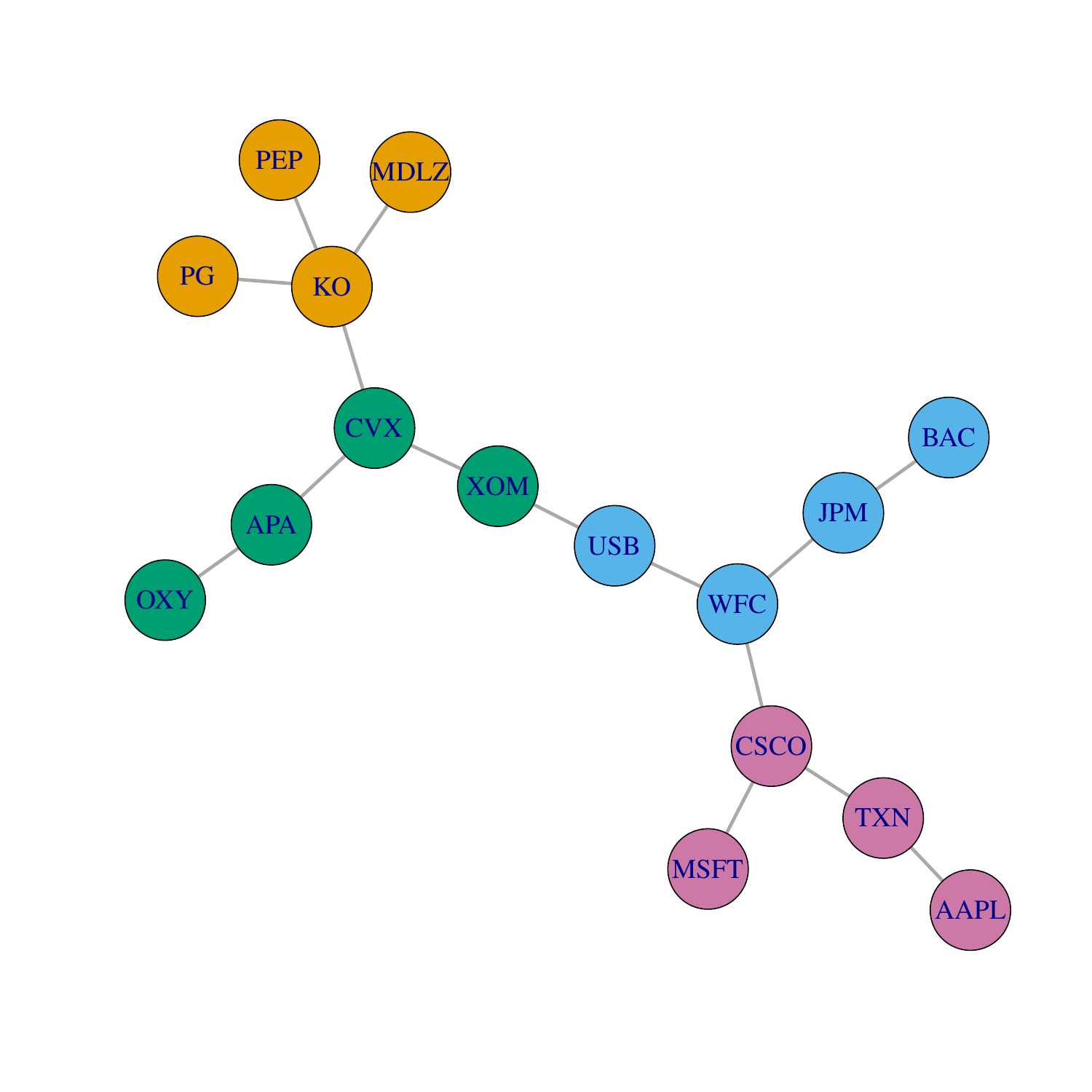}
        %\caption{First plot caption}
    \end{subfigure}
        \begin{subfigure}{0.3\linewidth}
        \centering
        \includegraphics[width=\linewidth]{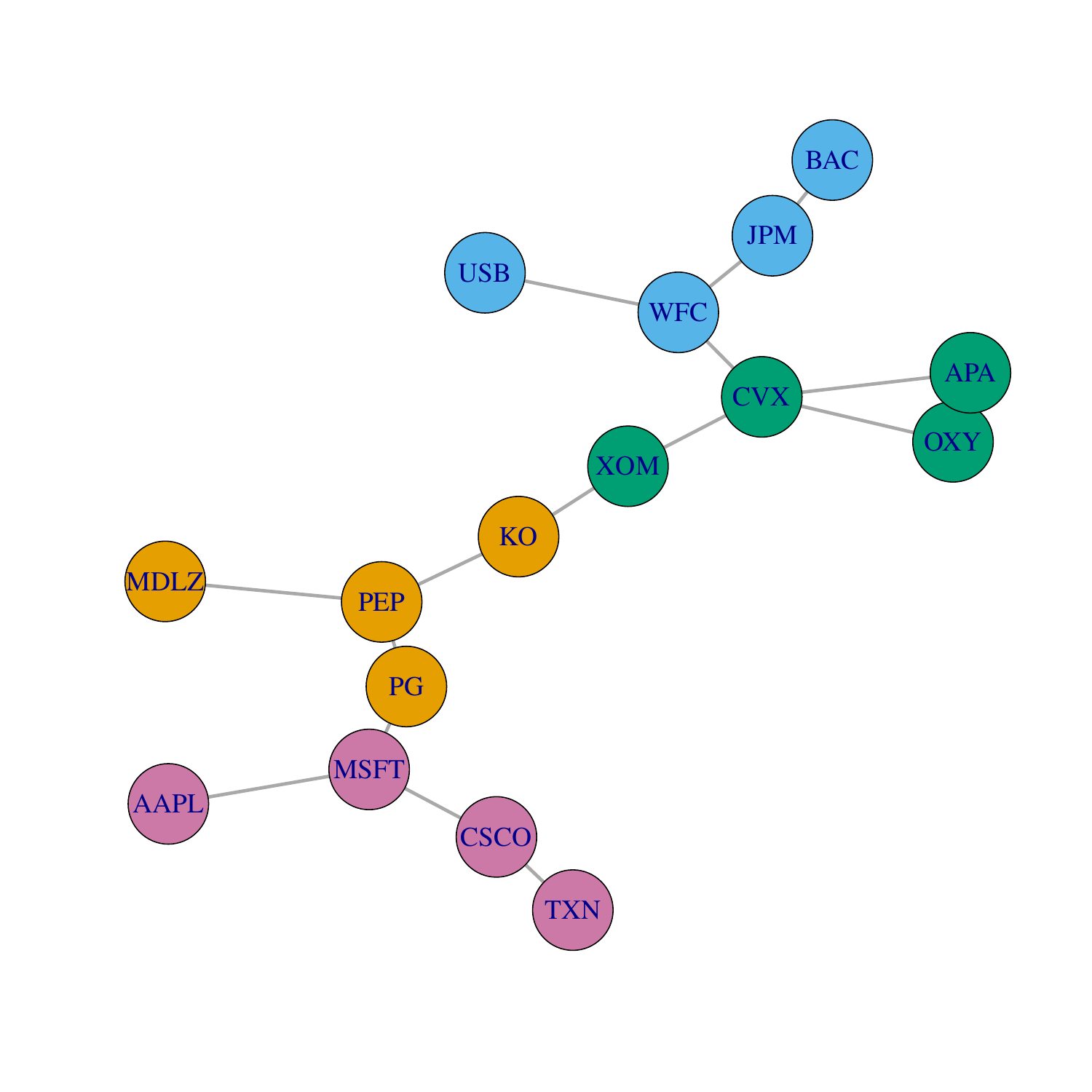}
        %\caption{First plot caption}
    \end{subfigure}
    \begin{subfigure}{0.3\linewidth}
                  \centering
        \includegraphics[width=\linewidth]{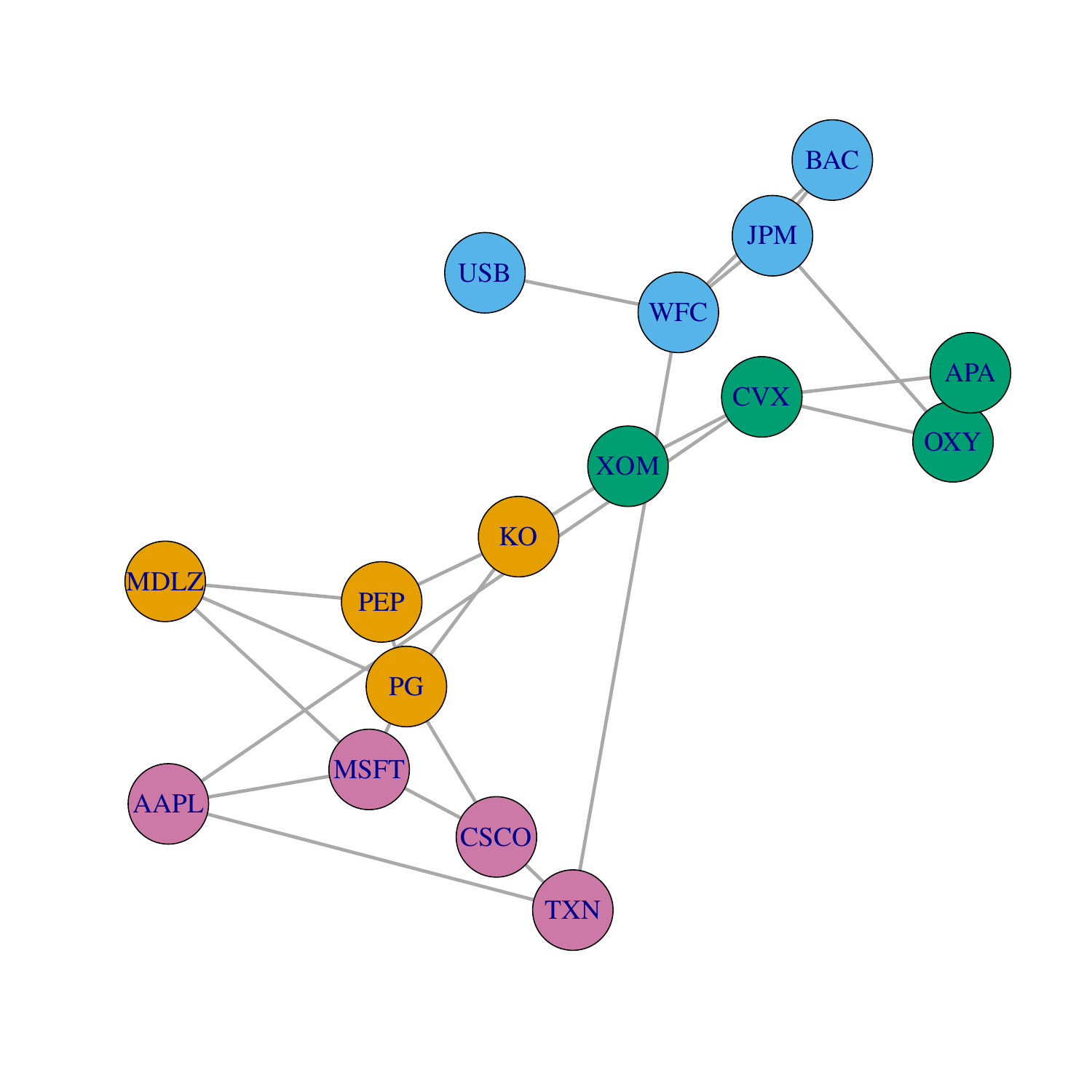}
    \end{subfigure}
    \caption{Estimated trees $\hat T^\chi$ \citep{engelke2024levygraphicalmodels} (left) and $\hat T^\Gamma$ (middle), and the estimated general graph $\hat{G}$ based on AIC (right). The colors encode the industry sectors of the underlying assets given by Financials (yellow),
Consumer Staples (blue), Technology (purple), Energy (green).}
    \label{fig:est_graphs}
\end{figure}

 To compare the goodness of fit of the estimated graphs 
 we compare the estimated and implied parameters of L\'evy correlation $\chi_{ij}$ as follows. 
The parameters matrix $\hat \Gamma^{\hat G}$ of a \HR{} L\'evy model on graph $\hat G = (V, \hat E)$ satisfies  
$\hat \Gamma^{\hat G}_{ij} = \hat \Gamma_{ij}$ for all $(i,j)\in\hat E$, where $\hat \Gamma$ is the empirical variogram matrix from~\eqref{defestimatorvariogrammatrix}. The values of 
$\hat \Gamma^{\hat G}_{ij}$ on non-edges $(i,j) \notin \hat E$ are implied by the completion from \cite{hentschelengelkesegers2023}. A measure of how well a 
graph structure explains the dependence of the overall
data is thus how well the variogram values implied by the 
the model matches the empirical values.
For plotting, we transform the variogram values to
the extremal correlation \citep{sch2003}, by exploiting the formula 
\[ \chi_{ij} = 2 - 2 \Phi(\sqrt{\Gamma_{ij}}/2),\]
for \HR{} distributions \citep[e.g.,][]{engelke2015}.
Figure~\ref{fig:data_app_comp_emp_vs_impl} shows
the empirical extremal correlation compared to those implied by the three different models.
The tree models have a clear bias and underestimate the empirical correlations. This is due to the fact that trees are very sparse models and therefore the non-adjacent nodes are relatively independent in this model class. The general graph overcomes this bias issue as it allows for denser graphs. In fact, the right amount of sparsity is selected in a data-driven way by AIC or BIC criteria.

 \begin{figure}
        \centering
        \includegraphics[width=\linewidth]{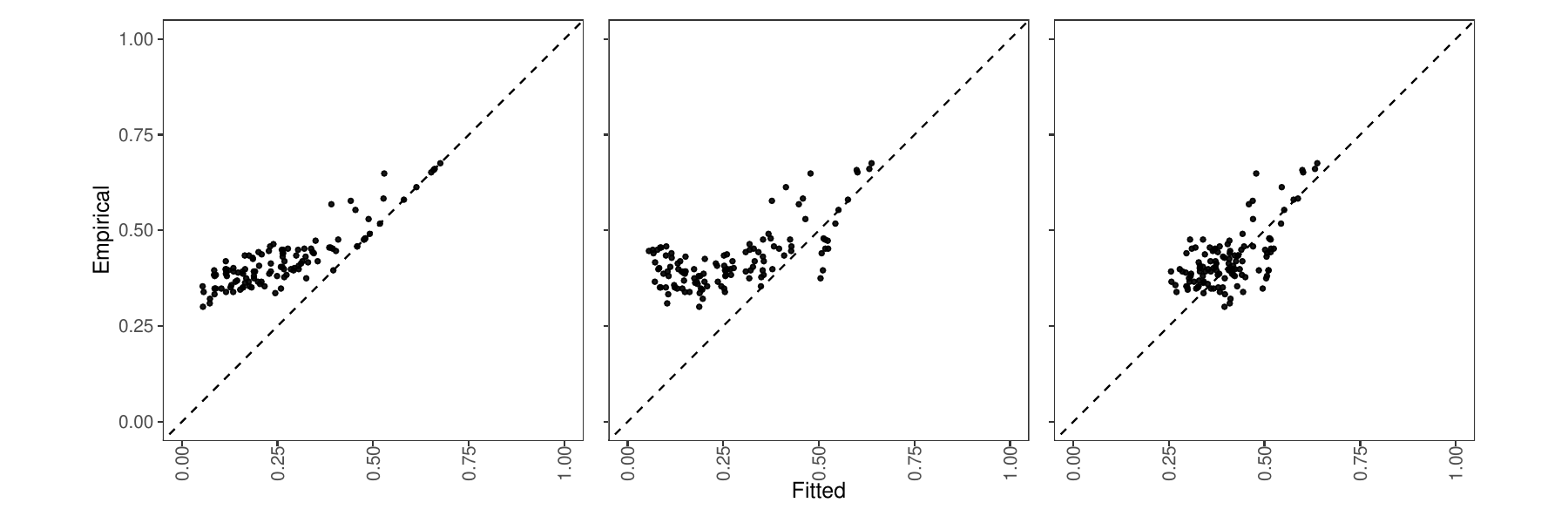}
    \caption{Implied vs estimated entries of $\chi$ for the minimung spanning tree with weights $\chi$ (left), $\Gamma$ (middle) and the general graph (right). }
    \label{fig:data_app_comp_emp_vs_impl}
\end{figure}

To assess our estimates of the orthant weights, the left panel in Figure \ref{fig:data_app_comp_emp_vs_impl2} shows a traceplot of the estimates for $\Psi$ using the gradient ascent described in Section~\ref{secsimulation}. All parameter estimates are non-negative, indicating strong positive correlation in the data, which is to be expected since stock data is known to be strongly positively correlated. To assess the adequacy of the parameter estimates of $\Psi$, Figure \ref{fig:data_app_comp_emp_vs_impl2} compares the mass on the bivariate positive and negative orthants for components $i,j$ implied by the graph structure given by $m_{i,j}=\sum_{o\in\Oo, o_io_j=1 } \gamma_\bmo(\Psi)/2$ with their empirical estimates given by
$$\hat m_{i,j}=\frac{\hat{\chi}^{(+,+)}+\hat{\chi}^{(-,-)}}{\hat{\chi}^{(+,+)}+\hat{\chi}^{(-,-)}+\hat{\chi}^{(+,-)}+\hat{\chi}^{(-,+)}}$$
as they may be interpreted as the strength of positive dependence implied by $\Psi$. The implied and estimated values of $m_{i,j}$ are close to the diagonal, demonstrating a good fit of our model of the orthant weights. All in all, our IHR process model seems to outperform the tree-based models on this dataset since it provides a superior or at least equally good fit across all metrics.
 \begin{figure}
 \centering
    \begin{subfigure}{0.25\linewidth}
        \centering
        \includegraphics[width=\linewidth]{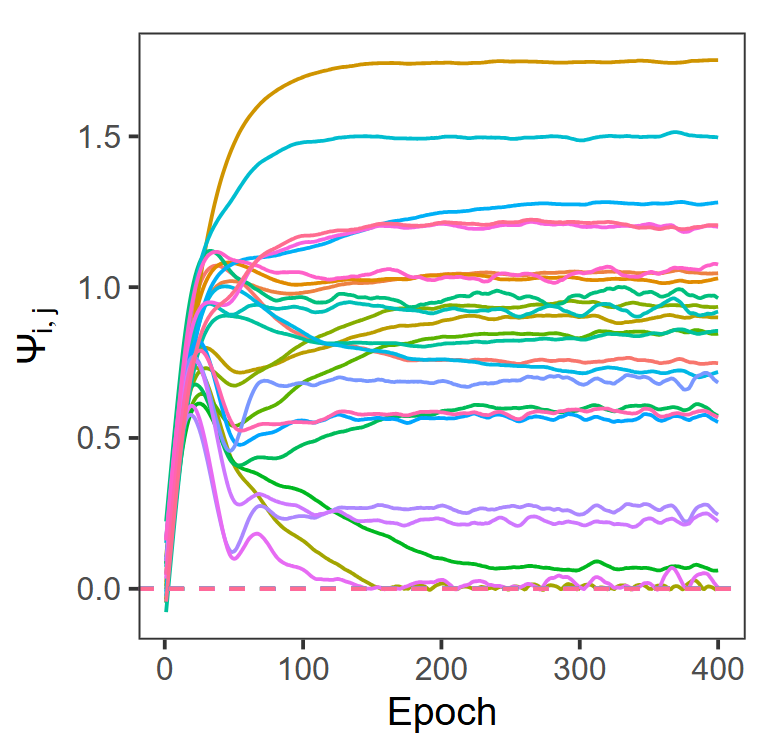}
        %\caption{First plot caption}
    \end{subfigure}
\hspace{-0.3cm}
        \begin{subfigure}{0.75\linewidth}
        \centering
        \includegraphics[width=\linewidth]{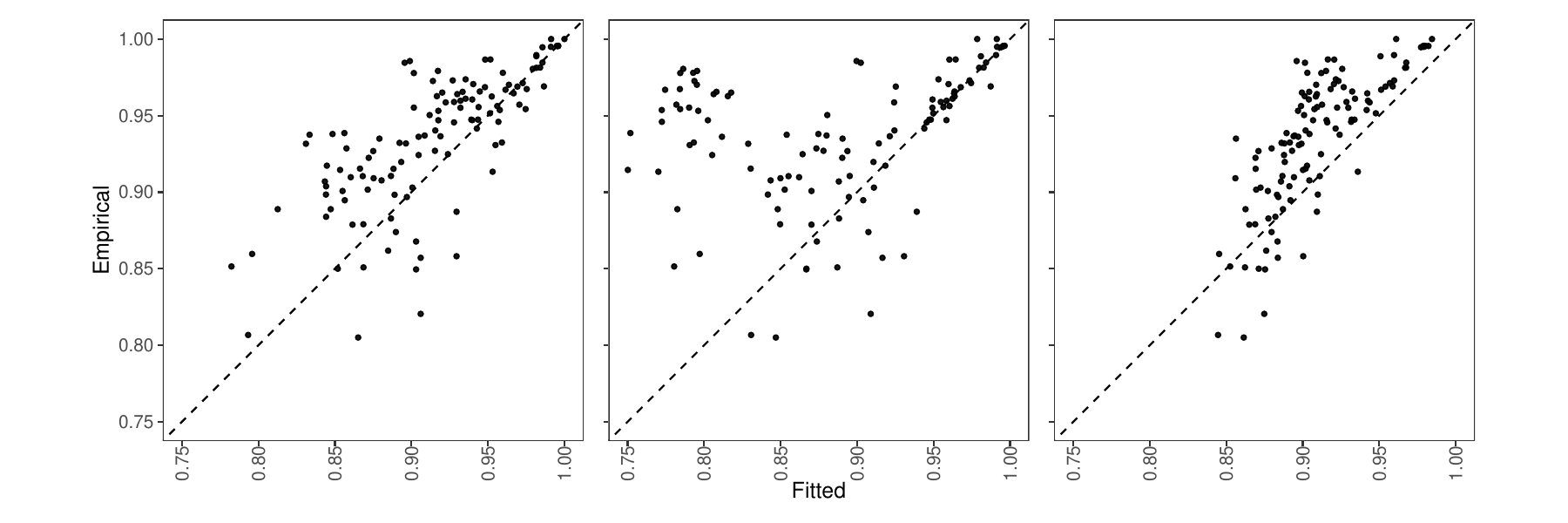}
        %\caption{First plot caption}
    \end{subfigure}

    \caption{Traceplot of the estimates of $\Psi$ on the estimated edges in $\hat{G}$ (left). Implied vs estimated $m_{i,j}$ for the minimum spanning trees based on $\chi$ (right,left), the minimum spanning trees based on $\Gamma$ (right,middle) and based on the general graph (right,right)}
    \label{fig:data_app_comp_emp_vs_impl2}
\end{figure}

\section{Conclusion}
In this paper, we introduce a family of PLMs which encodes the conditional independence properties of the associated Lévy process in terms of zero-entries of the symmetric \HR{} precision matrix $\Theta\in\R^{d\times d}$. This is an analog to the precision matrix $\Sigma^{-1}$ of a Brownian motion, which also encodes conditional independencies in its zero-entries. Thus, our model may be viewed as a continuous time analog of the Gaussian graphical model for non-Gaussian Lévy processes. To be able to incorporate asymmetric dependencies for the jumps of the Lévy process we introduce a new model for an orthant-weighting scheme of a PLM based on the Ising model, which retains the underlying graph and allows to asymmetrize a given PLM in a sparse way. We propose sparsistent estimation methods for the underlying parameter matrices $\Theta$ and $\Psi$, assuming that the marginal processes are stable with arbitrary stability indices. Importantly, the stability indices do not need to be identical across the margins and our proposed estimation procedure does not require any knowledge about these indices since it is solely based on the copula of the Lévy process at a fixed time point. Thus, our method models dependence and may be used in conjunction with any parameter estimation method for the marginal parameters of the stable processes to estimate the full data generating process.

\begin{acks}[Acknowledgments]
The authors would like to thank Axel B\"ucher and Mathias Vetter for helpful discussions regarding the proof of Lemma 4.3 in \cite{buechervetter2013}.
\end{acks}

\begin{funding}
Sebastian Engelke and Florian Brück were supported by the Swiss National Science Foundation under Grant 186858. Stanislav Volgushev was supported by a discovery grant (RGPIN-2024-05528) from NSERC of Canada.
\end{funding}

\bibliographystyle{imsart-nameyear.bst}
\bibliography{bib.bib}

\clearpage
\newpage
%%%%%%%%%%%%%%%%%%%%%%%%%%%%%%%%%%%%%%%%%%%%%%%%%%%%%%%%%%%%%%%%%%%%%%%%%%%%%%%%%%%%%%%%%%%
\appendix
%%%%%%%%%%%%%%%%%%%%%%%%%%%%%%%%%%%%%%%%%%%%%%%%%%%%%%%%%%%%%%%%%%%%%%%%%%%%%%%%%%%%%%%%%%%%

%%%%%%%%%%%%%%%%%%%%%%%%%%%%%%%%%%%%%%%%%%%%%%%%%%%%%%%%%%%%%%%%%%%%%%%%%%%%%%%%%%%%%%%%%%%%%

\addtocontents{toc}{\protect\setcounter{tocdepth}2} 

\tableofcontents

\section{Further properties of HR Lévy processes}
In this section, we discuss several properties of \HRL{} processes that will be used in subsequent proof sections. We start by introducing some additional notation.  We will frequently use the notation $\Lambda^{(\bag,\Theta)}$ or $\Lambda^{(\bag)}$ to emphasize that $\Lambda$ depends on $(\bag,\Theta)$ or $(\bag)$. Define
\begin{align}
g(\bmx;\Theta) &:= c(\Theta)\prodd \frac{1}{x_i^{1+1/d}} \exp\lc -\frac{1}{2}\log(\bmx)^\T \Theta\log(\bmx)+\log(\bmx)^\T r_\theta \rc \id_{\{ \bmx\in(0,\infty)^d\}}
\label{def:g1}
\end{align}
where the normalizing constant $c_\Theta$ is chosen such that $\int_{ (0,\infty)^{d-1}} g(\bmx,\Theta)\rmd \bmx_{\setminus i}=x_{i}^{-2}$. Note that $g$ is $2^{d-1}$ times the density of the normalized symmetric measure $\Lambda^*_{(sym)}$ from Example~\ref{ex:HR} restricted to the set $[0,\infty)^d$. Recalling~\eqref{defHRparetolevycop} we also see that 
$\lambda^\star(\bmx) = \sum_{\bmo \in \Oo} \gamma_\bmo\bm{1}_{\{\bmx \in  \bmo\}} g(|\bmx|;\Theta)$, where the absolute value is applied component-wise.

By \citet[][Equation (9)]{engelkehitz2020} we have that for an arbitrary $1\leq k\leq d$ the function $g$ can also be represented as
\begin{equation}\label{eq:galt}
g(\bmx;\Theta)= \id_{\{ \bmx\in(0,\infty)^d\}}  \varphi_{d-1}(\tilde{\bmx}_{\setminus k}; \Sigma^{(k)})  x_k^{-2}  \prod_{i \neq k} x_i^{-1},   
\end{equation} 
where $\varphi_p(\cdot; \Sigma)$ is the density of a centered $p$-dimensional normal distribution with covariance matrix $\Sigma$, $\tilde{\bmx}_{\setminus k} = \{\log(x_i / x_k) + \Gamma_{ik} / 2\}_{i \neq k}$, and $\Sigma^{(k)} = (\Theta_{\setminus\{ k\},\setminus \{k\}})^{-1}$.

Next, recall Definition~\ref{defnHRprocess} where we define an \HRL{} process with L\'evy measure $\Lambda^{(\bag,\Theta)}$. We now discuss several properties of the measure $\Lambda^{(\bag,\Theta)}$. To lighten notation, we will often drop the dependence of $\Theta$ and simply write $\Lambda^{(\bag)}$. 

\begin{prop}
\label{proplevymeasurehrprocess}
For every Hüsler-Reiss precision matrix $\Theta$, $\balpha\in(0,2)^d$, $\bmc\in(0,\infty)^{2d}$ and $\bgamma\in [0,\infty)^{2^d}$, $\Lambda^{(\bag, \Theta)}$ defines a valid L\'evy measure. It has a density with respect to Lebesgue measure on $\R^d$ which takes the form
\begin{align} \label{deflevydensityhüslerreissprocess}
\lambda(\bmx;\bag,\Theta) &= \sum_{\bmo\in\Oo}  \gamma_\bmo \id_{\{\bmx\in\bmo\}}   g\lc \lv \bm t_{1/\balpha,(1/\bmc)^{1/\balpha}}(\bmx)\rv;\Theta\rc \prodd\frac{ \alpha_i \vert x_i\vert^{\alpha_i-1}}{ c_i^{o_i}} 
\\
&= \sum_{\bmo \in \Oo} \gamma_\bmo \id_{\{\bmx\in\bmo\}} C_\bmo g(M_\bmo |\bmx|^{\balpha};\Theta) \prod_{i=1}^d |x_i|^{\alpha_i-1} \label{eq:hrlevydens_altrepr1}
\end{align} 
where\footnote{Here and in what follows, with a slight abuse of notation, $c_i^{o_i} = c_i^+$ if $o_i$ is positive and $c_i^-$ otherwise, and similarly for $c_i^{\sgn(x_i)}$.} $C_\bmo := \prodd \alpha_i/c_i^{o_i}$, $|\bmx|$ is interpreted as the component-wise absolute value and so is the power, i.e.
$(1/\bmc)^{1/\balpha}:=\lc \lc(c_i^+)^{-1/\alpha_i},(c_i^-)^{-1/\alpha_i}\rc\rc_{1\leq i\leq d}$, $M_\bmo = \mbox{diag}(1/c_1^{o_1},\dots,1/c_d^{o_d})$ and
\begin{align}
    &\bm t_{\balpha,\bmc}:\ \Rdo \to \R^d, \ \bmx\mapsto \Big( \sgn(x_1)\lc c_i^{\sgn(x_1)}\vert x_1\vert\rc ^{1/\alpha_1},\ldots,  \sgn(x_d)\lc c_i^{\sgn(x_d)}\vert x_d\vert \rc^{1/\alpha_d}\Big) . \label{defHRlevymeasuretrafo}
\end{align} 
Moreover, $\Lambda^{(\bag, \Theta)}$ is the push-forward measure of $\Lambda^{(\bm 1, \bm 1, \bm \gamma, \Theta)}$ under the map $t_{\balpha,\bmc}$.   
\end{prop}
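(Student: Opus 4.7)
\emph{Proof proposal.} The strategy is to establish the push-forward identity $\Lambda^{(\bag,\Theta)} = (\bm t_{\balpha,\bmc})_*\Lambda^{(\bm 1,\bm 1,\bgamma,\Theta)}$ directly; the two density formulas and the Lévy-measure property then follow by a change of variables, and identification with the object prescribed by Definition~\ref{defnHRprocess} is forced by the uniqueness part of Eder--Kl\"uppelberg Theorem~1(ii).

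I would begin with the base case $\balpha=\bm 1,\bmc=\bm 1$, in which $\bm t_{\bm 1,\bm 1}$ is the identity and the proposed density collapses to $\sum_{\bmo}\gamma_\bmo \id_{\{\bmx\in\bmo\}}g(|\bmx|;\Theta)$. The normalization $\int_{(0,\infty)^{d-1}}g(\cdot;\Theta)\,\rmd\bmx_{\setminus i}=x_i^{-2}$, combined with the orthant symmetry of $g(|\cdot|;\Theta)$ and the weight constraint~\eqref{weight_norm}, shows by a one-line computation that the measure so-defined has marginal tail integrals $z^{-1}$, i.e.\ satisfies~\eqref{marginal_constraint}. Lévy integrability $\int\min(1,\|\bmx\|^2)\,\rmd\Lambda^{(\bm 1,\bm 1,\bgamma,\Theta)}(\bmx)<\infty$ is verified by inserting the Gaussian representation~\eqref{eq:galt}, bounding $\varphi_{d-1}(\tilde{\bmx}_{\setminus k};\Sigma^{(k)})$ uniformly away from the coordinate axes, and integrating the resulting power terms on each orthant. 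Thus $\Lambda^{(\bm 1,\bm 1,\bgamma,\Theta)}$ is a valid PLM, and in fact coincides with the measure $\Lambda^\star$ of~\eqref{defHRparetolevycop}.

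Next, for general $\balpha\in(0,2)^d$ and $\bmc\in(0,\infty)^{2d}$, the map $\bm t_{\balpha,\bmc}$ defined in~\eqref{defHRlevymeasuretrafo} is a smooth bijection of $\Rdo$ that preserves every orthant. A direct computation yields $\bm t_{\balpha,\bmc}^{-1}=\bm t_{1/\balpha,(1/\bmc)^{1/\balpha}}$, acting on orthant $\bmo$ as $y_i\mapsto\sgn(y_i)|y_i|^{\alpha_i}/c_i^{o_i}$, with diagonal Jacobian determinant $\prod_i \alpha_i|y_i|^{\alpha_i-1}/c_i^{o_i}$. Setting $\tilde\Lambda:=(\bm t_{\balpha,\bmc})_*\Lambda^{(\bm 1,\bm 1,\bgamma,\Theta)}$ and applying the change-of-variables formula orthant by orthant produces exactly the density~\eqref{deflevydensityhüslerreissprocess}. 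Rewriting $|\bm t_{1/\balpha,(1/\bmc)^{1/\balpha}}(\bmy)|=M_\bmo|\bmy|^{\balpha}$ and grouping constants into $C_\bmo=\prod_i \alpha_i/c_i^{o_i}$ gives the alternate form~\eqref{eq:hrlevydens_altrepr1}. Lévy integrability of $\tilde\Lambda$ is inherited from the base case through the elementary componentwise bound $\|\bm t_{\balpha,\bmc}(\bmx)\|^2\lesssim \sum_i|x_i|^{2/\alpha_i}$, where $2/\alpha_i>1$ is used to absorb the singularity near the origin.

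It remains to identify $\tilde\Lambda$ with $\Lambda^{(\bag,\Theta)}$. Since $\bm t_{\balpha,\bmc}$ acts componentwise, for $z>0$ we have $\tilde\Lambda(\{y_i>z\})=\Lambda^{(\bm 1,\bm 1,\bgamma,\Theta)}(\{x_i>z^{\alpha_i}/c_i^+\})=c_i^+ z^{-\alpha_i}$, and symmetrically for $z<0$, matching the prescribed $U_i$. Pulling $\tilde\Lambda$ back under $\bm t_{\balpha,\bmc}^{-1}$ returns the PLM $\Lambda^{(\bm 1,\bm 1,\bgamma,\Theta)}=\Lambda^\star$. Uniqueness in Eder--Kl\"uppelberg Theorem~1(ii) therefore forces $\tilde\Lambda=\Lambda^{(\bag,\Theta)}$, which simultaneously yields the push-forward claim, both density formulas, and the Lévy-measure property. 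The main obstacle, beyond careful sign and orthant bookkeeping in the change of variables, is the verification of Lévy integrability for the base PLM near the origin, which requires exploiting the Gaussian factor in~\eqref{eq:galt} to control the $\prod x_i^{-2}$ singularity uniformly across orthants.
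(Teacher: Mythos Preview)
Your overall structure coincides with the paper's: both establish the push-forward identity $\Lambda^{(\bag,\Theta)}=(\bm t_{\balpha,\bmc})_*\Lambda^{(\bm 1,\bm 1,\bgamma,\Theta)}$ and read off the density by an orthant-wise change of variables. The paper is slightly more direct in that it starts from the Eder--Kl\"uppelberg construction on rectangles, $\Lambda^{(\bag)}\big(\times_i T(z_i)\big)=\Lambda^{(\bm 1,\bm 1,\bgamma)}\big(\times_i T(\sgn(z_i)|z_i|^{\alpha_i}/c_i^{\sgn(z_i)})\big)$, and recognizes the right-hand side as $\Lambda^{(\bm 1,\bm 1,\bgamma)}\big(\bm t_{\balpha,\bmc}^{-1}(\times_i T(z_i))\big)$, obtaining the push-forward identity without your separate identification-via-uniqueness step.

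The substantive difference is the L\'evy-integrability argument, where your route is both harder than necessary and, as literally stated, does not go through. Bounding $\varphi_{d-1}(\tilde\bmx_{\setminus k};\Sigma^{(k)})$ by a constant and ``integrating the resulting power terms'' leaves $\int_{(0,1)^d}\|\bmx\|^2\, x_k^{-2}\prod_{i\ne k}x_i^{-1}\,\rmd\bmx$, which already diverges for $d=2$ (the $\int_0^1 x_2^{-1}\,\rmd x_2$ piece). The Gaussian decay is not optional here, and converting it into a usable pointwise bound is real work---essentially Lemma~\ref{lem:tech1}, which the paper develops for other purposes. The paper's proof avoids all of this. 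After the push-forward identity one has, on a small box $A$ around~$\bm 0$ with preimage $\tilde A$,
\[
\int_{A\cap[0,\infty)^d}x_i^2\,\Lambda^{(\bag)}(\rmd\bmx)=\int_{\tilde A\cap[0,\infty)^d}(c_i^+x_i)^{2/\alpha_i}\,\Lambda^{(\bm1,\bm1,\bgamma)}(\rmd\bmx)\le (c_i^+)^{2/\alpha_i}\int_0^{M}x_i^{2/\alpha_i-2}\,\rmd x_i<\infty,
\]
using only the marginal normalization $\int_{(0,\infty)^{d-1}}g(\bmx;\Theta)\,\rmd\bmx_{\setminus i}=x_i^{-2}$ already built into the definition of $g$; the exponent $2/\alpha_i-2>-1$ does the rest. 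No pointwise control of $\varphi_{d-1}$ is required anywhere, so what you identify as ``the main obstacle'' is in fact not one. (Also, the singular factor in $g$ is $x_k^{-2}\prod_{i\ne k}x_i^{-1}$, not $\prod_i x_i^{-2}$ as your last sentence suggests.)
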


\begin{proof}[Proof of Proposition~\ref{proplevymeasurehrprocess}]

Throughout the proof, we will drop the dependence of $\Lambda^{(\bag,\Theta)}$ on $\Theta$ and simply write $\Lambda^{(\bag)}$. We begin by deriving the density representation~\eqref{deflevydensityhüslerreissprocess}; the equality in~\eqref{eq:hrlevydens_altrepr1} then follows from straightforward computations. Define
\[
T(x)=\begin{cases} 
   (x,\infty), & x\geq 0, \\
    (-\infty,x], & x<0.
\end{cases}
\]
Observe that it suffices to prove the equality
\[
\Lambda^{(\bag)}\lc \times_{i=1}^d T(z_i)\rc = \int_{\times_{i=1}^d T(z_i)} \lambda(\bmx;\bag,\Theta) \rmd \bmx
\]
for all fixed orthants $\bmo$ and all $\bmz\in\bmo\cap\lc\R\setminus\{\bm 0\}\rc^d$. We have
\begin{align*}
\Lambda^{(\bag)}\lc \times_{i=1}^d T(z_i)\rc
&= \Lambda^{(\bm 1, \bm 1, \bgamma)}\lc \times_{i=1}^d T(\sgn(z_i)|z_i|^{\alpha_i}/c_i^{\sgn(z_i)}) \rc
\\
&= \Lambda^{(\bm 1, \bm 1, \bgamma)}\lc t_{\bm \alpha,\bm c}^{-1}\lc \times_{i=1}^d T(z_i)\rc\rc.
\end{align*}
This shows that $\Lambda^{(\bag)}$ is the push-forward of $\Lambda^{(\bm 1, \bm 1, \bgamma)}$ under $t_{\bm \alpha,\bm c}$. The map $t_{\bm \alpha,\bm c}$ is bijective, and continuously differentiable on the set $(\R\setminus\{0\})^d$. Noting further that the density of $\Lambda^{(\bm 1, \bm 1, \bgamma)}$ is given by 
\[
\bmx \mapsto \sum_{\bmo \in \Oo} \gamma_\bmo\bm{1}_{\{\bmx \in  \bmo\}} g(|\bmx|;\Theta),
\]
the formula for $\lambda$ follows by applying the usual density transformation formula on each orthant separately.

To prove that $\Lambda^{(\bag)}$ is a valid Lévy measure it is sufficient to check that $\Lambda^{(\bac,\bm 1)}$ is a valid Lévy measure, which amounts to checking that for for all $(c_i^+)\in(0,\infty)^d$ there is some open neighborhood $A$ of $\bm 0$ such that
\[
\int_{ A \cap\oinf^d}\Vert x\Vert_2^2\Lambda^{(\bac,\bm 1)}( \rmd \bmx) =\sum_{i=1}^d\int_{ A\cap[0,\infty)^d} x_i^2\Lambda^{(\bac,\bm 1)}( \rmd \bmx) <\infty.
\]
Choosing $A = (-1,1)^d$ we obtain by the change of variable formula for push-forward measures and the fact that $\Lambda^{(\bac,\bm 1)}$ is the push-forward of $\Lambda^{(\bm 1,\bm 1, \bm 1)}$ under $t_{\bac}$ 
\begin{align*}
\int_{ A \cap\oinf^d}\Vert x\Vert_2^2\Lambda^{(\bac,\bm 1)}( \rmd \bmx) &= \sum_{i=1}^d\int_{\tilde A\cap[0,\infty)^d} (c_i^+)^{2/\alpha_i}x^{2/\alpha_i}_i \Lambda^{(\id,\id,\id)}(\rmd\bmx)
\\
&\leq \sum_{i=1}^d(c_i^+)^{2/\alpha_i}\int_{0}^{(c^+_i)^{-1/\alpha_i}} x^{2/\alpha_i-2}_i \rmd x_i  <\infty,
\end{align*}
since $2/\alpha_i-2>-1$, where $\Tilde{A} = \times_{i=1}^d[-(c^-_i)^{-1/\alpha_i},(c^+_i)^{-1/\alpha_i}]^d$ is the pre-image of $A$ under $t_{\balpha,\bm c}$ and we used that the marginal densities of the \HR{} exponent measure are $x_i^{-2}I\{x_i > 0\}$ by homogeneity of the \HR{} exponent measure.
\end{proof}

Even though \HR{} processes are only self-similar, i.e.\ $(\g L(at))_{t\geq 0}\overset{d}{=} (a^{H}\g L(t))_{t\geq 0}$ for some $H>0$, when $\balpha=(\alpha,\ldots,\alpha)$, they obey the following ``generalized self-similarity'' property.
\begin{thm}
\label{thmselfsimHRprocess}
    Let  $\lc \g L(t)\rc_{t\geq 0}$ denote an \HR{} process with drift $\bm 0$.
    For every $\bm \tau\in\R^d$, $\bm \delta\in[0,\infty)$ and $a>0$ we have 
    $$ \lc\lc \delta_i L_i(at) +at\tau_i \rc_{\leqd}\rc_{t\geq 0} \overset{d}{=}  \lc\lc a^{1/\alpha_i}\delta_i L_i(t)+ t\lc a\tau_i +\delta_i b_i(a)\rc  \rc_{\leqd}\rc_{t\geq 0},$$
    where for $i \in  V$
    \begin{equation}\label{eq:deffunctionb}
    b_i(a):= a^{\frac{1}{\alpha_i}} \int_{\Rdo} \bmo(x_i)\lc c^{\bmo(x_i)}_i \vert x_i\vert \rc^{\frac{1}{\alpha_i}} \lc \id_{\{\Vert t_{\balpha,\bmc}(\bmx) \Vert_2 \leq 1\}}-  \id_{\{\Vert t_{\balpha,\bmc}(a\bmx) \Vert_2 \leq 1\}} \rc    \Lambda^{(\bm 1, \bm 1, \bgamma)}(\rmd\bmx). 
    \end{equation}
\end{thm}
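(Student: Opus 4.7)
The plan is to show that both sides of the claimed equality are Lévy processes and that they agree in distribution at $t=1$; this is enough, since the finite-dimensional distributions of a Lévy process are determined by its law at any single time point.

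Write $D:=\operatorname{diag}(\delta_1,\ldots,\delta_d)$ and $A:=\operatorname{diag}(a^{1/\alpha_1}\delta_1,\ldots,a^{1/\alpha_d}\delta_d)$, and set
\[\g Y(t):= D\g L(at)+at\bm\tau,\qquad \g Z(t):= A\g L(t)+t\lc a\bm\tau+D\bmb(a)\rc.\]
Both are obtained from the Lévy process $\g L$ by a deterministic linear map followed by a deterministic linear drift, hence are Lévy processes starting at $\bm 0$, so it suffices to prove $\g Y(1)\overset{d}{=}\g Z(1)$ via characteristic functions. Since $(\g L(at))_{t\geq 0}$ is itself a Lévy process with triplet $(\bm 0,0,a\Lamag)$, the Lévy--Khintchine formula gives
\begin{align*}
\log\varphi_{\g Y(1)}(\bmz)&=\ic a\bmz^\T\bm\tau+a\int_{\Rdo}\lc e^{\ic\bmz^\T D\bmx}-1-\ic\bmz^\T D\bmx\id_{\{\Vert\bmx\Vert_2\leq 1\}}\rc\Lamag(\rmd\bmx),\\
\log\varphi_{\g Z(1)}(\bmz)&=\ic\bmz^\T\lc a\bm\tau+D\bmb(a)\rc+\int_{\Rdo}\lc e^{\ic\bmz^\T A\bmx}-1-\ic\bmz^\T A\bmx\id_{\{\Vert\bmx\Vert_2\leq 1\}}\rc\Lamag(\rmd\bmx).
\end{align*}

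The key step is to pull back both Lévy integrals through $t_{\balpha,\bmc}$ using Proposition~\ref{proplevymeasurehrprocess}, which identifies $\Lamag$ as the pushforward of the stable PLM $\Lambda^{(\bm 1,\bm 1,\bgamma)}$ under $t_{\balpha,\bmc}$. Two properties then enter: (i) the component-wise scaling $t_{\balpha,\bmc}(a\bmx)_i=a^{1/\alpha_i}t_{\balpha,\bmc}(\bmx)_i$, equivalently $A\,t_{\balpha,\bmc}(\bmx)=D\,t_{\balpha,\bmc}(a\bmx)$, and (ii) the $(-1)$-homogeneity of $\Lambda^{(\bm 1,\bm 1,\bgamma)}$, which yields the change-of-variables identity $\int f(a\bmx)\Lambda^{(\bm 1,\bm 1,\bgamma)}(\rmd\bmx)=a\int f(\bmx)\Lambda^{(\bm 1,\bm 1,\bgamma)}(\rmd\bmx)$ whenever both sides are defined. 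Applying (i) to the $\g Z$-integral and then (ii) by substituting $\bmy=a\bmx$ in the $\g Y$-integral, the two integrands become identical except in the truncation indicator: $\id_{\{\Vert t_{\balpha,\bmc}(a\bmy)\Vert_2\leq 1\}}$ versus $\id_{\{\Vert t_{\balpha,\bmc}(\bmy)\Vert_2\leq 1\}}$. The difference between the integrals is therefore linear in $\bmz$, and unpacking the $i$-th component of $D\,t_{\balpha,\bmc}(a\bmy)$ and comparing with the definition~\eqref{eq:deffunctionb} identifies it as $\ic\bmz^\T D\bmb(a)$. This cancels the extra drift $\ic\bmz^\T D\bmb(a)$ in $\log\varphi_{\g Z(1)}$, yielding $\log\varphi_{\g Y(1)}=\log\varphi_{\g Z(1)}$.

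The main obstacle is to verify that $\bmb(a)$ is well-defined and that the above manipulations are legitimate. Both rely on showing that the symmetric difference $\{\Vert t_{\balpha,\bmc}(\bmx)\Vert_2\leq 1\}\triangle\{\Vert t_{\balpha,\bmc}(a\bmx)\Vert_2\leq 1\}$ is bounded and bounded away from $\bm 0$: when $\Vert\bmx\Vert_2$ is small both indicators equal one, and when $\Vert\bmx\Vert_2$ is large both vanish, so the symmetric difference lies in an annular region. On this region the weight $\bmo(x_i)\lc c_i^{\bmo(x_i)}|x_i|\rc^{1/\alpha_i}$ is bounded and $\Lambda^{(\bm 1,\bm 1,\bgamma)}$ has finite mass, hence $|b_i(a)|<\infty$ and Fubini justifies the swap of integrals. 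Finally, if some $\delta_i=0$, the $i$-th coordinates of both $\g Y$ and $\g Z$ reduce to the deterministic function $at\tau_i$; the remaining coordinates form a lower-dimensional \HRL{} process by Lemma~\ref{lemHrprocessclass}, so we may assume without loss of generality that $\delta_i>0$ for all $i\in V$.
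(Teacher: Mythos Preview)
Your proof is correct and follows essentially the same approach as the paper's. Both arguments pull the L\'evy--Khintchine integral back through $t_{\balpha,\bmc}$ to the stable measure $\Lambda^{(\bm 1,\bm 1,\bgamma)}$, use its $(-1)$-homogeneity (the paper phrases this as $-(d+1)$-homogeneity of the density $g$) to absorb the time scaling $a$, and then identify the leftover linear truncation correction with $D\bmb(a)$. Your reduction to $t=1$ via the fact that a L\'evy process is determined by its law at a single time is a clean shortcut; the paper instead carries general $t$ through the computation. Your explicit verification that the symmetric difference $\{\Vert t_{\balpha,\bmc}(\bmx)\Vert_2\le 1\}\triangle\{\Vert t_{\balpha,\bmc}(a\bmx)\Vert_2\le 1\}$ is contained in an annulus, hence $b_i(a)$ is finite, is a useful addition that the paper leaves implicit.
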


\begin{proof}[Proof of Theorem~\ref{thmselfsimHRprocess}]
 It suffices to prove the claim for $ \bm\tau=\bm 0$. Thus, let $ \bm\tau=\bm 0$, $a>0$ and first set $\bm \delta=\bm 1$.
    The characteristic function of $\g L(at)$ is given by $\exp(q(\bmz))$ where
    \begin{align*}
    q(\bmz)
    &=at \sum_{\bmo\in\Oo} \gamma_\bmo \int_{\bmo \cap \Rdo} \Big\{\exp\lc i \bmz^\intercal t_{\balpha,\bmc}(\bmx)\rc- 1-i\bmz^\intercal t_{\balpha,\bmc}(\bmx) \id_{\{\Vert t_{\balpha,\bmc}(\bmx) \Vert_2 \leq 1\}} \Big\} g(\vert\bmx\vert;\Theta)\rmd \bmx
    \\
    &= t \sum_{\bmo\in\Oo} \gamma_\bmo \int_{\bmo \cap \Rdo} \Bigg\{ \exp\lc i\sum_{j=1}^d z_j \bmo(ax_j) \lc c_j^{\bmo(ax_j)}ax_j\rc^{1/\alpha_j}\rc- 1
    \\
    &\quad\quad\quad\quad\quad\quad -i\sum_{j=1}^d z_j \bmo(ax_j) \lc c_j^{\bmo(ax_j)}ax_j\rc^{1/\alpha_j} \id_{\{\Vert t_{\balpha,\bmc}(a\bmx) \Vert_2 \leq 1\}} \Bigg\} g(a\vert \bmx\vert ;\Theta)a^{d+1}\rmd \bmx
    \\
    &\overset{\star}{=}t  \sum_{\bmo\in\Oo} \gamma_\bmo \int_{\bmo \cap \Rdo} \Big\{ \exp\lc i \bmz(a,\balpha)^\intercal t_{\balpha,\bmc}(\bmx)\rc- 1-
    \\
    &\hspace{3cm}i\bmz(a,\balpha)^\intercal t_{\balpha,\bmc}(\bmx) \id_{\{\Vert t_{\balpha,\bmc}(\bmx) \Vert_2 \leq 1\}} \Big\} g(\vert\bmx\vert;\Theta)\rmd \bmx
    \\
    &\quad +it \sum_{\bmo\in\Oo} \gamma_\bmo \int_{\bmo \cap \Rdo} \bmz^\intercal \mathbf{v}_{a,\balpha,\bmc}(\bmx)\lc \id_{\{\Vert t_{\balpha,\bmc}(\bmx) \Vert_2 \leq 1\}}-  \id_{\{\Vert t_{\balpha,\bmc}(a\bmx) \Vert_2 \leq 1\}} \rc g(\vert \bmx\vert ;\Theta)  \rmd \bmx\\
    &= \tilde q(\bmz)
    +i \bmz^\intercal(t \bm h(a)),
    \end{align*}
    where $\bmz(a,\balpha):=\lc a^{1/\alpha_i}z_i\rc_{\leqd}$, $\mathbf{v}_{a,\bac} := (a^{1/\alpha_i}[\bm t_{\bac}(\bmx)]_i)_{i \in  V}$, in $\star$ we have used that $g$ is $-(d+1)$-homogeneous,  $\tilde q$ is such that $\exp(\tilde q(\bmz))$ is the characteristic function of $(a^{1/\alpha_i}L_i(t))_{i \in  V}$
    and $\bmb$ is as defined in the statement of the theorem. Finally, using $\tilde{\bmz}= \lc\delta_i z_i\rc_{i \in  V}$ in the calculations above we obtain the claim.
\end{proof}

\begin{lem}
\label{lem:posorthantprob}
Let $\g L$ denote a \HRL{}$(\Theta,\bm \gamma,\bm\alpha,\g c)$ process with $\gamma_\bmo > 0$ for all $\bmo\in \Oo$. Then, for every $t>0$ and every orthant $\bmo\in \Oo$ we have $P(\g L(t)\in\bmo)>0$. 
\end{lem}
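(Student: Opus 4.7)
The plan is to combine the small-time approximation from Theorem~\ref{thmunifconvlevymeasure} with the observation that every orthant is closed under coordinate-wise addition. Once small-time positivity on $\bmo$ is obtained via the Lévy measure approximation, stationary and independent increments will transfer it to every fixed $t>0$.

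First, I would fix $\bmo\in\Oo$ and choose a rectangle $R\subset\bmo$ that is bounded away from $\bm 0$ and has strictly positive $\Lambda^{(\bag,\Theta)}$-mass. A convenient choice is
$$
R := \prod_{i=1}^{d} J_i,\qquad J_i := \begin{cases}[1,2],& o_i=1,\\ [-2,-1], & o_i=-1,\end{cases}
$$
so that $R\subset B_\infty^\complement(1/2)$ and $R$ lies in the interior of $\bmo$. By the density formula in Proposition~\ref{proplevymeasurehrprocess}, the \HR{} Lévy density on $\bmo$ equals $\gamma_\bmo\,C_\bmo\,g(M_\bmo|\bmx|^{\balpha};\Theta)\prod_{i=1}^d |x_i|^{\alpha_i-1}$, which is strictly positive on $\bmo$ since $\gamma_\bmo>0$ and $g(\cdot;\Theta)>0$ on $(0,\infty)^d$. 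Hence $\Lambda^{(\bag,\Theta)}(R)\in(0,\infty)$.

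Next, Proposition~\ref{propunifconvHRprocess} verifies Assumption~\ref{assmpsmootheneslevydens} for the \HRL{} process $\g L$, so Theorem~\ref{thmunifconvlevymeasure} applies and yields a constant $C>0$ (depending only on $\delta=1/2$) with
$$
\bigl| P(\g L(s)\in R) - s\,\Lambda^{(\bag,\Theta)}(R)\bigr| \leq Cs^2 \qquad \text{for all } s>0.
$$
For any $s<\Lambda^{(\bag,\Theta)}(R)/(2C)$ this forces $P(\g L(s)\in R)\geq s\,\Lambda^{(\bag,\Theta)}(R)/2>0$, and in particular $P(\g L(s)\in\bmo)>0$ for all sufficiently small $s$.

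To upgrade from small $s$ to arbitrary $t>0$, I would choose $n\in\N$ large enough that $t/n$ falls in the admissible regime above, and use stationary and independent increments to write $\g L(t)$ as a sum of $n$ i.i.d.\ copies of $\g L(t/n)$. Under the paper's convention $\bmo(\bmx)=(1-2\id_{\{x_i<0\}})_i$, each orthant is closed under coordinate-wise addition: sums of non-negative reals remain non-negative (the case $o_i=1$) and sums of strictly negative reals remain strictly negative (the case $o_i=-1$). Therefore
$$
P(\g L(t)\in\bmo) \geq P(\g L(t/n)\in\bmo)^n \geq \bigl(\tfrac{t}{n}\,\Lambda^{(\bag,\Theta)}(R)/2\bigr)^n > 0.
$$
The only substantive ingredient is Theorem~\ref{thmunifconvlevymeasure}; given it, no delicate support-theoretic analysis of the Lévy process is required. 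The main point to verify carefully is the additive closure of orthants under the author's sign convention, which has to be checked separately on the $o_i=1$ and $o_i=-1$ sides because only the latter involves a strict inequality.
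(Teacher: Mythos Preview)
Your proof is correct and takes a genuinely different route from the paper. The paper argues directly for each fixed $t>0$ by decomposing $\g L = \g X + \g P$, where $\g P$ is a compound Poisson process built from the restriction of $\Lambda$ to $B_\infty^\complement(\delta)\cap\bmo$; it then conditions on the event $\{N(t)=1,\ \g J_1\in A\}$ for a set $A\subset\bmo$ chosen so far from the axes that a single jump in $A$ overwhelms the remainder $\g X(t)$ (which is bounded with positive probability). No small-time estimate or bootstrapping over increments is needed.

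Your argument instead invokes the heavy result Theorem~\ref{thmunifconvlevymeasure} to secure $P(\g L(s)\in R)>0$ for small $s$, and then exploits the additive closure of the orthant to pass to arbitrary $t$ via independent increments. This is perfectly valid (there is no circularity: the proof of Theorem~\ref{thmunifconvlevymeasure} nowhere uses Lemma~\ref{lem:posorthantprob}), and the closure check under the paper's sign convention $\bmo(\bmx)=(1-2\id_{\{x_i<0\}})_i$ is handled correctly. The trade-off is clear: the paper's proof is elementary and self-contained, requiring only the compound Poisson structure and positivity of the L\'evy density on~$\bmo$, whereas your proof is shorter to write but imports a substantial technical theorem. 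Your route has the mild advantage that the orthant-additivity trick is reusable in other settings, while the paper's route would generalize more readily to processes whose L\'evy measure fails the regularity of Assumption~\ref{assmpsmootheneslevydens}.
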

\begin{proof}
Throughout this proof, we simplify notation and denote by $\Lambda$ the PLM of $\g L$.
    Fix a $\delta>0$ and decompose $(\g L(t))_{t\geq 0}\overset{d}{=}\lc \bmX(t)+\g P(t)\rc_{t\geq 0}$, where $\bmX$ is a Lévy process independent of the  compound Poisson process $\bm P$ with Lévy measure $\id_{\{\bmx\in B^\complement_\infty(\delta)\cap \bmo\}}\Lambda(\rmd\bmx)$. We have that $\g P(t)\overset{d}{=}\sum_{i=1}^{N(t)} \g J_i$  where $N(t)\sim  \text{Poi}\lc t \Lambda\lc \bmo \cap B^\complement_\infty(\delta)\rc \rc$ and $(\g J_i)_{i\in\N}$ is an i.i.d.\ sequence of random vectors which are independent of $N(t)$ and have distribution
    $$
    P(\g J_i\in \cdot) = \frac{\Lambda\big( \cdot\cap\bmo\cap B^\complement_\infty(\delta)\big) }{\Lambda\big( \bmo\cap B^\complement_\infty(\delta)\big)}. 
    $$ 
    Fix $t > 0$. Since $\mathbf{X}(t)$ is real-valued, there exists an $M>0$ such that $P(\|\mathbf{X}(t)\|_\infty \leq M) \geq 1/2$. Define 
    $$
    A := \Big\{\bmx\in \bmo \big| \min_{i} |x_i| \geq M + 2\delta
    \Big\}. 
    $$
    Then $\Lambda(A)>0$ since $\gamma_\bmo>0$ and therefore $\Lambda\big( \cdot \cap B^\complement_\infty(\delta)\big)$ has a density with respect to Lebesgue measure which is strictly positive on $B^\complement_\infty(\delta)$. Note also that $\|\mathbf{X}(t)\|_\infty \leq M$ together with $\bm c \in \bmo$ and $
    \min_i |c_i| > M+\delta$ implies $\g X(t) + \bm c \in \bmo \cap B^\complement_\infty(\delta)$. Thus, conditioning on $N(t)=1$, we obtain
    \begin{align*}
        P\lc \g L(t)\in\bmo\rc
        &\geq P\lc \|\bmX(t)\|_\infty \leq M,  \g P(t)\in  A\rc
        =P\lc \|\bmX(t)\|_\infty \leq M \rc P\lc \g P(t)\in  A\rc\\
        &\geq P\lc\|\bmX(t)\|_\infty \leq M \rc P\lc N(t)=1, J_1\in A\rc\\
        &=P\lc\|\bmX(t)\|_\infty \leq M \rc P\lc N(t)=1\rc \frac{ \Lambda\lc A\rc}{\Lambda\lc \bmo\cap  B^\complement_\infty(\delta) \rc}
        \\
        &>0,
    \end{align*} 
    which completes the proof.
\end{proof}

\newpage

\subsection{Properties of the HR (Lévy) density}
\label{sec:propHRLdens}

In this section, we derive several regularity properties of the Lévy density of a Hüsler-Reiss process that will later be needed to control the small time behavior of the Hüsler-Reiss Lévy process. The proofs are collected at the end of this section.

Recall that $B_\infty(\delta)$ stands for the closed ball of radius $\delta$ with respect to $\|\cdot\|_\infty$. 
Moreover, to keep the notation concise, we denote $\partial_{i}f(\bmx)=\frac{\partial}{\partial x_1} f(\bmx)$ and $\partial_{i,j}f(\bmx)=\frac{\partial^2}{\partial x_i\partial x_j}f(\bmx)$ for any function $f:\R^d\to \R$ whose corresponding partial derivatives exist.

In the proofs, we will mostly use the following representation of the Lévy density of a \HR{} process $\lambda$ from (\ref{deflevydensityhüslerreissprocess}), which by~\eqref{eq:galt} holds for any $k = 1,\dots,d$
\begin{align*}
\lambda(\bmx;\balpha,\bmc,\bgamma,\Theta) &= \sum_{\bmo \in \Oo} \gamma_\bmo C_\bmo I_\bmo(\bmx) g(M_\bmo |\bmx|^{\balpha}) \prod_{i=1}^d |y_i|^{\alpha_i-1}
\\
& = \sum_{\bmo \in \Oo} \gamma_\bmo \tilde C_\bmo  I_\bmo(\bmx) h_k(|\bmx|;\bm\mu_\bmo^{(k)},\Sigma^{(k)}) |y_k|^{-\alpha_k}\prod_{i=1}^d |y_i|^{-1}
\end{align*}
where $I_\bmo(\bmx)=\id_{\{\bmo(\bmx)=\bmo\}}$ is an indicator of whether $\bmx$ lands in the corresponding open orthant, $C_\bmo, \tilde C_\bmo$ are constants that depend on $\bmc,\balpha$ and the orthant $\bmo$, $|\bmx|$ is interpreted as the component-wise absolute value and so is the power, i.e. $|\bmx|^{\balpha} = (|x_1|^{\alpha_1},\dots,|x_d|^{\alpha_d})$, $\bm\mu_\bmo^{(k)}$ is a vector of constants that depends on $k, \Theta, \bmo$, and
\[
M_\bmo = \mbox{diag}(1/c_1^{o_1},\dots,1/c_d^{o_d}).
\]
We also define 
$$h_k(\bmx;\mathbf{v}, A) = \varphi_{d-1}\big(\log(\bmy_{\backslash k}^{\bm \alpha_{\backslash k}}/y_k^{\alpha_k}) - \mathbf{v};A\big).$$
Finally, for every $\delta>0$ define
$$ [0,\infty)^d_\delta:=\{ \bmx\in [0,\infty)\mid \Vert \bmx\Vert_\infty>\delta \} $$
and
$$ Z_\delta:=\{\bmx\in \R^d\setminus B_\infty(\delta)\mid x_j=0 \text{ for some } 1\leq j\leq d\} \text{ and }Z^+_\delta:=Z_\delta\cap [0,\infty)^d,$$
where $Z_0:=\cup_{\delta>0}Z_\delta=\{\bmx\in \R^d\setminus \{0\}\mid x_j=0 \text{ for some } 1\leq j\leq d\}. $

\begin{lem}\label{lem:tech1}
Assume $\alpha_i \in (0,2), i \in  V $, $\min_{i \in  V} c_i^+ \wedge c_i^- > 0$. For every $\kappa > 0$ and $\mathbf{v}\in\R^{d-1}$ there exists a constant $C_\kappa = C_{\kappa}(\bm\alpha,\Theta,d,\mathbf{v})$ such that for all $\bmy \in (0,\infty)^d$ and for $k=1,\dots,d$
\begin{equation}\label{eq:boundphi}
\varphi_{d-1}(\log(\bmy_{\backslash k}^{\bm \alpha_{\backslash k}}/y_k^{\alpha_k}) - \mathbf{v};\Sigma^{(k)}) \leq C_\kappa  \prod_{i\neq k} \Big(\frac{y_i^{\alpha_i}}{y_k^{\alpha_k}} \wedge \frac{y_k^{\alpha_k}}{y_i^{ \alpha_i}} \Big)^\kappa.   
\end{equation}
\end{lem}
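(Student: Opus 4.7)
The plan is to derive a pointwise Gaussian bound of the form $\varphi_{d-1}(\bmz;\Sigma^{(k)}) \leq C\exp(-c\|\bmz\|_2^2)$ with $c,C$ depending only on $\Theta$ and $d$, and then to observe that this $\ell_2^2$ Gaussian decay dominates the $\ell_1$ exponential decay that appears on the right-hand side of~\eqref{eq:boundphi}, uniformly over $\bmy\in(0,\infty)^d$.

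For the first step I would note that $\Sigma^{(k)}=(\Theta_{\setminus\{k\},\setminus\{k\}})^{-1}$ is strictly positive definite: $\Theta$ is a valid \HR{} precision matrix, hence positive semi-definite of rank $d-1$ with kernel spanned by $\bm 1$, so deleting any single row and column produces a strictly positive definite submatrix. Writing $\lambda_{\min}>0$ for the smallest eigenvalue of $(\Sigma^{(k)})^{-1}$, the Gaussian density obeys
\begin{equation*}
\varphi_{d-1}(\bmz;\Sigma^{(k)}) \leq \frac{1}{(2\pi)^{(d-1)/2}\det(\Sigma^{(k)})^{1/2}}\exp\lc -\tfrac{\lambda_{\min}}{2}\|\bmz\|_2^2\rc.
\end{equation*}

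Next I would rewrite the target upper bound in exponential form. Using the identity $a\wedge a^{-1}=e^{-|\log a|}$ valid for $a>0$, and setting $z_i:=\log(y_i^{\alpha_i}/y_k^{\alpha_k})-v_i$ so that $|z_i+v_i|=|\log(y_i^{\alpha_i}/y_k^{\alpha_k})|$, one obtains
\begin{equation*}
\prod_{i\neq k}\lc \frac{y_i^{\alpha_i}}{y_k^{\alpha_k}}\wedge\frac{y_k^{\alpha_k}}{y_i^{\alpha_i}}\rc^{\kappa} = \exp\lc -\kappa\sum_{i\neq k}|z_i+v_i|\rc.
\end{equation*}
Hence it suffices to prove that
\begin{equation*}
\sup_{\bmz\in\R^{d-1}} \lc \kappa\sum_{i\neq k}|z_i+v_i| - \tfrac{\lambda_{\min}}{2}\|\bmz\|_2^2 \rc < \infty,
\end{equation*}
with the supremum depending only on $\kappa,\Theta,d$ and $\mathbf{v}$.

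This last step is elementary: by the triangle inequality and $\|\bmz\|_1\leq\sqrt{d-1}\|\bmz\|_2$, the expression under the supremum is at most $\kappa\sqrt{d-1}\|\bmz\|_2 + \kappa\|\mathbf{v}\|_1 - (\lambda_{\min}/2)\|\bmz\|_2^2$, a downward-opening quadratic in $\|\bmz\|_2$, hence bounded above by a finite constant depending only on $\kappa$, $\lambda_{\min}$, $d$, and $\mathbf{v}$. Combining the three steps yields~\eqref{eq:boundphi} with the claimed dependencies of $C_\kappa$. I do not anticipate a serious obstacle: the core observation is that Gaussian tails are strictly lighter than any exponential, and the only real care needed is to track the dependence of constants on $\Theta$ (through $\lambda_{\min}$ and $\det\Sigma^{(k)}$) rather than on $\bmy$ or $k$.
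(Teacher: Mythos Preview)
Your proposal is correct and follows essentially the same approach as the paper: both start from the eigenvalue bound $\varphi_{d-1}(\bmz;\Sigma^{(k)})\lesssim \exp(-\eta\|\bmz\|_2^2)$ and then exploit that quadratic decay dominates the linear exponent $\kappa\sum_{i\neq k}|\log(y_i^{\alpha_i}/y_k^{\alpha_k})|$ appearing on the right-hand side. The only cosmetic difference is that the paper factors the Gaussian and argues coordinate-wise via a case split on $|\log(y_i^{\alpha_i}/y_k^{\alpha_k})|$, whereas you take logarithms globally and bound the resulting quadratic in $\|\bmz\|_2$; to make $C_\kappa$ independent of $k$ you should, as you already note, replace $\lambda_{\min}$ and $\det\Sigma^{(k)}$ by $\min_k\lambda_{\min}(\Theta_{\setminus k,\setminus k})$ and $\max_k\det\Sigma^{(k)}$.
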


Now, we can derive a corresponding bound for the density of the HR process.

\begin{lem}\label{lem:propdens}
Assume $\alpha_i \in (0,2), i \in  V $, $\min_{i \in  V} c_i^+ \wedge c_i^- > 0$. The density $\lambda$ of $\Lambda^{(\bag,\Theta)}$ is continuous on $\Rdo$  with $\lambda(\bmx)=0$ for all $\bmx\in Z_0$. Further, for any $\xi > 0$ there exists a constant $C_{0,\xi}$ that also can depend on $d, \Theta, \bm\gamma, \bm\alpha, \bmc$ such that for any $k = 1,\dots,d$
\begin{equation}\label{eq:boundf}
\lambda(\bmx) \le \id_{\{x_k \neq 0\}} C_{0,\xi} |x_k|^{-\alpha_k(1+\sum_{i=1}^d \frac{1}{\alpha_i})} \prod_{i=1}^d \Big(\frac{|x_i|^{\alpha_i}}{|x_k|^{\alpha_k}} \wedge \frac{|x_k|^{\alpha_k}}{|x_i|^{ \alpha_i}} \Big)^\xi.    
\end{equation}
In particular, $\lambda$ is bounded on $B^\complement_\infty(\delta)$ with a bound that solely depends on $d$, $\Theta$, $\bgamma$, $\bmc$, $\balpha$, $\delta$.    
\end{lem}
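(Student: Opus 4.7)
My plan is to first derive the upper bound~\eqref{eq:boundf}, from which both continuity on $\Rdo$ and boundedness on $B^\complement_\infty(\delta)$ follow with very little extra work. The starting point is the representation
\[
\lambda(\bmx) \;=\; \sum_{\bmo \in \Oo} \gamma_\bmo \tilde C_\bmo\, I_\bmo(\bmx)\, h_k(|\bmx|;\bm\mu_\bmo^{(k)},\Sigma^{(k)})\, |x_k|^{-\alpha_k}\prod_{i=1}^d |x_i|^{-1},
\]
valid for any $k \in V$. Applying Lemma~\ref{lem:tech1} with a parameter $\kappa$ to be fixed later, each Gaussian factor $h_k$ is bounded by $C_\kappa \prod_{i\neq k} (\rho_i \wedge \rho_i^{-1})^\kappa$, where $\rho_i := |x_i|^{\alpha_i}/|x_k|^{\alpha_k}$; since the orthant-dependent shifts $\bm\mu_\bmo^{(k)}$ range over a finite set as $\bmo$ varies over $\Oo$, this bound is uniform in $\bmo$ after absorbing a constant.

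The central computation is to absorb each singular factor $|x_i|^{-1}$, $i \neq k$, into the tail bound while retaining a symmetric $(\rho_i \wedge \rho_i^{-1})^\xi$ factor in the final answer. Splitting $\kappa = \xi + (\kappa - \xi)$ and distinguishing the cases $\rho_i \le 1$ and $\rho_i \ge 1$ (equivalently $|x_i| \le |x_k|^{\alpha_k/\alpha_i}$ or the reverse), a short direct calculation shows that provided $\kappa \ge \xi + 1/\alpha_i$,
\[
(\rho_i \wedge \rho_i^{-1})^{\kappa}\cdot |x_i|^{-1} \;\le\; (\rho_i \wedge \rho_i^{-1})^\xi \cdot |x_k|^{-\alpha_k/\alpha_i}.
\]
Multiplying over $i \neq k$ and combining with the remaining $|x_k|^{-\alpha_k - 1}$ factor produces exactly the exponent $-\alpha_k\bigl(1+\sum_{i=1}^d \alpha_i^{-1}\bigr)$ on $|x_k|$, together with the product $\prod_{i \neq k}(\rho_i \wedge \rho_i^{-1})^\xi$. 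Appending the trivial $i = k$ factor (which equals $1$) gives the product over $i \in V$ appearing in~\eqref{eq:boundf}, and the indicator $\id_{\{x_k \neq 0\}}$ is built in because $I_\bmo(\bmx) = 0$ whenever $x_k = 0$.

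With~\eqref{eq:boundf} in hand, continuity on the interior of each orthant is immediate from smoothness of the individual summands. At a point $\bmx^0 \in Z_0 \setminus \{\bm 0\}$ pick any index $k$ with $x_k^0 \neq 0$; for $\bmx^n \to \bmx^0$ and each $i$ with $x_i^0 = 0$ one has $\rho_i^n \to 0$, so $(\rho_i^n \wedge (\rho_i^n)^{-1})^\xi \to 0$, while $|x_k^n|^{-\alpha_k(1+\sum_i \alpha_i^{-1})}$ stays bounded, yielding $\lambda(\bmx^n) \to 0$ in agreement with the convention $\lambda(\bmx^0) = 0$. For boundedness on $B^\complement_\infty(\delta)$, choose $k$ to be the coordinate of largest modulus, so that $|x_k| \ge \delta$ and $\rho_i \wedge \rho_i^{-1} \le 1$ for every $i$; the right-hand side of~\eqref{eq:boundf} is then dominated by a constant depending only on $d,\Theta,\bgamma,\bmc,\balpha$ and $\delta$.

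The main technical obstacle is the exponent bookkeeping in the key inequality above: $\kappa$ must be chosen large enough that, simultaneously for every $i \neq k$ and every admissible $k$, the $|x_i|^{-1}$ singularity is fully absorbed and the surplus on $|x_k|$ combines cleanly into $\alpha_k \sum_i \alpha_i^{-1}$. Preserving the symmetric form $(\rho_i \wedge \rho_i^{-1})^\xi$ in the remainder is essential, since both the continuity argument and the boundedness argument must be free to choose $k$ depending on $\bmx$, and without this symmetry neither closes uniformly.
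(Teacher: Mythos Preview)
Your proposal is correct and follows essentially the same approach as the paper. The only cosmetic difference is in the bookkeeping: the paper first rewrites $\prod_i |x_i|^{-1} = |x_k|^{-\alpha_k\sum_i \alpha_i^{-1}}\prod_i (\,|x_k|^{\alpha_k}/|x_i|^{\alpha_i}\,)^{1/\alpha_i}$ and then applies the clean inequality $t^a(t\wedge t^{-1})^b \le (t\wedge t^{-1})^{b-a}$ with $t = \rho_i^{-1}$, $a = 1/\alpha_i$, $b = \kappa$, whereas you absorb each $|x_i|^{-1}$ factor individually via the case split on $\rho_i \lessgtr 1$; both lead to the same choice $\kappa = \xi + \max_i 1/\alpha_i$ and the same bound, and your continuity and boundedness arguments match the paper's verbatim.
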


This bound allows us to derive the following regularity properties of the density of a \HRL{} process.

\begin{lem}\label{lem:partder1} Denote by $\lambda$ the density of $\Lambda^{(\bag,\Theta)}$. Assume $\alpha_i \in (0,2), i \in  V $, $\min_{i \in  V} c_i^+ \wedge c_i^- > 0$. The following statements hold.
\begin{enumerate}
\item[(i)] The partial derivatives of $\lambda$ exist for every $\bmx \in Z_0$ and satisfy $\partial_j \lambda(\g x) = 0$ for $\bmx \in Z_0$.
\item[(ii)] For any $\xi > 0$ there exists a constant $C_{3,\xi}$ depending on $\xi, \Theta,\bm\alpha,\bmc,d,\bm\gamma$ only such that for all $\bmx \in \R^d\backslash\{\bm 0\}$
\begin{equation}\label{eq:boundpartdevfin}
\Big| \partial_j \lambda(\bmx) \Big| \leq  C_{3,\xi} \id_{\{x_k \neq 0\}} x_k^{- 1- \alpha_k(1+\alpha_j^{-1} + \sum_{i=1, i\neq k}^d \alpha_i^{-1})} \prod_{i\neq k} \Big(\frac{x_i^{\alpha_i}}{x_k^{\alpha_k}} \wedge \frac{x_k^{\alpha_k}}{x_i^{ \alpha_i}} \Big)^\xi, \quad k = 1,\dots,d.
\end{equation}
In particular, $\partial_j \lambda(\bmx)$ is uniformly bounded on $B^\complement_\infty(\delta)$.
\end{enumerate}
\end{lem}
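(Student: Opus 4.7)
I would split the argument along the two assertions of the lemma. For (i), the key input is already provided by the previous lemma: $\lambda\equiv 0$ on $Z_0$ and the exponential-type decay bound \eqref{eq:boundf}. For (ii), on each open orthant $\lambda$ is a product of a Gaussian density in log-coordinates and power functions, so smoothness and an explicit formula for $\partial_j\lambda$ are automatic; the work is to re-express the derivative in a form to which the bound from Lemma~\ref{lem:tech1} applies.

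\textbf{For part (i).} Fix $\bmx\in Z_0$, so $\lambda(\bmx)=0$. I would separate two cases. If some coordinate $x_\ell=0$ with $\ell\neq j$, then $\bmx+t\mathbf e_j\in Z_0$ for all $t$, whence $\lambda(\bmx+t\mathbf e_j)=0$ and trivially $\partial_j\lambda(\bmx)=0$. If instead $x_j=0$ and $x_i\neq 0$ for some $i\neq j$ (take this $i$ as $k$ in \eqref{eq:boundf}), then $\bmx+t\mathbf e_j$ leaves $Z_0$ for small $t\neq 0$, and for such $t$ the bound \eqref{eq:boundf} yields
\[
\frac{|\lambda(\bmx+t\mathbf e_j)|}{|t|}\le C\,|x_k|^{-\alpha_k(1+\sum_i\alpha_i^{-1})}\,\Big(\frac{|t|^{\alpha_j}}{|x_k|^{\alpha_k}}\Big)^{\!\xi}\,\prod_{i\neq j,k}\Big(\frac{|x_i|^{\alpha_i}}{|x_k|^{\alpha_k}}\wedge\frac{|x_k|^{\alpha_k}}{|x_i|^{\alpha_i}}\Big)^{\!\xi}\,|t|^{-1}.
\]
Choosing $\xi>1/\alpha_j$ gives $|t|^{\alpha_j\xi-1}\to 0$, so $\partial_j\lambda(\bmx)=0$.

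\textbf{For part (ii).} On the interior of an orthant $\bmo$ I work with the second representation
\[
\lambda(\bmx)=\gamma_\bmo\tilde C_\bmo\,h_k(|\bmx|;\bm\mu_\bmo^{(k)},\Sigma^{(k)})\,|x_k|^{-\alpha_k}\prod_{i=1}^d|x_i|^{-1},
\]
valid for each choice of $k\in V$, where $h_k(\bmy;\mathbf v,A)=\varphi_{d-1}(\bmz;A)$ with $z_i=\alpha_i\log y_i-\alpha_k\log y_k-v_i$ for $i\neq k$. A direct chain-rule computation yields, on the open orthant,
\[
\partial_j\log h_k(|\bmx|;\mathbf v,\Sigma^{(k)})=-\big(\Sigma^{(k)}\big)^{-1}\bmz\cdot\partial_j\bmz,
\]
where $\partial_j\bmz$ contributes a factor of order $1/|x_j|$ (or $1/|x_k|$ if $j=k$). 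Hence
\[
|\partial_j\lambda(\bmx)|\le C\,h_k(|\bmx|;\bm\mu_\bmo^{(k)},\Sigma^{(k)})\cdot\frac{1+|\bmz|}{|x_j|\wedge|x_k|}\cdot|x_k|^{-\alpha_k}\prod_{i=1}^d|x_i|^{-1}.
\]
Using Lemma~\ref{lem:tech1} with exponent $\kappa$ slightly larger than the target $\xi$, the factor $1+|\bmz|=1+\mathrm{O}\big(\sum_{i\neq k}|\log(|x_i|^{\alpha_i}/|x_k|^{\alpha_k})|\big)$ can be absorbed into the product $\prod_{i\neq k}(|x_i|^{\alpha_i}/|x_k|^{\alpha_k}\wedge|x_k|^{\alpha_k}/|x_i|^{\alpha_i})^{\xi}$ at the cost of a finite multiplicative constant (because $x\mapsto |\log x|\cdot(x\wedge 1/x)^{\kappa-\xi}$ is bounded on $(0,\infty)$). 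Substituting and converting the remaining $|x_j|^{-1}$ into a power of $|x_k|$ via the identity
\[
|x_j|^{-1}=|x_k|^{-\alpha_k/\alpha_j}\cdot\big(|x_j|^{\alpha_j}/|x_k|^{\alpha_k}\big)^{-1/\alpha_j}
\]
and then reabsorbing the leftover ratio into the $\min$ product (again slightly shrinking $\xi$) yields the exponent $-1-\alpha_k(1+\alpha_j^{-1}+\sum_{i\neq k}\alpha_i^{-1})$ on $|x_k|$ and the claimed bound \eqref{eq:boundpartdevfin}. Finally, since $\partial_j\lambda=0$ on $Z_0$ by (i), the bound extends trivially there; boundedness on $B_\infty^\complement(\delta)$ follows by choosing $k$ to index the largest coordinate and noting all ratios are bounded above.

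\textbf{Main obstacle.} The routine combinatorial bookkeeping to match the exact exponent of $|x_k|$ in \eqref{eq:boundpartdevfin} is tedious but mechanical; the more delicate point is the absorption of the logarithmic factor $1+|\bmz|$ coming from the gradient of the Gaussian density into the $\min$-type bound of Lemma~\ref{lem:tech1}. This requires picking $\kappa$ in \eqref{eq:boundphi} strictly larger than $\xi$ and using that, uniformly in $\bmx$, $|\log r|\cdot(r\wedge r^{-1})^{\kappa-\xi}$ stays bounded on $(0,\infty)$ when applied to $r=|x_i|^{\alpha_i}/|x_k|^{\alpha_k}$; one must also verify this absorption is compatible with retaining a single universal constant $C_{3,\xi}$ depending only on the parameters listed.
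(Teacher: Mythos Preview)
Your proposal is correct and follows essentially the same route as the paper: part (i) uses the identical case split (some $x_\ell=0$ with $\ell\neq j$ is trivial; only $x_j=0$ is handled via the decay bound \eqref{eq:boundf} with $\xi>1/\alpha_j$), and part (ii) proceeds by explicit differentiation of the representation in terms of $h_k$ and then absorbs the extra factor coming from the gradient of the Gaussian into the $\min$-type bound from Lemma~\ref{lem:tech1} by taking $\kappa$ strictly larger than the target $\xi$. The only cosmetic difference is that the paper bounds the logarithmic contribution via $|\log r|\le C(r+r^{-1})$ and then does the absorption combinatorially in the variables $z_i=|x_k|^{\alpha_k}/|x_i|^{\alpha_i}$, whereas you absorb it directly using that $r\mapsto|\log r|\,(r\wedge r^{-1})^{\kappa-\xi}$ is bounded; both are equivalent.
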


We will later need regularity properties of the density $\lambda$ of the measure $\Lambda^{(\bag,\Theta)}$ when a ball around $\bm 0$ is smoothly cut out of its support. To this purpose, pick a smooth (infinitely differentiable) function $s_\epsilon: \R^d \to [0,1]$ which has support $  B_\infty^\complement(\eps)$ and $s_\epsilon(\bmx)=1$ for all $\bmx\in B_\infty^\complement(2\eps)$. Define
\begin{equation}\label{def:h}
h_\epsilon:\R^d\to\R^d;\quad \bmx\mapsto s_\epsilon(\bmx)\lambda(\bmx)
\end{equation}
and note that $\int_{\R^d} h_\epsilon(\bmx)\rmd\bmx<\infty$, since $\lambda$ is a Lévy density and $s$ has support bounded away from $0$. 

\begin{lem}
\label{lem:regsmootheddens}
Assume $\alpha_i \in (0,2), i \in  V $, $\min_{i \in  V} c_i^+ \wedge c_i^- > 0$. The following is true.
\begin{enumerate}
    \item[$(i)$] $h_\epsilon$ is continuous, with uniformly bounded uni- and bivariate marginal densities.
   \item[$(ii)$] $h_\epsilon$ has continuous and uniformly bounded first order partial derivatives.
   \item[$(iii)$] For every $j\in \{1,\ldots,d\}$ there exists a function $M_j(\bmx_{\setminus j})$ such that 
   $$ \lv \partial_j h_\epsilon(\bmx)\rv \leq M_j(\bmx_{\setminus j}) $$
   with $\int_{\R^{d-1}} M_j(\bmx_{\setminus j})\rmd\bmx_{\setminus j} <\infty$ and $x_i\mapsto \int_{\R^{d-2}\setminus} M_j(\bmx_{\setminus j})\rmd\bmx_{\setminus i,j}$ is bounded on bounded sets for all $i\not=j$.
\end{enumerate}

The constants in the bounds depend only on $d,\epsilon,\bgamma,\bmc,\balpha,\Theta$.
\end{lem}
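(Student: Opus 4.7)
The plan is to decompose $h_\epsilon = s_\epsilon \lambda$ and $\partial_j h_\epsilon = (\partial_j s_\epsilon)\lambda + s_\epsilon \partial_j \lambda$, then combine the pointwise bounds for $\lambda$ and $\partial_j \lambda$ from Lemmas \ref{lem:propdens} and \ref{lem:partder1} with the fact that $s_\epsilon$ vanishes on $B_\infty(\epsilon)$ and $\partial_j s_\epsilon$ is compactly supported in the annulus $\{\epsilon \le \|\bmx\|_\infty \le 2\epsilon\}$.

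For part (i), continuity of $h_\epsilon$ on $\Rdo$ is immediate from the smoothness of $s_\epsilon$ and the continuity of $\lambda$ given in Lemma \ref{lem:propdens}; at points of $Z_0$, $\lambda=0$ and the bound \eqref{eq:boundf} with any $\xi>0$ shows that $\lambda(\bmx)\to 0$ whenever some coordinate tends to zero (the min factor $(|x_i|^{\alpha_i}/|x_k|^{\alpha_k})^{\xi}$ kills the prefactor), so $h_\epsilon$ extends continuously to $\R^d$, with $h_\epsilon\equiv 0$ near the origin. For the uniform bound on the univariate marginal $h_\epsilon^{(i)}(x_i)=\int s_\epsilon\lambda\,\rmd\bmx_{\setminus i}$, split the integration according to whether $|x_i|>\epsilon$ or $|x_i|\le\epsilon$. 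In the former case, the closedness of HRL processes under marginalization (Lemma \ref{lemHrprocessclass}) implies $\lambda_i(x_i)=c_i^{\sgn(x_i)}\alpha_i|x_i|^{-1-\alpha_i}$, which is bounded on $\{|x_i|>\epsilon\}$. In the latter, the constraint $s_\epsilon(\bmx)\neq 0$ forces $\max_{k\neq i}|x_k|>\epsilon$; we union-bound over which coordinate $k$ is large and integrate the bound \eqref{eq:boundf} (applied with index $k$) over $\bmx_{\setminus\{i,k\}}$ using the elementary identity $\int_\R(|x_m|^{\alpha_m}/|x_k|^{\alpha_k}\wedge |x_k|^{\alpha_k}/|x_m|^{\alpha_m})^{\xi}\,\rmd x_m=C_\xi |x_k|^{\alpha_k/\alpha_m}$, valid for $\xi>1/\alpha_m$. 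After telescoping, the remaining one-dimensional integral in $x_k$ on $\{|x_k|>\epsilon\}$ converges. Bivariate marginals are treated by exactly the same splitting.

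For part (ii), continuity of $(\partial_j s_\epsilon)\lambda$ on $\R^d$ is clear; continuity of $s_\epsilon\,\partial_j\lambda$ on $\Rdo$ follows since on each open orthant $\lambda$ is a product of smooth functions of $\bmx$, and $\partial_j\lambda$ extends continuously to $Z_0$ by the same reasoning as in (i), using the bound \eqref{eq:boundpartdevfin}. Uniform boundedness reduces to the explicit uniform bounds on $B_\infty^\complement(\epsilon)$ given in Lemmas \ref{lem:propdens} and \ref{lem:partder1}, combined with the boundedness of $s_\epsilon$ and $\partial_j s_\epsilon$.

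Part (iii) is the main technical obstacle. I would take $M_j := M_j^{(1)}+M_j^{(2)}$, where $M_j^{(1)}(\bmx_{\setminus j}):=M_\epsilon\,\id_{\{\|\bmx_{\setminus j}\|_\infty\le 2\epsilon\}}$ dominates $\sup_{x_j}|(\partial_j s_\epsilon)\lambda|$ (using that this term is supported in a compact annulus where $\lambda$ is bounded by Lemma \ref{lem:propdens}), which is trivially integrable with marginals bounded on bounded sets. The harder piece is $M_j^{(2)}$ dominating $\sup_{x_j}|s_\epsilon\,\partial_j\lambda|$. Partition $\R^d$ according to which coordinate is maximal in absolute value and, on the region $\{k=\argmax_\ell|x_\ell|\}$, apply \eqref{eq:boundpartdevfin} with this specific $k$. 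Because $|x_k|\geq|x_i|$ for all $i$ on that region, the min factors in \eqref{eq:boundpartdevfin} are controlled, and for $k\neq j$ we may take the supremum in $x_j$ explicitly since the $x_j$-dependence appears only through the corresponding min factor; for $k=j$, a slightly more careful argument is needed to sup over $x_j$ before integrating, choosing $\xi$ large enough that integrability at both $0$ and $\infty$ in $x_j$ is preserved after the supremum. The resulting $M_j^{(2)}$ is a sum of products of one-variable power functions controllable via the same identity as in (i), giving the required finite integral and bounded partial marginals.
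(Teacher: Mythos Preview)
Your treatment of parts (i) and (ii) is correct and mirrors the paper's proof: the same splitting into the region where one of the ``kept'' coordinates already exceeds $\epsilon$ (handled via the lower-dimensional HR marginal density, bounded on $B_\infty^\complement(\epsilon)$) versus the region where some other coordinate $k$ is large (handled via \eqref{eq:boundf} with that $k$ and the integral identity $\int_\R(|x_m|^{\alpha_m}/|x_k|^{\alpha_k}\wedge\cdots)^{\xi}\,\rmd x_m=C_\xi|x_k|^{\alpha_k/\alpha_m}$).

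For part (iii), your decomposition $M_j=M_j^{(1)}+M_j^{(2)}$ and your handling of the $k\neq j$ pieces by dropping the $x_j$-factor from the product in \eqref{eq:boundpartdevfin} (that factor being $\leq 1$) are exactly what the paper does. The difference is in the cover. You partition by $k=\argmax_\ell|x_\ell|$, which forces you to deal with the region $\{j=\argmax,\ |x_j|>\epsilon\}$ using the bound \eqref{eq:boundpartdevfin} with $k=j$. On that region $\|\bmx_{\setminus j}\|_\infty$ can be arbitrarily large, so after taking $\sup_{x_j}$ you still need a decaying, integrable function of $\bmx_{\setminus j}$; your hint ``choose $\xi$ large'' does not obviously deliver this, since the prefactor $|x_j|^{-P}$ and the product of min-factors interact in a way that depends delicately on the configuration of the $\alpha_i$'s. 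The paper sidesteps this entirely by using the (non-disjoint) cover
\[
\id_{\{\|\bmx\|_\infty>\epsilon\}}\ \le\ \id_{\{|x_j|\ge\epsilon,\ \|\bmx_{\setminus j}\|_\infty\le\epsilon\}}\ +\ \sum_{k\neq j}\id_{\{|x_k|\ge\epsilon\}}.
\]
On the first set, \eqref{eq:boundpartdevfin} with $k=j$ gives simply a constant (since $|x_j|\ge\epsilon$ and every min-factor is $\le 1$), contributing $C\,\id_{\{\|\bmx_{\setminus j}\|_\infty\le\epsilon\}}$ to $M_j$; the remaining sets are exactly your $k\neq j$ pieces. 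This removes the only genuinely delicate step in your outline.
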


\subsubsection{Proofs for Section~\ref{sec:propHRLdens}}

\subsubsection*{Proof of Lemma~\ref{lem:tech1}}
\begin{proof}
First note that for $\eta := \min_k \lambda_{min}([\Sigma^{(k)}]^{-1}) > 0$ we have $\bmx^\top [\Sigma^{(k)}]^{-1} \bmx \geq \eta \|\bmx\|^2$. Thus
\begin{align*}
\varphi_{d-1}(\log(\bmy_{\backslash k}^{\bm \alpha_{\backslash k}}/y_k^{\alpha_k}) - \mathbf{v};\Sigma^{(k)}) &\lesssim \exp(-\eta \|\log(\bmy_{\backslash k}^{\bm \alpha_{\backslash k}}/y_k^{\alpha_k}) - \mathbf{v}\|^2)
\\
&= \prod_{i \neq k} \exp\big(-\eta (\log(y_{i}^{ \alpha_i}/y_k^{\alpha_k}) - v_i)^2 \big).
\end{align*}
We will bound each term in the product separately. Since all terms have the same structure it suffices to bound one. Note that $|t| > (2|a|)\vee (4\kappa/\eta)$ implies $\eta(t+a)^2 \geq \kappa |t|$, since $|t|>2|a|$ implies $|t+a| > |t|/2$, which implies 
\[
\eta(t+a)^2 \geq \eta(t/2)^2 = |t| (|t|\eta/4) \geq \kappa |t|.
\]
Thus for $|\log(y_{i}^{ \alpha_i}/y_k^{\alpha_k})| > (2\max_{j} |v_j|)\vee(4\kappa/\eta)$, we have
\[
\exp\big(-\eta (\log(y_{i}^{ \alpha_i}/y_k^{\alpha_k}) - v_i)^2 \big) \leq \Big(\frac{y_i^{\alpha_i}}{y_k^{\alpha_k}} \wedge \frac{y_k^{\alpha_k}}{y_i^{ \alpha_i}} \Big)^\kappa.
\]
since for $\log(y_{i}^{ \alpha_i}/y_k^{\alpha_k}) > 0$ we get the bound $(y_{i}^{ \alpha_i}/y_k^{\alpha_k})^{-\kappa}$ and for $\log(y_{i}^{ \alpha_i}/y_k^{\alpha_k}) < 0$ we get the bound $(y_{i}^{ \alpha_i}/y_k^{\alpha_k})^{\kappa}$.

For $|\log(y_{i}^{ \alpha_i}/y_k^{\alpha_k})| \leq (2\max_{j} |v_j|)\vee(4\kappa/\eta)$, the exponential term is bounded while $\Big(\frac{y_i^{\alpha_i}}{y_k^{\alpha_k}} \wedge \frac{y_k^{\alpha_k}}{y_i^{ \alpha_i}} \Big)^\kappa$ is bounded away from zero. In this case the the bound can be achieved by the choice of a proper constant. This completes the proof of~\eqref{eq:boundphi}.

\end{proof}

\subsubsection*{Proof of Lemma~\ref{lem:propdens}}
\begin{proof}
We first prove the bound in~\eqref{eq:boundf}. Due to the structure of $\lambda$, it suffices to consider $\bmy \in (0,\infty)^d$, all other orthants can be handled by similar arguments. Fix $\bmy \in (0,\infty)^d$. Let $\bm\mu = \bm\mu_\bmo^{(k)}$ for $\bmo$ corresponding to $(0,\infty)^d$. We have
\begin{align}
h_k(\bmy;\bm\mu,\Sigma^{(k)}) y_k^{-\alpha_k}\prod_{i=1}^d y_i^{-1} 
&= h_k(\bmy;\bm\mu,\Sigma^{(k)}) y_k^{-\alpha_k(1+\sum_{i=1}^d \frac{1}{\alpha_i})}  \prod_{i=1}^d \Big(\frac{y_k^{\alpha_k}}{y_i^{\alpha_i}}\Big)^{1/\alpha_i} \notag
\\
&\leq C_\kappa y_k^{-\alpha_k(1+\sum_{i=1}^d \frac{1}{\alpha_i})} \prod_{i=1}^d \Big(\frac{y_k^{\alpha_k}}{y_i^{\alpha_i}}\Big)^{1/\alpha_i} \Big(\frac{y_i^{\alpha_i}}{y_k^{\alpha_k}} \wedge \frac{y_k^{\alpha_k}}{y_i^{ \alpha_i}} \Big)^\kappa.  \label{eq:boundfpt1}
\end{align}
where the second line follows from~\eqref{eq:boundphi}. 
Picking $\kappa = \xi + \max_{i=1}^d 1/\alpha_i$, combined with similar arguments for all other orthants, yields~\eqref{eq:boundf} since $0\leq a \leq b$, 
\[
x^a(x\wedge x^{-1})^b = x^{a+b}\id_{\{x \in (0,1)\}} + x^{a-b}\id_{\{x \geq 1\}} \leq (x \wedge x^{-1})^{b-a} \leq 1.
\]
This shows that $\lambda$ is uniformly bounded on $B^\complement_\infty(\delta)$ for a bound that depends only on $\delta,\bm\alpha,\Theta,\bmc,\bm\gamma,d$: indeed, if $\bmy \in B^\complement_\infty(\delta)$ then there exists a $k$ such that $|y_k| \geq \delta$. Apply the bound in~\eqref{eq:boundf} with this $k$.

Next, we prove that $\lambda$ is continuous on $\R^d\backslash\{0\}$. Continuity is obvious at any point with no zero entries, and it remains to prove continuity on $Z_0^+$. Fix $\bmy \in Z_0^+$ and consider an arbitrary sequence $\bmy_n \to \bmy$. Then there exists a $k$ such that $y_k > 0$. Since $y_k > 0$, there exists a $\delta > 0$ such that for all $n$ sufficiently large $y_{n,k} \geq \delta$. For all such $n$, apply~\eqref{eq:boundf} with $\xi > \max_{i=1}^d 1/\alpha_i$ to obtain $\lambda(\bmy_n) \to 0$ since at least one entry of $\bmy$ is zero so that the product tends to zero. 

\end{proof}

\subsubsection*{Proof of Lemma~\ref{lem:partder1}}
\begin{proof}

\textit{Proof of (i)} Fix $\bmy \in Z_0$. If there exists $i \neq j$ so that $y_i = 0$, then $\lambda(\bmy + \bm e_jh) = 0$ for any $h \in \R$ such that $\bmy+\bm e_jh\neq \bm 0$, so $\partial \lambda(\bmy)/\partial y_j = 0$ in this case. It remains to consider the case $y_j = 0$, $y_i \neq 0$ for $i \neq j$. Then for $h>0$
\[
\frac{\lambda(\bmy + h\bm e_j) - \lambda(\bmy)}{h} = \frac{\lambda(\bmx)}{x_j}
\]
where $\bmx = \bmy + h \bm e_j$. Using the bound in~\eqref{eq:boundphi} with $\bmx$ in place of $\bmy$, with $k\neq j$, and arguing as we did when deriving~\eqref{eq:boundfpt1} this time with $\delta = y_k > 0$ and $\kappa > 1/\alpha_j + \max_{i=1}^d 1/\alpha_i$, we find that 
\[
\frac{\lambda(\bmy + h\bm e_j) - \lambda(\bmy)}{h} \to 0, \quad h \downarrow 0.
\]
Thus the right-side partial derivative of $\lambda$ at $\bmy$ exists and is zero. The same argument works for the left-hand partial derivative, showing that the $j$'th partial derivative of $\lambda$ at $\bmy$ exists and satisfies $\partial_j \lambda(\bmy)= 0$. 

\medskip

\textit{Proof of (ii)} First we compute the partial derivatives of $\lambda$ on $(0,\infty)^d$. Let $\bm\mu = \bm\mu_\bmo^{(k)}$ for $\bmo$ corresponding to $(0,\infty)^d$. Then, for $i \neq k$ we obtain
\begin{align*}
&\partial_i \Big[h_k(\bmy;\bm\mu,\Sigma^{(k)}) y_k^{-\alpha_k} \prod_{j=1}^d y_j^{-1}\Big]
\\
&= - h_k(\bmy;\bm\mu,\Sigma^{(k)}) y_k^{-\alpha_k} y_i^{-1} \prod_{j=1}^d y_j^{-1} + \Big[\partial_i h_k(\bmy;\bm\mu,\Sigma^{(k)}) \Big] y_k^{-\alpha_k} \prod_{j=1}^d y_j^{-1}
\\
&= - h_k(\bmy;\bm\mu,\Sigma^{(k)}) y_k^{-\alpha_k}\Big(\prod_{j=1}^d y_j^{-1}\Big) \Big[y_i^{-1} + \frac{\alpha_i}{y_i} \sum_{1 \leq j \leq d, j \neq k} \theta_{ij}(\log(y_j^{\alpha_i}/y_k^{\alpha_k}) - \mu_j) \Big].
\end{align*}
For $k=j$ we obtain
\begin{align*}
&\partial_k \Big[h_k(\bmy;\bm\mu,\Sigma^{(k)}) y_k^{-\alpha_k-1} \prod_{i\neq k} y_i^{-1}\Big]
\\
&= - (\alpha_k+1) h_k(\bmy;\bm\mu,\Sigma^{(k)}) y_k^{-\alpha_k-2} \prod_{i\neq k} y_i^{-1} + \Big[\partial_k h_k(\bmy;\bm\mu,\Sigma^{(k)}) \Big] y_k^{-\alpha_k} \prod_{i=1}^d y_i^{-1}
\\
&= - h_k(\bmy;\bm\mu,\Sigma^{(k)}) y_k^{-\alpha_k}\Big(\prod_{i=1}^d y_i^{-1}\Big) \Big[\frac{\alpha_k+1}{y_k} - \frac{\alpha_k}{y_k} \sum_{1 \leq i,\ell \leq d, i,\ell \neq k} \theta_{i\ell}(\log(y_i^{\alpha_i}/y_k^{\alpha_k}) - \mu_\ell \Big].
\end{align*}
There exists a universal constant $C$ such that $|\log(x)| \leq C(x + x^{-1})$. With this, we obtain that there exist constants $C$ depending only on $\bm\alpha, \Theta,d$ and $\tilde C_\kappa$ depending additionally on $\kappa$ such that for $\bmy \in (0,\infty)^d$ and $j =1,\dots,d$
\begin{align*}
&\Big| \partial_j \Big[h_k(\bmy;\bm\mu,\Sigma^{(k)}) y_k^{-\alpha_k-1} \prod_{i\neq k} y_i^{-1}\Big] \Big|
\\
&\leq C h_k(\bmy;\bm\mu,\Sigma^{(k)}) y_k^{-\alpha_k} y_j^{-1}  \Big(\prod_{i=1}^d y_i^{-1}\Big) \Big(1 + \sum_{1\leq i\leq d, i\neq k} \frac{y_i^{\alpha_i}}{y_k^{\alpha_k}} + \frac{y_k^{\alpha_k}}{y_i^{\alpha_i}} \Big)
\\
&= C h_k(\bmy;\bm\mu,\Sigma^{(k)}) y_k^{-\alpha_k(1+\alpha_j^{-1} + \sum_{i=1}^d \alpha_i^{-1})} \Big(\frac{y_k^{\alpha_k}}{y_j^{\alpha_j}}\Big)^{1/\alpha_j}  \Big(\prod_{i=1}^d \Big(\frac{y_k^{\alpha_k}}{y_i^{\alpha_i}}\Big)^{1/\alpha_i} \Big)
\\
&\quad \times\Big(1 + \sum_{1\leq i\leq d, i\neq k} \frac{y_i^{\alpha_i}}{y_k^{\alpha_k}} + \frac{y_k^{\alpha_k}}{y_i^{\alpha_i}} \Big)
\\
&\leq \widetilde C_\kappa y_k^{-\alpha_k(1+\alpha_j^{-1} + \sum_{i=1}^d \alpha_i^{-1})} \Big(\frac{y_k^{\alpha_k}}{y_j^{\alpha_j}}\Big)^{1/\alpha_j}  \Big(\prod_{i=1}^d \Big(\frac{y_k^{\alpha_k}}{y_i^{\alpha_i}}\Big)^{1/\alpha_i} \Big) \Big(1 + \sum_{1\leq i\leq d, i\neq k} \frac{y_i^{\alpha_i}}{y_k^{\alpha_k}} + \frac{y_k^{\alpha_k}}{y_i^{\alpha_i}} \Big)
\\
&\quad \times\Big( \prod_{i\neq k} \Big(\frac{y_i^{\alpha_i}}{y_k^{\alpha_k}} \wedge \frac{y_k^{\alpha_k}}{y_i^{ \alpha_i}} \Big)^\kappa \Big) 
\\
&=\widetilde C_\kappa y_k^{-\alpha_k(1+\alpha_j^{-1} + \sum_{i=1}^d \alpha_i^{-1})}  \Big(\frac{y_k^{\alpha_k}}{y_j^{\alpha_j}}\Big)^{1/\alpha_j}  \Big(\prod_{i=1}^d \Big(\frac{y_k^{\alpha_k}}{y_i^{\alpha_i}}\Big)^{1/\alpha_i} \Big) \Big(1 + \sum_{1\leq i\leq d, i\neq k} \frac{y_i^{\alpha_i}}{y_k^{\alpha_k}} + \frac{y_k^{\alpha_k}}{y_i^{\alpha_i}} \Big)
\\
&\quad \times \Big( \prod_{i\neq k} \Big(\frac{y_i^{\alpha_i}}{y_k^{\alpha_k}} \wedge \frac{y_k^{\alpha_k}}{y_i^{ \alpha_i}} \Big)^\kappa \Big).
\end{align*}
where we have used (\ref{eq:boundfpt1}) for the last inequality.

Next, observe that setting $z_j = y_k^{\alpha_k}/y_j^{\alpha_j}$ and noting that $z_k = 1$ we need to bound terms of the form
\begin{align*}
z_j^{1/\alpha_j} \Big(1 + \sum_{1\leq \ell\leq  d,\ell\neq k} (z_\ell + z_\ell^{-1})\Big) \prod_{i \neq k} z_i^{1/\alpha_i}(z_i \wedge z_i^{-1})^\kappa .    
\end{align*}
Recognizing that this is a sum of  terms of the form
\[
z_\ell^{b_l} \Big(\prod_{i\neq k} [z_i^{\tilde a_i} (z_i \wedge z_i^{-1})^\kappa]\Big) = \Big(\prod_{i\neq k} [z_i^{ a_i} (z_i \wedge z_i^{-1})^\kappa]\Big)
\]
for appropriate $ a_i\in\R$, we will bound each term in the sum by the same bound. Then we can bound the whole term by a constant multiple of the bound of the individual terms. 

The largest value of $a$ that can occur is $a_{max} = 1+2\max_i 1/\alpha_i$. We have for $0\leq a \leq a_{max} < \kappa$, noting that $x \wedge x^{-1} \leq 1$
\[
x^a(x\wedge x^{-1})^\kappa = x^{a+\kappa}\id_{\{x \in (0,1)\}} + x^{a-\kappa}\id_{\{x \geq 1\}} \leq  (x \wedge x^{-1})^{\kappa-a} \leq (x \wedge x^{-1})^{\kappa-a_{max}}.
\]
It could also happen that $a$ is negative, then the smallest possible value is $-1+\min_i 1/\alpha_i$. We have $|-1+\min_i 1/\alpha_i| \leq 1 < a_{max}$, hence a similar argument as above applies with $a_{min} \leq a \leq 0$. Thus setting $\kappa = a_{max} + \xi$ allows us to bound each term in the product by $(z_i \wedge z_i^{-1})^\xi$.

Thus, we obtain
\begin{equation}\label{eq:finboundpartder}
\Big| \partial_j \Big[h_k(\bmy) y_k^{-\alpha_k-1} \prod_{i\neq k} y_i^{-1}\Big] \Big| \leq C_{2,\xi} y_k^{- 1- \alpha_k(1+\alpha_j^{-1} + \sum_{i=1, i\neq k}^d \alpha_i^{-1})} \prod_{i\neq k} \Big(\frac{y_i^{\alpha_i}}{y_k^{\alpha_k}} \wedge \frac{y_k^{\alpha_k}}{y_i^{ \alpha_i}} \Big)^\xi
\end{equation}
where $C_{2,\xi}$ depends on $\Theta, \bm\alpha, d, \xi$ only. Note that this bound holds for any $k$, including $k=j$. The same argument works for all other open orthants, after possibly enlarging the constant. If $\bmy$ has at least one zero entry but is not identically zero, both the partial  derivative of $\lambda$ and the bound are zero. This completes the proof of~\eqref{eq:boundpartdevfin}. 
\end{proof}

\subsubsection*{Proof of Lemma~\ref{lem:regsmootheddens}}
\begin{proof}
Throughout the proof, fix $\xi > \max_i 1/\alpha_i$. We will repeatedly use the following elementary fact, which follows by substitution
\begin{align}
\int_\R \Big(\frac{|x|^{\alpha_i}}{|y_k|^{\alpha_k}} \wedge \frac{|y_k|^{\alpha_k}}{|x|^{ \alpha_i}} \Big)^\xi \rmd x 
&= 2 \int_0^\infty \Big(\frac{x}{|y_k|^{\alpha_k/\alpha_i}} \wedge \frac{|y_k|^{\alpha_k/\alpha_i}}{x}\Big)^{\xi\alpha_i} \rmd x \notag
\\
&= 2|y_k|^{\alpha_k/\alpha_i} \int_0^\infty x^{\xi\alpha_i}\wedge x^{-\xi\alpha_i} \rmd x = C_\xi(\bm\alpha) |y_k|^{\alpha_k/\alpha_i} \label{eq:helpint}
\end{align}
where the last equality follows by the choice of $\xi$.

\textit{Proof of part (i)} Continuity is obvious since $\lambda$ is continuous on $\R^d\setminus \{0\}$. Let $\lambda_{i,j}$ denote the Lévy density of the HR process corresponding to margin $\{i,j\}$. To obtain bounded bivariate marginal densities of $h_\epsilon$ observe that
\begin{align*}
\int_{\R^{d-2}}h_\epsilon(\bmx)\rmd\bmx_{\setminus i,j}&=  \id_{\{\vert x_i\vert\geq \eps\text{ or }\vert x_j\vert\geq\eps\}} \int_{\R^{d-2}}s_\eps(\bmx)\lambda(\bmx) \rmd\bmx_{\setminus i,j}
\\
&\quad \quad + \id_{\{\vert x_i\vert, \vert x_j\vert< \eps\}}\int_{\R^{d-2}\setminus B_\infty(\eps)}s_\epsilon(\bmx)\lambda(\bmx)\rmd\bmx_{\setminus i,j} 
\\
&\leq  \id_{\{ (x_i,x_j)\in B_\infty^\complement(\eps)\}}\lambda_{i,j}(x_i,x_j)+ \int_{\R^{d-2}\setminus B_\infty(\eps)}\lambda(\bmx)\rmd\bmx_{\setminus i,j}
\end{align*}
Since the multivariate margins of an \HR{} process are \HR{} processes, Lemma \ref{lem:propdens} with $d=2$ implies that $\lambda_{i,j}$ is uniformly bounded on $B_\infty^\complement(\epsilon)$. Therefore, it suffices to bound the last term. Note that
\begin{align*}
\int_{\R^{d-2}\setminus B_\infty(\eps)}\lambda(\bmx)\rmd\bmx_{\setminus\{i,j\}} &\leq \sum_{k \neq i,j} \int_{|x_k| > \eps}\int_{\R^{d-3}} \lambda(\bmx) \rmd x_{\setminus\{i,j,k\}} \rmd x_k
\end{align*}
and
\begin{align*}
&\int_{|x_k| > \eps}\int_{\R^{d-3}} \lambda(\bmx) \rmd x_{\setminus\{i,j,k\}}\rmd x_k 
\\
&\leq C_{0,\xi} \int_{|x_k| > \eps} |x_k|^{-\alpha_k(1+\sum_{i=1}^d \frac{1}{\alpha_i})} \int_{\R^{d-3}} \prod_{\ell\neq i,j,k} \Big(\frac{|x_\ell|^{\alpha_\ell}}{|x_k|^{\alpha_k}} \wedge \frac{|x_k|^{\alpha_k}}{|x_\ell|^{ \alpha_\ell}} \Big)^\xi \rmd x_{\setminus\{i,j,k\}} \rmd x_k 
\\
&= 2 C_{0,\xi} C_\xi(\balpha)^{d-3} \int_{\eps}^\infty x_k^{-1 - \alpha_k- \alpha_k/\alpha_i - \alpha_k/\alpha_j } \rmd x_k < \infty.
\end{align*}
where we used~\eqref{eq:boundf} in the second line and~\eqref{eq:helpint} in the third line. The argument for the univariate marginal density is similar.

\medskip

\textit{Proof of part (ii)} 
The partial derivatives of $h_\epsilon$ are given by
$$ 
\partial_jh_\epsilon(\bmx)= \lambda(\bmx)\partial_j s_\epsilon(\bmx)+s_\epsilon(\bmx)\partial_j \lambda(\bmx)
$$
By Lemma \ref{lem:partder1}(ii) $\lambda$ has uniformly bounded derivatives on $B_\infty^\complement(\eps)$, which implies that $s_\eps(\bmx)\partial_j \lambda(\bmx)$ is uniformly bounded as $s_\eps$ vanishes on $B_\infty(\eps)$. Moreover, $s_\eps$ is constant on $B_\infty^\complement(2\eps)$ which implies that $\partial_j s_\eps(\bmx)=0$ for all $\bmx\in B_\infty^\complement(2\eps)$ and for all $\bmx\in B_\infty(\eps)$. Since $s_\eps$ is smooth, $\partial_j s_\eps$ is continuous on $ B_\infty(2\eps)\setminus  B_\infty(\eps)$ and thus it is bounded. Moreover, $\lambda$ is bounded on $B_\infty(2\eps)\setminus  B_\infty(\eps)$ and therefore $\lambda(\bmx) \partial_j s_\eps(\bmx)$ is uniformly bounded. Continuity follows from the continuity of $\lambda$ and $\partial_j \lambda$ on $B_\infty^\complement(\eps)$.

\medskip

\textit{Proof of part (iii)}

 Observe that
\[
\partial_j h_\epsilon(\bmx)= \lambda(\bmx)\partial_j s_\epsilon(\bmx)+s_\epsilon(\bmx)\partial_j \lambda(\bmx).
\]
Since $\partial_j s_\eps(\bmx)=0$ for all $\bmx\in B_\infty^\complement(2\eps)$ and for all $\bmx\in B_\infty(\eps)$, and since by Lemma~\ref{lem:propdens} $\lambda(\cdot) \id_{\{\|\cdot\|_\infty \geq \eps\}}$ is uniformly bounded by a constant, say $C_{\lambda,\eps}$, we have
\begin{align*}
\Big| \lambda(\bmx)\partial_j s_\epsilon(\bmx)\Big| &\le \id_{\{\|\bmx\|_\infty \leq 2\eps\} }\Big(\sup_{\|\bmy\|_\infty \geq \eps} \lambda(\bmy)\Big) \sup_{\bmy \in \R^d} \Big|\frac{\partial}{\partial y_j}s_\epsilon(\bmy)\Big| 
\\
&\le C_{f,\eps} \id_{\{\|\bmx_{\setminus j}\|_\infty \leq 2\eps\}} \sup_{\bmy \in \R^d} \Big|\frac{\partial}{\partial y_j}s_\epsilon(\bmy)\Big|. 
\end{align*}
Let $C_{s,\lambda,\eps} := C_{\lambda,\eps} \sup_{\bmx \in \R^d} \Big|\partial_j s_\epsilon(\bmx)\Big|$. Then
\[
\Big|\partial_j h_\epsilon(\bmx)\Big| \le C_{s,\lambda,\eps}\id_{\{\|\bmx_{\setminus j}\|_\infty \leq 2\eps\}} + \Big|\id_{\{\|\bmx\|_\infty > \eps\}} \partial_j \lambda(\bmx)\Big|.
\]
Note that
\[
\id_{\{\|\bmx\|_\infty > \eps\}} \leq \id_{\{|x_j|\ge \eps\}}\id_{\{\|\bmx_{\setminus j}\|_\infty \leq \eps\}} + \sum_{m\neq j} \id_{\{|y_m| \ge \eps\}}.
\]
Then by~\eqref{eq:boundpartdevfin}
\[
\Big| \id_{\{|x_j|\ge \eps\}}\id_{\{\|\bmx_{\setminus j}\|_\infty \leq \eps\}} \partial_j \lambda(\bmx)\Big| \leq C_{3,\xi} \eps^{-1-\alpha_k(1+\alpha_j^{-1}+\sum_{i=1,i\neq k}^d\alpha_i^{-1})}\id_{\{\|\bmx_{\setminus j}\|_\infty \leq \eps\}}.
\]
Finally, for any $k \neq j$ by~\eqref{eq:boundpartdevfin}, noting that each term in the product is bounded from above by $1$ so we can drop terms,
\[
\Big| \id_{\{|y_k|\ge \eps\}} \frac{\partial}{\partial y_j}\lambda(\bmy)\Big| \leq C_{3,\xi} \id_{\{|y_k| \geq \eps\}}  y_k^{- 1- \alpha_k(1+\alpha_j^{-1} + \sum_{i=1, i\neq k}^d \alpha_i^{-1})} \prod_{i\neq k,j} \Big(\frac{|y_i|^{\alpha_i}}{|y_k|^{\alpha_k}} \wedge \frac{|y_k|^{\alpha_k}}{|y_i|^{ \alpha_i}} \Big)^\xi.
\]
In summary, defining
\[
C_{4,\eps,\xi} := C_{3,\xi} \eps^{-1-\alpha_k(1+\alpha_j^{-1}+\sum_{i=1,i\neq k}^d)\alpha_i^{-1}} + C_{s,\lambda,\eps}
\]
we have
\begin{align*}
&\Big|\frac{\partial}{\partial y_j}h_\epsilon(\bmy)\Big| 
\\
\le 
&C_{4,\eps,\xi} \id_{\{\|\bmy_{\setminus j}\|_\infty \leq 2\eps\}} + \sum_{k \neq j} C_{3,\xi} \id_{\{|y_k| \geq \eps\}}  y_k^{- 1- \alpha_k(1+\alpha_j^{-1} + \sum_{i=1, i\neq k}^d \alpha_i^{-1})} \prod_{i\neq k,j} \Big(\frac{|y_i|^{\alpha_i}}{|y_k|^{\alpha_k}} \wedge \frac{|y_k|^{\alpha_k}}{|y_i|^{ \alpha_i}} \Big)^\xi.
\end{align*}
Define the right-hand side above as $M_j(\bmy_{\setminus j})$.

Now we prove that $M_j$ has the claimed properties. The term $C_{4,\eps,\xi}\id_{\{\|\bmy_{\backslash j}\|_\infty \leq 2\eps\}}$ causes no problems, so we focus on one term from the sum. By the choice of $\xi$ and~\eqref{eq:helpint}
\begin{align*}
&\int_\eps^\infty \int_{\R^{d-2}} y_k^{- 1- \alpha_k(1+\alpha_j^{-1} + \sum_{i=1, i\neq k}^d \alpha_i^{-1})} \prod_{i\neq k,j} \Big(\frac{|y_i|^{\alpha_i}}{|y_k|^{\alpha_k}} \wedge \frac{|y_k|^{\alpha_k}}{|y_i|^{ \alpha_i}} \Big)^\xi d\bmy_{\setminus\{k,j\}}dy_k
\\
=~& C_\xi(\bm\alpha)^{d-2} \int_\eps^\infty y_k^{-1-2\alpha_k\alpha_j^{-1}} dy_k < \infty. 
\end{align*}
This proves that $\int_{\R^{d-1}} M_j(\bmy_{\backslash j}) d\bmy_{\backslash j} < \infty$. 
Boundedness of $x_i \mapsto \int_{\R^{d-2}} M_j(\bmy_{\backslash j}) d \bmy_{\backslash \{i,j\}} $ on bounded sets (and, in fact, globally) follows by a similar computation, again noting that $\Big(\frac{|y_j|^{\alpha_i}}{|y_k|^{\alpha_\ell}} \wedge \frac{|y_\ell|^{\alpha_k}}{|y_j|^{ \alpha_i}} \Big)^\xi \leq 1$. 
\end{proof}
%%%%%%%%%%%%%%%%%%%%%%%%%%%%%%%%%%%%%%%%%%%%%%%%%%%%%%%%%%%%%%%%%%%%%%%%%%%%%%%%%%%%%%%%%%%

\newpage
%%%%%%%%%%%%%%%%%%%%%%%%%%%%%%%%%%%%%%%%%%%%%%%%%%%%%%%%%%%%%%%%%%%%%%%%%%%%%%%%%%%%%%%%%%%
\section{A uniform bound for the small time behavior of a multivariate Lévy process}
\label{app_proof_univ_conv_levy_measure}

The goal of this section is to prove a general uniform (in $t$ and $\bmy$) bound of the form
$$ \Big\vert P\lc  \g L(t) \in [\bmy,\bm\infty) \rc -t\nu\lc [\bmy,\bm\infty)  \rc \Big\vert \leq Ct^2.
$$
Our results can be viewed as an extension and correction of Lemma 4.3 in \cite{buechervetter2013}, who derive a concentration inequality for the small time behavior of bivariate Lévy processes with only positive jumps, which is in turn a generalization of uni-variate results of \cite{figueroahoudre2009}. We first provide some details on the issues that we found in the proof of Lemma 4.3 \cite{buechervetter2013}, which were confirmed in correspondence with the authors. Specifically, \cite{buechervetter2013} consider the following regularity conditions.

\begin{ass}[Assumption 4.1 of \cite{buechervetter2013}]\label{ass:bv}
Let $\mathbf{X}$ be a bivariate L\'evy process with Lévy measure $\nu$. 
The following assumptions on $\nu$ are in order:
\begin{enumerate}
    \item[(i)] $\nu$ has support $[0, \infty)^2 \setminus \{(0,0)\}$.
    \item[(ii)] On this set it takes the form $\nu(d\mathbf{u}) = s(\mathbf{u})\, d\mathbf{u}$ for a positive L\'evy density $s$ 
    which satisfies
    \[
    \sup_{\mathbf{u} \in M_\eta} \left( |s(\mathbf{u})| + \|\nabla s(\mathbf{u})\| \right) < \infty
    \]
    for any $\eta \in (0,\infty)^2$, where
    \[
    M_\eta = (\eta, \infty)^2 \cup \big( \{0\} \times (\eta, \infty) \big) \cup \big( (\eta, \infty) \times \{0\} \big),
    \]
    and $\nabla s$ denotes the gradient of $s$ on $(\eta, \infty)^2$ and the univariate derivative on the stripes through $0$, respectively.
\end{enumerate}
\end{ass}

Under these assumptions they claim the following result.

\begin{lem*}[Lemma 4.3 of \cite{buechervetter2013}]
Suppose that (i) and (ii) of Assumption~\ref{ass:bv} hold and let $\delta > 0$ be fixed. 
Then there exist constants $K = K(\delta)$ and $t_0 = t_0(\delta)$ such that the uniform bound
\[
\left| \mathbb{P}\left( X_t^{(1)} \ge x_1,\, X_t^{(2)} \ge x_2 \right) 
 - t\, \nu\big( [x_1, \infty) \times [x_2, \infty) \big) \right| < K t^2
\]
holds for all $\mathbf{x} = (x_1, x_2) \in [\delta, \infty)^2 
\cup \big( \{-\infty\} \times [\delta, \infty) \big) 
\cup \big( [\delta, \infty) \times \{-\infty\} \big)$ 
and $0 < t < t_0$.
\end{lem*}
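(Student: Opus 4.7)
My strategy is a L\'evy--It\^o truncation at scale $\epsilon \in (0,\delta/4)$ combined with a first order Taylor expansion of $\nu$ near the boundary of $R := [x_1,\infty) \times [x_2,\infty)$, where the regularity of the L\'evy density $s$ in Assumption~\ref{ass:bv}(ii) is the key input. Specifically, I would decompose $\mathbf{X}_t = \mathbf{Y}_t + \mathbf{Z}_t$ where $\mathbf{Z}_t = \sum_{i=1}^{N_t} \mathbf{J}_i$ is compound Poisson with L\'evy measure $\nu_\epsilon(\cdot) := \nu(\cdot \cap \{\|\mathbf{u}\|_\infty > \epsilon\})$ and intensity $\lambda_\epsilon := \nu(\{\|\mathbf{u}\|_\infty > \epsilon\})$, while $\mathbf{Y}_t$ collects the drift and all jumps of size at most $\epsilon$. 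Then $\mathbf{Y}_t$ and $\mathbf{Z}_t$ are independent, $N_t \sim \mathrm{Poi}(t\lambda_\epsilon)$, and the $\mathbf{J}_i$ are i.i.d.\ with law $\nu_\epsilon/\lambda_\epsilon$ independent of $N_t$.

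Next I would split by the number of big jumps. The contribution from $\{N_t \ge 2\}$ is bounded by $P(N_t \ge 2) \le (t\lambda_\epsilon)^2/2 = O(t^2)$. On $\{N_t = 0\}$ one has $\mathbf{X}_t = \mathbf{Y}_t$, which is a L\'evy process with jumps bounded by $\epsilon$; by a Chernoff bound applied to the L\'evy--Khintchine representation, $P(\|\mathbf{Y}_t\|_\infty \geq \delta) = o(t^k)$ for every $k$, so this contribution is negligible. The main term on $\{N_t = 1\}$ equals, after integrating out $\mathbf{J}_1$,
\[
P(\mathbf{X}_t \in R,\, N_t = 1) = t e^{-t\lambda_\epsilon}\, E[G(\mathbf{Y}_t)], \qquad G(\mathbf{z}) := \nu_\epsilon(R - \mathbf{z}).
\]

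For $\|\mathbf{z}\|_\infty \le \epsilon$ the translated set $R - \mathbf{z}$ lies in $\{\|\mathbf{u}\|_\infty > \delta/2\}$, where $s$ and $\nabla s$ are bounded by Assumption~\ref{ass:bv}(ii). Differentiating $G$ under the integral sign expresses $\nabla G$ as surface integrals of $s$ along the two half-lines forming $\partial R$, and these are uniformly bounded over all admissible $\mathbf{x}$; the marginal cases $x_1 = -\infty$ or $x_2 = -\infty$ reduce to the one-dimensional analogue, which uses the stripes through zero included in $M_\eta$. A first order Taylor expansion of $G$ at $\mathbf{0}$ with uniform remainder, combined with $\|E[\mathbf{Y}_t]\| = O(t)$ and $E[\|\mathbf{Y}_t\|^2] = O(t)$, then yields $E[G(\mathbf{Y}_t)] = \nu_\epsilon(R) + O(t) = \nu(R) + O(t)$, the last equality using $R \subset \{\|\mathbf{u}\|_\infty > \delta\}$. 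Together with $e^{-t\lambda_\epsilon} = 1 + O(t)$ one obtains $P(\mathbf{X}_t \in R, N_t = 1) = t\nu(R) + O(t^2)$, which combined with the earlier bounds closes the proof with a constant $K$ depending only on $\delta$.

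The main obstacle is establishing uniformity of the Taylor remainder in $\mathbf{x}$: the implicit constants involve boundary integrals of $s$ along $\partial R$ and uniform bounds on $\nabla s$ in a neighbourhood of $R - B_\infty(\epsilon)$, both of which must be controlled uniformly as $x_1, x_2$ range over $[\delta,\infty) \cup \{-\infty\}$. Assumption~\ref{ass:bv}(ii) delivers this on the strict interior orthant and on the stripes through zero, but it is precisely the subtle interplay between $\mathbf{x}$ approaching the axes and the integrability of $s$ in the tail that requires the most care---and it is exactly where the gap in the original argument arises, motivating the more robust Assumption~\ref{assmpsmootheneslevydens} and the general $d$-dimensional Theorem~\ref{thmunifconvlevymeasure}.
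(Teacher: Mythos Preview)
The paper does not prove this lemma; it quotes it from B\"ucher--Vetter and then argues in Section~\ref{app_counterexample} that Assumption~\ref{ass:bv} is too weak for the argument to go through, supplying an explicit counterexample to the key regularity step. The corrected and generalized result is Theorem~\ref{thmunifconvlevymeasure}, proved under the stronger Assumption~\ref{assmpsmootheneslevydens}.

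Your skeleton---truncate at level $\epsilon$, split on $N_t\in\{0,1,\ge 2\}$, and reduce the $N_t=1$ term to $E[G(\mathbf{Y}_t)]$ with $G(\mathbf{z})=\nu_\epsilon(R-\mathbf{z})$---matches both the original B\"ucher--Vetter argument and the paper's proof of Theorem~\ref{thmunifconvlevymeasure}. The gap is in the step where you pass from a first-order Taylor expansion of $G$ to $E[G(\mathbf{Y}_t)]=G(\mathbf{0})+O(t)$ using $E[\|\mathbf{Y}_t\|^2]=O(t)$. Bounding the remainder $G(\mathbf{z})-G(\mathbf{0})-\nabla G(\mathbf{0})\cdot\mathbf{z}$ by a multiple of $\|\mathbf{z}\|^2$ requires $\nabla G$ to be Lipschitz, i.e.\ $G$ to have bounded second derivatives; boundedness of $\nabla G$ alone gives only $|G(\mathbf{z})-G(\mathbf{0})|\le C\|\mathbf{z}\|$ and hence a bound of order $E[\|\mathbf{Y}_t\|]=O(\sqrt t)$, not $O(t)$. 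The paper uses It\^o's formula at this juncture, which likewise needs $G\in C^2$.

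You have also misdiagnosed the obstruction in your final paragraph. It is not a uniformity-in-$\mathbf{x}$ problem: Assumption~\ref{ass:bv}(ii) bounds $s$ and $\nabla s$ on $M_\eta$ but does not require $\nabla s$ to be continuous, and the paper's counterexample exhibits a L\'evy density meeting Assumption~\ref{ass:bv} for which $\partial_{11}^2 G$ fails to be continuous at the single point $\mathbf{z}=\mathbf{0}$ for the fixed choice $\mathbf{x}=(1,1)$. This is a pointwise failure of second-order regularity of $G$, not a blow-up of constants as $\mathbf{x}$ varies. Assumption~\ref{assmpsmootheneslevydens} repairs this by adding explicit continuity and integrable-majorant conditions on $\partial_j\lambda$, precisely what Lemma~\ref{lemdiffintegrateddens} needs to conclude that $G$ is $C^2$ with bounded first and second derivatives; after that, either It\^o's formula (the paper's route) or your quadratic Taylor bound becomes valid.
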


However, it turns out that the proof of \cite{buechervetter2013} is flawed. Specifically, Assumption 4.1 does not suffice to ensure that the function $g$ in equation (A1) from \cite{buechervetter2013} is twice continuously differentiable, which is needed to apply Ito's lemma in the version that is used in (A1) in the latter paper. Further details are described in Section~\ref{app_counterexample} below. In order to prove our Theorem \ref{thmunifconvlevymeasure}, which corrects and generalizes Lemma 4.3 in \cite{buechervetter2013}, we needed to strengthen the requirements on the Lévy density to include twice continuous differentiability and impose the existence of certain majorants, which resulted in Assumption~\ref{assmpsmootheneslevydens}.

\subsection{Additional details on issues in the proof of Lemma 4.3 in \cite{buechervetter2013}}
\label{app_counterexample}

Let $\g L$ denote a bivariate Lévy process without drift and Gaussian component and Lévy measure concentrated on $[0,\infty)^2$. Define 
$f(x,y)=h(x)g(y)$,
where $g(y)$ is a smooth and bounded probability density with bounded derivatives of a random variable on $[1,\infty)$. Note that we can choose $g$ such that $\lim_{x\to 1}g(x)=0$. Further define
\begin{align*}
  h(x)&=\id_{\{x\geq 1\}}\exp(-(x-1))(x-1)^2\sin(1/(x-1))^2 = \id_{\{x\geq 1\}}\exp(-(x-1)) \int_{1}^{x} q^\prime(z)\rmd z 
\end{align*}
where
\[
q^\prime(z) =
\begin{cases} 
2(z-1) \sin\lc 1/(z-1)\rc^2- 2\sin(1/(z-1))\cos\lc 1/(z-1)\rc & \text{if } z > 1\\
0 & \text{if } z \leq 1
\end{cases}
\]
It is easy to see that $C:=\int_0^\infty h(x)\rmd x<\infty$, $\lim_{x\to 1}h(x)=0$ and that $h(x)$ is bounded and differentiable with bounded derivative $-h(x)+ \exp(-(x-1))q^\prime(x)$ everywhere. It is important to note that $h$ has discontinuous derivative at $1$ as $q^\prime$ is not continuous at $1$. 
This implies that $f$ is continuous, bounded and its partial derivatives exist and are bounded. The same is true for the margins $f_1(x)=h(x)$ and $f_2(y)=Cg(y)$ obtained by integrating out the second and first components, respectively. Therefore, the gradients of $f$ and of its marginal densities $f_1$ and $f_2$ exist and are bounded in the sense that 
\begin{align}\label{lemma_bound}
   \forall\delta>0:\  \sup_{(x,y)\in [\delta,\infty)^2} (|f(x,y)|  + \| \nabla f(x,y) \| ) < \infty,
\end{align}
as required in Assumption 4.1(ii) in \cite{buechervetter2013}. 

Let $s$ be a bivariate smooth Lévy density that satisfies the assumptions of \cite[Lemma 4.3]{buechervetter2013}, i.e., $s$ has support on $[0,\infty)^2\setminus \{\mathbf 0\}$ and its marginal Lévy densities $s_1$ and $s_2$ satisfy the bound~\eqref{lemma_bound}. Assume further that 
\[
\frac{\partial}{\partial x} \int_{v \geq 1-t} s(1-x,v) dv =  \int_{v \geq 1-t} \frac{\partial}{\partial x} s(1-x,v) dv
\]
for all $t, x<1/2$.

Then $s+f$ is a Lévy density which satisfies these assumptions as well. But $s+f$ has discontinuous partial derivative $\frac{\partial}{\partial x}s(x,y)+ h^\prime(x)g(y)$ and therefore, for every $\epsilon<1/2$ and every smooth  $c_\epsilon$ such that
$$\id_{\{\bmx \in B^\infty(\epsilon/2)\}} \leq c_{\epsilon}(\bmx)\leq \id_{\{ \bmx\in B^\infty(\epsilon)\}}$$
the function
$$ G(x,y)=\int_{\bmz\geq \bm 1 -(x,y)} (1-c_\epsilon(z_1,z_2))\lc s(z_1,z_2)+f(z_1,z_2)\rc \rmd z_1\rmd z_2 $$
has second order partial derivative for $x,y<1/2$ given by
\begin{align*}
    \frac{\partial^2}{(\partial x)^2}G(x,y)&= -\int_{z_2\geq 1- y} \frac{\partial}{\partial x}s(1-x,z_2)+\frac{\partial}{\partial x}h(1-x)g(z_2) \rmd z_2\\
    &=-\int_{z_2\geq 1-y} \frac{\partial}{\partial x}s(1-x,z_2) \rmd z_2 -  \frac{\partial}{\partial x}h(1-x)\int_{z_2\geq 1- y}g(z_2) \rmd z_2 
\end{align*}  which is not continuous at $(x,y)=\bm 0$ because of the discontinuity of $\frac{\partial}{\partial x}h$ at 1.

This precludes an application of Ito's lemma to the function $G$ ($g$ with $\bmx=\bm 1$ in their notation) in Equation A.1 in the proof of Lemma 4.3 in \cite{buechervetter2013} as all standard versions of Ito's lemma require the continuity of the second order derivative, see e.g.\ Theorem 4.4.7 in \cite{applebaum2009}. This shows that the conditions stated in Lemma 4.3 in \cite{buechervetter2013} do not suffice to prove their claim. The problem is not easily fixable by resorting to a variants of Ito's lemma that does not require continuity of the derivative, since those formulas deviate from the standard Ito formula through the appearance of non-standard terms that would require additional arguments that we were unable to produce.

Thus, we have opted to strengthen the requirements on the Lévy density to be able to apply Ito's formula in the proof of our Theorem \ref{thmunifconvlevymeasure}, which corrects and generalizes Lemma 4.3 in \cite{buechervetter2013}.

\subsection{A general regularity transfer from the density to the CDF}

Here, we present a general criterion under which certain smoothness properties of the density transfer to its corresponding CDF and lower-dimensional densities.

\begin{lem}
\label{lemdiffintegrateddens}
Assume that $f: \R^d \to [0,\infty)$ is a Lebesgue integrable function such that $\int_{\R^d} f(\bmy)\rmd \bmy<\infty$. If $d=1$ assume that $f$ is bounded and continuously differentiable with bounded derivative. If $d\geq 2$ assume that
\begin{enumerate}
\item[(i)] $f$ has uniformly bounded univariate and bivariate margins $\lc f_i\rc_{\leqd}$ $\lc f_{i,j}\rc_{1\leq i\not=j\leq d}$ where $f_i(x) := \int_{\R^{d-1}} f(\bmx) d\bmx_{\backslash i}$ and $f_{i,j}(x) := \int_{\R^{d-2}} f(\bmx) d\bmx_{\backslash \{i,j\}}$.
\item[(ii)] $f$ is continuous  and $x_i\mapsto \partial_i f(\bmx)$ is continuous for all $\leqd$.
\item[(iii)] For every $a \in \R$ and $j=1,\dots,d$ there exists $\eps(a,j)>0$ and a function $M_{j,a,\epsilon}: \R^{d-1} \to [0,\infty)$  such that $\int_{\R^{d-1}} M_{j,a,\epsilon}(\bmx_{\setminus j}) \rmd\bmx_{\setminus j} < \infty$ and for all $1\leq i\not=j\leq d$  $y_i \mapsto \int_{\R^{d-2}} M_{j,a,\epsilon}(\bmx_{\setminus j}) \rmd \bmx_{\setminus  i,j}$ is bounded on bounded sets and
\[
\sup_{|x_j-a|\leq \eps} \Big|\partial_j f(x_1,\ldots,x_d) \Big| \leq M_{j,a,\epsilon}(\bmx_{\backslash j}).
\]
\item[(iv)] For all $1\leq j\leq d$ $x_j \mapsto \int_{\R^{d-1}} \lv\partial_j f(\bmx)\rv \rmd\bmx_{\setminus j}$ is uniformly bounded.
\end{enumerate}
Then for all $I\subset\{1,\ldots,d\}$
 $$ F_I: \R^{\vert I\vert}\to[0,\infty); \quad  \bmx \mapsto \int_{\{\bmy\in \R^{d}\mid \bmy_I\leq \bmx\}} f(\bmy)\rmd \bmy$$
 is twice continuously differentiable with uniformly bounded first and second order partial derivatives.
\end{lem}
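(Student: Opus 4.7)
The natural first move is a Fubini reduction: writing $\bmy = (\bmy_I, \bmy_{\setminus I})$, one represents $F_I(\bmx) = \int_{\{\bmy_I \leq \bmx\}} f_I(\bmy_I) \rmd \bmy_I$, where $f_I(\bmy_I) := \int_{\R^{d-|I|}} f(\bmy_I, \bmy_{\setminus I}) \rmd \bmy_{\setminus I}$ is the $I$-marginal density. The case $d = 1$ is immediate from the assumed differentiability of $f$, since then $F_I(x) = \int_{-\infty}^{x} f$ and its first two derivatives are simply $f$ and $f'$, both bounded. The plan therefore focuses on $d \geq 2$ and $I \neq \emptyset$ (the case $I = \emptyset$ being trivial since $F_I$ is constant).

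The first preparatory step is to transfer the regularity of $f$ to $f_I$. Using continuity of $\partial_i f$ from (ii) together with the majorant in (iii), dominated convergence licenses differentiation under the integral, yielding $\partial_i f_I(\bmx_I) = \int_{\R^{d-|I|}} \partial_i f(\bmx_I, \bmy_{\setminus I}) \rmd \bmy_{\setminus I}$ and continuity of $\partial_i f_I$ in $x_i$. A Fubini argument combined with (iv) then shows that $\int_{\R^{|I|-1}} |\partial_i f_I(\bmx_I)| \rmd \bmx_{I \setminus \{i\}}$ is uniformly bounded in $x_i$; this estimate is reused in the later steps.

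Next, the first and second partials of $F_I$ are computed directly. Differentiating at the upper limit gives $\partial_i F_I(\bmx) = \int_{\{\bmy_{I \setminus \{i\}} \leq \bmx_{I \setminus \{i\}}\}} f_I(x_i, \bmy_{I \setminus \{i\}}) \rmd \bmy_{I \setminus \{i\}}$, which is bounded by the univariate marginal $f_{\{i\}}(x_i)$ and hence uniformly bounded by (i); continuity in $\bmx$ follows from dominated convergence with $f_I$ as dominator. Mixed second partials ($i \neq j$, both in $I$) arise by stripping one further integral and are pointwise dominated by the bivariate marginal $f_{\{i,j\}}(x_i, x_j)$, uniformly bounded by (i), and continuity again follows from dominated convergence. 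For the pure second partial $\partial_i^2 F_I$, the delicate step is interchanging $\partial_{x_i}$ with the remaining $(|I|-1)$-fold integral over $\bmy_{I \setminus \{i\}}$; once this interchange is justified, the resulting expression is dominated in absolute value by $\int_{\R^{d-1}} |\partial_i f(x_i, \bmy_{\setminus i})| \rmd \bmy_{\setminus i}$, uniformly bounded by (iv).

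The main technical obstacle is precisely this last interchange: producing, uniformly over $x'_i$ in a neighborhood of $x_i$, an integrable majorant on $\R^{|I|-1}$ for the integrand $\partial_i f_I(x'_i, \bmy_{I \setminus \{i\}})$. This is exactly where the intricate form of assumption (iii) pays off: applying (iii) to $f$ gives a majorant $M_{i,x_i,\sigma}(\bmy_{\setminus i})$ that is $L^1$ on $\R^{d-1}$, and integrating out $\bmy_{\setminus I}$ via Fubini produces an $L^1$ majorant on $\R^{|I|-1}$ for $\partial_i f_I(x'_i, \cdot)$, uniformly for $|x'_i - x_i|\leq \sigma$. Once this is in hand, Leibniz's rule delivers $\partial_i^2 F_I$ with the claimed bound, continuity of all second partials follows from one further dominated-convergence argument using the same majorants, and the proof is complete.
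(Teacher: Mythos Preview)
Your proposal is correct and follows essentially the same approach as the paper. The paper organizes things slightly differently---rather than first passing to the marginal $f_I$, it embeds $F_I$ into $F_{V}$ on the extended domain $(-\infty,\infty]^d$ by sending the non-$I$ coordinates to $+\infty$ and works directly with $f$---but the core arguments are identical: fundamental theorem of calculus for first and mixed second partials, Leibniz with the majorant from (iii) for the pure second partial, (i) and (iv) for the respective uniform bounds, and dominated convergence throughout for continuity. One small point to be careful about when you fill in details: joint continuity of the mixed partial in $(x_i,x_j)$ (not just in the upper limits) requires the ``bounded on bounded sets'' clause of (iii) via a mean-value argument, which your sketch does not explicitly invoke but which fits naturally into your framework.
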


\begin{proof}[Proof of Lemma~\ref{lemdiffintegrateddens}]

The claim is trivial for $d=1$. Therefore, assume $d \geq 2$ and observe that $F_I(\bmx_I)=F_{ V}(\tilde{\bmx}_I)=:F(\tilde(\bmx)_I)$, where $\tilde{x}_I:=\lc x_i \id_{\{i\in I\}}+\infty\id_{\{i\not\in I\}}\rc_{\leqd}$. Thus, we can prove the general claim by proving that 
 $$ F(\bmx): (-\infty,\infty]^d \to [0,\infty); F(\bmx)= \int_{\{\bmy\in \R^{d}\mid \bmy\leq \bmx\}} f(\bmy)\rmd \bmy$$
is twice continuously differentiable with bounded first and second order partial derivatives in its real-valued arguments. Since the assumptions and statements are equi-variant under permuting the entries of vectors, we will without loss of generality only consider partial derivatives w.r.t.\ $x_1, x_2\in\R$, whereas the case that only one $x_i$ is finite is also implicitly covered by only considering those parts of the proof that are relevant to this claim.

We start with proving the existence of the first order partial derivatives. For any fixed $\bmz \in (-\infty,\infty]^{d-1}$, the function
\[
x \mapsto h_{\bmz}(x) = \int_{\bmy \leq \bmz} f(x,\bmy) d\bmy
\]
is uniformly bounded by $f_1(x)$ since $f$ is non-negative. It is also continuous since there exists an $\eps>0$ such that for $|x-x'|\leq \eps$ 
\[
|h_{\bmz}(x)-h_{\bmz}(x')| \leq \int_{\bmy \leq \bmz} |f(x,\bmy) - f(x',\bmy)| d\bmy \leq |x-x'| \int_{\R^{d-1}} M_{1,x,\epsilon}(\bmy) d\bmy,
\]
by the mean value theorem.
Since $F(x,\bmz) = \int_{(-\infty,x]} h_{\bmz}(t) dt$, the existence of the first partial derivative follows by the (Lebesgue version of) the fundamental theorem of calculus and we have $\partial_1 F(x,\bmz) = \int_{\bmy \leq \bmz} f(x,\bmy) d\bmy$.

We now prove that $x \mapsto \partial_1 F(x,\bmz)$ is again differentiable for any fixed $\bmz$. This follows by the Leibniz rule for differentiating under the integral sign: by assumption, the function $x \mapsto f(x,\bmy)$ is differentiable for every $\bmy$ and there exists some $\eps>0$ such that the derivative satisfies 
\[
\sup_{\vert x^\prime -x\vert<\eps}\Big|\partial_1 f(x,\bmy)\Big| \leq M_{1,x,\eps}(\bmy) 
\]
which has finite integral by assumption. In particular, we obtain
\[
\partial_{1,1} F(x,\bmz) = \int_{\bmy \leq \bmz} \partial_1 f(x,\bmy) d\bmy.
\]
Moreover, 
$$\lv \partial_{1,1} F(x,\bmz)\rv\leq \int_{\R^{d-1}}\lv \partial_1 f(x,\bmy) \rv \rmd\bmy$$
is uniformly bounded in $x$ by assumption. Therefore, $\partial_ {1,1}F(x,\bmz)$ is uniformly bounded.

To compute $\partial_{2,1} F(x,y,\bmz)$, we again apply the Lebesgue version of the fundamental theorem of calculus. Fix $x \in \R ,\bmz \in (-\infty,\infty]^{d-2}$. Define $g_{x,\bmz}(t) := \int_{\bmy \leq \bmz} f(x,t,\bmy) d\bmy$ and observe that $\partial_1 F(x,t,\bmz) = \int_{(-\infty,t]} g_{x,\bmz}(s) ds$. If we can show that $s \mapsto g_{x,\bmz}(s)$ is continuous everywhere, it will follow that $g_{x,\bmz}(t)$ is equal to $\partial_{2,1} F(x,t,\bmz)$. Since there exists some $\eps=\eps(s,2)$ such that for all $|a_n| \leq \eps$ and $\tilde s_n = \tilde s_n(x,\bmy)$ between $s$ and $s+a_n$ the mean-value theorem implies
\begin{align*}
|g_{x,\bmz}(s+a_n)-g_{x,\bmz}(s)| &= \Big|\int_{\bmy \leq \bmz} f(x,s+a_n,\bmy) - f(x,s,\bmy) d\bmy\Big|
\\
& \leq |a_n| \int_{\bmy \leq \bmz} \lv \partial_2 f(x,\tilde s_n(x,\bmy),\bmy) \rv \rmd\bmy
\\
&\leq |a_n| \int_{\bmy \leq \bmz} M_{2,s,\eps}(x,\bmy) \rmd\bmy \leq |a_n| \int_{\R^{d-2}} M_{2,s,\eps}(x,\bmy) \rmd\bmy.
\end{align*}
Since $x\mapsto  \int_{\R^{d-2}} M_{2,s,\eps}(x,\bmy) \rmd\bmy $ is finite continuity of $s \mapsto g_{x,\bmz}(s)$ follows.
This proves differentiability of $t \mapsto \frac{\partial}{\partial_1}F(x,t,\bmz)$ with respect to $t$ and shows
\[
\partial_{2,1} F(x,t,\bmz) = \int_{\bmy \leq \bmz} f(x,t,\bmy) d\bmy \leq f_{1,2}(x,t) 
\]
is uniformly bounded

It remains to establish continuity of all second order partial derivatives, which then implies that $F$ is twice continuously differentiable. Starting with second order partial derivatives with respect to the same component choose a sequence $(x_n,\bmz_n) \to (x,\bmz)$ and observe 
\begin{align*}
\partial_{1,2} F(x,\bmz) -\partial_{1,1} F(x_n,\bmz_n) &= \int_{\{\bmy \leq \bmz_n\}} \partial_1 f(x,\bmy) - \partial_1 f(x_n,\bmy) d\bmy
\\
&\quad  + \int_{\{\bmy \leq \bmz, \bmy \not \leq \bmz_n\}} \partial_1 f(x,\bmy) \rmd\bmy
\\
&\quad - \int_{\{\bmy \leq \bmz_n, \bmy \not\leq \bmz\}} \partial_1 f(x,\bmy) \rmd\bmy.
\end{align*}
Assuming that $|x-x_n|\leq \eps(x,1)$, all integrands are dominated in absolute value by $2M_{1,x,\eps}(\bmy)$. Moreover, using that $x\mapsto\frac{\partial}{\partial_1}f(x,\bmz)$ is continuous, all integrands converge to zero Lebesgue almost-everywhere. Thus, the right-hand side tends to zero by dominated convergence and continuity follows.

For the continuity of mixed second order partial derivatives, note that
\begin{align*}
\partial_{2,1} F(x,t,\bmz) - \partial_{2,1} F(x_n,t_n,\bmz_n) 
&= \int_{\{\bmy \leq \bmz_n\}} (f(x,t,\bmy) - f(x_n,t_n,\bmy)) \rmd\bmy
\\
&\quad  + \int_{\{\bmy \leq \bmz, \bmy \not \leq \bmz_n\}} f(x,t,\bmy) \rmd\bmy
\\
&\quad - \int_{\{\bmy \leq \bmz_n, \bmy \not\leq \bmz\}} f(x,t,\bmy) \rmd\bmy.     
\end{align*}
The integrand in the second integral is bounded by $f(x,t,\bmy)$ which is non-negative and integrates to $f_{1,2}(x,t) < \infty$. Since the indicator converges pointwise to the indicator of a Lebesgue null set when $\bmz_n\to \bmz$, the integral converges to zero by dominated convergence. The same argument applies to the third integral. For the first integral, note that by the multivariate mean-value theorem we have
\[
f(x_n,t_n,\bmy) - f(x,t,\bmy) = (x_n-x)\partial_1 f(\tilde x_n,\tilde t_n,\bmy) + (t_n -t) \partial_2 f(\tilde x_n,\tilde t_n,\bmy)
\]
where the point $(\tilde x_n, \tilde t_n)$ lies on the line segment connecting $(x,t)$ and $(x_n,t_n)$. Assuming that $|x_n-x| \leq \eps(1,x)=:\eps_1, |t_n-t| \leq \eps(2,t)=:\eps_2$, this expression is upper bounded by $|x-x_n|M_{1,x,\eps_1}(\tilde t_n,\bmy) + |t-t_n| M_{2,t,\eps_2}(\tilde x_n,\bmy)$. Since $\tilde x_n, \tilde t_n$ stay in a bounded set, we obtain 
\begin{multline*}
\Big|\int_{\{\bmy \leq \bmz_n\}} (f(x,t,\bmy) - f(x_n,t_n,\bmy)) \rmd\bmy \Big|
\\
\leq |x-x_n|\int_{\R^{d-2}} M_{1,x,\eps_1}(\tilde t_n,\bmy) \rmd\bmy + |t-t_n| \int_{\R^{d-2}} M_{2,t,\eps_2}(\tilde x_n,\bmy) \rmd\bmy \to 0
\end{multline*}
since both integrals are bounded by assumption. This completes the proof of continuity of the mixed partial derivative. 
\end{proof}

In particular, the smoothed version of the \HRL{} density $h_\eps$ satisfies the conditions of Lemma \ref{lemdiffintegrateddens}.
\begin{prop}
\label{cordensregHR}
Assume $\alpha_i \in (0,2), i \in  V $, $\min_{i \in  V} c_i^+ \wedge c_i^- > 0$. The function $h_\eps(\cdot)$ defined in~\eqref{def:h} satisfies the conditions of Lemma \ref{lemdiffintegrateddens} for every $\eps>0$.
\end{prop}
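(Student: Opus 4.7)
The plan is to verify each of the four conditions of Lemma \ref{lemdiffintegrateddens} for the function $h_\eps = s_\eps \lambda$ in turn, essentially by reading off the conclusions of Lemma \ref{lem:regsmootheddens}. First I would check that $h_\eps$ is Lebesgue integrable: since $s_\eps\leq 1$ is supported in $B_\infty^\complement(\eps)$ and $\lambda$ is a Lévy density, $\int_{\R^d} h_\eps(\bmx)\,\rmd\bmx \leq \int_{B_\infty^\complement(\eps)} \lambda(\bmx)\,\rmd\bmx < \infty$.

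Conditions (i) and (ii) of Lemma \ref{lemdiffintegrateddens} follow almost immediately. Lemma \ref{lem:regsmootheddens}(i) states that $h_\eps$ is continuous with uniformly bounded uni- and bivariate marginal densities, which is condition (i). Lemma \ref{lem:regsmootheddens}(ii) states that $h_\eps$ has continuous (and uniformly bounded) first-order partial derivatives, which in particular gives condition (ii).

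For condition (iii), I would invoke Lemma \ref{lem:regsmootheddens}(iii), which supplies a majorant $M_j(\bmx_{\setminus j})$ of $|\partial_j h_\eps(\bmx)|$ that does \emph{not} depend on $x_j$. Because of this $x_j$-independence, the bound automatically holds uniformly on every neighborhood of any $a\in\R$, so one may take $M_{j,a,\sigma}:= M_j$ for all $a$ and $\sigma>0$; the integrability $\int_{\R^{d-1}}M_j(\bmx_{\setminus j})\,\rmd\bmx_{\setminus j}<\infty$ and the bounded-on-bounded-sets property of $y_i\mapsto \int_{\R^{d-2}} M_j(\bmx_{\setminus j})\,\rmd\bmx_{\setminus\{i,j\}}$ are exactly what Lemma \ref{lem:regsmootheddens}(iii) asserts. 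Finally, condition (iv) follows from the same observation: since $|\partial_j h_\eps(\bmx)|\leq M_j(\bmx_{\setminus j})$ is independent of $x_j$, the function $x_j\mapsto \int_{\R^{d-1}}|\partial_j h_\eps(\bmx)|\,\rmd\bmx_{\setminus j}$ is dominated by the constant $\int_{\R^{d-1}} M_j(\bmx_{\setminus j})\,\rmd\bmx_{\setminus j}<\infty$ and is therefore uniformly bounded in $x_j$. There is no real obstacle in the argument; the only point worth flagging is that the $x_j$-independence of the majorant constructed in the proof of Lemma \ref{lem:regsmootheddens}(iii) is precisely what upgrades the local control needed in (iii) to the global uniform control required in (iv).
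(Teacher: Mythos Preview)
Your proposal is correct and follows essentially the same approach as the paper's proof: both verify conditions (i)--(iv) of Lemma~\ref{lemdiffintegrateddens} by reading them off Lemma~\ref{lem:regsmootheddens}, using that the majorant $M_j$ from part (iii) is independent of $x_j$ to handle both (iii) and (iv). Your explicit remark that this $x_j$-independence is what promotes the local bound in (iii) to the global uniform bound in (iv) matches exactly what the paper does.
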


\begin{proof}[Proof of Proposition~\ref{cordensregHR}]
Observe that Lemma \ref{lem:regsmootheddens}$(i),(ii)$ are exactly conditions $(i)$ and $(ii)$ of Lemma \ref{lemdiffintegrateddens}. Lemma \ref{lem:regsmootheddens}$(iii$) implies that for all $1\leq j\leq d$, $a\in\R$ and $\delta>0$ 
$$ 
\sup_{|x_j-a|\leq \delta} \Big|\partial_j  h_\epsilon(x_1,\ldots,x_d) \Big| \leq M_{j}(\bmx_{\setminus j}
)
$$
with $x_i\mapsto \int_{\R^{d-2}} M_j(\bmx_{\setminus j})\rmd \bmx_{\setminus i,j}$ being bounded on bounded sets for every $i\not=j$. Thus, $h_\epsilon$ satisfies condition $(iii)$ of Lemma \ref{lemdiffintegrateddens}. Further,
 $$x_j \mapsto \int_{\R^{d-1}} \lv\partial_j  h_\epsilon(\bmx)\rv \rmd\bmx_{\setminus j} \leq \int_{\R^{d-1}} M_j(\bmx_{\setminus j})\rmd\bmx_{\setminus j}<\infty$$
 is uniformly bounded, which verifies condition (iv) of Lemma \ref{lemdiffintegrateddens}
\end{proof}

\subsection{ Proof of Theorem \ref{thmunifconvlevymeasure} and Proposition \ref{propunifconvHRprocess}}

Finally, we are ready to prove the main results of this section.

\begin{proof}[Proof of Theorem \ref{thmunifconvlevymeasure}]

The general structure of our proof follows the proof of \citet[][Lemma 4.3]{buechervetter2013} with the appropriate modifications to obtain the more general statements of Theorem \ref{thmunifconvlevymeasure} and corrections to account for the fact that their Lemma 4.3 was not proved as stated.

Let us denote the sets of rectangles consisting of solely unbounded open and closed intervals as
\begin{align*}
\mathcal{V}_d &:= \Big\{ \times_{i=1}^d S_i(y_i) \ \Big\vert\ S_i(y_i) \in \{[y_i,\infty), (y_i,\infty), (-\infty,y_i), (-\infty,y_i], \R\} \Big\}
\end{align*}
and the set of rectangles such that at least one half open interval does not intersect with $[-\delta,\delta]^\complement$ as
\begin{align*}
\mathcal{V}_{d,\delta} &:= \Big\{ \times_{i=1}^d S_i(y_i) \in \Vv_d \ \Big\vert\ \exists i: S_i(y_i) \subset [-\delta,\delta]^\complement \Big\}.
\end{align*}
Rectangles $R \subset B_\infty^\complement(\delta)$ of the form $W \times V$, $V \in \mathcal{V}_{d-1}$ for an arbitrary interval $W$ can be obtained from sets in $\mathcal{V}_{d,\delta}$ by at most one operation of the form $A\setminus B$ where $B \subset A$. Such rectangles can in turn be used to construct rectangles of the form $W_1 \times W_2 \times V \subset B_\infty^\complement(\delta)$ where $V \in \Vv_{d-2}$. Iterating this for at most $d$ steps we can obtain arbitrary rectangles which are subsets of $B_\infty^\complement(\delta)$. Since for arbitrary measures we have $\mu(A\setminus B) = \mu(A)-\mu(B)$ whenever $B \subset A$, $\mu(B) < \infty$ and there are finitely many options to construct $R$ via the described procedure it suffices to prove
$$ 
\sup_{V \in \Vv_{d,\delta}} \Big\vert P\lc \g L(t) \in V \rc -t\nu(V )  \Big\vert \leq C t^2. 
$$
All sets $V$ of the above form can in turn be obtained from sets of the form $[\bmy,\bm \infty)$ with $y_1 > \delta$ by reflections of components, reorderings, or limits of nested sequences of sets. Thus, it suffices to prove that
$$ 
\sup_{\bmy\in M_{1,\delta}}\Big\vert P\lc \g L(t) \geq \bmy \rc -t\nu\lc [\bmy,\bm \infty)  \rc \Big\vert \leq C t^2, 
$$
with $M_{1,\delta}:= \{\bmy\in [-\infty,\infty)^d \mid y_1>\delta\}$ for an arbitrary Lévy process $\bmL$ which satisfies the assumption of Theorem \ref{thmunifconvlevymeasure},  since reflections and reorderings of components of a Lévy process again yield a Lévy process whose Lévy density satisfies Assumption~\ref{assmpsmootheneslevydens} and there are only finitely many ways to permute and reflect the components. Therefore, from now on, w.l.o.g.\ we assume that $\bmy\in M_{1,\delta}$ and set $I_{<\infty}=\{i\in V \mid y_i\in \R\}$.
    
Denote the triplet of $\g L$ by $(\bmb, \Sigma,\Lambda)$. Next, choose $\epsilon<\min\{\delta/2;1\}$ such that additionally $B_\infty(\eps) \subset B_2(1)$ and define two independent Lévy processes $\g X$ and $\g Y$ via the triplets $(\bmb-\bmb_\epsilon,\Sigma,c_\epsilon(\bmx)\Lambda(\rmd \bmx))$ and $(\bmb_\epsilon,0,(1-c_\epsilon(\bmx))\Lambda(\rmd\bmx))$, where $c_\epsilon\in S_\eps$ [recall the definition of $S_\eps$ in~\eqref{S_eps}] is chosen such that $(1-c_\epsilon) \lambda$ satisfies the conditions of Lemma \ref{lemdiffintegrateddens}, and $\bmb_\epsilon=\int_{B_2(1)} \bmx (1-c_{\epsilon}(\bmx))\lambda(\bmx)  \rmd \bmx$. Then, $\g L\overset{d}{=} \g X+ \g Y$ where $\g Y\overset{d}{=} \sum_{i=1}^{N_t} \bm \xi_i$ is a compound Poisson process whose Lévy measure has total mass $K_\epsilon:=\int_{\R^d} (1-c_{\epsilon}(\bmx))\lambda(\bmx)\rmd \bmx$ and $N_t\overset{d}{=} Poi(tK_\epsilon)$ and $\g X$ has jumps bounded by $1$. 

We first bound $P\lc  \vert X_1(t) \vert \geq y  \rc$ for all $y >\delta$. To this end, we apply \cite[Lemma 2]{ruschendorf2002}. Note that the results in the latter reference are for Lévy processes $Z$ with characteristic function of the form
\begin{equation}\label{eq:ruewoernCF}
\e[\exp(izZ(t))] = \exp\Big(itz\mu + tz^2\sigma^2/2 + t \int_{\R\setminus\{0\}} \Big( e^{izu} - 1 - \frac{izu}{1+u^2} \Big) \eta(\rmd u)  \Big).
\end{equation}
Define $\eta(\rmd u)$ as the univariate measure corresponding to the first margin of $c_\eps(u) \Lambda(\rmd u)$ and let 
\[
\tilde b_1 :=  b_1-b_{\eps,1}  + \int_{\R\setminus\{0\}} u\lc\bm 1_{|u|\leq1} - \frac{1}{1+u^2}\rc \eta(du).
\]
The univariate process $X_1$ with triplet $(\tilde b_1, \sigma^2, \eta(\rmd u))$ has a characteristic function as in~\eqref{eq:ruewoernCF}, so the results from \cite[Lemma 2]{ruschendorf2002} are applicable in our setting (with possible changes to some of the constants) since the drift $\bmb$ is arbitrary. A close look at the proof of their result reveals that it suffices to pick any $A$ (in their notation) such that the support of $\eta$ is contained in $[-A,A]$. In our case, set $A = \delta/2$ and we can pick (in their notation) $a = 4/\delta$. Then for $t<1$ and $y \geq \delta$ we have $ay > 2$, thus $t^{ay} \leq t^2$ and so 
\[
P\lc  \vert X_1(t) \vert \geq y  \rc \leq C_1t^2
\]
since the function $x \mapsto \exp(4x\delta^{-1}(1-\log x))$ is uniformly bounded on $[\delta, \infty)$. Therefore, for all $\bmy\in M_{1,\delta}$ we have 
    \begin{align}
        P\lc  \g X(t) \geq \bmy \rc \leq P\lc \vert X_1(t)\vert \geq \delta \rc \leq C_1t^2  \label{eqndecomp1}
    \end{align} 
    for all $t\in(0,1)$.

    Next, conditioning on the number of jumps of $Y$, we compute
    \begin{align*}
        P\lc \g L(t)>\bmy \rc&=\exp\lc -K_\epsilon t\rc \sum_{i=0}^\infty \frac{(K_\epsilon t)^i}{i!} P\lc \g X(t)+\sum_{j=1}^i \bm \xi_i \geq \bmy \rc \\
        &= \exp\lc -K_\epsilon t\rc P\lc \g X(t)>\bmy \rc + \exp\lc -K_\epsilon t\rc K_\epsilon t P\lc \g X(t)+ \bm \xi_1 \geq \bmy \rc \\
        &+ \exp\lc -K_\epsilon t\rc \sum_{i=2}^\infty \frac{(K_\epsilon t)^i}{i!} P\lc \g X(t)+\sum_{j=1}^i \bm \xi_i \geq \bmy \rc
    \end{align*}
    As just derived above, $ \exp\lc -K_\epsilon t\rc P\lc \g X(t)>\bmy \rc$ is smaller than $C_1t^2$ for all $t\in(0,\Bar{q}_1)$ and  $\bmy\in M_{1,\delta}$. Moreover, for the third term observe that 
    $$\sum_{i=2}^\infty \frac{(K_\epsilon t)^i}{i!}=(K_\epsilon t)^2\sum_{i=0}^\infty \frac{(K_\epsilon t)^i}{(i+2)!}\leq (K_\epsilon t)^2\sum_{i=0}^\infty \frac{(K_\epsilon t)^i}{i!}=(K_\epsilon t)^2 \exp(K_\epsilon t) $$
    which gives that 
    \begin{align}
         \exp\lc -K_\epsilon t\rc \sum_{i=2}^\infty \frac{(K_\epsilon t)^i}{i!} P\lc X(t)+\sum_{j=1}^i \xi_i >\bmy \rc \leq K_\epsilon^2 t^2 \label{eqndecomp2}
    \end{align} 
    for all $t>0$ and $\bmy\in[-\infty,\infty)^d$. Therefore, it remains to bound 
    $$
    D(t) :=\exp\lc -K_\epsilon t\rc K_\epsilon t P\lc  X(t)+\xi_1>\bmy \rc.
    $$

    As a next step, observe that for $K_\epsilon=0$ there is nothing to show. Therefore, assume that $K_\epsilon>0$ and consider the function 
    \begin{align*}
        g_{\bmy}:\R^d \to [0,1];\ g_{\bmy}(\mathbf{u})&=P\lc \mathbf{u}+\xi_1\geq \bmy \rc= K_\epsilon^{-1} \int_{\{ \mathbf{u}-\bmy \geq  \bmx\}} (1-c_\epsilon(-\bmx))\lambda(-\bmx)  \rmd \bmx
    \end{align*} 
    and obtain
    $$ \e\lk g_{\bmy} (X(t))\rk=P\lc X(t)+\xi_1>\bmy \rc, $$
    since $\xi_1$ and $X$ are independent. By our assumptions, $(1-c_\epsilon(-\bmx))\lambda(-\bmx)$ satisfies the conditions of Lemma \ref{lemdiffintegrateddens}. Therefore,  $\mathbf{u}\mapsto g_{\bmy}(\mathbf{u})$ is twice continuously differentiable for every $\bmy\in
    [-\infty,\infty)^d$ with uniformly bounded first and second order derivatives. Note that $g_{\bmy}$ only depends on $\mathbf{u}_{I_{<\infty}}$ and therefore all partial derivatives w.r.t.\ a component $j\not\in I_{<\infty}$ vanish. Let us denote this bound by $C_2$ and note that the bound can be chosen to be independent of $\bmy$ and solely dependent on $\delta$ through $\epsilon$.

    Note that we can decompose $X$ into a continuous part $X^{(c)}(t):=t(\bmb-\bmb_\epsilon)+ W(t)$ and a discontinuous part 
    $$
    X^{(d)}(t)=\int_0^t\int_{B_2(1)}\bmx  c_\epsilon(\bmx)\lc N_X- \mathcal{L}\otimes \Lambda\rc(\rmd (s,\bmx)),
    $$
    where $W$ is a Brownian motion with covariance matrix $\Sigma=\lc \Sigma_{i,j}\rc_{1\leq i,j\leq d}$, $N_X$ is a PRM with intensity $\mathcal{L}\otimes \Lambda$, $\mathcal{L}$ denotes the Lebesgue measure and we have used that $\epsilon$ is chosen such that $B_\infty(\eps)\subset B_2(1)$. Since $g_{\bmy}$ is twice continuously differentiable we can apply the multivariate Ito formula for Lévy processes given in \cite[Theorem 4.4.7]{applebaum2009} to obtain
    \begin{align*}
        &g_{\bmy}(X(t))
        \\
        &=g_{\bmy}(\bm 0)+ \sum_{\leqd} \int_0^t \partial_i g_{\bmy}(X_i(s-)) \rmd X^{(c)}_i(s)+ \frac{1}{2}\sum_{1\leq i,j\leq d}\Sigma_{i,j}\int_0^t  \partial_{i,j}g_{\bmy}(X(s-)) \rmd s \\
        &~+\int_0^t\int_{B_2(1)} \lc g_{\bmy}(X(s-)+\bmx)-g_{\bmy}(X(s-))\rc  c_\epsilon(\bmx)\lc N_X- \mathcal{L}\otimes \Lambda\rc(\rmd (s,\bmx)) \\
        &~+\int_0^t\int_{B_2(1)} \lc g_{\bmy}(X(s-)+\bmx)-g_{\bmy}(X(s-)) -\bmx^\intercal\nabla g_{\bmy}(X(s-)) \rc  c_\epsilon(\bmx) \mathcal{L}\otimes \Lambda(\rmd (s,\bmx))\\
        &=g_{\bmy}(\bm 0)+ \sum_{\leqd} \int_0^t \partial_i g_{\bmy}(X_i(s-)) \rmd X_i(s)+ \frac{1}{2}\sum_{1\leq i,j\leq d} \Sigma_{i,j}\int_0^t  \partial_{i,j}g_{\bmy}(X(s-)) \rmd s \\
        &~+\int_0^t\int_{B_2(1)} \lc g_{\bmy}(X(s-)+\bmx)-g_{\bmy}(X(s-))   -\bmx^\intercal\nabla g_{\bmy}(X(s-)) \rc  c_\epsilon(\bmx)\lc N_X- \mathcal{L}\otimes \Lambda\rc(\rmd (s,\bmx)) \\
        &~+\int_0^t\int_{B_2(1)} \lc g_{\bmy}(X(s-)+\bmx)-g_{\bmy}(X(s-)) -\bmx^\intercal\nabla g_{\bmy}(X(s-)) \rc  c_\epsilon(\bmx) \mathcal{L}\otimes \Lambda(\rmd (s,\bmx))\\
&=: T_1+T_2+T_3+T_4+T_5,
\end{align*}
where we recall that $\partial_i g$ stands for the partial derivative of $g$ with respect to the $i$-th entry of its argument and similar for $\partial_{i,j}g$, $\nabla g_{\bmy}:=\lc \partial_i g_{\bmy}\rc_{\leqd}$ and  $\nabla^2 g_{\bmy}:=\lc \partial_{i,j} g_{\bmy}\rc_{1\leq i,j\leq d}$, defining  $\partial_i g_{\bmy}=\partial_{i,j} g_{\bmy}=0$ if at least one $i,j\not\in I$. 

First, observe that 
    \begin{align}
       T_1=g_{\bmy}(\bm 0)=\int_{\{ \bmx \geq\bmy \}} (1-c_\epsilon(\bmx))\lambda(\bmx) \rmd \bmx  /K_\epsilon=\int_{\{ \bmx \geq \bmy \}} \lambda(\bmx) \rmd \bmx  /K_\epsilon=\Lambda\lc [\bmy,\bm \infty) \rc /K_\epsilon, \label{eqnest1} 
    \end{align}    
    since $y_1 \geq  \delta>\epsilon$.

Next, we bound $T_2$. Since $\partial_i g_{\bmy}$ is bounded by $C_2$ which only depends on $\delta$, and since $\partial_i g_{\bmy}$ is continuous, we have that $\partial_i g_{\bmy}(X_i(s-))$ is predictable with respect to the filtration generated by $(X(s))_{s \geq 0}$. Thus, in the notation of \cite[Chapter 4]{applebaum2009}, $\partial_i g_{\bmy}(X_i(s-))$ is in $\mathcal{H}(2,E)$ with $E := B_2(1)\setminus\{\bm 0\}$. Moreover, $\widetilde {\g X}(s) := \g X(s) - (\bmb - \bmb_\eps)s$ is a martingale-valued measure of type $(2,\rho)$, again using terminology from \cite[Chapter 4, Example 4.1.1]{applebaum2009}. By Theorem 4.2.3 in the latter reference, 
\[
H_i(t) := \int_0^t \partial_i g_{\bmy}(X_i(s-)) \rmd \widetilde{X}_i(s)
\]
is a square integrable martingale and in particular $\e[H_i(t)] = 0$.

Finally, note that
\begin{align}
\bigg\vert  \sum_{\leqd} \int_0^t \partial_i  g_{\bmy}(X_i(s-)) (b_i-b_{\epsilon,i})\rmd s \bigg\vert &\leq t C_2\Vert b-b_\epsilon\Vert_1  \label{eqnest2}
\end{align}
where $C_3:=\Vert b-b_\epsilon\Vert_1$ does not depend on $t$. Therefore, $\vert \e\lk T_2(t)\rk \vert  \leq tC_2C_3 $.

Third, since $\max_{i,j \in  V}|\partial_{i,j} g_{\bmy}|$ is bounded by $C_2$ we obtain that 
\begin{align}
\vert T_3 \vert \leq  t C_2  d^2 \max_{1\leq i,j\leq d} \Sigma_{i,j}  \label{eqnest3}
\end{align} 
which implies $\vert \e\lk T_3\rk \vert \leq C_2 C_4 t $, where $C_4:= d^2  \max_{1\leq i,j\leq d} \Sigma_{i,j} $.
   
It remains to treat $T_4$ and $T_5$. First, we apply a first order Taylor expansion on $g_{\bmy}$ at $X(s-)$ with corresponding random Lagrange remainder $Z(X(s-))$ to obtain that
\[
g_{\bmy}\lc X(s-) +\bmx \rc -g_{\bmy} \lc  X(s-) \rc -\bmx^\intercal \nabla g_{\bmy}\lc  X(s-) \rc=\bmx^\intercal \nabla^2 g_{\bmy}\lc Z(X(s-))  \rc \bmx.
\]
Observe that for an arbitrary square matrix $M \in \R^{d\times d}$ we have
\[
|\bmx^\top M \bmx| \leq \|\bmx\|_1 \|M \bmx\|_1 \leq \|\bmx\|_1^2 \max_{i,j} |M_{ij}| \leq d \|\bmx\|_2^2 \max_{i,j} |M_{ij}|.
\]
Thus we have almost surely
\[
\lv \bmx^\intercal \nabla^2 g_{\bmy}\lc Z(X(s-))  \rc  \bmx \rv \leq C_2 d \Vert \bmx\Vert_2^2.
\]
Therefore,
    \begin{align}
        \lv\e\lk T_5\rk\rv&\leq \int_0^t \int_{B_2(1)} dC_2\Vert \bmx\Vert^2_2  c_\epsilon (\bmx) \mathcal{L}\otimes \Lambda (\rmd s,\rmd \bmx)=dC_2 t\int_{B_2(1)} \Vert \bmx\Vert^2_2  c_\epsilon (\bmx)  \Lambda(\rmd \bmx) =C_2 C_5t \label{eqnest4}
    \end{align}
    where $C_5:= d \int_{B_2(1)} \Vert \bmx\Vert^2_2  c_\epsilon (\bmx)  \Lambda(\rmd \bmx)<\infty$ since $\Lambda$ is a Lévy measure.

To treat $T_4$, observe that
\begin{align*}
T_4(t) &= \int_0^t \int_{B_2(1)} \Big( g_{\bmy}\lc X(s-) +\bmx \rc -g_{\bmy} \lc  X(s-) \rc -\bmx^\intercal \nabla g_{\bmy}\lc  X(s-) \rc \Big)
\\
& \hspace{2.5cm} \times c_\epsilon (\bmx) \lc N_X -\mathcal{L}\otimes \Lambda\rc (\rmd s,\rmd \bmx)
\end{align*} 
is, seen as process of $t$, a square integrable martingale by \cite[Proposition 8.8]{conttankov}, since from the arguments above it follows that, using the Fubini-Tonelli Theorem,
    \begin{align*}
    &\e\Big[\int_0^t \int_{B_2(1)} \Big( g_{\bmy}\lc X(s-) +\bmx \rc -g_{\bmy} \lc  X(s-) \rc -\bmx^\intercal \nabla g_{\bmy}\lc  X(s-) \rc \Big)^2   c_\epsilon (\bmx) \mathcal{L}\otimes \Lambda (\rmd s,\rmd \bmx)\Big]
    \\
    & = \int_0^t \int_{B_2(1)} \e\Big[\Big( g_{\bmy}\lc X(s-) +\bmx \rc -g_{\bmy} \lc  X(s-) \rc -\bmx^\intercal \nabla g_{\bmy}\lc  X(s-) \rc \Big)^2 \Big]  c_\epsilon (\bmx) \mathcal{L}\otimes \Lambda (\rmd s,\rmd \bmx)
    \\
    &<\infty,
    \end{align*}
    which implies $\e\lk T_4\rk=0$.

    Combining (\ref{eqnest1}), (\ref{eqnest2}), (\ref{eqnest3}) and (\ref{eqnest4}) gives 
    \begin{align}
        \big\vert \e\lk g_{\bmy}(X(t))\rk -\Lambda\lc [\bmy,\bm\infty)\rc/K_\epsilon \big\vert \leq t\max \Big\{ C_2 C_3;C_2 C_4;C_2 C_5 \Big\}=:t\Bar{C} 
    \end{align} 
    where $\Bar{C}$ does not depend on $t$ but can depend on $\delta, \eps$ and the other constants in the proof above.     
    
    Using that $0 \leq 1-\exp(-K_\epsilon t)\leq K_\epsilon t$ for all $t\geq 0$ we obtain 
    \begin{align} \nonumber
        &\vert D(t) -t\Lambda\lc [\bmy,\bm\infty)\rc\vert
        \\
        &=\big\vert \exp(-K_\epsilon t)K_\epsilon t P\lc X(t)+\xi_1 >\bmy \rc - t\Lambda\lc[\bmy,\bm\infty)\rc \big\vert  \nonumber\\
        &\leq   \big\vert  \exp(-K_\epsilon t) K_\epsilon t\big( \e\lk g_{\bmy}(X(t))\rk -\Lambda\lc[\bmy,\bm\infty)\rc/K_\epsilon\big) \big\vert  +\big\vert \big( \exp(-K_\epsilon t)-1\big) t\Lambda\lc[\bmy,\bm\infty) \rc \big\vert  \nonumber\\
        &\leq K_\epsilon \Big( \bar{C}+ \Lambda\big( B_2^\complement(\delta)\big) \Big)  t^2 \label{eqndecomp3}
    \end{align} 
    for all $\bmy\in M_{1,\delta}$ and $t\in(0,\bar{q_1})$.

    Thus, choosing $K:=C_1+K_\epsilon\lc K_\epsilon+  \bar{C}+ \Lambda\big(B_\infty^\complement(\delta)\big)\rc $,  (\ref{eqndecomp1}), (\ref{eqndecomp2}), and (\ref{eqndecomp3}) give
    \begin{align}
        \big\vert P\lc \g L(t)>\bmy\rc- t\Lambda\lc [\bmy,\bm\infty)\rc\big\vert \leq Kt^2 \label{eqnunif}
    \end{align} 
    for all $ \bmy\in M_{1,\delta}$ and $t\in (0,\bar{q})$.

It remains to show that there is a constant $\bar{K}$ such that (\ref{eqnunif}) holds for all $t>0$. Using that $P(\g L(t)>y)\leq 1$ and $\Lambda\lc [\bmy,\bm\infty)\rc\leq \Lambda\big(B_\infty^\complement(\delta)\big)$ for all $\bmy\in M_{1,\delta}$, we obtain
\[
\big\vert P\lc \g L(t)>\bmy\rc- t\Lambda\lc [\bmy,\bm\infty)\rc\big\vert \leq 1 + t \Lambda\big(B_\infty^\complement(\delta)\big).
\]
Setting $K_1 = 1/\bar{q}^2+\Lambda\big(B_\infty^\complement(\delta)\big)/\bar{q}$ yields for all $t \geq \bar q, \bmy \in M_{1,\delta}$
\[
K_1t^2 = \Big( 1/\bar{q}^2+\Lambda\big(B_\infty^\complement(\delta)\big)/\bar{q} \Big)t^2 \geq 1 + t \Lambda\big(B_\infty^\complement(\delta)\big) \geq \big\vert P\lc \g L(t)>\bmy\rc- t\Lambda\lc [\bmy,\bm\infty)\rc\big\vert\quad.
\]
Setting $\bar{K}=\max\{K_1,K\}$ the claim follows.
\end{proof}

We immediately obtain the corresponding result for HR processes.

\begin{proof}[Proof of Proposition \ref{propunifconvHRprocess}]
The proof immediately follows from observing that Proposition \ref{cordensregHR} implies that for every $\epsilon> 0$ there is a $c_\epsilon \in S_\epsilon$ such that $(1-c_\epsilon) \lambda_{\rm HR}$ satisfies the conditions of Lemma \ref{lemdiffintegrateddens}.
\end{proof}

\newpage

\section{Other Proofs}
\label{app_proofs}

\subsection{Proof of Lemma \ref{lemHrprocessclass}}
\begin{proof} 
Obviously setting $\bmb \neq 0$ only changes the drift, but not the Lévy measure of $\g L$. Thus, we can w.l.o.g.\ assume that $\bm\tau=\bmb=0$. 

Recall the definition of the function $\bm t_{\bac}$ from~\eqref{defHRlevymeasuretrafo}. Note that densities of \HR{} are closed under marginalization, so
\[
\int_{\R^d\setminus \{\bm 0\}} q(\bmx_{\setminus I}) g(\bmx;\Theta) \rmd \bmx = \int_{\R^{d-|I|} \setminus \{\bm 0\}} q(\bmx) g(\bmx;\tilde \Theta(I)) \rmd \bmx
\]
where $\tilde \Theta(I)$ is the parameter matrix which is given by $-(P\Gamma_{\setminus I, \setminus I}P)^{+}/2$ where $\Gamma$ denotes the $\Gamma$ matrix corresponding to $\Theta$; see \citet[][Example 7]{engelkehitz2020}.

Now, let $\bma,\bmz\in\Rdo $ and denote $\bmz(\bma)=\lc a_iz_i\rc_{1\leq i\leq d}$ as well as the enumeration of the orthant in $\R^{d^\prime}$ as $\Oo_{d^\prime}=\{-1,1\}^{d^\prime}$. First, we consider $\bma \geq 0$ and denote $I=\{i\mid a_i=0\}$. Then the characteristic function of $ \bma \odot \g L=\lc a_iL_i(t)\rc_{\leqd}$ is given by
    \begin{align*}
        &\e\lk \exp\lc i\bmz^\intercal (\bma \odot \g L(t))\rc \rk 
        \\
        &=\exp\Bigg( t \sum_{\tilde{\bmo} \in\Oo_{d-\vert I\vert}} \sum_{\overset{\bmo\in\Oo}{\bmo_{ V\setminus I}=\tilde{\bmo}}}\gamma_\bmo \int_{\Rdo\cap\bmo} \Big[ \exp\Big( i \sum_{j\in  V\setminus I} z_j \bmo(x_j)\lc  a_j^{\alpha_j} c_j^{\bmo(x_j)}\vert x_j\vert \rc^{1/\alpha_j}  \Big) -1
        \\
        &\hspace{3cm}-i\Big\{ \sum_{j\in  V\setminus I} z_j \bmo(x_j)\lc  a_j^{\alpha_j} c_j^{\bmo(x_j)}\vert x_j\vert \rc^{1/\alpha_j}  \Big\}  \id_{\{\Vert t_{\balpha,\bmc}(\bmx)\Vert_2 \leq 1\}} \Big] g(\vert\bmx\vert;\Theta) \rmd \bmx \Bigg)
        \\
        &=\exp\Bigg( t \sum_{\tilde{\bmo} \in\Oo_{d-\vert I\vert}} \gamma_{\tilde{\bmo}}(\bma)\int_{\R^{d-\vert I\vert}\setminus\{\bm 0\}\cap\tilde{\bmo}} \Big[ \exp\lc i \bmz_{ V\setminus I}^\intercal t_{\balpha_{ V\setminus I},\bmc(\bma)_{ V\setminus I}} (\bmx)  \rc -1\\
        &\hspace{1.5cm} -i\bmz_{ V\setminus I}^\intercal t_{\balpha_{ V\setminus I},\bmc(\bma)_{ V\setminus I}} (\bmx)   \id_{\{\Vert t_{\balpha_{ V\setminus I},\bmc(\bma)_{ V\setminus I}}(\bmx)\Vert_2 \leq 1\}} \Big] g(\vert\bmx\vert;\tilde \Theta(I)) \rmd \bmx +i\bmz_{ V\setminus I}^\intercal\bm\zeta(\bma)\Bigg)
 \end{align*}
where 
$\bmc(\bma)_{ V\setminus I}=\lc \lc   a_i^{\alpha_i} c^{-}_i,  a_i^{\alpha_i} c^{+}_i\rc\rc_{ V\setminus I}$, $ \gamma_{\tilde{\bmo}}(\bma):=  \sum_{\overset{\bmo\in\Oo}{\bmo_{ V\setminus I}=\tilde{\bmo}}}\gamma_\bmo $ and for all $i\in V\setminus I$ 
\begin{multline*} [\bm \zeta(\bma)]_i:= \sum_{\tilde{\bmo} \in\Oo_{d-\vert I\vert}}\sum_{\overset{\bmo\in\Oo}{\bmo_{ V\setminus I}=\tilde{\bmo}}}\gamma_\bmo  \int_{\Rdo\cap\bmo}  [t_{\balpha_{ V\setminus I},\bmc(\bma)_{ V\setminus I}} (\bmx)]_i
\\
\lc \id_{\{\Vert t_{\balpha_{ V\setminus I},\bmc(\bma)_{ V\setminus I}}(\bmx)\Vert_2 \leq 1\}}        -  \id_{\{\Vert t_{\balpha,\bmc}(\bmx)\Vert_2 \leq 1\}} \rc g(\vert\bmx\vert;\Theta) \rmd \bmx .
\end{multline*}
Thus, noting that $\gamma_{\tilde{\bmo}}(\bma)$ satisfy~\eqref{weight_norm}, $\bma \odot \g L$ is a $(d-\vert I\vert)$ dimensional HR process.

To obtain the general claim it remains to show that $\bma \odot \g L$ is an HR process for all $\bma\in\{-1,1\}^d$. Repeating the same calculations as above we obtain
    \begin{align*}
        &\e\lk \exp\lc i\bmz^\intercal (\bma \odot \g L(t))\rc \rk \\
        &=\exp\Bigg( t  \sum_{\bmo\in\Oo}\gamma_\bmo \int_{\Rdo\cap\bmo} \Big[\exp\Big\{ i \sum_{j \in  V} z_j \bmo(a_j)\bmo(x_j)\big(  c_j^{\bmo(x_j)}\vert x_j\vert \big)^{1/\alpha_j}  \Big\} -1\\
        &\hspace{3cm} -i\Big\{ \sum_{j \in  V} z_j \bmo(a_j)\bmo(x_j)\lc  c_j^{\bmo(x_j)}\vert x_j\vert \rc^{1/\alpha_j} \Big\}   \id_{\{\Vert t_{\balpha,\bmc}(\bmx)\Vert_2 \leq 1\}} \Big] g(\vert\bmx\vert;\Theta) \rmd \bmx \Bigg)\\
        &=\exp\Bigg( t \sum_{\bmo \in\Oo} \hat{\gamma}_{\bmo(\bma)}\int_{\Rdo \cap \bmo} \Big[ \exp\lc i \bmz^\intercal t_{\balpha,\hat{\bmc}(\bma)} (\bmx)  \rc -1
        \\
        &\hspace{3cm} -i\bmz^\intercal t_{\balpha,\bmc(\bma)} (\bmx)   \id_{\{\Vert t_{\balpha,\hat{\bmc}(\bma)}(\bmx)\Vert_2 \leq 1\}} \Big] g(\vert\bmx\vert;\Theta) \rmd \bmx \Bigg)
 \end{align*}
where $\hat{\gamma}_{\bmo}(\bma)=\gamma_{\bmo(\bma)}$ with $\bmo(\bma)=\lc a_io_i\rc_{\leqd}$ and $\hat{\bmc}(\bma)=\lc  c_i^{-\sgn(a_i)} , c_i^{\sgn(a_i)}\rc$.  
Therefore, $\bma \odot \g L$ is an \HRL{} process and the claim follows.

\end{proof}

\subsection{Proof of Theorem \ref{thmglobalmarkovprop}}

\begin{proof} 

Define $\mathcal{R} := \{R = \times_{v \in  V} R_v: \bm 0 \notin \overline{R}, \Lambda(R) > 0\}$ and for any fixed $R \in \mathcal{R}$ define $\g Z^R$ as a random vector with distribution $\Lambda(\cdot\cap R)/\Lambda(R)$. The measure $\Lambda = \Lambda^{(\bag)}$ satisfies Assumption [A2] from \cite{engelke2024levygraphicalmodels}, so for any fixed $i\neq j$ we have 
\begin{equation}\label{eq:indepequiv1}
L_i\perp L_j \mid L_{ V\setminus\{i,j\}} \iff Z_i^R \perp Z_j^R \mid Z_{ V\setminus\{i,j\}}^R~\forall R \in \mathcal{R}
\end{equation}
by Theorem 4.3 in \cite{engelke2024levygraphicalmodels}. On each $R \in \mathcal{R}$, $\Lambda$ has a strictly positive Lebesgue density since $\gamma_\bmo > 0$ for all $\bmo$ which is satisfied for IHR processes. Assuming that $\Theta_{i,j} = 0$ and $\psi_{i,j} = 0$, the density of $\Lambda$ factorizes (this is evident from the definition), so the conditional independence for $\g Z^R$ in~\eqref{eq:indepequiv1} holds by the classical Hammersley--Clifford Theorem for positive densities with respect to a product measure.

Now assume that the conditional independence in~\eqref{eq:indepequiv1} holds, which is equivalent to the $\Lambda$ conditional independence defined in~\eqref{defcondind3}. By \citet[][Prop.~4.4]{engelke2024levygraphicalmodels} we know that this is further equivalent to the corresponding conditional independence 
for the PLM $\Lambda^\star$. Recall from~\eqref{defHRparetolevycop} that the asymmetric HR PLM admits a density with respect to Lebesgue measure, and this density takes the form
\[ 
\lambda^\star(\bmx)= h(\bmx) g(\bmx) \bm 1_{\{\bmx \neq \bm 0\}} ,  \quad \bmx \in\mathbb R^d,
\]
with $h(\bmx) = \lambda^\star_{(\text{sym})}(\bmx)$
and $g(\bmx) = \sum_{\bmo\in\mathcal O} \gamma_\bmo \id_{\{\bmx\in \bmo\}}$.

Choose an arbitrary {set $R \in \mathcal{R}$} contained in {$(0,\infty)^d$}. {Conditional independence of $\g Z^R$ combined with the classical Hammersley--Clifford Theorem for positive densities with respect to a product measure implies that the density of $\g Z^R$ factorizes. Since this is true for all $R$ bounded away from zero, and since $h$ restricted to $(0,\infty)^d$ is proportional to the density of the HR exponent measure, it follows that $\Theta_{ij} =0$. Moreover, we have $h(\bmx) = h_{V\setminus\{j\}}(|\bmx_{V\setminus\{j\}}|)h_{V\setminus\{i\}}(|\bmx_{V\setminus\{i\}}|)$ where $|\bmx|$ is interpreted as absolute value taken component-wise.}

Now from \citet[][Theorem 4.6]{engelkeivanosvsstrokorb2022}, it follows that $\lambda^\star$ factorizes if we have the 
conditional independence in~\eqref{eq:indepequiv1};
note that the second condition of their theorem is satisfied since the sub-face corresponding to $A\cup B$ does not have any mass in the IHR model. Therefore, we can write 
\[ 
g(\bmx) = \frac{\lambda^\star_{V\setminus\{j\}}(\bmx_{V\setminus\{j\}})\lambda^\star_{V\setminus\{i\}}(\bmx_{V\setminus\{i\}})}{h_{V\setminus\{j\}}(|\bmx_{V\setminus\{j\}}|)h_{V\setminus\{i\}}(|\bmx_{V\setminus\{i\}}|)}, \quad \bmx\in (\mathbb R \setminus \{\bm 0\})^d; 
\]
note that the factors of $h$ are the same in all orthants. {In particular, for any $\bmy \in \{-1,1\}^d$ we have $h_{V\setminus\{j\}}(|\bmy_{V\setminus\{j\}}|)h_{V\setminus\{i\}}(|\bmy_{V\setminus\{i\}}|) = \lambda^\star_{(\text{sym})}(\vert \bmy\vert )=:C > 0$ is constant. Moreover, each such $\bmy$ can be uniquely identified with an orthant $\bmo$, so we have shown
\begin{equation}\label{eq:gammaofac1}
\gamma_\bmo =  g(\bmo) = f_1(\bmo_{V\setminus\{i\}})f_2(\bmo_{V\setminus\{j\}}),
\end{equation}
where $f_1, f_2$ are some strictly positive functions. By definition of $\gamma_\bmo$ it is easy to see that 
\[
\gamma_\bmo = {e^{2\psi_{i,j}o_io_j}} \tilde f_1(\bmo_{V\setminus\{i\}}) \tilde f_2(\bmo_{V\setminus\{j\}}) 
\]
where again $\tilde f_i$ are strictly positive {and depend on $\Psi$}. Combining this with~\eqref{eq:gammaofac1} we have 
\begin{equation}\label{eq:factorexp}
e^{2\psi_{i,j}o_io_j} = \frac{f_1(\bmo_{V\setminus\{i\}})f_2(\bmo_{V\setminus\{j\}})}{\tilde f_1(\bmo_{V\setminus\{i\}}) \tilde f_2(\bmo_{V\setminus\{j\}})} = c_1(o_i)c_2(o_j)
\end{equation}
for some positive functions $c_1,c_2$ on $\{-1,1\}$ where the second equation follows since the left-hand side depends only on $o_i,o_j$. An elementary computation shows that this is only possible if $\psi_{i,j} = 0$.
\footnote{Write $\psi$ for $2\psi_{i,j}$. If~\eqref{eq:factorexp} was true, we would have $c_1(+)c_2(+) = c_1(-)c_2(-) = e^\psi$ and $c_1(-)c_2(+) = c_1(+)c_2(-) = e^{-\psi}$. Then $1 = e^\psi e^{-\psi} = c_1(-)c_2(-)c_1(+)c_2(-) = c_1(+)c_2(+)c_1(-)c_2(+) $ so $c_2(-)^2 = c_2(+)^2$ and thus $c_2(-) = c_2(+)$ since both are positive. This also implies $c_1(-) = c_1(+)$. But then $e^\psi = e^{-\psi}$, so $\psi = 0$.   
}
This completes the proof.}

\end{proof}

\subsection{Proof of Theorems~\ref{thm:secondorder} and ~\ref{thmconsitencygammaestimator}}

Before proving the theorem we need two preparatory Lemmas.

\begin{lem}
    \label{lemtechproofs}
    \begin{enumerate}
        \item[i)]   Let $\g L$ denote an HR process without drift
        and let $b_i$ be defined as in Theorem \ref{thmselfsimHRprocess}, equation~\eqref{eq:deffunctionb}. Then, for all $t< 1$ and $i \in  V$
     \begin{align*}
        \vert b_i(t) \vert &\leq        C+ 2^{d}(c_i^+ \vee c_i^-)^{1/\alpha_i}  \Bigg( \frac{1}{  1/\alpha_i-1 } \Bigg( \frac{1}{\lc t(c_i^+ \wedge c_i^-)\rc^{1/\alpha_i-1}} -1\Bigg)\id_{\{ \alpha_i\not=1\}} 
        \\
        &\quad -\log\lc t(c_i^+ \wedge c_i^-)  \rc \id_{\{ \alpha_i=1\}}  \Bigg) 
     \end{align*}
     for a constant $C$ that may depend on $\balpha,\bmc,\bgamma$ and $\Theta$.
     \item[ii)] Let $\g L$ denote an HR processes with drift $\bm \tau\in\R^d$.
     Then $\g L(q)\overset{d}{=} \lc q^{1/\alpha_i}L_i(1)+c_i(q)\rc_{\leqd}$ where $c_i(q)=\tau_i(q-q^{1/\alpha_i})+q^{1/\alpha_i}b_i(q)$ for $b_i$ as in part (i). Further, for all $0<\zeta<\min\{1/\alpha_i,1\}$ there exists a constant $C=C(\bm \tau)$ such that $c_i(q)\leq Cq^{\min\{1/\alpha_i;1\}-\zeta} $ for all $q\in(0,1)$. 
     \item[iii)]For all $a>0,\bmx>0$ (component-wise) we have
        \begin{align*}
            \Lambda^{(\bag,\Theta)}\lc  \{ \bmy\in\Rdo \mid y_i>a^{-1/\alpha_i}x_i~~\forall i \in  V\} \rc= a\Lambda^{(\bag,\Theta)}\lc (\bmx,\bm \infty) \rc.
        \end{align*}
    \end{enumerate}
 \end{lem}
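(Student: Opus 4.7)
I would tackle the three parts in the order (iii), (i), (ii), because (iii) is the quickest warmup, (i) does the real analytic work, and (ii) then follows from (i) by a routine bookkeeping argument.

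For part (iii), I will invoke Proposition~\ref{proplevymeasurehrprocess}, which identifies $\Lambda^{(\bag,\Theta)}$ as the pushforward of $\Lambda^{(\bm 1,\bm 1,\bgamma,\Theta)}$ under $t_{\bac}$. For $\bmx > 0$ and $a > 0$, the preimage of the set $\{\bmy\in(0,\infty)^d : y_i > a^{-1/\alpha_i}x_i\}$ under $t_{\bac}$ reduces to $\{\bmz > 0 : z_i > a^{-1} x_i^{\alpha_i}/c_i^+\}$. Since the marginals of $\Lambda^{(\bm 1,\bm 1,\bgamma,\Theta)}$ are all $U_i(z)=z^{-1}$, this PLM is stable in the sense of Example~\ref{ex:stable} and hence $1$-homogeneous; stripping the factor $a^{-1}$ via this homogeneity and then reversing the pushforward yields $a\,\Lambda^{(\bag,\Theta)}((\bmx,\bm\infty))$.

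For part (i), which is the technical heart of the lemma, I plan to apply the same substitution $\bmy = t_{\bac}(\bmx)$ to the integral defining $b_i(t)$. Using $t_{\bac}(a\bmx)_j = a^{1/\alpha_j} t_{\bac}(\bmx)_j$ and $\bmo(x_i)(c_i^{\bmo(x_i)}|x_i|)^{1/\alpha_i} = y_i$, the integral will rewrite as
\begin{equation*}
    b_i(t) = t^{1/\alpha_i}\int_{\Rdo} y_i\bigl(\id_{\{\|\bmy\|_2 \leq 1\}} - \id_{\{\sum_{j=1}^d t^{2/\alpha_j}y_j^2 \leq 1\}}\bigr)\Lambda^{(\bag,\Theta)}(\rmd\bmy).
\end{equation*}
For $t < 1$ every factor $t^{2/\alpha_j}$ is below $1$, so the second indicator dominates the first and the difference equals $-\id_{A_t}$ on the ``shell'' $A_t := \{\|\bmy\|_2 > 1,\ \sum_j t^{2/\alpha_j}y_j^2 \leq 1\}$, forcing $|y_i| \leq t^{-1/\alpha_i}$ there. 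The plan is then to split $A_t$ into $\{|y_i|\leq 1\}$, whose contribution is a finite constant bounded by $\Lambda^{(\bag,\Theta)}(\{\|\bmy\|_2 > 1\})<\infty$, and $\{1 < |y_i| \leq t^{-1/\alpha_i}\}$, which will be majorised by the $i$-th marginal Lévy density $\alpha_i c_i^\pm |y|^{-\alpha_i-1}$ coming from Definition~\ref{defnHRprocess}. This reduces everything to the elementary integral $\int_1^{t^{-1/\alpha_i}} y^{-\alpha_i}\rmd y$, which produces a polynomial term of the form $(t(c_i^+\wedge c_i^-))^{1-1/\alpha_i}$ when $\alpha_i\neq 1$ and a logarithmic term $-\log(t(c_i^+\wedge c_i^-))$ when $\alpha_i = 1$; collecting the prefactor $t^{1/\alpha_i}$ and absorbing all remaining numerical constants into $C$ yields the claimed bound.

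For part (ii), I would split $\g L = \g L^0 + t\bm\tau$ where $\g L^0$ has zero drift, apply Theorem~\ref{thmselfsimHRprocess} to $\g L^0$ with $\delta_i = 1,\ a = q,\ t = 1$ to get $L_i^0(q) \overset{d}{=} q^{1/\alpha_i}L_i^0(1) + b_i(q)$, and then substitute $L_i^0(s) = L_i(s) - s\tau_i$ to recover $c_i(q)$. The bound on $c_i(q)$ will follow by combining part~(i), which gives $|b_i(q)| \lesssim q^{\min\{1/\alpha_i,1\}}$ (up to a $|\log q|$ factor in the boundary case $\alpha_i = 1$), with the elementary estimate $|q - q^{1/\alpha_i}| \leq q^{\min\{1,1/\alpha_i\}}$ on $(0,1)$; the logarithmic factor is absorbed into the arbitrarily small $\zeta>0$. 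The main obstacle throughout is in part~(i): the level sets of $\|t_{\bac}(\cdot)\|_2$ and $\|t_{\bac}(t\,\cdot)\|_2$ scale differently in each coordinate, so the shell $A_t$ has a genuinely anisotropic geometry that couples all stability indices $\alpha_j$; the pushforward trick decouples this dependence through the $i$-th marginal, but care is needed to verify that the contributions from $\{|y_i|\leq 1\}$ and from cross-coordinates are indeed absorbed in the finite constant $C$, especially at the critical value $\alpha_i = 1$ where the marginal integral diverges logarithmically.
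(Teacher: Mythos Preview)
Your plan is correct and essentially the same as the paper's; the only real difference is that for part (i) the paper stays in the PLM coordinates $\bmx$ (bounding the indicator by $\id_{\{1<\|t_{\bac}(\bmx)\|_2,\ \|t_{\bac}(t\bmx)\|_2\le 1\}}$, splitting on $|x_i|\le 1$ vs.\ $|x_i|>1$, and integrating $\int_1^{1/(t(c_i^+\wedge c_i^-))}x^{1/\alpha_i-2}\,dx$ against the PLM marginal $x^{-2}$), whereas you first substitute $\bmy=t_{\bac}(\bmx)$ and integrate against the power-law marginal of $\Lambda^{(\bag)}$. After substitution the constants $c_i^\pm$ sit in the marginal density rather than in the integration limit, so the term $(t(c_i^+\wedge c_i^-))^{1-1/\alpha_i}$ will not appear in exactly that form, but the resulting bound is at least as good.

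One point worth noting: the paper's proof of (i) silently drops the prefactor $t^{1/\alpha_i}$ from the definition of $b_i$ in~\eqref{eq:deffunctionb} (harmless since $t<1$); this is why its bound can diverge like $t^{1-1/\alpha_i}$ for $\alpha_i<1$, and why the formula for $c_i(q)$ in (ii) carries an extra $q^{1/\alpha_i}$ in front of $b_i(q)$. Your derivation, which keeps the prefactor, naturally produces $c_i(q)=\tau_i(q-q^{1/\alpha_i})+b_i(q)$ and the sharper estimate $|b_i(t)|\le C+C'|t-t^{1/\alpha_i}|/|1-\alpha_i|$; either convention gives the same final bound $|c_i(q)|\le Cq^{\min\{1/\alpha_i,1\}-\zeta}$, which is all that is used downstream.
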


\begin{proof}[Proof of Lemma~\ref{lemtechproofs}]
\begin{enumerate}
         \item[i)] Observe that $\id_{\{\Vert t_{\balpha,\bmc}(t\bmx) \Vert_2 \leq 1\}}=\id_{\{\sumd \lc c_i^{\bmo(x_i)}t \vert x_i\vert \rc^{2/\alpha_i}  \leq 1\}}$ and for $t < 1$ ,  noting that for $0 < a < 1$, $\|t_{\balpha,\bmc}(a\bmx)\|_2 < \|t_{\balpha,\bmc}(\bmx)\|_2$, we have for $i \in  V$
        \begin{align*}
            &\lv \id_{\{\Vert t_{\balpha,\bmc}(\bmx) \Vert_2
            \leq 1\}}-  \id_{\{\Vert t_{\balpha,\bmc}(t\bmx) \Vert_2 \leq 1\}}\rv 
            \\
            &=\id_{\{ 1<  \Vert t_{\balpha,\bmc}(\bmx) \Vert_2,  \Vert t_{\balpha,\bmc}(t\bmx) \Vert_2\leq 1\}}\\
            &=\id_{\{1<  \Vert t_{\balpha,\bmc}(\bmx) \Vert_2,  \Vert t_{\balpha,\bmc}(t\bmx) \Vert_2\leq 1, \vert x_i\vert \leq 1\}}+\id_{\{ 1<  \Vert t_{\balpha,\bmc}(\bmx) \Vert_2, \Vert t_{\balpha,\bmc}(t\bmx) \Vert_2\leq 1, \vert x_i\vert > 1\}} \\
            &\leq \id_{\{ \exists j\in\{1,\ldots,d\} \text{ s.t. } 
            , \vert x_i\vert \leq 1 \}}
            +\id_{\{ \lc c_i^{\bmo(x_i)}t\vert x_i\vert \rc^{2/\alpha_i}\leq 1, \vert x_i\vert > 1\}} \\
            &\leq \id_{\{  \vert x_i\vert\leq 1/(t\min\{c_i^+,c_i^-\})  , \vert x_i\vert > 1\}}  + \sum_{j=1}^d \id_{\{ \vert x_j\vert > d^{-\alpha_j/2}\max\{ c_j^+,c_j^-\}^{-1} , \vert x_i\vert \leq 1\}}           
        \end{align*} 
        We get
        \begin{align*} 
                &\lv b_i(t)\rv = \Bigg\vert\int_{\Rdo}   \bmo(x_i)\lc c_i^{\bmo(x_i)}\vert x_i\vert \rc^{1/\alpha_i} \ \lc \id_{\{\Vert t_{\balpha,\bmc}(\bmx) \Vert_2 \leq 1\}}-  \id_{\{\Vert t_{\balpha,\bmc}(t\bmx) \Vert_2 \leq 1\}} \rc \Lambda^{(\bm 1,\bm 1,\bgamma)}(\rmd\bmx)  \Bigg\vert\\
                &\leq   \sum_{j=1}^d \max\{c_j^+,c_j^-\}^{1/\alpha_i} \int_{\{ \vert x_j\vert > d^{-\alpha_j/2}\max\{ c_j^+,c_j^-\}^{-1} \}}\Lambda^{(\bm 1,\bm 1,\bgamma)}(\rmd\bmx)\\
                &+ \int_{\{  \vert x_i\vert\leq 1/(t\min\{c_i^+,c_i^-\})  , \vert x_i\vert > 1\}}  \lc c_i^{\bmo(x_i)}\lv x_i\rv \rc^{1/\alpha_i}  \Lambda^{(\bm 1,\bm 1,\bgamma)}(\rmd\bmx)  \\
                &\leq   C+2^{d}(\max_{\bmo\in\Oo}\gamma_\bmo )(\max\{c_i^+,c_i^-\})^{1/\alpha_i} \int_{1}^{1/(t\min\{c_i^+,c_i^-\})}   x_i^{1/\alpha_i-2} \rmd x_i  \\
                &\leq  C+2^{d}(\max\{c_i^+,c_i^-\})^{1/\alpha_i} \lc 1/\alpha_i-1 \rc^{-1}\lc \lc t\min\{c_i^+,c_i^-\}\rc^{-(1/\alpha_i-1)} -1\rc\id_{\{ \alpha_i\not=1\}} \\
                &+2^{d}(\max\{c_i^+,c_i^-\})^{1/\alpha_i} \lc -\log\lc t\min\{c_i^+,c_i^-\}  \rc\rc \id_{\{ \alpha_i=1\}}\\
                &=  C+ 2^{d}(\max\{c_i^+,c_i^-\})^{1/\alpha_i}  \Bigg( \frac{1}{  1/\alpha_i-1 } \lc \frac{1}{\lc t\min\{c_i^+,c_i^-\}\rc^{1/\alpha_i-1}} -1\rc\id_{\{ \alpha_i\not=1\}} \\
                &-\log\lc t\min\{c_i^+,c_i^-\}  \rc \id_{\{ \alpha_i=1\}}  \Bigg) 
        \end{align*}
        where we used that $\gamma_\bmo\in[0,1]$ and set 
        $$ 
        C:=C(\bac):=\sum_{j=1}^d \max\{c_j^+,c_j^-\}^{1/\alpha_i} \int_{\{ \vert x_j\vert > d^{-\alpha_j/2}\max\{ c_j^+,c_j^-\}^{-1} \}}\Lambda^{(\bm 1,\bm 1,\bgamma)}(\rmd\bmx). 
        $$
            \item[ii)] The first part follows from an application of Corollary \ref{cortimescalingHRprocess} since $\g L(q)\overset{d}{=} \tilde{\g L}(q)+q\bm\tau, q > 0$, where $\tilde{\g L}$ is an HR process without drift. The second claim follows directly from $i)$ 
            
            \item[iii)]
            We have
            \begin{align*}
                \Lambda^{(\bag)}\lc  \{\bmy: y_i>a^{-1/\alpha_i}x_i \ \forall i \} \rc&= \gamma_{\bm 1}\int \id_{\{\lc c_i^+y_i\rc^{1/\alpha_i}>a^{-1/\alpha_i}x_i \ \forall i\}} g(\bmy;\Theta)\rmd\bmy\\
                &=\gamma_{\bm 1}\int \id_{\{\lc c_i^+ ay_i\rc^{1/\alpha_i}>x_i \ \forall i\}} g(\vert\bmy\vert ;\Theta)\rmd\bmy\\
                &=\gamma_{\bm 1}\int \id_{\{\lc c_i^+ y_i\rc^{1/\alpha_i}>x_i \ \forall i\}} g(\vert\bmy\vert/a;\Theta)a^{-d}\rmd\bmy\\
                &=\gamma_{\bm 1}\int\id_{\{\lc c_i^+ y_i\rc^{1/\alpha_i}>x_i \ \forall i\}} g(\vert\bmy\vert;\Theta)a^{d+1}a^{-d}\rmd\bmy\\
                &=\gamma_{\bm 1} a\int\id_{\{\lc c_i^+ y_i\rc^{1/\alpha_i}>x_i \ \forall i\}} g(\vert\bmy\vert;\Theta)\rmd\bmy\\
                &=a\Lambda^{(\bag)}\lc\{  y_i>x_i \ \forall i \} \rc\\
            \end{align*}
            since the density $g$ of the HR exponent measure is $-(d+1)$ homogeneous.
      
\end{enumerate}
\end{proof}

Next, we prove that we can uniformly control the tail behavior $\bmX_{\bmo}(\Delta)$, which is the main technical ingredient for the verification of the assumption of \cite{engelkelalancettevolgushev2022}. To simplify the theoretical exposition in the following note that scaling of HR processes are HR processes and it is therefore sufficient to prove $\bmX_{\bm 1}(\Delta)$ is in the domain of attraction of a multivariate HR Pareto distribution with precision matrix $\Theta$ to obtain the same statement for every $\bmX_\bmo(\Delta)$. Further define the monotone decreasing marginal survival functions
\begin{align*}
    G_i(z)&=\Lambda^{(\bag,\Theta)}\lc  x_i > z, x_j>0\ \forall j\not=i \rc=\int_{\{x_i>z,x_j>0\forall j\}} \Lambda^{(\bag,\Theta)}(\rmd\bmx)=\frac{\gamma_{\bm 1}c_i^+}{z^{\alpha_i}}, \quad z > 0,
\end{align*} 
and its monotone decreasing inverse as $G_i^{-1}(z)=\lc \frac{\gamma_{\bm 1}c_i^+}{z}\rc^{1/\alpha_i}$. 

For $\emptyset\neq I\subset  V$ define 
\begin{equation}
R_I(\bmx_I) :=\Lambda_{\rm HR}\big( \{\bmy\in[0,\infty)^d\mid y_i\geq 1/x_i\ \forall i\in I \big)\quad \bmx_I \in [0,\infty)^{\vert I\vert},    
\end{equation}
where $\Lambda_{\rm HR}$ denotes the exponent measure of the \HR{} distribution, which can also be expressed as $\Lambda_{\rm HR}(\cdot) = \Lambda^{(\id,\id,\id)}(\cdot \cap (0,\infty)^d)$. Together with Lemma~\ref{lemtechproofs} (iii) this implies
\begin{equation}\label{eq:Rhomogeneous}
R_I(a\bmx_I) = a R_I(\bmx_I), \quad a> 0, \bmx_I \in [0,\infty)^{\vert I\vert}.    
\end{equation}
Since the margins of $\Lambda_{\rm HR}$ are given by $\Lambda_{\rm HR}(\{\bmx| x_i > z\}) = 1/z$, we have for any $\delta > 0$, $\bmx \in [0,\infty)^d$, any $\emptyset \neq I \subset  V$ and any $j \in [|I|]$
\begin{align*}
R_I(\bmx) - R_I(\bmx + \bm e_j\delta) &= \Lambda_{\rm HR}\big(\big\{\bmy \mid (\bmy_I)_j \in \big( \{(\bmx_I)_j+\delta\}^{-1},(\bmx_I)_j^{-1}\big)], (\bmy_{I})_{\setminus j} \geq \bm 1/(\bmx_{I})_{\setminus j}\big\}\big)
\\
&\leq \Lambda_{\rm HR}\big(\big\{\bmy \mid (\bmy_I)_j \in \big(\{(\bmx_I)_j+\delta\}^{-1},(\bmx_I)_j^{-1}\big]\big\}\big)
\\
&= \delta.
\end{align*}
Thus $R_I$ is Lipshitz continuous with respect to $\|\cdot\|_1$ Lipshitz constant $1$, i.e.
\begin{equation}\label{eq:RisLipshitz}
|R_I(\bmx) - R_I(\bmy)| \leq \|\bmx - \bmy\|_1.  \end{equation}

For the following proof also note that for $\bmx \in (0,\infty)^d$
\begin{align}\notag
R_I(\bmx_I) &= \Lambda_{\rm HR}\big(\{\bmy\in[0,\infty)^d\mid y_i\geq 1/x_i\ \forall i\in I\}\big)
\\ \nonumber
&= \gamma_1 \Lambda_{\rm HR}( \{\bmy\mid y_i\geq \gamma_1/x_i\ \forall i\in I \}) \\
&= \gamma_1\Lambda_{\rm HR}( \{\bmy\mid (c_i^+ y_i)^{1/\alpha_i}\geq \lc \gamma_1c_i^+/x_i\rc^{1/\alpha_i} \ \forall i\in I \})
\nonumber \\
&=\Lambda^{(\bag,\Theta)}( \{\bmy\mid y_i\geq   G_i^{-1}(x_i) \ \forall i\in I \})  \label{eq:LamHRLamGen} 
\end{align}

\begin{lem}\label{lem:HRsecorder}
For $d\geq 2$, $\emptyset\neq I\subset  V$ and $\Delta>0$ let $\g L$ denote an HR process with $\gamma_{\bm 1}>0$ and consider the random vector $\bmX(\Delta)$ with law 
\[
P(\bmX(\Delta) \in \cdot) := P(\g L(\Delta) \in \cdot\mid L_i(\Delta)>0, i \in  V)
\]
with associated marginal distribution functions $F_i^+(z)=P\lc X_i(\Delta)\leq z\rc$. Then, for any $\zeta < 1 \wedge \min_{i \in  V} \alpha_i^{-1}$ there exists a constant $C >0$ that depends on $\zeta$ and the other model parameters but not on $q$ such that for all $q\in(0,1)$ and $\bmx_I\in[0,1]^{\vert I\vert}$
$$ 
D_I(\bmx_I,q) := \left|  P\lc F_i^+ \lc X_i(\Delta)\rc >1-qx_i\ \forall\ i\in I\rc -q R_I(\bmx_I) \right|  \leq C q^{1+\zeta} .
$$
\end{lem}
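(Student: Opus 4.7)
Set $p := P(\g L(\Delta) \in (0,\infty)^d)>0$ (by Lemma~\ref{lem:posorthantprob}) and introduce $H_I(\bmz_I) := P(L_i(\Delta) > z_i \ \forall i \in I,\, L_j(\Delta) > 0 \ \forall j \notin I)$, so that $P(F_i^+(X_i(\Delta)) > 1-qx_i \ \forall i \in I) = H_I(\bmz_I)/p$, where the quantile $z_i := (F_i^+)^{-1}(1-qx_i)$ is implicitly characterised by $H_{\{i\}}(z_i) = pqx_i$. The desired bound becomes $|H_I(\bmz_I) - pq\,R_I(\bmx_I)| \leq Cq^{1+\zeta}$.

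The first step is to transfer to a small-time scale via self-similarity. Theorem~\ref{thmselfsimHRprocess} (with $t=\Delta$, $a=q$; the drift $\bm\tau$ only produces a deterministic affine shift that may be absorbed into $\bmz_I$) gives $(L_i(\Delta))_{i\in V} \stackrel{d}{=} (q^{-1/\alpha_i}(L_i(q\Delta) - \Delta b_i(q)))_{i\in V}$. Setting $w_i := q^{1/\alpha_i} z_i + \Delta b_i(q)$ for $i \in I$ and $v_j := \Delta b_j(q)$ for $j \notin I$, this identity rewrites $H_I(\bmz_I) = P(\g L(q\Delta) \in R)$ for the rectangle $R := \prod_{i \in I}(w_i,\infty) \times \prod_{j \notin I}(v_j,\infty)$. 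Lemma~\ref{lemtechproofs}(i)--(ii) together with the marginal expansion described below will show that $w_i$ converges to the positive constant $(\gamma_{\bm 1} c_i^+\Delta/(px_i))^{1/\alpha_i}$, hence $R \subset B_\infty^\complement(\delta)$ for some $\delta>0$ independent of $q$; Theorem~\ref{thmunifconvlevymeasure} at time $t = q\Delta$ then yields
$$|H_I(\bmz_I)-q\Delta\,\Lambda^{(\bag,\Theta)}(R)| \leq C(q\Delta)^2,$$
which is already absorbed in $Cq^{1+\zeta}$ since $\zeta<1$.

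It thus remains to show $q\Delta\,\Lambda^{(\bag,\Theta)}(R) = pq\,R_I(\bmx_I) + O(q^{1+\zeta})$. Introducing the positive-orthant rectangle $R^* := \prod_{i\in I}(w_i,\infty) \times \prod_{j\notin I}(0,\infty) \subset (0,\infty)^d$, the push-forward representation $\Lambda^{(\bag,\Theta)}|_{(0,\infty)^d} = \gamma_{\bm 1}(t_{\balpha,\bmc})_*\Lambda_{\rm HR}$ from Proposition~\ref{proplevymeasurehrprocess}, combined with the 1-homogeneity~\eqref{eq:Rhomogeneous} of $R_I$, provides the clean identity $\Lambda^{(\bag,\Theta)}(R^*) = \gamma_{\bm 1}\,R_I((c_i^+/w_i^{\alpha_i})_{i\in I})$. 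Running the same two steps for the univariate case $I = \{i\}$, where $\Lambda_{\rm HR}(\{u:u_i>a\}) = 1/a$ gives the explicit marginal form, supplies the expansion $c_i^+/w_i^{\alpha_i} = px_i/(\gamma_{\bm 1}\Delta) + O(q^\zeta)$. Substituting this back into the joint formula and using the Lipschitz bound~\eqref{eq:RisLipshitz} on $R_I$ yields $q\Delta\,\Lambda^{(\bag,\Theta)}(R^*) = pq\,R_I(\bmx_I) + O(q^{1+\zeta})$.

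The main technical obstacle is the shift-induced gap $q\Delta\,|\Lambda^{(\bag,\Theta)}(R) - \Lambda^{(\bag,\Theta)}(R^*)|$ generated by the non-zero displacements $v_j = \Delta b_j(q)$ in the coordinates $j \notin I$. It is controlled by decomposing the symmetric difference $R \triangle R^*$ into the orthants of $\Lambda^{(\bag,\Theta)}$ and estimating each resulting strip using the super-polynomial decay of the \HR{} Lévy density as any coordinate approaches its axis (a consequence of Lemmas~\ref{lem:tech1}--\ref{lem:propdens}), together with the explicit growth bound $\Delta|b_j(q)| \leq Cq^{\min(1/\alpha_j,1) - \zeta - 1/\alpha_j}$ inherited from Lemma~\ref{lemtechproofs}(i)--(ii). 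These two estimates balance precisely in the range $\zeta < 1 \wedge \min_i \alpha_i^{-1}$ stipulated in the statement, producing a shift contribution of order $q^{1+\zeta}$. Summing the three error contributions (Theorem~\ref{thmunifconvlevymeasure}, marginal/Lipschitz, and shift) and dividing by $p$ completes the proof.
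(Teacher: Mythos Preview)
Your overall strategy is the same as the paper's: reduce to small time via self-similarity, apply Theorem~\ref{thmunifconvlevymeasure}, then absorb the deterministic shifts. The paper packages the quantile step slightly differently---it works with the implicit relation $H_i(q,L_i(q)) < q p x_i$ rather than solving for $w_i$ explicitly---and it handles both the marginal perturbation and the $j \notin I$ shifts in one stroke via Lipschitz continuity of the map $h_i(z,\mathbf{u}) = \Lambda(\{\bmy: y_i > z,\, y_j > u_j ~\forall j \neq i\})$, established from Lemma~\ref{lemdiffintegrateddens}. This yields $|\Lambda(R) - \Lambda(R^*)| \leq A \sum_{j \notin I} |v_j|$ directly.

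There is, however, a confusion in your shift analysis. The $b_j(q)$ in your self-similarity formula is the one from Theorem~\ref{thmselfsimHRprocess}, which already carries the prefactor $q^{1/\alpha_j}$; by Lemma~\ref{lemtechproofs}(ii) it satisfies $|v_j| = \Delta|b_j(q)| \leq C q^{\min(1/\alpha_j,1) - \zeta'}$ for any $\zeta' > 0$, so $v_j \to 0$. The bound $Cq^{\min(1/\alpha_j,1) - \zeta - 1/\alpha_j}$ you quote is the growth rate of the raw integral in Lemma~\ref{lemtechproofs}(i), not of the shift itself. Consequently, the ``super-polynomial decay versus growth balancing'' you describe is not what actually happens: since $v_j \to 0$, $R \triangle R^*$ is a thin strip and the Lipschitz estimate immediately gives $q\Delta|\Lambda(R) - \Lambda(R^*)| \lesssim q \sum_{j \notin I} |v_j| \lesssim q^{1+\zeta}$. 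The constraint $\zeta < 1 \wedge \min_i \alpha_i^{-1}$ enters only because one needs $\min(1/\alpha_j,1) - \zeta' \geq \zeta$ for some $\zeta' > 0$, not from density decay near the axes.

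A smaller point: your claim that $w_i$ stays bounded below (so that Theorem~\ref{thmunifconvlevymeasure} applies) is circular as written, since the marginal expansion you invoke already uses Theorem~\ref{thmunifconvlevymeasure}. The paper breaks this by first fixing $\delta$ small enough that $H_i(q,\delta) \geq p q$ for all small $q$ (via a direct application of~\eqref{eqn1} at the fixed point $\delta$), which forces $w_i > \delta$ by monotonicity before any expansion is attempted.
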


\begin{proof}[Proof of Lemma~\ref{lem:HRsecorder}] 
Using that the class of HR processes is closed under marginal scaling and translation (Lemma~\ref{lemHrprocessclass}) in combination with Corollary \ref{cortimescalingHRprocess}, it follows that for every HR process $\g L$ there exists an HR process $\tilde{\g L}$ such that $\g L(\Delta)\overset{d}{=} \Tilde{\g L}(1)$. Thus, we can w.l.o.g.\ assume $\Delta=1$ in the following. Moreover, is suffices to show the claim for arbitrary $\emptyset\neq I$ as there are only finitely many $I\subset  V$. Thus, let $\emptyset\neq I$ be arbitrary in the following.

Define $ K_1:=P(L_i(1)>0\ \forall i \in  V)$ and denote the marginal distribution functions of $\bmX(1)$ as
$F_{i}^+ (z):=P\lc X_i(1)\leq z\rc$. We have $\g L(q)\overset{d}{=} \lc q^{1/\alpha_i}L_i(1)+c_i(q)\rc_{i \in  V}$ for some functions $c_i$ by Lemma~\ref{lemtechproofs} ii). Observe that for $z>0$ 
\begin{align}
    F_{i}^+ (z)&:=\frac{P \lc L_i(1)\leq z, L_j(1)>0\ \forall j \rc}{K_1}=1-\frac{P \lc L_i(1)> z, L_j(1)>0\ \forall j\neq i \rc}{K_1} \nonumber\\
    &=1- \frac{P \lc q^{-1/\alpha_i} \lc L_i(q)-c_i(q)\rc  > z,q^{-1/\alpha_i} \lc L_j(q)-c_j(q)\rc  >0\ \forall j\neq i \rc}{K_1} \nonumber\\
    &=1- \frac{P \lc   L_i(q) > q^{1/\alpha_i}z+c_i(q), L_j(q)>c_j(q)\ \forall j\not=i \rc}{K_1} \nonumber\\
    &=:1-\frac{H_i\lc q,q^{1/\alpha_i}z+c_i(q) \rc}{K_1} \label{eqnmargindistposlevy}
\end{align}
where we used Lemma \ref{lemtechproofs} ii) and defined
\begin{align}\label{def_H}
    H_i(q,z):=P \lc   L_i(q) > z, L_j(q)>c_j(q)\ \forall j\neq i \rc .
\end{align}

We note that due to Proposition \ref{propunifconvHRprocess} we have that for every $\delta>0$ there exists a $C_1(\delta)>0$ such that for all $z\geq \delta$, $\lc a_j\rc_{j\not=i}\in \R^{d-1}$, $\leqd$ and $q\in(0,1)$ 
\begin{align}
\big\vert P\lc  L_i(q)  > z, L_j(q)  >a_j\ \forall j\not=i \rc -q\Lambda\lc\{\bmy:  y_i>z,y_j>a_j\ \forall j\not=i \}\rc \big\vert  \leq C_1q^2. \label{eqn1}
\end{align}
as well as
$$ \lv P\lc L_i(q)>z \rc-q\Lambda(\{ \bmy: y_i>z\} ) \rv\leq C_1q^2. 
$$
Recalling~\eqref{eq:Rhomogeneous} and \eqref{eq:LamHRLamGen}, this yields the following bound which we will use repeatedly throughout the proof: for all $\delta > 0$ there exists a constant $C_1$ depending only on $\delta$ such that for all $\bmx \in (0,\infty)^d$ such that $\max_{i \in I}G_i^{-1}(x_i) \ge \delta$
\begin{align}
\Big| &P(L_i(q) > G_i^{-1}(x_i) \forall\ i \in I,L_i(q)>c_i(q) \ \forall i\not\in I) \nonumber \\
&- q \Lambda\lc \{ \bmy\in \R^d\mid y_i>  G_i^{-1}(x_i) \forall\ i \in I,y_i>c_i(q) \ \forall i\not\in I\}\rc\Big| \leq C_1 q^2.\label{eq:RelateLitoR}
\end{align}

Let us choose
$$
\delta=\min_{i\in  V}\Big\{   G_i^{-1}(K_1+1/2) ; \lc\Lambda\lc \{\bmy: y_i>\cdot,y_j>1\ \forall j\not=i \}\rc \rc^{-1}  (2K_1)\Big\},
$$
where $K_1$ was defined above.
Further, by Lemma \ref{lemtechproofs} ii), we can fix $q_1>0$ such that (\ref{eqn1}) holds and that for all $q<q_1$ we have $C_1q^2<K_1q$ and $\vert c_i(q)\vert\leq \min\{ 1;\delta\}$ for all $\leqd$. Recalling the definition of $H$ in~\eqref{def_H}, our choice of $\delta$ and $q_1$ yields for all $q<q_1$, noting that $x \mapsto \Lambda\lc  \{ \bmy: y_i>x,y_j>1\ \forall j\not=i \}\rc$ is continuous and strictly decreasing
\begin{align*}
    H_i(q,\delta)&= P\lc  L_i(q)  > \delta, L_j(q)  >c_j(q)\ \forall j\not= i \rc\geq  P\lc  L_i(q)  > \delta, L_j(q)  >1\ \forall j\not= i \rc \\
    &\geq  q\Lambda\lc \{\bmy: y_i>\delta,y_j>1\ \forall j\not=i \}\rc -C_1q^2 \geq q2K_1-K_1q=K_1q 
\end{align*}
where the second inequality utilizes~\eqref{eqn1}.
Thus, $H_i(q,z)<K_1q$ implies $z>\delta$ as $H_i(q,z)$ is decreasing in $z$. 
Further, note that there is a $q_2>0$ such {that for all $q<q_2$ $F_i^+ \lc L_i(1)\rc > 1-q$ implies that $L_i(1)>0$ since $\gamma_{\bm 1}>0$ . Therefore, choosing such a $q_2\leq q_1$, we have for all $0<q<q_2$ 
\begin{align*}
    &P\lc F_i^+ \lc L_i(1)\rc>1-qx_i\ \forall i\in I, L_i(1)>0\ \forall i\rc\\
    &=P\lc F_i^+ \lc q^{-1/\alpha_i} \lc L_i(q)-c_i(q)\rc\rc>1-qx_i\ \forall i\in I, q^{-1/\alpha_i} \lc L_i(q)-c_i(q)\rc>0\ \forall i\not\in I\rc \\
    &=P\lc 1-H_i\lc q, L_i(q) \rc/K_1  >1-qx_i\ \forall i\in I, ,  L_i(q)>c_i(q)\ \forall i\not\in I\rc \\
    &=P\lc H_i\lc q, L_i(q) \rc <qK_1x_i\ \forall i\in I,L_i(q)>c_i(q)\ \forall i\not\in I\rc, 
\end{align*}
where we used (\ref{eqnmargindistposlevy}) and Lemma \ref{lemtechproofs} ii).
Thus,
\begin{equation}\label{eq:Deltaxq1}
D_I(\bmx_I,q) = \Big|K_1^{-1}P(H_i(q,L_i(q)) < qK_1x_i ~\forall i \in I,L_i(q)>c_i(q)\ \forall i\not\in I ) - q R_I(\bmx_I)\Big|.    
\end{equation}
Define
\begin{align}
    h_i: [\delta,\infty)\times \R^{d-1}\to \R, (z,\mathbf{u})\mapsto \Lambda\lc \{ \bmy: y_i  > z,y_j  >u_j\ \forall j\not= i  \} \rc \label{defh_lipsch}. 
\end{align}

We can apply (\ref{eqn1}) to obtain 
\begin{equation}\label{eq:BoundHLambda}
\sup_{z > \delta}\big\vert H_i(q,z)-qh_i\lc z,\bmc_{\setminus i}(q)\rc \big\vert  \leq C_1q^2  \ \forall i\in I.
\end{equation} 
Since we have $H_i\lc q, L_i(q) \rc <qK_1x_i\leq qK_1$ implies $L_i(q)>\delta$, we have by~\eqref{eq:Rhomogeneous},~\eqref{eq:Deltaxq1} and~\eqref{eq:BoundHLambda}
\begin{align}\notag
&D_I(\bmx_I,q) = \big\vert K_1^{-1} P\lc H_i\lc q, L_i(q) \rc <qK_1x_i\ \forall i\in I,L_i(q)>c_i(q)\ \forall i\not\in I \rc-q R_I(\bmx_I) \big\vert 
\\ \notag
& = K_1^{-1}\big| P\lc H_i\lc q, L_i(q) \rc <qK_1x_i\ \forall i\in I, L_i(q)>c_i(q)\ \forall i\not\in I \rc-q R_I(K_1\bmx_I) \Big|
\\
&\overset{\star}{\leq} \tfrac{1}{K_1}\sup_{|\eps_i|\leq C_1 q, i\in I} \Big|  P\lc 
h_i(L_i(q),\bmc_{\setminus i}(q)) +\epsilon_i < K_1x_i \ \forall i\in I, \g L_{\setminus I}(q) > \bmc_{\setminus I}(q)  \rc - q R_I(K_1\bmx_I) \Big|  \label{eq:intermediate}
\end{align}
where $\star$ is validated as follows. 
Let $\g A, \g B$ be random vectors, $\bmx$ a deterministic vector and assume that we know $|A_i -B_i| \leq \eps_i$ a.s. for all $i\in I$ Then 
\[
\{B_i + \eps_i \leq x_i\ \forall i\in I\} \subseteq \{ A_i \leq x_i\ \forall i\in I\} \subseteq \{B_i - \eps_i \leq x_i\ \forall i\in I\}.
\]
Thus 
\[
P(B_i + \eps_i \leq x_i\ \forall i\in I) \leq P(A_i \leq x_i\ \forall i\in I) \leq P(B_i - \eps_i \leq x_i\ \forall i\in I). 
\]
Apply this with $A_i = H_i(q,L_i(q))$, $B_i = \Lambda(\{\bmy: y_i > L_i(q), y_j > c_j(q)\})$. Further, note that for a constant $c$ and $u \in [a,b]$, $|u-c| \leq \max (|c-a|, |c-b|)$. Set $c = qR_I(K_1\bmx_I)$, $a = P(B_i + \eps_i \leq x_i \forall i)$, $b = \{B_i - \eps_i \leq x_i \forall i\}$. Then $\star$ follows, and we in fact see that the supremum is attained for either $\eps_i = C_1 q\ \forall i\in I$ or $\eps_i = - C_1q\ \forall i\in I$.
}

Next, note that for any $z \geq \delta$ and any is a smooth function $s_\delta$ which has support $B_\infty^\complement(\delta/2)$ and $s_\delta(\bmx)=1$ for all $\bmx\in B^\complement_\infty(\delta)$
\[
h_i(x,\mathbf{u}) =\int_{\{ x_i  > z, \bmx_{\setminus i} > \mathbf{u}_{\setminus i}\} } \lambda(\bmx,\bag,\Theta) \rmd\bmx\\
=\int_{ \{ x_i  > z,\bmx_{\setminus i} > \mathbf{u}_{\setminus i} \}} s_\delta(\bmx) \lambda(\bmx,\bag,\Theta) \rmd\bmx.
\] 
Thus, $h_i$ is Lipschitz w.r.t.\ the $\Vert\cdot\Vert_1$ norm for some Lipschitz constant $A_i$ and has uniformly bounded derivatives. To see this, note that we can use the inclusion exclusion formula for switching from a distribution to a survival function to write $h_i$ as a weighted sum of terms of the form $W(\mathbf{u}):=\int_{\{ \bmx\leq \mathbf{u}\}} s_\delta(\bmx) \lambda(\bmx,\bag,\Theta) \rmd\bmx$, where $\mathbf{u}\in (-\infty,\infty]^d$. Thus, $W(\mathbf{u})=F_{ V\setminus I}(\mathbf{u}_{ V\setminus I})$, where $F_I$ is defined as in Lemma \ref{lemdiffintegrateddens} with $I:=\{i\in V\mid u_i=\infty\}$. Thus, as we have already shown in Corollary \ref{cordensregHR} that $s_\delta(\bmx) \lambda(\bmx,\bag,\Theta)$ satisfies the assumption of Lemma \ref{lemdiffintegrateddens}, each $W(\mathbf{u})$ is continuously differentiable with uniformly bounded first derivative, which implies their Lipschitz continuity. The Lipschitz continuity of $h_i$ now immediately follows.

For the following argument, set $C_2 := \max_i A_i$ so that every $h_i$ is Lipschitz with constant $\leq C_2$. Then $|h_i(L_i(q),\bmc_{\setminus i}(q)) - h_i(L_i(q),\bm 0)| \leq C_2 \|\bmc(q)\|_1$. Continuing from~\eqref{eq:intermediate} and using the same arguments as for the verification of $\star$ we find, setting $C(q) := C_1q+C_2\|\bmc(q)\|_1$.  
\begin{align*}
&D_I(\bmx_I,q)
\\
\leq& 
\tfrac{1}{K_1}\sup_{\bm\eps: \|\bm \eps\|_\infty \leq C(q)}
P\Big( 
h_i(L_i(q),\bm 0) + \eps_i < K_1x_i \ \forall i\in I, \g L_{\setminus I}(q) > \bmc_{\setminus I}(q)   \Big) 
- q R_I(K_1\bmx_I) \Big|
\\
=& 
\tfrac{1}{K_1}\sup_{\bm\eps: \|\bm \eps\|_\infty \leq C(q)}
P\Big( 
G_i(L_i(q)) < K_1x_i - \eps_i  \ \forall i\in I, \g L_{\setminus I}(q) > \bmc_{\setminus I}(q)   \Big)
- q R_I(K_1\bmx_I) \Big|
\\
\leq&  \tfrac{1}{K_1}\sup_{\bm\eps: \|\bm \eps\|_\infty \leq C(q)} \Big|  P\Big( 
G_i(L_i(q)) < K_1x_i - \eps_i  \ \forall i\in I, \g L_{\setminus I}(q) > \bmc_{\setminus I}(q)  \Big) - q R_I(K_1\bmx_I) \Big|
\end{align*}
We now distinguish two cases. If for some $i\in I$ we have $K_1x_i - \eps_i \leq 0$, then, noting that $P(G_i(L_i(q)) < 0) = 0, i \in  V$,
\begin{align*}
&\Big|  P\Big( 
G_i(L_i(q)) < K_1x_i - \eps_i  \ \forall i\in I,   \g L_{\setminus I}(q) > \bmc_{\setminus I}(q)  \Big) - q R_I(K_1\bmx_I) \Big| \\
=~& q R_I(K_1\bmx_I) \leq q\sum_{i\in I}K_1x_i \leq dK_1q C(q) 
\end{align*}
since $K_1x_i - \eps_i \leq 0$ implies $K_1x_i \leq \eps_i$, and since $R_I$ is Lipschitz continuous with $R_I(\bm 0)= 0$. It thus remains to consider the supremum with the additional constraint $\eps_i < K_1x_i,~i \in [|I|]$. Choose $q_2$ sufficiently small so that for $q \leq q_2$ we have $C(q) \leq 1/2$, which is possible by the the bound on $c_i(q)$ in Lemma~\ref{lemtechproofs} ii). Then by the choice of $\delta$ we have for any $\eps_i$ with $\eps_i < K_1 (x_I)_i, |\eps_i| \leq C(q), i \in [|I|] $ 
\begin{align*}
&\Big|P\Big( 
G_i(L_i(q)) < K_1x_i - \eps_i  \ \forall i\in I,  \g L_{\setminus I}(q) > \bmc_{\setminus I}(q) \Big) - qR_I(K_1\bmx_I - \bm \eps)\Big|
\\
&=
\Big|P\Big( 
L_i(q) > G_i^{-1}(K_1x_i - \eps_i)  \ \forall i\in I, \g L_{\setminus I}(q) > \bmc_{\setminus I}(q) \Big) \\
&\quad \quad \pm q\Lambda\Big(\big\{ \bmy \big|
y_i > G_i^{-1}(K_1x_i - \eps_i)  \ \forall i\in I, \bmy_{\setminus I}> \bmc_{\setminus I}(q) \big\} \Big)
- qR_I(K_1\bmx_I - \bm \eps)\Big| 
\\
&\leq C_1q^2 + q\Big| \Lambda\Big( 
\big\{ \bmy \big|
y_i > G_i^{-1}(K_1x_i - \eps_i)  \ \forall i\in I, \bmy_{\setminus I}> \bmc_{\setminus I}(q)\big\}  \Big) - R_I(K_1\bmx_I - \bm \eps)\Big|
\end{align*}
by~\eqref{eq:RelateLitoR}, which can be applied since $G_i^{-1}(\cdot)$ are strictly decreasing functions and $x_i < 1 $, so $G_i^{-1}(K_1x_i - \eps_i) \geq G_i^{-1}(K_1 + 1/2) = \delta$ by the bound on $\eps_i$ and $C(q)$ and the choice of $\delta$. 
Using (\ref{eq:LamHRLamGen}), and choosing an arbitrary $\underline i\in I$ we obtain
\begin{align*}
&    \Big| \Lambda\Big(\big\{ \bmy \big|
y_i > G_i^{-1}(K_1x_i - \eps_i)  \ \forall i\in I, \bmy_{\setminus I}> \bmc_{\setminus I}(q) \big\} \Big) - R_I(K_1\bmx_I - \bm \eps)\Big|\\
&=    \Big| \Lambda\Big(\big\{ \bmy \big|
y_i > G_i^{-1}(K_1x_i - \eps_i)  \ \forall i\in I, \bmy_{\setminus I}> \bmc_{\setminus I}(q) \big\} \Big)
\\
&\quad\quad - \Lambda\Big(\big\{ 
\bmy \big| y_i > G_i^{-1}(K_1x_i - \eps_i)  \ \forall i\in I \}\Big)\Big|\\
&=\Big| h_{\underline i}\lc G_{\underline i}^{-1}(K_1x_{\underline i} - \eps_{\underline i}), \lc G_i^{-1}(K_1x_i - \eps_i)\id_{\{i\in I\}} +c_i(q)\id_{\{i\not\in I\}} \rc_{\underline i\neq i \in V} \rc \\
&\quad \quad - h_{\underline i}\lc G_{\underline i}^{-1}(K_1x_{\underline i} - \eps_{\underline i}), \lc G_i^{-1}(K_1x_i - \eps_i)\id_{\{i\in I\}}  \rc_{\underline i\neq i \in V} \rc\Big| \\
&\leq A_{\underline i}\sum_{i\not\in I}\vert c_i(q)\vert \leq C(q) 
\end{align*}
due to the Lipschitz continuity of $h_{\underline i}$. Combining this with Lipchitz continuity of $R_I$ [see~\eqref{eq:RisLipshitz}], $K_1\leq 1$ and the earlier arguments we find that 
\[
D_I(\bmx_I,q)  \leq (2 \vee d)qC(q)/K_1.
\]
Combined with the bound on $c_i(q)$ in Lemma~\ref{lemtechproofs} ii) this completes the proof.
\end{proof}

\begin{proof}[Proof of Theorem \ref{thm:secondorder}]
This result is a direct consequence of Lemma~\ref{lem:HRsecorder} after noting that 
\[
R_I(\bmx_I) = \Lambda_{\rm HR}^{(I)}\big(\{\bmy \in [0,\infty)^{|I|}] \mid \bmy \geq \bmx_I \}\big)
\]
using that the $I$ marginal measure of an HR exponent measure on $[0,\infty)^d$ with variogram matrix $\Gamma$ is in turn an HR exponent measure with parameter matrix $\Gamma_{I,I}$; see \citet[][Example 7]{engelkehitz2020}.
\end{proof}

Next, we prove Theorem \ref{thmconsitencygammaestimator}.

\begin{proof}[Proof of Theorem \ref{thmconsitencygammaestimator}]

Let $i,j,m\in  V$, $i\neq j$, $\tilde\bmo\in\{-1,1\}^3$. Since the dimension $d$ is fixed, and so each $\widehat{\Gamma}_{i,j}$ is a linear combination of a fixed, finite number of $\widehat{\Gamma}^{(m,\tilde\bmo)}_{i,j}$, it suffices to establish $\widehat{\Gamma}^{(m,\tilde\bmo)}_{i,j} \Pkonv \Gamma_{i,j}$ for a fixed $ \bmo$ and $m$. To simplify notation, we will assume that $i < j \leq  m$ and set $I = (i,j,m)$. 

Recalling the notation $\g D_s = \g L(s\Delta) - \g L((s-1)\Delta$, conditional on the collection $\g S = \big(\bmo(\g D_{I,t})\big)_{t \in [n]}$, the collection $\{\g D_{I,t} \mid \g D_{I,t} \in \tilde \bmo\}$ is an i.i.d. sample of size $n^{I,\tilde\bmo} $ from the law $P(\g L_I(\Delta) \in \cdot\mid \g L_I(\Delta) \in \tilde \bmo)$. By Lemma~\ref{lem:HRsecorder}, this law satisfies Assumption S1 from~\cite{engelkelalancettevolgushev2022}.
Their Assumption S2 was already verified for multivariate HR Pareto distributions in \citet[][Lemma S4]{engelkelalancettevolgushev2022}.  Combined with \citet[][Proposition S3, Theorem 1]{engelkelalancettevolgushev2022} this yields for any $0 < \zeta < 1\wedge \min_{i \in  V} \alpha_i^{-1}$, provided $\sqrt{\lfloor n^{I,\tilde\bmo} q_{n} \rfloor}/(\log n^{I,\tilde\bmo})^4 > \log n$, 
\begin{equation}\label{eq:rategammahat}
P\Big( \Big| \widehat{\Gamma}^{(m,\tilde\bmo)}_{i,j} -\Gamma_{i,j} \Big| > \bar C a_{n,n^{I,\tilde\bmo},\log n}\Big| \g S = \g s \Big) \leq M d^3 e^{-c(\log n )^2},
\end{equation} 
for some constants $\bar C, M, c > 0$ where
\[
a_{n,n^{I,\tilde\bmo},\lambda} := \big(\tfrac{\lfloor n^{I,\tilde\bmo} q_{n} \rfloor}{n^{I,\tilde\bmo}}\big)^\zeta \log^2(\tfrac{\lfloor n^{I,\tilde\bmo} q_n \rfloor}{n^{I,\tilde\bmo}}) + \tfrac{1+\lambda}{\lfloor n^{I,\tilde\bmo} q_n \rfloor^{1/2}}.
\] 
Fix $\eps > 0$ and let
\[
\mathcal{S}_{n,\eps} = \Big\{\g s \in [\{-1,1\}^3]^n \vert \bar C a_{n,n^{I,\tilde\bmo},\log n} < \eps, \tfrac{\sqrt{\lfloor n^{I,\tilde\bmo} q_{n^{I,\tilde\bmo}} \rfloor}}{(\log n^{I,\tilde\bmo})^4} > \log n \Big\}.
\]
Note that $n^{I,\tilde\bmo} \sim Bin(n,p_{\tilde\bmo})$ for $p_{\tilde\bmo}:=P\lc L_I(\Delta)\in\tilde\bmo\rc$. By Lemma \ref{lem:posorthantprob} $p_{\tilde\bmo}>0$ and so $n^{I,\tilde\bmo}/n \Pkonv p_{\tilde\bmo}>0$. Thus, given our assumptions on $q_n$, $P(\g S \in \mathcal{S}_{n,\eps}) \to 1$. Thus
\begin{align*}
P\Big( \Big| \widehat{\Gamma}^{(m,\tilde\bmo)}_{i,j} -\Gamma_{i,j} \Big| > \eps\Big) = \sum_{\g s} P\Big( \Big| \widehat{\Gamma}^{(m,\tilde\bmo)}_{i,j} -\Gamma_{i,j} \Big| > \eps, \g S = \g s\Big) \leq P(\g S \in \mathcal{S}_{n,\eps}^\complement) \to 0
\end{align*}
since $P\Big( \Big| \widehat{\Gamma}^{(m,\tilde\bmo)}_{i,j} -\Gamma_{i,j} \Big| > \eps, \g S = \g s\Big) = 0$ for $\g s \in \mathcal{S}_{n,\eps}$. This completes the proof. 
\end{proof}

\subsection{Proof of Corollary~\ref{corsparsistentesttheta}}

\begin{proof}[Proof of Corollary \ref{corsparsistentesttheta}]

Recall the following notation from \cite{engelkelalancettevolgushev2022}: denote by $s$ the maximal number of non-zero entries in any given row of $\Theta$, define $\theta_{\min} := \min_{i, \ell: \Theta_{i\ell} \neq 0} |\Theta_{i\ell}|/\Theta_{\ell\ell}$, and set
\begin{align*}
\mu &:= \min_{m,\ell \in V, m \neq \ell} \lambda_{\min} ( \Sigma^{(m)}_{S_{m,\ell}, S_{m,\ell}}),
\\
\kappa &:= \max_{m,\ell \in V, m \neq \ell} \normop{ \Sigma^{(m)}_{S_{m,\ell}^c, S_{m,\ell}}}_\infty,
\\
\vartheta & := \max_{m,\ell \in V, m \neq \ell} \normop{ ( \Sigma^{(m)}_{S_{m,\ell}, S_{m,\ell}})^{-1}}_\infty.
\end{align*}

A close look at the proof of Theorem 2 in \cite{engelkelalancettevolgushev2022} reveals that for any input matrix $\widehat\Gamma$ that satisfies 
\begin{equation}\label{eq:condconseglearn}
\big\| \widehat\Gamma - \Gamma \big\|_\infty < \frac{2}{3} \min\bigg\{
	\frac{\mu}{2s},
	\frac{\eta}{4\vartheta (1 + \kappa \vartheta) s},
	\frac{\theta_{\min} - \vartheta \rho_n}{2 \vartheta (1 + \kappa \vartheta)},
	\frac{\rho_n \eta}{8(1 + \kappa \vartheta)^2}
	\bigg\},
\end{equation}
for a conditionally negative definite matrix $\Gamma$ and corresponding graph $G$ constructed from the sparsity pattern of $\Theta = (-P\Gamma P/2)^{+}$, the estimated edge set from \texttt{EGlearn} coincides with that of $G$.  
Here, we note that we apply the \texttt{EGlearn} algorithm with neighborhood selection where all penalty parameters $\rho$ are equal. It thus remain to show that under the conditions of the present theorem for $\hat \Gamma$ from~\eqref{defestimatorvariogrammatrix}, the bound in~\eqref{eq:condconseglearn} holds with probability going to one.

Under the present assumptions, the quantities $\eta, \mu, \vartheta, \kappa, s, \theta_{\rm min}$ do not depend on $n$ and thus it suffices to show that $\big\| \widehat\Gamma - \Gamma \big\|_\infty \Pkonv 0$, $\theta_{\rm min} > \vartheta \limsup_{n \to \infty}\rho_n$ and $P(\rho_n > 8 (1 + \kappa \vartheta)^2\big\| \widehat\Gamma - \Gamma \big\|_\infty\eta^{-1}) \to 1$. Consistency of $\widehat \Gamma$ follows from Theorem~\ref{thmconsitencygammaestimator}. Moreover, by~\eqref{eq:rategammahat} in the proof of Theorem~\ref{thmconsitencygammaestimator} we obtain
for any $0 < \zeta < 1\wedge \min_{i \in  V} \alpha_i^{-1}$
\begin{equation*}
\Big\| \widehat{\Gamma} -\Gamma \Big\|_\infty = O_P(q_n^{\zeta}(\log q_n)^2 + (nq_n)^{-1/2}).
\end{equation*}
The statements $P(\rho_n > 8 (1 + \kappa \vartheta)^2\big\| \widehat\Gamma - \Gamma \big\|_\infty\eta^{-1}) \to 1$ and $\theta_{\rm min} > \vartheta \limsup_{n \to \infty}\rho_n$ now follow from our assumptions on $\rho_n$. 
\end{proof}

\subsection{Proof of Lemma \ref{lemIsingcorrviaPLM}}

\begin{proof}[Proof of Lemma \ref{lemIsingcorrviaPLM}]
 First note that we have 
\begin{align*}
    &\frac{{\chi}_{i,j}^{(1,1)}+{\chi}_{i,j}^{(-1,-1)}- {\chi}_{i,j}^{(1,-1)}-{\chi}_{i,j}^{(-1,1)}}{{\chi}_{i,j}}\\
    &=\frac{\Lambda^\star \lc \{ \bmx: o(x_i)=o(x_j), \vert x_i\vert  >1, \vert x_j\vert >1\} \rc- \Lambda^\star \lc \{\bmx: o(x_i)\neq o(x_j), \vert x_i\vert >1, \vert x_j\vert >1 \} \rc}{\Lambda^\star \lc \{ \bmx: \vert x_i\vert  >1, \vert x_j\vert >1 \} \rc} \\
    &=\frac{\sum_{\bmo\in\Oo: o_i=o_j}\gamma_\bmo(\Psi)  \Lambda^\star \lc\{ \bmx:  x_i >1, x_j >1 \} \rc- \sum_{\bmo\in\Oo: o_i\neq o_j}\gamma_\bmo(\Psi)\Lambda^\star \lc \{ \bmx:  x_i >1, x_j >1 \} \rc}{\sum_{\bmo\in\Oo}\gamma_\bmo(\Psi)\Lambda^\star \lc \{ \bmx:   x_i  >1,  x_j >1 \} \rc } \\
    &=\frac{\sum_{\bmo\in\Oo: o_i=o_j}\gamma_\bmo(\Psi)  - \sum_{\bmo\in\Oo: o_i\neq o_j}\gamma_\bmo(\Psi)}{2 } 
\end{align*} 
where we have used that $\sum_{\bmo \in \Oo} \gamma_\bmo(\Psi) = 2$ by \eqref{weight_norm}.
To conclude, observe that
$$
\Cov_\Psi(B_i,B_j)
=\e\lk B_iB_j\rk=P\lc B_i
=B_j\rc-P\lc B_j\neq B_j\rc
=\frac{1}{2}\Big(\sum_{\underset{o_i=o_j}{\bmo\in\Oo}}\gamma_\bmo(\Psi)-\sum_{\underset{o_i\neq o_j}{\bmo\in\Oo}}\gamma_\bmo(\Psi)\Big),
$$ 
by the identity~\eqref{ising_equiv}: $\gamma(\Psi) = 2 P(\g B = \bmo)$ and since $\e\lk B_i\rk=0$ for all $\leqd$. 
\end{proof}

\subsection{Proof of Theorem~\ref{thmchiestfixedgraph}}

\begin{proof}[Proof of Theorem~\ref{thmchiestfixedgraph}]

The Ising model on the graph with edges $E$ and parameter vector $\Psi$ forms a canonical, full and regular (in the sense of Definitions 2.33 and 2.35 of \cite{Bedbur2021}) exponential family with natural parameter space $\R^{|E|}$. By \citet[Theorem 2.56]{Bedbur2021}, the map 
\[
\tau: \R^{|E|} \to \tau(\R^{|E|}); \quad \Psi \mapsto (\e_\Psi[B_iB_j])_{(i,j) \in E} 
\]
is a diffeomorphism, i.e. it is continuously differentiable and invertible with continuously differentiable inverse. Since $\tau^{-1}$ is continuous, $\tau(\R^{|E|})$ is an open subset of $\R^{|E|}$. Below, we will prove that
\begin{equation}\label{eq:hatchiconsistent}
\widehat{\chi}^{(o_i,o_j)}_{i,j}\Pkonv \chi^{(o_i,o_j)}_{i,j}, \quad \forall~ i,j \in  V, o_i, o_j \in \{-1,1\}.  
\end{equation}
Since $\chi_{i,j} = \Lambda^\star (\{ \bmx:  x_i >1, x_j >1 \}) > 0$ for all $i,j$, it follows that 
\[
\hat a_{ij} := \frac{\widehat{\chi}_{i,j}^{(1,1)}+\widehat{\chi}_{i,j}^{(-1,-1)}- \widehat{\chi}_{i,j}^{(1,-1)}-\widehat{\chi}_{i,j}^{(-1,1)}}{\widehat{\chi}_{i,j}} \Pkonv \e_\Psi[B_iB_j] \quad \forall (i,j) \in E.
\]
Since $\tau(\R^{|E|})$ is open, it follows that $\hat A := (\hat a_{ij})_{(i,j) \in E} \in \tau(\R^{|E|})$ with probability going to one, and so $\hat \Psi = \tau^{-1}(\hat A)$ with probability going to one. Continuity of $\tau^{-1}$ now implies that $\hat \Psi \Pkonv \Psi$.

It remains to show~\eqref{eq:hatchiconsistent}. Since the structure of the estimators $\widehat{\chi}^{(o_i,o_j)}_{i,j}$ is similar for all combinations of $o_i, o_j$ we will exemplarily discuss $\widehat{\chi}^{-+}_{i,j}$. We have
\begin{align*}
2\widehat{\chi}^{-+}_{i,j}
&= \frac{2}{k} \sum_{t=1}^{n} \mathbf{1} \left\{ 
\widehat{F}_{\Delta_i}(\Delta_{ti}) < \frac{k}{2n},\ 
\widehat{F}_{\Delta_j}(\Delta_{tj}) > 1 - \frac{k}{2n}
\right\} 
\\
&= \frac{2}{k} \sum_{t=1}^{n} \mathbf{1} \left\{ 
\widehat{F}_{-\Delta_i}(-\Delta_{ti}) - \frac{1}{n} > 1 - \frac{k}{2n},\ 
\widehat{F}_{\Delta_j}(\Delta_{tj}) > 1 - \frac{k}{2n}
\right\}
\\
& = \frac{1}{\lfloor k/2 \rfloor} \sum_{t=1}^{n} \mathbf{1} \left\{ 
\widehat{F}_{-\Delta_i}(-\Delta_{ti}) > 1 - \frac{\lfloor k/2 \rfloor}{n},\ 
\widehat{F}_{\Delta_j}(\Delta_{tj}) > 1 - \frac{\lfloor k/2 \rfloor}{n}
\right\} + O(k^{{-1}}).
\end{align*}
Here, $\widehat{F}_{-\Delta_i}$ denotes the empirical cdf of $\{-\Delta_1,\dots,-\Delta_n\}$ and in the middle equation we have used that for $u \in \{X_1,\dots,X_n\}$ we have $\hat F_X(u) = 1 - \hat F_{-X}(-u) + 1/n$. In the last line we have used the fact that $\tfrac1k - \tfrac{1}{k-1} = - \tfrac{1}{k(k-1)}$ and that $\widehat{F}_{-\Delta_i}, \widehat{F}_{\Delta_j}$ are piecewise constant with jumps of size $1/n$ since the $\Delta_i$ have continuous distributions. 

The quantity in the last line is exactly the tail correlation estimator $\hat \chi_{ij}$ from extreme value theory which uses $\lfloor k/2 \rfloor$ extreme observations and is based on the bivariate sample $(-\Delta_{1i}, \Delta_{1j}),\dots,(-\Delta_{ni}, \Delta_{nj})$. Define 
\[
2\chi_{i,j}^{-+}(q) := q^{-1}P\lc 1-F^{-}_{i,\Delta}(-L_i(\Delta)) \leq q/x_i, 1-F^{+}_{j,\Delta}( L_j(\Delta)) \leq q/x_j \rc
\]
where $F_{i,t}^{o_i}(x)=P\lc o_iL_i(t)\leq x  \rc$
Noting that the proposed estimator is the estimator of the $L$-function of the random vector $(o_iL_i(\Delta),o_jL_j(\Delta))$ it is sufficient to show that it is multivariate regularly varying. By \citet[][Theorem 5.1]{kallsentankov2006} we have for all $(u_1,u_2)$ in a neighbourhood of $(1,1)$
\begin{align*}
\Lambda^\star(\{\bmx: x_i > u_1, x_j > u_2\}) &= \lim_{t\downarrow 0} t^{-1} P(1-F_{i,t}^-(-L_i(t)) \leq t/u_1 , 1-F_{j,t}^+(L_j(t)) \leq t/u_2)
\\
&= \lim_{t\downarrow 0} t^{-1} P(1-F_{i,\Delta}^-(-L_i(\Delta)) \leq t/u_1 , 1-F_{j,\Delta}^+(L_j(\Delta)) \leq t/u_2)
\end{align*}
where $F_{i,t}^{o_i}(x)=P\lc o_iL_i(t)\leq x  \rc$ and in the second equality we used that the copula of $(-L_i(t),L_j(t))$ does not depend on $t$ by Corollary~\ref{cortimescalingHRprocess}. Thus $\chi_{i,j}^{-+}(q) \to \chi_{ij}^{-+}$ as $q \downarrow 0$. 

Observing that the sample $(-\Delta_{1i}, \Delta_{1j}),\dots,(-\Delta_{ni}, \Delta_{nj})$ is i.i.d., 
\citet[][Proposition 2 in the online supplement]{engelkevolgushev2022} implies the existence of a universal constant $K$ such that for any $s > 0$ we have
 \[
 P\Big(2\Big|\widehat\chi_{ij}^{-+}-\chi_{ij}^{-+}\big(\lfloor k/2 \rfloor/n\big)\Big| > s\Big) \leq 5\exp\Big(-\frac{3k}{10}\Big\{\frac{s^2}{K^2}\wedge 1\Big\}\Big).
 \]
Pick $s = s_n$ such that $s_n \to 0, s_n^2 k \to \infty$ to conclude that $\widehat\chi_{ij}^{-+}-\chi_{ij}^{-+}\big(\lfloor k/2 \rfloor/n\big) = o_\p(1)$ and combine this with $\chi_{ij}^{-+}\big(\lfloor k/2 \rfloor/n\big) \to \chi_{ij}^{-+}$ to conclude the proof of~\eqref{eq:hatchiconsistent}.

Next, we prove the remaining parts of the Theorem. Note that $P(\hat{G} = G) \to 1$ by Corollary \ref{corsparsistentesttheta}. Thus $P(\widehat{\Psi}(\widehat E) = \widehat{\Psi}(E)) \to 1$. From the first part of the proof we already know that $\widehat{\Psi}(E) \Pkonv \Psi$. Since the map $\Psi \mapsto \gamma(\Psi)$ is continuous, $\hat \gamma = \gamma(\widehat \Psi) \Pkonv \gamma$ follows.

\end{proof}

\subsection{Proof of claim in Example~\ref{exdensfac}}\label{sec:proofexdensfac}

\begin{proof}[Proof of claim in Example~\ref{exdensfac}]  

By equation~\eqref{eq:gammaofac1} in the proof of Theorem~\ref{thmglobalmarkovprop} we have that $\Theta_{1,3} = 0$ together with $L_1\perp L_3\mid L_2$ implies that $\gamma_\bmo$ factorizes as a function of $\bmo$, i.e. we must have $\gamma_\bmo = f_1(\bmo_{V\setminus\{i\}})f_2(\bmo_{V\setminus\{j\}})$ for strictly positive functions $f_1, f_2$. Note that by the normalization constraint in~\eqref{weight_norm}, $g: \bmo \mapsto \gamma_\bmo$ equals twice a positive probability mass function on the product space $\{-1,1\}^3$ and let $\bm B$ denote the corresponding random vector. Factorization of $g$ implies conditional independence $B_1 \perp B_3 \mid B_2$, which further implies that we must have
\[
P\lc B_1 = 1 \mid B_2=1, B_3=1\rc=m_{1,3} = (1-m_{1,3}) = P\lc B_1 = 1\mid B_2 = 1, B_3 = -1\rc.
\]
This can only hold if $m_{1,3} = 1/2$.

\end{proof}

\clearpage

\section{Additional plots}
\label{appaddplots}
\subsection{Tree graphs}
We report additional results on the estimation of tree graphs for smaller sample sizes. Since tree estimation is
an easier problem than estimation of general graph, we consider as quality measure the proportion of correctly recovered trees for the minimum spanning tree algorithms based on the weights $\hat{\Gamma}$ and $\hat{\chi}$, depending on the threshold $q_n\in(0,1)$. We focus here on the symmetric regime and repeat the experiments $50$ times for dimensions $d=5,10,20$ and sample sizes $n=500,750,1000,2000$.
The results are shown in Figure \ref{fig:trees_sim_stud_asym}.
The minimum spanning tree algorithm based on $\Gamma$ consistently outperforms its competitor based on $\chi$ from \cite{engelke2024levygraphicalmodels} across all dimensions and samples sizes. Another advantage seems to be that the minimum spanning tree based on $\Gamma$ is more stable across different thresholds $q_n$, and it works well even for lower thresholds.

\begin{figure}[!htbp]
    \centering

    \begin{subfigure}{0.49\linewidth}
        \centering
        \includegraphics[width=\linewidth]{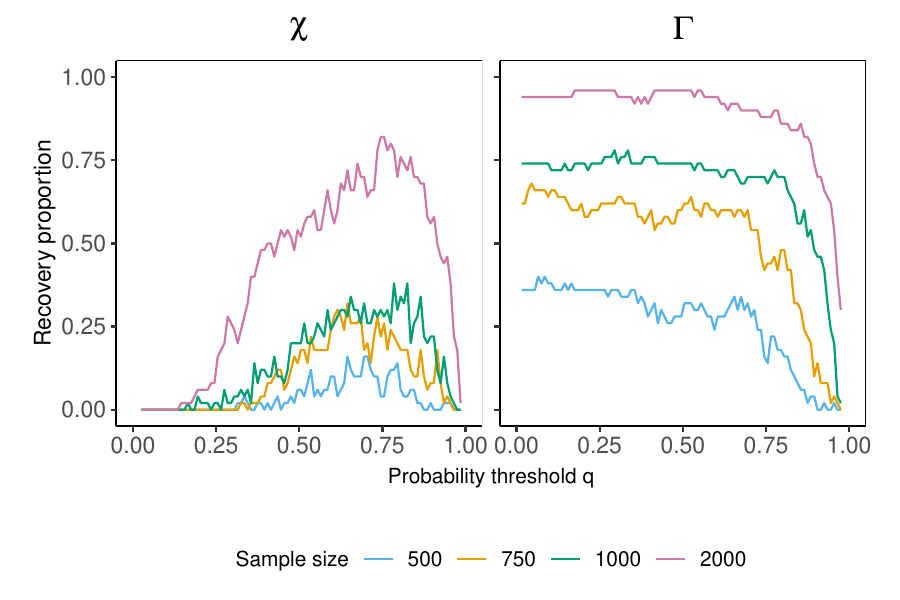}

    \end{subfigure}
    \hfill
        \begin{subfigure}{0.49\linewidth}
        \centering
        \includegraphics[width=\linewidth]{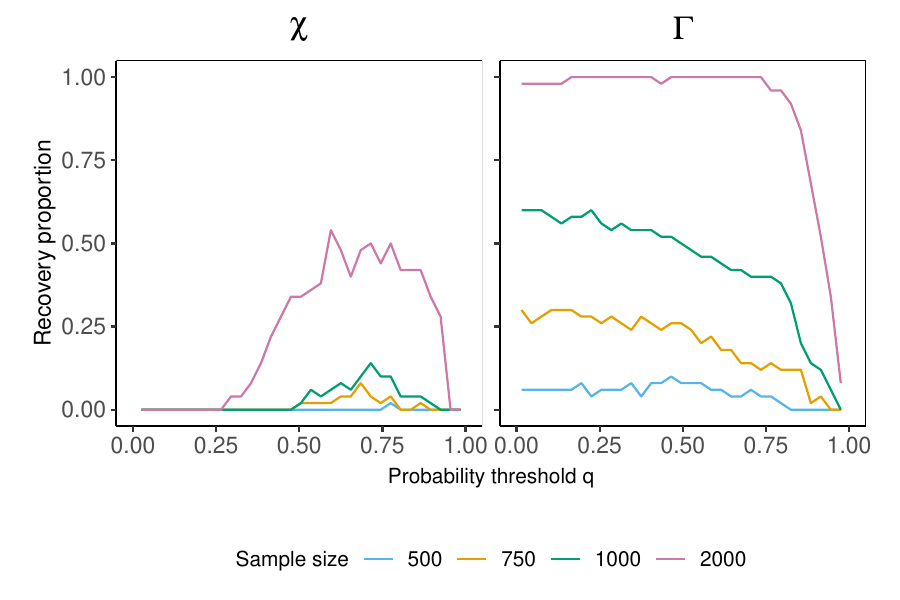}

    \end{subfigure}
    
    \caption{Proportion of correctly recovered trees for the minimums spanning tree algorithms in the asymmetric regime based on $\Gamma$ and $\chi$ for dimensions $d=10$ (left) and $20$ (right) and sample sizes $n=500,750,1000,2000$.}
    \label{fig:trees_sim_stud_asym}
\end{figure}

Additionally, we report the $F_1$-score of the our experiments for tree graphs from Section \ref{secsimulation} in the symmetric regime in Figure \ref{fig:tree_f_1_score_sym} below, mentioning that the conclusions are identical the to conclusions in Section \ref{secsimulation}. 
  \begin{figure}[!htbp]
    \centering
    
        \centering
        \includegraphics[scale=0.65]{Plots_TeX/tree_F1_score_sym_s_size_2000.pdf}

        \centering
        \includegraphics[scale=0.65]{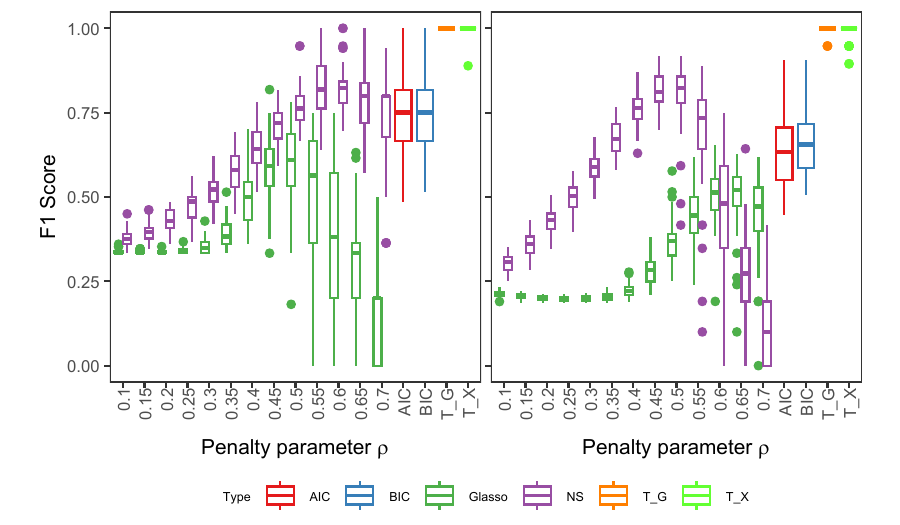}
    \caption{Boxplots of $F_1$ scores for a tree graph in the asymmetric regime for EG-Learn based on neighborhood selection (NS) graphical lasso (Glasso), the minimum spanning trees based on $\Gamma$ and $\chi$ and the two information criteria based penalty parameter selection methods AIC and BIC for various penalty parameters for samples size $n=2000$ (top) and $n=10000$ (bottom) and dimensions $d=10$ (left) and $d=20$ (right).  }
    \label{fig:tree_f_1_score_sym}
\end{figure}

\clearpage

\subsection{General graphs}

We report the $F_1$-score of our experiments for general graphs from Section \ref{secsimulation} in the symmetric regime in Figure \ref{fig:f_1_score_dim_symm}. In general, the performance is slightly worse than in the asymmetric regime, which is discussed in Section \ref{secsimulation}. 

\begin{figure}[!htbp]
    \centering
    
        \centering
        \includegraphics[scale=0.75]{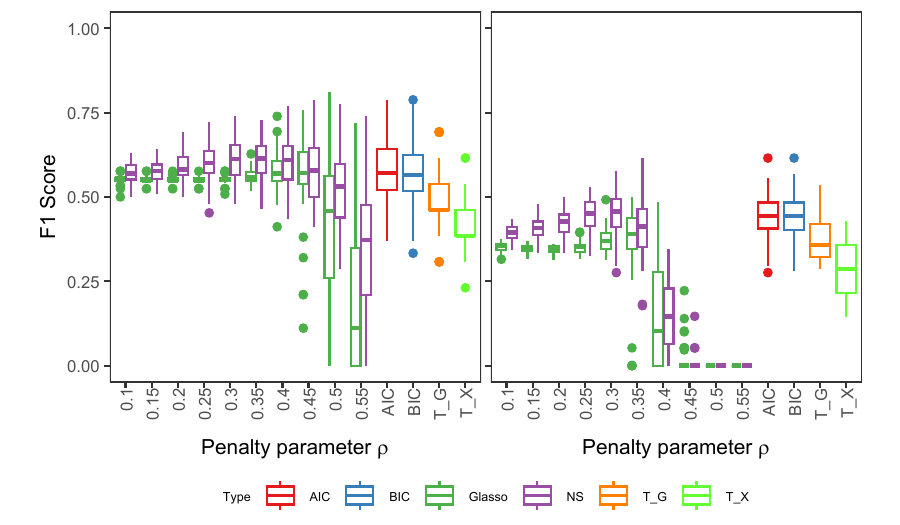}

        \centering
        \includegraphics[scale=0.75]{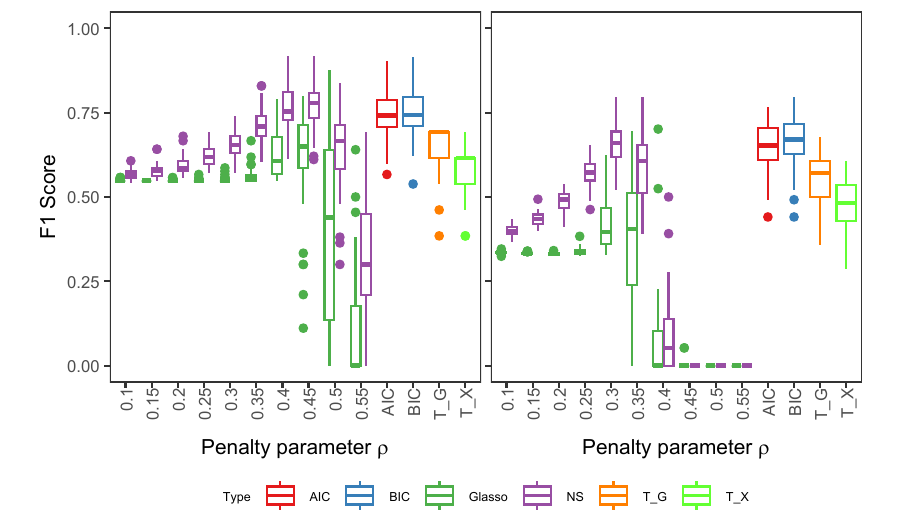}
    \caption{Boxplots of $F_1$ scores for a tree graph in the symmetric regime for EGLearn based on neighborhood selection (NS) graphical lasso (Glasso), the minimum spanning trees based on $\Gamma$ and $\chi$ and the two information criteria based penalty parameter selection methods AIC and BIC for various penalty parameters for samples sizes $n=2000$ (top) and $n=10000$ (bottom) and dimensions $d=10$ (left) and $d=20$ (right).  }
    \label{fig:f_1_score_dim_symm}

\end{figure}

\subsection{Traceplot for Ising weights}

Figure~\ref{fig:traceplot} shows an exemplary traceplot of
the gradient ascent procedure described in Section~\ref{sec:est_ising}.

\begin{figure}
    \centering
    \includegraphics[width=0.6\linewidth]{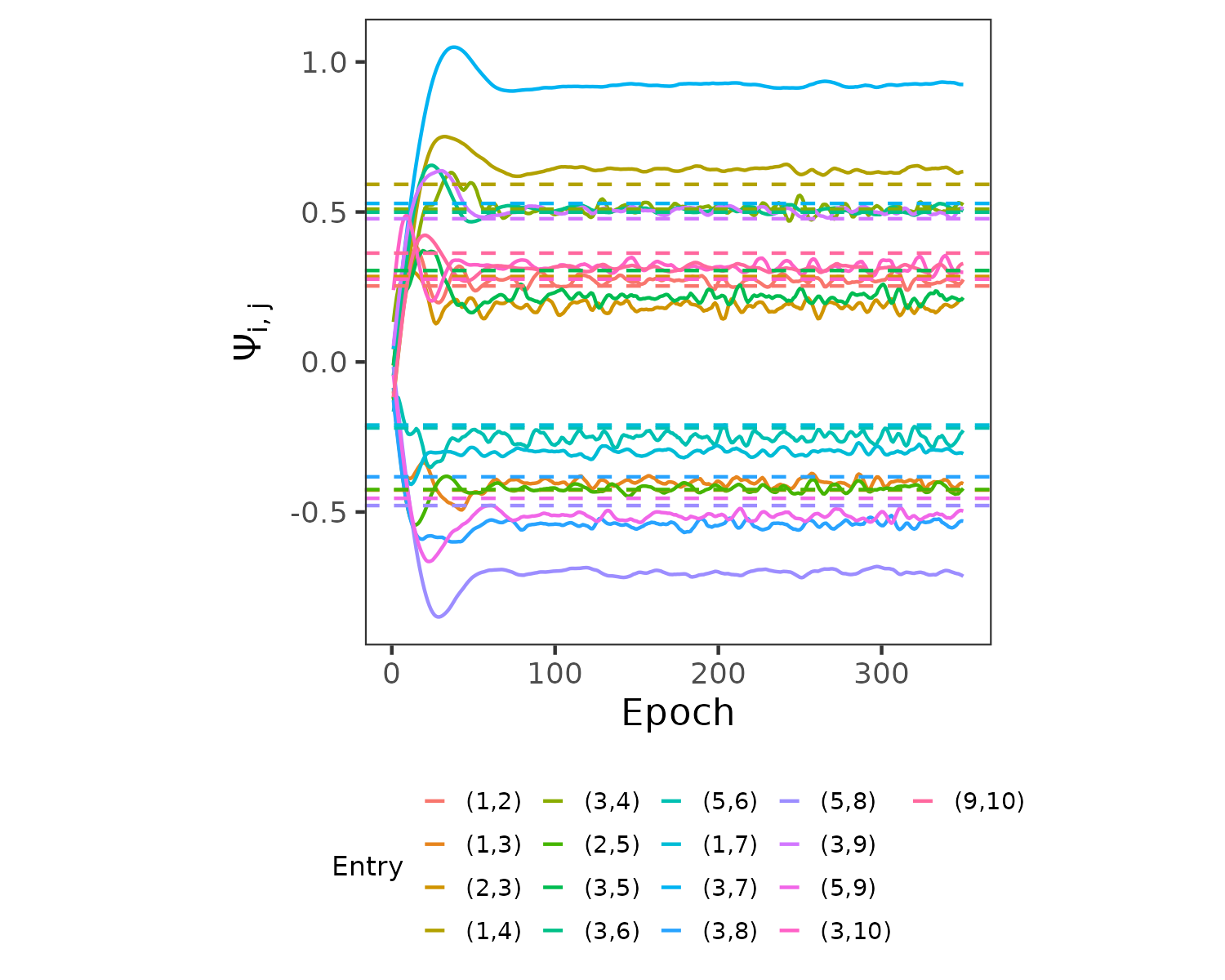}
    \caption{Exemplary traceplot of gradient ascent procedure for the estimation of $\Psi$ in dimension $d=10$.}
    \label{fig:traceplot}
\end{figure}

\end{document}